\PassOptionsToPackage{unicode}{hyperref}
\PassOptionsToPackage{hyphens}{url}
\PassOptionsToPackage{dvipsnames,svgnames,x11names}{xcolor}
\documentclass[12pt]{article}
\usepackage{enumerate}
\usepackage{natbib}
\usepackage{tabularx}
\usepackage{epsfig}
\usepackage{tikz}
\usetikzlibrary{positioning}
\usetikzlibrary{arrows.meta}
\usetikzlibrary{calc}
\usepackage{multirow}
\usepackage{color}
\usepackage{soul}
\usepackage{amsmath}
\usepackage{amssymb}
\usepackage{bookmark}
\usepackage{rotating, amsthm}
\usepackage{float}
\usepackage{latexsym}
\usepackage{graphicx}
\usepackage{amsfonts}
\usepackage{euscript}
\usepackage{color}
\usepackage{iftex}
\usepackage{enumitem}
\setlist[enumerate]{itemsep=0pt,parsep=0pt,topsep=0pt}
\setlist[itemize]{itemsep=0pt,parsep=0pt,topsep=0pt}
\everymath{\displaystyle}
\ifPDFTeX
  \usepackage[T1]{fontenc}
  \usepackage[utf8]{inputenc}
  \usepackage{textcomp}
\else
  \usepackage{unicode-math}
  \defaultfontfeatures{Scale=MatchLowercase}
  \defaultfontfeatures[\rmfamily]{Ligatures=TeX,Scale=1}
\fi
\usepackage{lmodern}
\ifPDFTeX\else
\fi
\IfFileExists{upquote.sty}{\usepackage{upquote}}{}
\IfFileExists{microtype.sty}{
  \usepackage[]{microtype}
  \UseMicrotypeSet[protrusion]{basicmath} 
}{}
\makeatletter
\@ifundefined{KOMAClassName}{
  \IfFileExists{parskip.sty}{
    \usepackage{parskip}
  }{
    \setlength{\parindent}{0pt}
    \setlength{\parskip}{6pt plus 2pt minus 1pt}}
}{
  \KOMAoptions{parskip=half}}
\makeatother
\usepackage{xcolor}
\makeatletter
\ifx\paragraph\undefined\else
  \let\oldparagraph\paragraph
  \renewcommand{\paragraph}{
    \@ifstar
      \xxxParagraphStar
      \xxxParagraphNoStar
  }
  \newcommand{\xxxParagraphStar}[1]{\oldparagraph*{#1}\mbox{}}
  \newcommand{\xxxParagraphNoStar}[1]{\oldparagraph{#1}\mbox{}}
\fi
\ifx\subparagraph\undefined\else
  \let\oldsubparagraph\subparagraph
  \renewcommand{\subparagraph}{
    \@ifstar
      \xxxSubParagraphStar
      \xxxSubParagraphNoStar
  }
  \newcommand{\xxxSubParagraphStar}[1]{\oldsubparagraph*{#1}\mbox{}}
  \newcommand{\xxxSubParagraphNoStar}[1]{\oldsubparagraph{#1}\mbox{}}
\fi
\makeatother
\usepackage{longtable,booktabs,array}
\usepackage{calc} 

\usepackage{etoolbox}
\makeatletter
\patchcmd\longtable{\par}{\if@noskipsec\mbox{}\fi\par}{}{}
\makeatother
\IfFileExists{footnotehyper.sty}{\usepackage{footnotehyper}}{\usepackage{footnote}}
\makesavenoteenv{longtable}
\usepackage{graphicx}
\makeatletter
\def\maxwidth{\ifdim\Gin@nat@width>\linewidth\linewidth\else\Gin@nat@width\fi}
\def\maxheight{\ifdim\Gin@nat@height>\textheight\textheight\else\Gin@nat@height\fi}
\makeatother

\setkeys{Gin}{width=\maxwidth,height=\maxheight,keepaspectratio}

\makeatletter
\def\fps@figure{htbp}
\makeatother

\makeatletter
\@ifpackageloaded{caption}{}{\usepackage{caption}}
\AtBeginDocument{
\ifdefined\contentsname
  \renewcommand*\contentsname{Table of contents}
\else
  \newcommand\contentsname{Table of contents}
\fi
\ifdefined\listfigurename
  \renewcommand*\listfigurename{List of Figures}
\else
  \newcommand\listfigurename{List of Figures}
\fi
\ifdefined\listtablename
  \renewcommand*\listtablename{List of Tables}
\else
  \newcommand\listtablename{List of Tables}
\fi
\ifdefined\figurename
  \renewcommand*\figurename{Figure}
\else
  \newcommand\figurename{Figure}
\fi
\ifdefined\tablename
  \renewcommand*\tablename{Table}
\else
  \newcommand\tablename{Table}
\fi
}
\@ifpackageloaded{float}{}{\usepackage{float}}
\floatstyle{ruled}
\@ifundefined{c@chapter}{\newfloat{codelisting}{h}{lop}}{\newfloat{codelisting}{h}{lop}[chapter]}
\floatname{codelisting}{Listing}

\makeatother
\makeatletter
\@ifpackageloaded{caption}{}{\usepackage{caption}}
\@ifpackageloaded{subcaption}{}{\usepackage{subcaption}}
\makeatother

\ifLuaTeX
  \usepackage{selnolig}  
\fi

\IfFileExists{xurl.sty}{\usepackage{xurl}}{} 
\hypersetup{
  pdftitle={Title},
  pdfauthor={Author 1; Author 2},
  pdfkeywords={3 to 6 keywords, that do not appear in the title},
  colorlinks=true,
  linkcolor={Black},
  filecolor={Maroon},
  citecolor={Black},
  urlcolor={Blue},
  pdfcreator={LaTeX via pandoc}}
\allowdisplaybreaks

\numberwithin{equation}{section}

\newcommand{\cov}{\mbox{cov}}

\newcommand{\bA}{\mbox{$\boldsymbol{A}$}}

\newcommand{\beqn}{\begin{eqnarray}}
\newcommand{\eeqn}{\end{eqnarray}}

\newtheorem{theorem}{\bf Theorem}[section]

\newtheorem{lemma}{Lemma}[section]
\newtheorem{remark}{Remark}[section]

\newcommand{\nn}{\nonumber}

\newcommand{\ignore}[1]{}{}

\newcommand{\anon}{1}

\begin{document}
\def\spacingset#1{\renewcommand{\baselinestretch}%
{#1}\small\normalsize} \spacingset{1}

\if1\anon
{
  \title{\bf Feature Screening for High-Dimensional Structural Break Predictive Regression}
  \author{
    Zhenjie Qin\\
    School of Mathematical Sciences, Zhejiang University
    \\
    and\\
    Rongmao Zhang\footnote{Corresponding author}\\
    School of Statistics and Data Sciences, Zhejiang Gongshang University\\
    and \\
    Wenyang Zhang \\
    Faculty of Business Administration and APAEM, University of Macau\\
    and\\
    Yang Zu\\
    Department of Economics and APAEM, University of Macau
    }
  \maketitle
} \fi

\if0\anon
{
  \bigskip
  \bigskip
  \bigskip
  \begin{center}
    {\LARGE\bf Feature Screening for High-Dimensional Structural Break Predictive Regression}
\end{center}
  \medskip
} \fi

\bigskip
\begin{abstract}
Predictive regression is a crucial tool for exploring return predictability. In this study, we introduce an efficient procedure for selecting and estimating active predictors and change points in 
structural break predictive regression. Our approach allows the number of change points to increase with the sample size and accommodates sparse active predictors that may be stationary or cointegrated. 
We begin by identifying the active predictors using a Sure Independence Canonical Screening (SICS) procedure. Next, we estimate the change points through a Ratio-Controlled Regression Screening (RCRS) method. Finally, we reduce redundancy by eliminating unnecessary breakpoints and predictors using information criteria (IC). This approach allows for consistent estimation and selection of true breakpoints and active predictors. Our simulations and empirical studies demonstrate that the proposed procedure performs effectively.
\end{abstract}

\noindent%
{\it Keywords:}  Nonstationary time series, canonical correlation, change point, cointegration.
\vfill

\newpage
\spacingset{1.8}
\section{INTRODUCTION}
\label{sec:intro}

The predictability of asset returns has long been a focal research question, dating back to \cite{dow1920scientific}. Predictive regression is a widely used methodological framework in this field. Important contributions to univariate predictive regressions include \cite{campbellEfficientTestsStock2006}, \cite{welchComprehensiveLookEmpirical2008}, \cite{elliottControlFunctionApproach2011}, \cite{caiTestingPredictiveRegression2014a}, among others. Recent extensions to multiple predictive regressions include \cite{kooHighdimensionalPredictiveRegression2020a}, \cite{leeLASSOPredictiveRegression2022a},  \cite{tuPenetratingSporadicReturn2023}, \cite{fanPredictiveQuantileRegression2023}, \cite{mei2024lasso}, \cite{fangDeterminationEffectiveCointegration}, \cite{xie2026regime}, \cite{gao2026lasso}, among others.

Most existing studies focus on predictive regressions without accounting for structural breaks. However, various abrupt events, such as economic crises, policy instabilities, or environmental changes, can significantly alter the predictability of the predictors. Neglecting structural changes in predictive regression models is ineffective and can lead to inaccurate results, as suggested by \cite{welchComprehensiveLookEmpirical2008}. In a recent study, \cite{tuPenetratingSporadicReturn2023} introduced a multiple predictive regression model that includes finite change points and predictors. Their framework can accommodate both weakly stationary (denoted as \(I(0)\)) and nonstationary (\(I(1)\)) predictors. In empirical finance, asset returns are typically treated as \(I(0)\) series, which means that the right-hand side of the regression must maintain a stationary balance. To achieve this, \cite{tuPenetratingSporadicReturn2023} suggests performing cointegration analysis among \(I(1)\) regressors and incorporating the resulting cointegration relationships as stationary covariates. 

In practical situations where both the number of predictors and the number of change points are large, the method proposed by \cite{tuPenetratingSporadicReturn2023} encounters two major limitations. First, having many redundant regressors can distort least square estimates (LSE) in regimes with small sample size. Second, multiple unit root testings and cointegration analysis are in need before detecting change points.
These challenges lead us to propose a procedure before change points detection, which can not only effectively reduce dimensionality within a high-dimensional predictive regression framework that accounts for structural breaks,
but also avoid the need for preliminary unit root testings and cointegration analysis.

Developing an effective procedure for reducing dimensions before estimating change points, particularly in the context of cointegration relationships, is important for practical applications yet challenging. Key studies, such as those by  \cite{leeLASSOPredictiveRegression2022a} and \cite{mei2024lasso}, utilize the Least Absolute Shrinkage and Selection Operator (LASSO) for variable selection in predictive regressions without structural breaks, while accounting for both fixed and diverging dimensions. However, they observe that LASSO often fails to consistently identify active cointegrated predictors, especially when the number of covariates increases alongside the sample size.

     Feature Screening methods, initially proposed by \cite{fanSureIndependenceScreening2008}, have been widely adopted for dimension reduction and machine learning. See also \cite{buhlmannVariableSelectionHighdimensional2010}, \cite{zhuModelFreeFeatureScreening2011}, \cite{liFeatureScreeningDistance2012a}, \cite{liuFeatureSelectionVarying2014}, \cite{liVariableSelectionPartial2017}, \cite{maVariableScreeningQuantile2017a}, \cite{chenErrorVarianceEstimation2018a}, \cite{zhongModelfreeVariableScreening2023}, \cite{fanAreLatentFactor2024}, among others.
The original Sure Independence Screening (SIS) method in \cite{fanSureIndependenceScreening2008} selects predictors that exhibit large absolute marginal correlations with the response variable. However, in predictive regression models that include potentially active nonstationary predictors, the original SIS cannot be applied directly. This limitation arises because the correlations between the response and any $I(1)$ predictors, even those that are active, tend to be small. One can see \eqref{pear0}, Figures \ref{fig:sis2} in Section \ref{sec:sics} for more details. This challenge motivates us to propose an effective procedure for selecting active predictors, whether it's $I(0)$ or cointegrated $I(1)$.

Specifically, we introduce a method for dimension reduction called Sure Independence Canonical Screening (SICS). The idea behind SICS is simple: the canonical correlation between the response variable and the complement of an active predictor (i.e., all predictors excluding this active one) will be much smaller. In contrast, the canonical correlation between the response variable and the complement of an inactive predictor will be larger, as the active predictors remain in the complement set.

Instead of relying on large absolute correlations between the response variable and individual predictors for screening as in classical SIS, our SICS utilize small canonical correlations between the response variable and the group of complement variables associated with a specific predictor for selection purpose. We demonstrate that the proposed SICS method is effective for predictor selection, even in the presence of potential structural breaks. After selecting the variables, we use a Ratio-Controlled Regression Screening (RCRS) procedure to estimate the locations of change points within a dimension-reduced model. Furthermore, we apply backward elimination based on information criteria (IC) to refine the model by removing unnecessary change points and predictors.

Our contributions can be categorized into three main aspects. 

First, we introduce a latent factor structure to model high-dimensional unit-root predictors. By utilizing a low-dimensional unit-root factor, we generate a high-dimensional unit-root predictor vector, allowing both dimensions to increase infinitely as the sample size grows. Our model accommodates multiple cointegration relationships, whereas \cite{kooHighdimensionalPredictiveRegression2020a}, \cite{zhouSemiparametricSingleindexPredictive2024} only consider single relationship. Further, our work overcomes the theoretical problem in \cite{kooHighdimensionalPredictiveRegression2020a} and 
\cite{fanPredictiveQuantileRegression2023} that the number of $I(1)$ predictors is constrained to be finite.

Second, we propose an effective method for selecting active predictors in a predictive regression model that addresses the limitation of LASSO when it comes to choosing cointegrated \(I(1)\) predictors. Our SICS procedure demonstrates the sure screening property, meaning that it can successfully identify all active predictors, particularly those that are cointegrated, even in the presence of multiple change points within the model. Additionally, a backward elimination approach helps ensure that active predictors can be selected precisely. Our predictor selection method does not require preliminary unit root testing and cointegration analysis.

Third, we propose a feasible method for estimating change points in structural break predictive regression with increasing dimension. Our simulation study demonstrates that the Group Forward Regression Screening (GFRS) method presented in \cite{tuPenetratingSporadicReturn2023} exhibits poor finite-sample performance in a predictive regression with many covariate variables. Our RCRS procedure selects change points in a dimension-reduced model one at a time, aiming to maximize the local Residual Sum of Squares (RSS). We construct a ratio for the RSS and prove that it converges to zero, provided that all true change points are included within a neighborhood set of the estimated change points. Additionally, we employ a backward elimination strategy using a different information criterion (IC) to remove unnecessary breaks and enhance the accuracy of the estimation.

The remainder of this paper is structured as follows: In Section 2, we introduce our model and methodologies. Section 3 discusses the asymptotic properties of our methods. In Section 4, we present the results of our simulation studies. Section 5 addresses the real-world applications of the proposed procedure. Finally, Section 6 concludes our work. All proofs of the main results are included in the supplementary material.

{\bf Notational Conventions.} We call a vector process $\boldsymbol{u}_t$ weakly stationary if
(i) ${\rm E}(\boldsymbol{u}_t)$ is a constant vector independent of $t$, and (ii) ${\rm E}\|\boldsymbol{u}_t\|^2 < \infty$,
and $\cov(\boldsymbol{u}_t, \boldsymbol{u}_{t+s}) $ depends on $s$ only for any integers $t, s$, where
$\| \cdot \|$ denotes the Euclidean norm.
 Further, if $\boldsymbol{u}_{t}$ has spectral density matrix that is finite and
positive definite at zero frequency we say $\boldsymbol{u}_{t}$ is an $I\left(
0\right) $ process. 
If $\nabla \boldsymbol{u}_t = \boldsymbol{u}_t - \boldsymbol{u}_{t-1}$ 
 is an $I(0)$ process, then we say $\boldsymbol{u}_{t}$ is an $I(1)$ process.
For a given matrix $\boldsymbol{A}\in \mathbb{R}^{m\times n},$ we denote $\boldsymbol{A}'$ be the transpose of $\boldsymbol{A}$; $\|\boldsymbol{A}\|, \|\boldsymbol{A}\|_F$ be its 2-norm and F-norm. We denote $\boldsymbol{I}_p$ be the identity matrix with dimension $p;~ \boldsymbol{0}_p, \boldsymbol{1}_p$ be $p-$dimensional column vectors whose elements are all $0, 1,$ respectively. If $m=n$, we use $\lambda_{\max}(\boldsymbol{A}), \lambda_{\min}(\boldsymbol{A})$ to represent the maximum and minimum eigenvalues of $\bA$. For $x\in \mathbb{R}, [x]:= n$ if $n\leq x <n+1.$ For a set $S,$ let $1_{S}$ be the indicate function of $S$, i.e. $1_S = 1$ if and only if $S$ holds; let $|S|$ be the cardinality of $S$; let  $S^c$ be the complementary set of $S.$ We use $(s:l)$ to represent a time period containing $s,s+1,\cdots,l-1.$
For any two processes $\boldsymbol{\xi}_{t}$ and $\boldsymbol{\eta}_{t},$ we use $\bar{\boldsymbol{\xi}}^{(s:l)}=(l-s)^{-1}\sum_{t=s}^{l-1}\boldsymbol{\xi}_t$ to denote the sample mean of $\boldsymbol{\xi}_t$ in the period $(s:l)$; $\widehat{\boldsymbol{\Omega}}_{\xi,\eta}^{(s:l)} = (l-s)^{-1}\sum_{t=s}^{l-1} (\boldsymbol{\xi}_t-\bar{\boldsymbol{\xi}}^{(s:l)})(\boldsymbol{\eta}_t-\bar{\boldsymbol{\eta}}^{(s:l)})'$ to denote the sample covariance of $\boldsymbol{\xi}_{t}$ and $\boldsymbol{\eta}_{t}$ in the period $(s:l)$, respectively. For simplicity, we use $\widehat{\boldsymbol{\Omega}}_{\xi}^{(s:l)}$ to denote $\widehat{\boldsymbol{\Omega}}_{\xi,\xi}^{(s:l)};$ we use $\bar{\boldsymbol{\xi}}, \widehat{\boldsymbol{\Omega}}_{\xi,\eta}$ to denote $\bar{\boldsymbol{\xi}}^{(1:T+1)},~\widehat{\boldsymbol{\Omega}}_{\xi,\eta}^{(1:T+1)}$.
\vspace{-0.60cm}
\section{MODEL AND ESTIMATION}
\label{sec:meth}

\subsection{Models}
\label{sec:model}
Let 
$\boldsymbol{z}_{t} = (z_{1t}, \cdots, z_{p_z,t})'$ be the observed  $I(0)$ predictors, $\boldsymbol{w}_t = (w_{1t}, \cdots, w_{p_w,t})'$ be the observed $I(1)$ predictors with possible cointegration, and $\boldsymbol{x}_t = (\boldsymbol{z}_t', \boldsymbol{w}_t')'$ with dimension $p = p_z + p_w$. In this paper, we consider the following high-dimensional structural break predictive regression model:
\beqn
y_{t+1} &=& \sum_{i=1}^{m_0+1} \boldsymbol{\gamma}_i' \boldsymbol{x}_t 1_{\{t_{i-1}^0 \leq t < t_i^0\}} + u_{t+1}, ~~ t = 1, 2, \cdots, T,
\label{model}\\
\boldsymbol{w}_t &= &\boldsymbol{Q} \boldsymbol{F}_{t} + \boldsymbol{e}_t, \,\quad \boldsymbol{F}_{t} = \boldsymbol{F}_{t-1} + \boldsymbol{v}_t,\label{panic}
\eeqn
where 
 $\boldsymbol{F}_{t}$ is an $r_F$-dimensional unit-root factor, $\boldsymbol{Q}$ is an orthogonal loading matrix with $\boldsymbol{Q}'\boldsymbol{Q}=\boldsymbol{I}_{r_F},  \, $  both noises $ \boldsymbol{v}_t$ and $\boldsymbol{e}_t$ are $I(0)$ processes, ${\rm Cov}(\boldsymbol{e}_t)=\sigma_e^2\boldsymbol{I}_{p_w}$ for some $\sigma_e^2>0$;  $S^0 = \{t_1^0, \cdots, t_{m_0}^0\}$ denotes the true locations set of breaks (change points), $t_0^0 = 1, t_{m_0+1}^0 = T+1, m_0 = |S^0|,$ and $\{\boldsymbol{\gamma}_i\}_{i=1}^{m_0 + 1}$ are the true coefficient vectors of $\boldsymbol{x}_t$ in each regime. Model~(2.2) has been widely used in cointegrated nonstationary time series, see for example,  \cite{baiPANICAttackUnit2004}, \cite{onatskiSpuriousFactorAnalysis2021}, \cite{zhangIdentifyingHighDimensionalb}. \cite{fangDeterminationEffectiveCointegration} uses the similar structure for predictive regression where $r_F$ is fixed, while we allow $r_F\rightarrow \infty$ as $T\rightarrow \infty$. If we set $\boldsymbol{Q}^\perp \in \mathbb{R}^{p_w \times (p_w - r_F)}$ be the orthogonal-complementary matrix of $\boldsymbol{Q},$ i.e. $(\boldsymbol{Q}^\perp)' \boldsymbol{Q}^\perp = \boldsymbol{I}_{p_w - r_F}, ~\boldsymbol{Q}' \boldsymbol{Q}^\perp = \boldsymbol{O},$ then by model~(2.2), we have $(\boldsymbol{Q}^{\perp})'\boldsymbol{w}_t=(\boldsymbol{Q}^{\perp})'\boldsymbol{e}_t$ is an $(p_w - r_F)$-dimensional $I(0)$ process and $\boldsymbol{Q}'\boldsymbol{w}_t=\boldsymbol{F}_{t}+\boldsymbol{Q}'\boldsymbol{e}_t$ is an $r_F$-dimensional $I(1)$ process.  This implies that $\boldsymbol{w}_t$ is a cointegrated series with cointegration rank $p_w-r_F$.

\vspace{-0.60cm}
\subsection{SICS Predictor Selection}\label{sec:sics}

In this subsection, we propose a Sure Independence Canonical Screening (SICS) procedure to select the active predictors.
We firstly illustrate the motivation of using canonical correlation for screening briefly.
By standard calculations, it can be shown that for stationary response $y_t$ and any unit-root predictor $w_{i,t},$ the absolute correlation satisfies
\begin{equation}
    \begin{aligned}
        &\quad~
         \bigg(\sum_{t=1}^T (y_{t+1}-\bar{y})^2\sum_{t=1}^T (w_{i,t}-\bar{w}_{i})^2\bigg)^{-{1\over 2}}\bigg|\sum_{t=1}^T (w_{i,t}-\bar{w}_{i}) (y_{t+1}-\bar{y})\bigg|
        \\
         &= \bigg({1\over T}\sum_{t=1}^T (y_{t+1}-\bar{y})^2\bigg)^{-{1\over 2}}\bigg({1\over T^2}\sum_{t=1}^T (w_{i,t}-\bar{w}_{i})^2\bigg)^{-{1\over 2}}\bigg|{1\over T^{3\over 2}}\sum_{t=1}^T (w_{i,t}-\bar{w}_{i}) (y_{t+1}-\bar{y})\bigg| =o_p(1).
    \end{aligned}\label{pear0}
\end{equation}
This shows that we cannot directly rank the absolute correlation to screen unit-root predictors due to the nonstationarity. Figure \ref{fig:sis2}
shows the boxplot of simulated absolute correlations between $y_t$ and $x_{i,t}, ~i=1,2,\cdots,p$ using the Data Generation Process 2 (DGP2) in Section \ref{sec:4.2}, where active predictors are set as $x_{15,t}, x_{29,t}, x_{44,t}, x_{58, t}~ (I(0)); ~x_{72+i,t},~i=1,2,\cdots,8~(I(1)),$ and $p_z = p_w=72.$ It is evident that the original SIS procedure will fail to select active $I(1)$ predictors in most replications.
\begin{figure}
    \centering
    \begin{subfigure}[b]{0.48\textwidth}
        \centering
        \includegraphics[width=\textwidth]{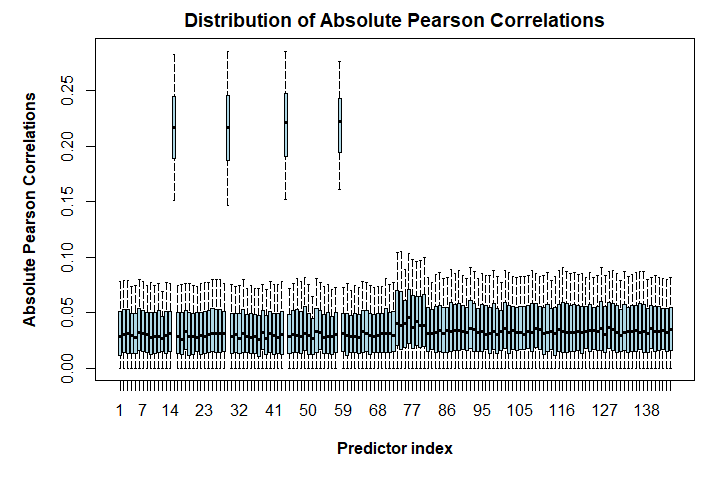}
        \caption{Absolute Correlations}
        \label{fig:sis2}
    \end{subfigure}
    \hfill 
    \begin{subfigure}[b]{0.48\textwidth}
        \centering
        \includegraphics[width=\textwidth]{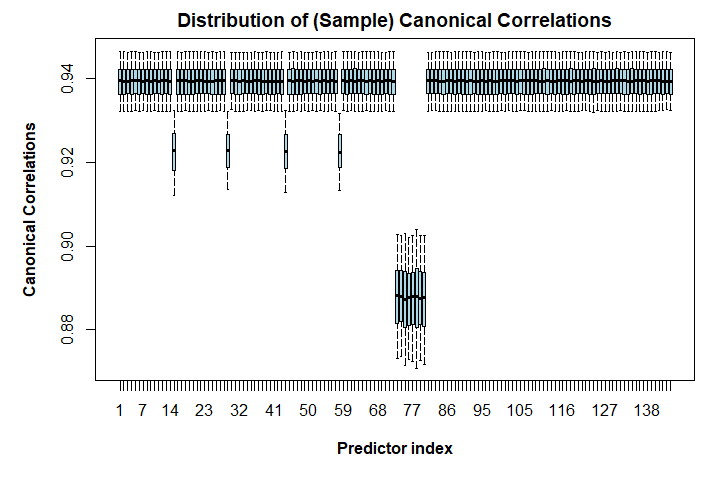}
        \caption{Canonical Correlations}
        \label{fig:sis}
    \end{subfigure}
    \caption{Boxplots of Correlations under DGP 2 in Section 4, with $T = 600,~ S^0 = \{150,300,450\},~ p_z= p_w=72$. Active predictors are $x_{15,t}, x_{29,t}, x_{44,t}, x_{58, t}~ (I(0)); ~x_{72+i,t},~i=1,2,\cdots,8~(I(1))$.}
    \label{fig:combined_sis}
\end{figure}

We therefore explore using the principle of canonical correlation analysis for our purpose. For the predictors vector $\boldsymbol{x}_t = (x_{1t}, \cdots, x_{pt})',$ denote $\boldsymbol{x}_{-i,t} = (x_{1t}, \cdots, x_{i-1,t},x_{i+1,t}, \cdots, x_{pt})'$ as the complement vector of predictors for $x_{i,t}$. The sample canonical correlation between $y_t$ and $\boldsymbol{x}_{-i,t}$ is given by
\begin{equation}
    \widehat{{\rm CC}}_{i} := \widehat{\boldsymbol{\Omega}}^{-1}_y \widehat{\boldsymbol{\Omega}}_{y, x_{-i}}
    \widehat{\boldsymbol{\Omega}}_{x_{-i}}^{-1}
    \widehat{\boldsymbol{\Omega}}_{x_{-i},y}
    ,\label{CCiexpress}
\end{equation}
where $\widehat{\boldsymbol{\Omega}}_y = T^{-1}\sum_{t=1}^T (y_{t+1}-\bar{y})^2,~ \widehat{\boldsymbol{\Omega}}_{y, x_{-i}} = T^{-1}\sum_{t=1}^T (y_{t+1}-\bar{y}) (\boldsymbol{x}_{-i,t}-\bar{\boldsymbol{x}}_{-i})' = \widehat{\boldsymbol{\Omega}}_{x_{-i},y}',$ and $\widehat{\boldsymbol{\Omega}}_{x_{-i}} = T^{-1}\sum_{t=1}^T (\boldsymbol{x}_{-i,t}-\bar{\boldsymbol{x}}_{-i}) (\boldsymbol{x}_{-i,t}-\bar{\boldsymbol{x}}_{-i})'.$ 
It turns out the above defined canonical correlation $\widehat{{\rm CC}}_{j}$'s will provide useful information for predictor selection when cointegration relationships are possible to present in the predictors. Figure \ref{fig:sis} shows the boxplot of simulated $\widehat{\rm CC}_i$'s using the same DGP as that in generating Figure \ref{fig:sis2}. It is noticed that the corresponding canonical correlations of active predictors, whether they are $I(0)$ or $I(1)$, are in general obviously smaller than those of the inactive ones, even in the presence of structural breaks. Our SICS predictor selection procedure, which is defined in the following, is motivated by these observations.

Let
\begin{equation}
    G^0 = \{1\leq j \leq p:~\exists~1\leq i\leq m_0+1,~\text{such that} ~\gamma_{i,j}\neq 0\}\label{truemodel}
\end{equation}
be the predictor set in the true sparse model, where $\boldsymbol{\gamma}_i=(\gamma_{i,1},\cdots,\gamma_{i,p})'$ is given by \eqref{model}. Our SICS procedure estimate $G^0$ by
\begin{equation}
    \widehat{G} = \{1\leq j \leq p:~\widehat{{\rm CC}}_j~\text{is among the first}~d_T~\text{smallest of all}\},\label{dt}
\end{equation}
for some positive integer $d_T\rightarrow\infty$ as $T\rightarrow\infty.$

\begin{remark}
    For the choice of $d_T$ in practice, we can rank $\{\widehat{{\rm CC}}_i\}_{i=1}^p$ by $\widehat{{\rm CC}}_{j_1}\leq \cdots\leq \widehat{{\rm CC}}_{j_p}$ and plot them in a figure. 
    The differences $\widehat{{\rm CC}}_{j_{i+1}}-\widehat{{\rm CC}}_{j_i},~i=1,2,\cdots,p-1$ will tend to be negligible for sufficiently large $i$, and one can select $d_T$ based on this figure.
\end{remark}

\begin{remark}
     The similar canonical correlation screening method has also been discussed by \cite{kongSureScreeningRanking2017a}. They examine situations where some variables are marginally uncorrelated with the response but are jointly correlated with the response when combined with other variables. Our SICS procedure differs from theirs in that we utilize canonical correlations to address cointegration relationships in the context of non-stationary time series, whereas they focus on independent and identically distributed (i.i.d.) covariate variables and noises. Further, our method allows the existence of change points in a regression model, while \cite{kongSureScreeningRanking2017a} considers a model without change points.
\end{remark}

\begin{remark}
      We cannot compute canonical correlations using $k$ predictors for some fixed $k$ as in \cite{kongSureScreeningRanking2017a}, rather than using $(p-1)$ ones as in our method. For example, if $k=2,~p>4,$ and $(x_{1t}, x_{2t}, x_{3t})'$ are cointegrated active predictors with cointegration rank $1$ and cointegration vector $({1/\sqrt{3}})(1,1,1)'$, any linear combinations within groups with two of $(x_{1t}, x_{2t}, x_{3t})'$ will be $I(1).$ Then, the corresponding canonical correlations, say, between $y_t$ and $(x_{1t},x_{2t})',$ will be near zero.
\end{remark}

\subsection{RCRS Change Point Detection}

Let $x_{j_1,t},\cdots,x_{j_{d_T},t}$ be the estimated active predictors   in \eqref{dt} and $\tilde{\boldsymbol{x}}_t := (x_{j_1,t},\cdots,x_{j_{d_T},t})'.$ We propose a novel Ratio-Controlled Regression Screening (RCRS) procedure to detect the change points based on  $\tilde{\boldsymbol{x}}_t$.  We first give some notations. \begin{itemize}
    \item For a given time period $(s:l)$, we define the (local) Residual Sum of Squares (RSS) for the least square estimators by
\begin{equation}
    {\rm RSS}(s:l) = \sum_{t=s}^{l-1} [y_{t+1} - \bar{y}^{(s:l)} -( \widehat{\boldsymbol{\gamma}}^{(s:l)})' (\tilde{\boldsymbol{x}}_{t} - \bar{\tilde{\boldsymbol{x}}}^{(s:l)})]^2, \label{RSSsl}
\end{equation}
where
$
    \widehat{\boldsymbol{\gamma}}^{(s:l)} = [\widehat{\boldsymbol{\Omega}}^{(s:l)}_{\tilde{x}}]^{-1}[\widehat{\boldsymbol{\Omega}}^{(s:l)}_{\tilde{x}, y}]
$ denotes the least square estimator of coefficient of $\tilde{\boldsymbol{x}}_t$.
\item For a given set $S=\{t_1, \cdots, t_k\},$ denote $t_0=1, t_{k+1}=T+1,$ we define the global RSS with locations set $S$ by
\begin{equation}
    {\rm RSS}(S) = \sum_{i=1}^{k+1} {\rm RSS}(t_{i-1}:t_i); \label{RSSofS}
\end{equation}
we define the $r-$neighborhood set of $S$ by \begin{equation}\mathcal{U}(S, r) =  \{t: \exists t_i \in S\cup \{1,T\}, t_i - r \leq t < t_i + r\}.\label{Usr}\end{equation}
\end{itemize}

The RCRS procedure relies on the RSS calculated over a sequence of local moving windows for initial identification of potential break points. Specifically, let \( h_T \) be a positive integer representing the radius of the local window. 
For a given interior time \( l \), consider the local window defined by \( (l - h_T : l + h_T) \). Intuitively, when this local window contains no breaks, the local RSS calculated over this window, denoted as \( \text{RSS}(l - h_T : l + h_T) \), will be small. Conversely, if there are true change points within this window, the RSS will be large. To identify potential break points, we can search for the local maxima of the RSS over all possible interior times \( l \). 
We further propose the RSS ratio (RSSR) statistic, which computes the (ratio of) global RSS reductions when an additional break point is included. This allows us to retain only those local maxima that lead to significant reductions in the global RSS in our estimated locations set of breaks.

To be specific, given constant $C_h>0$, and $M_T\rightarrow \infty$ as $T\rightarrow\infty$, our 
RCRS procedure is defined as follows:
\begin{enumerate}
    \item Set $\widehat{S}_0 = \phi$, the empty set; and set $k = 1.$
    \item Compute all ${\rm RSS}(l-h_T:l+h_T), ~l = h_T+1, \cdots, T+1-h_T.$ 
    \item Set $\widehat{S}_k = \widehat{S}_{k-1} \cup \{\widehat{t}_k\},$ where
    \begin{equation}
        \widehat{t}_k = \arg\max\limits_{l \notin \mathcal{U}(\widehat{S}_{k-1}, C_h h_T)} {\rm RSS}(l-h_T:l+h_T). \label{hat_tk}
    \end{equation}
    If $k\geq2 ,$ compute the ratio of RSS difference ${\rm RSSR}_k = \dfrac{{\rm RSS}(\widehat{S}_{k-1})-{\rm RSS}(\widehat{S}_{k})}{{\rm RSS}(\widehat{S}_{k-2})-{\rm RSS}(\widehat{S}_{k-1})}.$
    \item
    \begin{itemize}
        \item If $k < M_T,$ take $k \leftarrow k + 1$ and return to step 3.
        \item If $k = M_T,$ stop the procedure and take $\widehat{S} = \widehat{S}_{k_0},$ where \beqn \label{k0}k_0=(\arg\min\limits_{i}{\rm RSSR}_{i})-1.\eeqn
    \end{itemize}
\end{enumerate}
When applying the above procedure, 
the positive constant $C_h$ specifies that two estimated break points cannot be too close to each other; $M_T$ controls the maximum total number of steps.
Figure \ref{fig:rssr} demonstrates how the RSS ratios can be utilized to identify breakpoints using a simulated example. For the DGP3 described in Section \ref{sec:4.2}, where four true breaks 
\begin{figure}[H]
    \centering
    \includegraphics[width=0.5\textwidth]{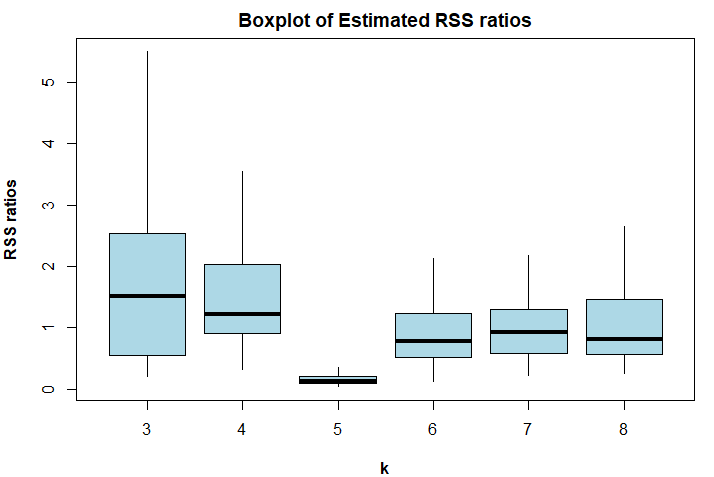}
    \caption{Boxplot of RSS ratios under DGP3 in Section \ref{sec:4.2}, with $T = 600,~ S^0 = \{120,240,360,480\}$. Here, $k$ denotes the steps of RCRS procedure.
    }
    \label{fig:rssr}
\end{figure} 
\vspace{-0.5cm}

exist (denoted by \(m_0 = 4\)), we present the boxplot of the RSS ratios ${\rm RSSR}_{k},~ k = 3,4,\cdots,8$. Note that the RSS ratio decreases sharply when \(k = 5 = m_0 + 1\). This indicates that RSS ratios can exclude many redundant points. 
During the first $m_0$ steps of RCRS procedure in most simulated cases, we can usually select one point within a small neighbourhood of one true break at each step.

\begin{remark}\label{remarktx}
Our SICS-RCRS procedure differs from the method described in \cite{tuPenetratingSporadicReturn2023} in two ways. First, we perform predictor selection before change points detection, whereas \cite{tuPenetratingSporadicReturn2023} directly searches for break points. To ensure the feasibility of LSE, The distance between two break points estimated by their procedure cannot be smaller than $p,$ where \( p \) represents the number of predictors. When \( p \) is large, their approach may overlook some true change points that are located close to each other. One can see Section \ref{comparetx} for more details. Second, \cite{tuPenetratingSporadicReturn2023} employs the extended Bayesian information criterion (EBIC) as outlined in \cite{chen2008extended} to determine when to stop screening, while we utilize RSS ratios for this purpose.
\end{remark}

\vspace{-0.60cm}
\subsection{Refining Procedures for the Break Points and Predictors}

The previously estimated sets of active predictors, denoted as \(\widehat{G}\), and breakpoints, denoted as \(\widehat{S}\), may contain redundant points. 
In this subsection, we propose two information criteria (IC) to eliminate these redundancies: one for the breakpoints and another for the predictors. First, we will eliminate the redundant breakpoints using a backward algorithm based on one information criterion. Next, we will address the redundant predictors in each segment defined by the estimated change points, using a different information criterion.

For a given locations set $S,$ we define
\begin{equation}
    {\rm IC}_1(S) := {\rm RSS}(S) + |S|\omega_{1T},
\end{equation}
where ${\rm RSS}(S)$ is given in \eqref{RSSofS}, 
$\omega_{1T}$ denotes a penalty term. The backward elimination algorithm for breaks goes as follows.
\begin{enumerate}
\item Let $K = |\widehat{S}|, ~{\rm IC}_{1,K} = {\rm IC}_1(\widehat{S}).$ Let
$
    {\bf t}_K = \{t_{K,1}, t_{K,2}, \cdots, t_{K,K}\} = \widehat{S}.
$

\item For $i = 1, 2, \cdots, K,$ compute ${\rm IC}_1({\bf t}_K \backslash t_{K,i})$ and set ${\rm IC}_{1,K-1} = \min\limits_{1\leq i \leq K}{\rm IC}_{1}({\bf t}_K\backslash t_{K,i}),$ where ${\bf t}_K \backslash t_{K,i} = \{t_{K,1},\cdots, t_{K,i-1}, t_{K,i+1}, \cdots, t_{K,K}\}.$

\item If ${\rm IC}_{1,K-1} > {\rm IC}_{1,K},$ then the estimated locations set of breaks is given by $\widehat{\widehat{S}}:={\bf t}_K.$

If ${\rm IC}_{1,K-1} \leq {\rm IC}_{1,K},~ K = 1,$ we conclude there exists no structural breaks.

If ${\rm IC}_{1,K-1} \leq {\rm IC}_{1,K}$ and $K > 1,$ set $
        j \leftarrow \arg\min\limits_i {\rm IC}_1({\bf t}_K \backslash t_{K,i}),~
        {\bf t}_{K-1} \leftarrow {\bf t}_K \backslash t_{K-1,j},~
        K \leftarrow K - 1,
$
then go to step 2.
\end{enumerate}
Finally, we get an estimated locations set of breaks $\widehat{\widehat{S}}= \Big\{\widehat{\widehat{t}}_1, \cdots, \widehat{\widehat{t}}_{\widehat{\widehat{m}}}\Big\}.$

Then, we give the following predictor elimination algorithm in the estimated regime $(\widehat{\widehat{t}}_{i-1}:\widehat{\widehat{t}}_{i})$ for some $1\leq i\leq \widehat{\widehat{m}}+1,$ where $\widehat{\widehat{t}}_0 = 1, \widehat{\widehat{t}}_{\widehat{\widehat{m}}+1}=T+1.$
For any given covariate variable set $G$, we define the Residual Sum of Squares using predictors in $G$ and samples in $(\widehat{\widehat{t}}_{i-1}:\widehat{\widehat{t}}_{i})$ by
\begin{equation}
    {\rm RSS}_i(G) = \sum_{t= \widehat{\widehat{t}}_{i-1}}^{\widehat{\widehat{t}}_{i}-1} \bigg(y_{t+1}-\bar{y}^{(\widehat{\widehat{t}}_{i-1}:\widehat{\widehat{t}}_{i})}-(\widehat{\boldsymbol{\gamma}}_i^G)' (\boldsymbol{x}_{Gt} - \bar{\boldsymbol{x}}_G^{(\widehat{\widehat{t}}_{i-1}:\widehat{\widehat{t}}_{i})})\bigg)^2,
\end{equation}
where $\boldsymbol{x}_{Gt}$ gathers the variables in $G,$ and
$
    \widehat{\boldsymbol{\gamma}}_i^G= \bigg(\widehat{\boldsymbol{\Omega}}^{(\widehat{\widehat{t}}_{i-1}:\widehat{\widehat{t}}_{i})}_{x_G}\bigg)^{-1} \bigg(\widehat{\boldsymbol{\Omega}}^{(\widehat{\widehat{t}}_{i-1}:\widehat{\widehat{t}}_{i})}_{x_G,y}\bigg)
$ denotes the least square estimator of coefficient of $\boldsymbol{x}_{Gt}$.
Define
$$
    {\rm IC}_2^i(G) = {\rm RSS}_i(G) + |G| \omega_{2T}^i,
$$
where $\omega_{2T}^i$ denotes the penalty term.
Our predictor elimination algorithm goes as follows.
\begin{enumerate}
\item Let $K = |\widehat{G}|, ~{\rm IC}_{2,K}^i = {\rm IC}_2^i(\widehat{G}),
    ~G_K := \{x_{K,1}, x_{K,2}, \cdots, x_{K,K}\} = \widehat{G}.
$

\item For $j = 1, 2, \cdots, K,$ compute ${\rm IC}_2^i(G_K \backslash x_{K,j})$ and set ${\rm IC}_{2, K-1}^i=\min\limits_{1\leq j \leq K}{\rm IC}_2^i(G_K \backslash x_{K,j}),$ where $G_K\backslash x_{K,j}=G_K\cap\{x_{K,j}\}^c.$

\item If ${\rm IC}_{2, K-1}^i > {\rm IC}_{2, K}^i,$ the estimated set of predictors in $(\widehat{\widehat{t}}_{i-1}:\widehat{\widehat{t}}_{i})$ is given by $ \widehat{\widehat{G}}_i := G_K.$

If ${\rm IC}_{2, K-1}^i \leq {\rm IC}_{2, K}^i$ and $K = 1,$ we conclude that $ \widehat{\widehat{G}}_i = \phi$.

If ${\rm IC}_{2,K-1}^i \leq {\rm IC}_{2,K}^i, K > 1,$ we set $
        l \leftarrow \arg\min\limits_j {\rm IC}_2^i(G_K \backslash x_{K,j}),
        ~{G}_{K-1} \leftarrow  G_K \backslash x_{K,l},\\
        ~K \leftarrow K - 1.
$
Then, go to step 2.
\end{enumerate}
Finally, the estimator of predictor set in the time period $(\widehat{\widehat{t}}_{i-1}:\widehat{\widehat{t}}_{i})$ is given by $\widehat{\widehat{G}}_i.$ 

Our procedure can be summarized in the flowchart in Figure \ref{fig:procedure}.
\begin{figure}[htbp]
  \centering
\begin{tikzpicture}[
    scale=0.4,
    box/.style={
        draw=blue!80,
        thick,
        fill=blue!10,
        minimum width=4cm,
        minimum height=1cm,
        rounded corners=5pt,
        font=\large\bfseries
    },
    node distance=0.5cm and 1cm,
    arrow/.style={thick, -Latex}
]

\node[box] (Q) {Predictors: $
\boldsymbol{x}_t$};
\node[box, right=of Q] (phi) {Locations set of breaks: $\phi$};
\node[box, below=0.8cm of Q] (Qhat) {Predictor set: $\widehat{G}$};
\node[box, below=3.4cm of Qhat] (Qhat2) {Predictor set: $\widehat{\widehat{G}}_i$'s};
\node[box, below=2.7cm of phi] (Shat) {Locations set of breaks: $\widehat{S}$};
\node[box, below=1.5cm of Shat] (Shat2) {Locations set of breaks: $\widehat{\widehat{S}}$};

\draw[arrow] (Q) -- node[right=2pt] {SICS} (Qhat);
\draw[arrow] (Qhat) -- node[right=2pt] {Elimination} (Qhat2);
\draw[arrow] (phi) -- node[right=2pt] {RCRS} (Shat);
\draw[arrow] (Shat) -- node[right=2pt] {Elimination} (Shat2);

\coordinate (midElimS) at ($(phi)!0.47!(Shat)$);
\draw[arrow] (Qhat) -- (midElimS);


\coordinate (target-point) at ($(Qhat)!0.75!(Qhat2)$);
\draw[arrow] (Shat2.500) |- (target-point);
\end{tikzpicture}

\caption{Flowchart of our procedure.}
\label{fig:procedure}
\end{figure}
\begin{remark}
We can combine our SICS with predictor elimination as an effective predictor selection method. Our approach first reduces the dimensionality to a moderately large one, and then refines the predictor set through elimination. This procedure can overcome the weakness of LASSO methods by  \cite{leeLASSOPredictiveRegression2022a} and \cite{mei2024lasso}, which fails to select effective cointegrated predictors. We will also demonstrate this in the simulation experiment presented in Section \ref{sec:4.4}.
\end{remark}

\vspace{-0.60cm}
\section{ASYMPTOTIC THEORY}
\label{sec:asym}

In this section, we investigate the asymptotic properties of the estimated  predictors and change points. 
In Section \ref{sec:3.1}, we show that the sure screening property holds for SICS, i.e., all the true active predictors are contained in the selected set $\widehat{G}$ defined in (\ref{dt}) with probability tending to one.
In Section \ref{sec:3.2}, we show that each true change point can be consistently estimated by one element in the estimated set $\widehat{S}$ drawn by the proposed RCRS procedure. 
In Section \ref{sec:3.3}, we show that our refinement procedures based on information criteria can eliminate redundancies for both change points and predictors.
\vspace{-0.60cm}
\subsection{On the SICS Estimated Predictor Set $\widehat{G}$}
\label{sec:3.1}

We first introduce regularity conditions for the generating processes.

{\noindent\bf Assumption 1. }
\vspace{-0.35cm}
\begin{itemize}

    \item[(1)] Suppose that there exists an $r_F-$dimensional vector $\boldsymbol{g}_t$ with mean zero and independent components such that $\boldsymbol{v}_t = \boldsymbol{\Xi} \boldsymbol{g}_t,$ where $\boldsymbol{\Xi}$ has size $r_F\times r_F, ~\|\boldsymbol{\Xi}\| < \infty.$ For each component $g_t^i$ of $\boldsymbol{g}_t,$ there exists an independent and standard normal sequence $\{\xi_t^i\}$ and $\tau \in (0,1/2),$ for which as $T \rightarrow \infty,$
    \begin{equation}
        \max\limits_{1\leq i \leq r_F} \max\limits_{1\leq s<l\leq T, r_F(l-s)^{\tau-1/2}\rightarrow 0} {\rm E}\biggl[\sum_{t=s}^{l-1} \{g_t^i - \bar{g}_i^{(s:l)} - \sigma_{ii}(\xi_t^i - \bar{\xi}_i^{(s:l)})\}
            \biggr]^2 = O((l-s)^{2\tau}), \label{strapp_cond}
    \end{equation}
    where $b_1 \leq \sigma_{ii}^2 := \lim\limits_{T\rightarrow \infty}{\rm Var}\bigg(T^{-{1\over 2}}\sum_{s=1}^T g_s^i\bigg)\leq b_2$ holds for any $i$ and some $0<b_1<b_2.$

    \item[(2)]  Define $\boldsymbol{\zeta}_t := (\boldsymbol{z}_t', u_t, \boldsymbol{e}_t')' := (\zeta_t^1,\cdots,\zeta_t^{p_z+p_w+1}). ~{\rm E} \boldsymbol{\zeta}_t = \boldsymbol{0},~ \max\limits_{i}{\rm E}|\zeta_t^i|^{4+2\kappa} < \infty$
    for some $\kappa > 0. ~\boldsymbol{\zeta}_t$ is $\alpha-$mixing with mixing coefficients $\alpha(k)$ satisfying
    $\sum_{k=1}^\infty [\alpha(k)]^{1-1/( 2+\kappa)} < \infty.~ \lambda_{\min}({\rm Cov}(\boldsymbol{\zeta}_t))\asymp\lambda_{\max}({\rm Cov}(\boldsymbol{\zeta}_t)) \asymp 1,$ where $a\asymp b$ denotes that $a$ and $b$ have the same order, i.e., $a=O(b)$ and $b=O(a)$.

    \item[(3)] $\{\boldsymbol{v}_t\}, \{\boldsymbol{z}_t\}, \{\boldsymbol{e}_t\}, \{u_t\}$ are independent from each other.
\end{itemize}

\begin{remark}
    Assumption 1~(1) is essential in building the limit performance of $(l-s)^{-2}\sum_{t=s}^{l-1} (\boldsymbol{F}_t - \bar{\boldsymbol{F}}^{(s:l)})(\boldsymbol{F}_t - \bar{\boldsymbol{F}}^{(s:l)})'$ using strong approximation method.
    One can refer to Condition 2 and Lemma 3 in \cite{zhangIdentifyingCointegrationEigenanalysis2019} for more details. We allow the dimensionality of $\boldsymbol{F}_t,~ i.e.~ r_F,$ goes to infinity. Assumption 1~(2) is standard and satisfied by many common processes. 
    Assumption 1 (3) can be replaced by more general conditions to allow for mild dependence between $(\{\boldsymbol{w}_t\},\{u_t\})$ and between $(\{\boldsymbol{z}_t\}, \{\boldsymbol{w}_t\})$.
    \label{remarkF}
\end{remark}

Let $\boldsymbol{A} = \begin{bmatrix}
        \boldsymbol{I}_{p_z} & {\boldsymbol{O}} \\
        {\boldsymbol{O}} & (\boldsymbol{Q}^{\perp})' \\
        {\boldsymbol{O}} & \boldsymbol{Q}'
    \end{bmatrix}$, then $\boldsymbol{A}'\boldsymbol{A} = \boldsymbol{A}\boldsymbol{A}'=\boldsymbol{I}_p$ and
\begin{equation}
    \boldsymbol{A} \boldsymbol{x}_t  
    =(\boldsymbol{z}'_t, \,\boldsymbol{w}_t'\boldsymbol{Q}^{\perp},\, \boldsymbol{w}'_t\boldsymbol{Q})'
    :=(\boldsymbol{z}'_t, \, \boldsymbol{\psi}_{1t}', \, \boldsymbol{\psi}_{2t}')'
 := \boldsymbol{\psi}_t,
    \label{cointA}
\end{equation}
where $\boldsymbol{\psi}_{1t} $ represents a $(p_w - r_F)-$dimensional $I(0)$ process and $\boldsymbol{\psi}_{2t} $ represents an $r_F-$dimensional $I(1)$ process.
Using \eqref{cointA}, we can rewrite Model~(\ref{model}) as
\begin{equation}
    \begin{aligned}
        y_{t+1}
        &= \sum_{i=1}^{m_0+1} \boldsymbol{\gamma}_i'\boldsymbol{A}'\boldsymbol{A} \boldsymbol{x}_t 1_{\{t_{i-1}^0 \leq t < t_i^0\}} + u_{t+1}\\
        &:= \sum_{i=1}^{m_0+1} (\boldsymbol{\alpha}_i',\boldsymbol{\beta}_i',\boldsymbol{0}') \boldsymbol{\psi}_t 1_{\{t_{i-1}^0 \leq t < t_i^0\}} + u_{t+1},
    \end{aligned}\label{modelf}
\end{equation}
where $\boldsymbol{\alpha}_i,~\boldsymbol{\beta}_i$ denote the coefficient vector of $\boldsymbol{z}_t,~ \boldsymbol{\psi}_{1t},$ respectively.
\begin{remark}
    In \eqref{modelf}, the coefficient of $\boldsymbol{\psi}_{2t},$ the non-cointegrated $I(1)$ vector, is assumed to be zero. In fact, this condition can be relaxed to allow for some nonzero but shrink-to-zero coefficients for $\boldsymbol{\psi}_{2t}$. It will not affect the main theoretical results of this paper.
\end{remark}
The following Assumption is used for deriving the sure screening property of SICS procedure.

{\noindent\bf Assumption 2.}
\vspace{-0.35cm}
\begin{itemize}
\item[(1)] $m_0p T^{-1/2} \rightarrow 0,~m_0=O(\log T).$ 

\item[(2)]  $\lim\limits_{T\rightarrow\infty}\sum_{i=1}^{m_0+1} {t_i^0-t_{i-1}^0\over T}  (\boldsymbol{\alpha}_i', \boldsymbol{\beta}_i')'= (\tilde{\boldsymbol{\alpha}}', \tilde{\boldsymbol{\beta}}')',$ where $\|(\tilde{\boldsymbol{\alpha}}', \tilde{\boldsymbol{\beta}}')'\|\geq C_3$ for some $C_3>0$.

\item[(3)] $|G^0| < \infty,$ where $G^0$ is defined in \eqref{truemodel}. For some $C_0> 0,~ M_\delta := \min\limits_{ k \in G^0, l \notin G^0} ({\rm CC}_l-{\rm CC}_k) > C_0,$ where \begin{equation}
    {\rm CC}_l := {1\over \sigma_y^2} [{\rm Cov}(\tilde{y}_t, \boldsymbol{x}_{-l,t})] [{\rm Cov}(\boldsymbol{x}_{-l,t})]^{-1}[{\rm Cov}(\boldsymbol{x}_{-l,t}, \tilde{y}_t)],~l=1,2,\cdots,p, \label{pcc0}
\end{equation}
$\tilde{y}_t = \tilde{\boldsymbol{\alpha}}'\boldsymbol{z}_t+\tilde{\boldsymbol{\beta}}'\boldsymbol{\psi}_{1t}, ~\sigma_y^2 = \lim\limits_{T\rightarrow\infty}\sum_{i=1}^{m_0+1}{t_i^0-t_{i-1}^0\over T}\bigg[\boldsymbol{\alpha}_i'{\rm Cov}(\boldsymbol{z}_t)\boldsymbol{\alpha}_i + \boldsymbol{\beta}_i'{\rm Cov}(\boldsymbol{\psi}_{1t})\boldsymbol{\beta}_i + {\rm E}u_{t}^2\bigg]<\infty.$
\end{itemize}

\begin{remark}
    Assumption 2 (1) gives the maximum order for $m_0$ and $p.$
     Assumption 2 (2) excludes the case where the limit of  weighted-average coefficient vector, i.e., $(\tilde{\boldsymbol{\alpha}}', \tilde{\boldsymbol{\beta}}')'$, is a zero vector.
    Assumption 2 (3) on $M_\delta$  constrains that the total signal strength cannot be too small.
    Similar Assumptions on signal strength are common in screening literature, see Condition 3 in \cite{fanSureIndependenceScreening2008} for example. It will be shown in the supplementary material that $\widehat{\rm CC}_l$ converges to ${\rm CC}_l$ for $l=1,2,\cdots,p$ in probability. Assumption 2 (3) can be relaxed to allow $|G^0|$ to grow with $T$, and $C_0$ can be replaced by $C_0T^{-\tau_0}$ for some $\tau_0>0.$ When there are no change points, Assumption 2 (2) holds naturally if $\boldsymbol{x}_t$ exists predictability, and ${\rm CC}_l$ in Assumption 2 (3) is equal to the classical canonical correlation between $\{y_t\}$ and $\{\boldsymbol{x}_{-l,t}\}.$
\end{remark}
We have the following Theorem 3.1 for  the sure screening property of SICS procedure.
\begin{theorem}
    Suppose $r_F=o(T^{1/2-\tau})$ holds for $\tau$ in Assumption 1 (1). Under Assumptions 1,2, as $T \rightarrow \infty,$ we have
        \begin{equation}P(G^0 \subset\widehat{G})\longrightarrow 1.\end{equation}
    \label{thm3-1}
\end{theorem}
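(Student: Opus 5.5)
The plan is to reduce sure screening to uniform consistency of the sample canonical correlations. Concretely, I aim to show
\begin{equation*}
\max_{1\leq l\leq p}\big|\widehat{\rm CC}_l-{\rm CC}_l\big|=o_p(1).
\end{equation*}
Given this, Assumption 2 (3) yields, with probability tending to one, $\widehat{\rm CC}_k<\widehat{\rm CC}_l$ for every $k\in G^0$ and $l\notin G^0$, since the gap $M_\delta>C_0$ absorbs the error. So every active index lies among the $|G^0|$ smallest values. Because $|G^0|<\infty$ and $d_T\to\infty$, eventually $d_T\geq|G^0|$, and hence $G^0\subset\widehat G$.

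\textbf{Step 1 (rotation).} Canonical correlation is invariant under nonsingular linear transformations of the regressor block. Therefore I rotate $\boldsymbol{x}_{-l,t}$ into an $I(0)$ part and an $I(1)$ part, mimicking $\boldsymbol{A}$ in \eqref{cointA}. For each $l$, let $\boldsymbol{A}_l$ be an orthogonal matrix whose rows span the projections of $\mathrm{span}(\boldsymbol{Q})$ and its complement onto the coordinates other than $l$. Then $\boldsymbol{A}_l\boldsymbol{x}_{-l,t}$ splits into a stationary block $\boldsymbol{\psi}^{(l)}_{1t}$ (together with $\boldsymbol{z}_{-l,t}$) and at most $r_F$ unit-root directions $\boldsymbol{\psi}^{(l)}_{2t}$.

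\textbf{Step 2 (partitioned inverse).} I write $\widehat{\rm CC}_l$ through the partitioned inverse. The $I(1)$ block has sample second moment of order $T$. Using the strong approximation \eqref{strapp_cond} with $r_F=o(T^{1/2-\tau})$, I lower-bound
\begin{equation*}
\lambda_{\min}\Big(T^{-2}\sum_t(\boldsymbol F_t-\bar{\boldsymbol F})(\boldsymbol F_t-\bar{\boldsymbol F})'\Big)
\end{equation*}
by a positive random quantity of order at least $r_F^{-2}$ (or so), following Lemma 3 of \cite{zhangIdentifyingCointegrationEigenanalysis2019}. The cross moments between $y$ or the $I(0)$ block and the $I(1)$ block are $O_p(\sqrt{r_F}\,T^{1/2})$ in the $T^{-1}$ normalisation. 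Hence the $I(1)$ contribution to $\widehat{\rm CC}_l$ is $O_p(r_F^{c}T^{-1})\to 0$ for some fixed $c$. This reduces $\widehat{\rm CC}_l$ to a canonical correlation between $y$ and the stationary block only, with error $o_p(1)$ uniformly in $l$.

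\textbf{Step 3 (stationary block).} Here I have to handle the breaks. Over the full sample, $\widehat{\boldsymbol\Omega}_{y,x}$ equals $\sum_i \frac{t_i^0-t_{i-1}^0}{T}$ times the regime covariances plus noise. Since $\{\boldsymbol{\zeta}_t\}$ is $\alpha$-mixing with $4+2\kappa$ moments, covariance inequalities give entrywise errors of order $O_p(T^{-1/2})$. Spectral-norm errors for the $p\times p$ matrices are then $O_p(pT^{-1/2})$, and the $m_0$ regime pieces add a further factor, giving $O_p(m_0pT^{-1/2})=o_p(1)$ by Assumption 2 (1). By Assumption 2 (2) the weighted coefficient tends to $(\tilde{\boldsymbol\alpha}',\tilde{\boldsymbol\beta}')'$, so $\widehat{\boldsymbol\Omega}_{x_{-l},y}\to{\rm Cov}(\boldsymbol{x}_{-l,t},\tilde y_t)$. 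Also $\widehat{\boldsymbol\Omega}_y\to\sigma_y^2$: the contribution from mean shifts between regimes vanishes because $\boldsymbol{\zeta}_t$ has mean zero. Because $\lambda_{\min}({\rm Cov}(\boldsymbol\zeta_t))\asymp1$, the inverses are stable, and perturbation bounds transfer the error to ${\rm CC}_l$ uniformly in $l$, noting that each is a submatrix of a single $p\times p$ matrix.

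\textbf{Main obstacle.} I expect Step 2 to be hardest: the $I(1)$ directions inside $\boldsymbol{x}_{-l,t}$ must be controlled uniformly in $l$ while $r_F\to\infty$. This requires a lower bound on the smallest eigenvalue of the normalised Brownian-type Gram matrix with diverging dimension, and it requires verifying that removing coordinate $l$ does not create near-degenerate mixing between the $I(0)$ and $I(1)$ blocks. I would handle this first through the strong approximation to Gaussian random walks, then through Wishart-type eigenvalue bounds for the approximating Gaussian matrix.
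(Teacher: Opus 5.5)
Your route is the paper's. You rotate $\boldsymbol{x}_{-l,t}$ by an orthogonal $\boldsymbol{A}_l$ (its Cases 1--3, according to whether $x_{lt}$ is $I(0)$ and whether deleting a row of $\boldsymbol{Q}$ keeps rank $r_F$). You then rescale the $I(1)$ block by $T^{-1/2}$ and show the rescaled Gram matrix tends to a block-diagonal limit, with $\int_0^1\boldsymbol{R}_{-l}\boldsymbol{W}_t\boldsymbol{W}_t'\boldsymbol{R}_{-l}'{\rm d}t$ in the $I(1)$ corner, while the rescaled cross moment with $y$ vanishes. The breaks are absorbed through $(\tilde{\boldsymbol\alpha}',\tilde{\boldsymbol\beta}')'$ with $O_p(m_0pT^{-1/2})$ errors, and the gap $C_0$ concludes.

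The substantive difference is your target $\max_l|\widehat{\rm CC}_l-{\rm CC}_l|=o_p(1)$. The paper proves only $\widehat{\rm CC}_l\to{\rm CC}_l$ for each $l$. It then bounds $P(\widehat{\rm CC}_l\geq\widehat{\rm CC}_{m_l})$ for a \emph{data-dependent} inactive index $m_l$. That step tacitly needs control over all $p-|G^0|\to\infty$ inactive indices, which is exactly your uniform statement. So your endgame is the rigorous one, but it makes a uniform Step 2 load-bearing, and the paper does not supply it.

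Step 2 needs two repairs.

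\textbf{Orders.} Your stated orders do not give your conclusion. By independence of $\boldsymbol F$ and $y$, $T^{-1}\sum_t(\boldsymbol F_t-\bar{\boldsymbol F})(y_{t+1}-\bar y)$ has norm $O_p(\sqrt{r_F})$. After the $T^{-1/2}$ rescaling it is $O_p(\sqrt{r_F}\,T^{-1/2})$. With your stated $O_p(\sqrt{r_F}\,T^{1/2})$ and $\lambda_{\min}\gtrsim r_F^{-2}$, the $I(1)$ term is only bounded by $O_p(r_F^3)$. Even after correcting the cross moment, $\lambda_{\min}^{-1}=O_p(r_F^2)$ leaves $O_p(r_F^3/T)$, which $r_F=o(T^{1/2-\tau})$ does not kill when $\tau<1/6$. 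Use instead $\lambda_{\min}^{-1}(\int_0^1\boldsymbol W_t\boldsymbol W_t'{\rm d}t)=O_p(r_F)$ (Remark 3.5 of \cite{zhangIdentifyingCointegrationEigenanalysis2019}, as the paper does). This gives $O_p(r_F^2/T)=o_p(1)$.

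\textbf{Uniformity obstacle.} The obstacle you name is not probabilistic, so strong approximation and Wishart-type bounds will not remove it. The random input is the same $\boldsymbol F_t$ for every $l$, so a single bound on $\lambda_{\min}(T^{-2}\sum_t(\boldsymbol F_t-\bar{\boldsymbol F})(\boldsymbol F_t-\bar{\boldsymbol F})')$ serves all $l$. The $l$-dependence sits entirely in the deterministic quantity $\sigma_{\min}(\boldsymbol R_{-l})$.

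In Case 2, $\boldsymbol R_{-l}'\boldsymbol R_{-l}=\boldsymbol I_{r_F}-\boldsymbol q\boldsymbol q'$, where $\boldsymbol q'$ is the deleted row of $\boldsymbol Q$. Hence $\sigma_{\min}(\boldsymbol R_{-l})=(1-\|\boldsymbol q\|^2)^{1/2}$. Nothing in Assumptions 1--2 keeps this away from zero as $\boldsymbol Q$ changes with $p_w$.

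Suppose it is of order $T^{-1/2}$. Then the corresponding direction is effectively stationary in the sample. Using $\boldsymbol Q'\boldsymbol Q^\perp=\boldsymbol O$, one checks that it can have an $O(1)$ covariance with $\tilde y_t$. So $\widehat{\rm CC}_l$ need not approach the Case-2 value of ${\rm CC}_l$. The paper bypasses this only by treating $\boldsymbol R_{-l}$ as fixed and invoking almost-sure positive definiteness.

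To finish your argument you need an extra condition, e.g.\ $1-\|\boldsymbol q\|^2$ bounded away from zero over rows with $\|\boldsymbol q\|<1$. Cases 1 and 3 are harmless, since there all nonzero singular values of $\boldsymbol R_{-l}$ equal one.
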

\vspace{-0.5cm}
\begin{remark}
Theorem 3.1 indicates that with probability tending to one, our SICS procedure can select all the predictors which enjoy predictability in at least one regime.
It can be noted that SICS procedure greatly reduces the dimension of covariate variables, leading to a more accurate estimate for locations of breaks.
\end{remark}
\vspace{-0.60cm}
\subsection{On the RCRS Estimated Breaks Set  $\widehat{S}$}
\label{sec:3.2}
To derive the asymptotic properties of the estimated change points, we need the following further condition.

{\noindent\bf Assumption 3.}
As $T\rightarrow \infty,~ r_F^2 h_T^{\tau-{1\over 2} }+r_Fd_Th_T^{-{1\over 2}} \rightarrow 0,~~r_F^2 h_T^2 \log T (\min\limits_{1\leq i \leq m_0+1}|t_i^0-t_{i-1}^0|)^{-1}\rightarrow 0
,$ where $d_T$ is given in \eqref{dt}, $h_T$ is given in \eqref{hat_tk}, and $\tau$ is given in Assumption 1 (1).
\begin{remark}
    Assumption 3 means that: (i) the number of predictors given by SICS procedure~i.e.~$d_T,$~cannot be too large, which is
    essential in estimating structural breaks; (ii) the bandwidth of change point screening~i.e.~$h_T,$~cannot be too large, in order to avoid omitting true points; (iii) the bandwidth $h_T$ cannot be too small to ensure the consistency of local least square estimates.
    The second condition in Assumption 3 is essential to give the limit property of RSS ratios as well. 
    Assumption 3 is mild. For instance, Assumption 3 holds when $m_0<\infty, ~r_F=O(\log T),~ d_T=O(\log T),~h_T\asymp T^{2/5}.$
\end{remark}
The sure screening property for RCRS procedure is given by the following Theorem 3.2.
\begin{theorem}\label{thm3-2}
    Under Assumptions 1-3, we have
    \begin{enumerate}
        \item for any $i = 1,2,\cdots,m_0,$ there exists $\widehat{t}_i \in \widehat{S}$, such that $$P(|\widehat{t}_i - t_i^0|\leq h_T)\rightarrow 1,~~ \text{as}~ T\rightarrow \infty;$$
        \item for $k'$ satisfying $S^0 \nsubseteq \mathcal{U}(\widehat{S}_{k'-1}, h_T),~ S^0\subset\mathcal{U}(\widehat{S}_{k'},h_T)$, \begin{equation}{\rm RSSR}_{k'+1} 
            = o_p(1);~~
          {\rm RSSR}_{k'+i} \asymp 1, ~\forall i\geq 2.\label{3.7}
        \end{equation}
    \end{enumerate}
\end{theorem}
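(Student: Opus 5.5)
We work on the event $\{G^0\subset\widehat G\}$, which has probability tending to one by Theorem \ref{thm3-1}. On this event the reduced model in $\tilde{\boldsymbol{x}}_t$ is correctly specified within each true regime: regime $i$ has coefficient vector $\boldsymbol{\gamma}_i$ restricted to $\widehat G$, with zeros on $\widehat G\setminus G^0$. Rotating by $\boldsymbol{A}$ as in \eqref{cointA}, the selected block of $\boldsymbol{x}_t$ splits into stationary directions (the $\boldsymbol{z}$- and $\boldsymbol{\psi}_1$-type parts) and at most $r_F$ unit-root directions (the $\boldsymbol{\psi}_2$-type part), and the unit-root directions carry zero coefficient. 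The proof then reduces to uniform (over windows) bounds on local RSS, followed by a signal/noise comparison.

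\textbf{Step 1: uniform local RSS bounds.}
\begin{enumerate}
\item[(a)] \emph{No-break windows.} Let $(s:l)$ be a window of length $n\ge h_T$ containing no true break. Then
$${\rm RSS}(s:l)=\sum_{t=s}^{l-1}(u_{t+1}-\bar u^{(s:l)})^2-R(s:l),\qquad 0\le R(s:l)=O_p\bigl(d_T+r_F^2 n^{2\tau}\cdots\bigr)$$
uniformly over such windows, up to $\log T$ factors.
\begin{itemize}
\item The stationary block is handled with Assumption 1 (2): mixing moment inequalities give uniformly consistent sample covariances, with error $O_p(d_T n^{-1/2}\sqrt{\log T})$.
\item The $I(1)$ block is handled with the strong approximation \eqref{strapp_cond}: $n^{-2}\sum(\boldsymbol{F}_t-\bar{\boldsymbol{F}})(\cdot)'$ is close to a Brownian functional with smallest eigenvalue bounded away from zero. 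The condition $r_F^2h_T^{\tau-1/2}\to0$ in Assumption 3 controls the approximation error uniformly.
\item The cross terms are negligible. This uses the independence in Assumption 1 (3).
\end{itemize}
So the excess over pure-noise RSS is $o_p(n)$ and in fact $O_p(d_T+r_F)$ up to logs.

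\item[(b)] \emph{Windows containing a break.} Let $(l-h_T:l+h_T)$ contain a true break $t_j^0$ at distance $\delta$ from one endpoint. Fitting a single $\widehat{\boldsymbol{\gamma}}$ leaves a bias of order
$$c\,\frac{\delta(2h_T-\delta)}{2h_T}\,\|\boldsymbol{\gamma}_j-\boldsymbol{\gamma}_{j+1}\|^2,$$
using $\lambda_{\min}$ of the stationary covariance from Assumption 1 (2). Here we need the coefficient change to live in the stationary directions, which holds since the $\boldsymbol{\psi}_2$ coefficients are zero. The one-break function in $l$ is maximised near $l=t_j^0$; at $l=t_j^0$ it is of order $h_T$, which dominates the noise fluctuations of order $\sqrt{h_T\log T}+d_T$.
\end{enumerate}

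\textbf{Step 2: part 1, induction on $k$.} Suppose some break $t_j^0$ is not in $\mathcal{U}(\widehat S_{k-1},h_T)$. Since breaks are separated by far more than $C_hh_T$ (second condition of Assumption 3), $t_j^0$ is not excluded by the constraint $l\notin\mathcal U(\widehat S_{k-1},C_hh_T)$ in \eqref{hat_tk}. The maximiser $\widehat t_k$ then lies within $h_T$ of some not-yet-covered break:
\begin{itemize}
\item every window at distance $>h_T$ from all breaks has centred RSS at the noise level;
\item windows near an already covered break are excluded by the constraint;
\item the window centred at an uncovered break gives a gain of order $h_T$.
\end{itemize}
A uniform maximal inequality for the centred noise $\sum(u^2-\sigma_u^2)$ over windows (order $\sqrt{h_T\log T}$) makes this hold simultaneously for all $k\le m_0$, using $m_0=O(\log T)$. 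Hence all breaks are covered within $m_0$ steps. Since $M_T\to\infty$, these steps are taken before the procedure stops.

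\textbf{Step 3: part 2, the ratios.} Before full coverage, adding a point near an uncovered break $t_j^0$ splits a segment containing it. Step 1 (b) applied to global segments gives a reduction of at least $c\min_i|t_i^0-t_{i-1}^0|$. The worst case is that the previous step already covered something, so the earlier reduction is of the same order or larger, up to segment-length ratios.

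After full coverage, every segment of $\widehat S_{k'}$ is within $h_T$ of breaks at its ends. Misfit from the $\le h_T$ misplacement contributes $O_p(h_T)$ per break, hence $O_p(m_0h_T)$ in total. Adding further points changes ${\rm RSS}$ only by
$$O_p\bigl(h_T+d_T+r_F^2h_T^{2\tau}\log T\bigr),$$
combining the local estimation noise and the residual misplacement bias. Thus
$${\rm RSSR}_{k'+1}=\frac{O_p(h_T r_F^2\log T)}{\min_i|t_i^0-t_{i-1}^0|}=o_p(1),$$
exactly by Assumption 3.

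For $i\ge2$, both numerator and denominator are redundant reductions of the same stochastic order. Each such reduction is the local gain from an estimated local maximum of noise-level RSS, bounded above and below by comparable quantities, which gives $\asymp1$.

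\textbf{Main obstacle.} The hardest part is the lower bound in ${\rm RSSR}_{k'+i}\asymp1$. We must show that consecutive redundant reductions are not degenerate, in particular that the denominator is not $o_p$ of the numerator. I would try to establish this by proving that each redundant reduction, scaled by $h_T$, is bounded away from zero and infinity in probability. The lower bound would come from the residual misplacement bias within $h_T$ together with the positive quadratic noise gain from an extra break (of order $d_T$, a $\chi^2$-type term), and the upper bound would come from the uniform bounds in Step 1.

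A second delicate point is uniformity of the $I(1)$ design bounds over all windows when $r_F\to\infty$. I would derive it from \eqref{strapp_cond} together with a union bound over $O(T^2)$ windows, using Gaussian maximal inequalities for the approximating Brownian functionals.
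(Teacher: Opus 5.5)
\textbf{Overall structure.} Your architecture is the paper's:
\begin{itemize}
\item a no-break bound for local RSS (Lemma~\ref{lemma_RSS0break}, excess $Cr_F(r_F+d_T)$);
\item a one-break window bound (Lemma~\ref{lemma_RSS1break});
\item a comparison of noise sums over two windows (Lemma~\ref{lem_udiff}), showing that each pick made before full coverage lands in $\mathcal{U}(S^0,h_T)$;
\item for the ratios, a Lemma~\ref{lemmaA3}-type lower bound on ${\rm RSS}(\widehat S_{k'-1})$, combined with a Chan et al.-type upper bound ${\rm RSS}(\widehat S_{k'})\le\sum_t(u_{t+1}-\bar u)^2+O_p(m_0r_Fh_T^2)$.
\end{itemize}
Working explicitly on $\{G^0\subset\widehat G\}$ is a good addition; the paper uses it tacitly. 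Your sharper orders are not proved: $h_T$ instead of $h_Tr_F^{-1}$ for the break signal, and $O_p(h_T)$ instead of $O_p(r_Fh_T^2)$ per misplaced break. Either version, however, gives ${\rm RSSR}_{k'+1}=o_p(1)$ under Assumption 3.

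\textbf{The genuine gap.} The gap is the lower half of ${\rm RSSR}_{k'+i}\asymp 1$ for $i\ge 2$. You leave it open, and your sketch would not close it.
\begin{itemize}
\item Your upper bound for a redundant reduction is $O_p(h_T+d_T+\cdots)$, while your lower bound is a $\chi^2$-type gain of order $d_T$. These do not pin a ratio of two redundant reductions to order one.
\item Nor can every redundant reduction be of exact order $h_T$. Take a pick that splits a segment containing no true break. Lemma~\ref{lemma_RSS0break}, applied to the segment and to its two halves, bounds the reduction by $O_p(1)+2Cr_F(r_F+d_T)=o_p(h_T^{1/2})$. An order-$h_T$ reduction can only come from refitting a misplaced stretch, of length comparable to $h_T$, next to a break.
\end{itemize}
What is missing is a two-sided bound at one common order. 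A route would be:
\begin{enumerate}
\item Localize the pick near each break to within $o_p(h_T)$. At offset $x$ the local signal behaves like a multiple of $(h_T^2-x^2)/h_T$, against noise increments of order $|x|^{1/2}$.
\item Then, since $C_h>1$, every later pick has a break-free window.
\item Show that the reduction produced by such a pick (a Chow-type quadratic form, plus the refit of whatever misplacement remains) is bounded above and below in probability at a single deterministic order.
\end{enumerate}

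\textbf{The paper does not supply this step either.} It sandwiches ${\rm RSS}(\widehat S_{k'+i})$ between $\sum_t(u_{t+1}-\bar u)^2\mp A\,h_T^2r_F\log T$ for all $i\ge 0$, and reads off ${\rm RSS}(\widehat S_{k'+i})-{\rm RSS}(\widehat S_{k'+i+1})\asymp h_T^2r_F\log T$. Subtracting two-sided level bounds only bounds the difference from above; the displayed lower bound has lost a minus sign.

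\textbf{Other points.}
\begin{itemize}
\item The same localization is what your Step 2 needs for its claim that windows around covered breaks are excluded. With only $|\widehat t-t_j^0|\le h_T$ and $C_h=1.25$, candidates $l$ with $|l-t_j^0|<h_T$ remain admissible whenever $|\widehat t-t_j^0|>0.25h_T$. Their windows straddle $t_j^0$.
\item For $r_F\to\infty$ the smallest eigenvalue of $\int_0^1\boldsymbol{W}_t\boldsymbol{W}_t'{\rm d}t$ is of order $r_F^{-1}$, not bounded away from zero. This is the source of the paper's $r_F$ factors.
\item A uniform maximal inequality at rate $\sqrt{h_T\log T}$ over $O(T)$ windows needs lighter tails than the $4+2\kappa$ moments of Assumption 1~(2). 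The paper's pointwise comparison does not control an argmax either.
\item Part 1 concerns $\widehat S=\widehat S_{k_0}$, so it also requires $k_0\ge k'$: no pre-coverage ratio may fall below ${\rm RSSR}_{k'+1}$. Your remark that the earlier reduction is ``of the same order or larger'' bounds those ratios from above, not below. The paper does not address this point.
\end{itemize}
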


\begin{remark}
    Theorem 3.2 (1) indicates that with probability tending to 1, for any $t_i^0\in S^0,$ there exists at least one consistent estimator $\widehat{t}_i \in \widehat{S}$. 
    The condition $S^0 \nsubseteq \mathcal{U}(\widehat{S}_{k'-1}, h_T), S^0\subset\mathcal{U}(\widehat{S}_{k'},h_T)$ in Theorem 3.2 (2) means that: 
    (i) In the $(k'-1)-$th step, there exists one true change point $t_j^0$ satisfying $|t_j^0-\widehat{t}_l|\geq h_T$ for any $\widehat{t}_l\in \widehat{S}_{k'-1};$ 
    (ii) In the $k'-$th step, our selected $\widehat{t}_{k'}$ satisfies $|\widehat{t}_{k'}-t_j^0|\leq h_T$ with probability tending to one. 
    As long as $S^0 \nsubseteq \mathcal{U}(\widehat{S}_k,h_T)$ for some $k,$ it is shown in the proof of Theorem 3.2 that $P(\widehat{t}_{k+1}\in \mathcal{U}(S^0, h_T))\rightarrow 1,$ which implies that our RCRS procedure can give a consistent estimator for one unidentified true change point in the $(k+1)$-th step. This fact ensures the existence of $k'$ in \eqref{3.7}.
    It is shown in Figure \ref{fig:rssr} that: when $S^0 \nsubseteq \mathcal{U}(\widehat{S}_{k'-1}, h_T), S^0\subset\mathcal{U}(\widehat{S}_{k'},h_T)$, ${\rm RSSR}_k$ will drop sharply when $k=k'+1;$
    when $k>k'+1, {\rm RSSR}_k$ will be away from zero. 
    This result motivates us to use $\{\widehat{t}_1,\cdots,\widehat{t}_{k'}\}$ as estimated locations set of breaks, avoiding many redundant points.
\end{remark}
\vspace{-0.60cm}
\subsection{On the Refinement Procedures}
\label{sec:3.3}
In this subsection, we establish the consistency of the second-step estimators for both change points and active predictors. To this end, we first give Assumptions for the parameters $\omega_{1T}$ and $\omega_{2T}^i$ defined in the information criteria.

{\noindent\bf Assumption 4.}
\vspace{-0.35cm}
\begin{itemize}

    \item[(1)] $m_0 \omega_{1T} r_F(\min\limits_{1\leq i\leq m_0+1}|t_{i}^0 - t_{i-1}^0|)^{-1} \rightarrow 0, ~m_0r_Fh_T^2 \omega_{1T}^{-1} \rightarrow 0$ as $T\rightarrow\infty.$

    \item[(2)] $\omega_{2T}^i(\min\limits_{1\leq i\leq m_0+1}|t_{i}^0 - t_{i-1}^0|)^{-1} \rightarrow 0,~r_Fh_T^2(\omega_{2T}^i)^{-1}\rightarrow 0$ holds for all $i$ as $T\rightarrow \infty.$

\end{itemize}

The following Theorem 3.3 is about the consistency of the second step estimators for  both the  number and locations  of the true breaks.

\begin{theorem}
    Under Assumptions 1-3 and 4(1), 
    \begin{itemize}
    \item[(1)] as $T\rightarrow\infty,$ we have
     $$P(|\widehat{\widehat{S}}| = m_0) \rightarrow 1;$$

    \item[(2)] for any $i = 1,2,\cdots,m_0,$ there exists $\widehat{\widehat{t}}_i \in \widehat{\widehat{S}}$, such that
        $$P(|\widehat{\widehat{t}}_i - t_i^0|\leq h_T)\rightarrow 1,~~\text{as}~T\rightarrow \infty.$$
    \end{itemize}
\end{theorem}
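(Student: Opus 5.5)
We prove Theorem 3.3 by combining Theorem 3.2 with two RSS comparisons and then analysing the backward elimination. By Theorem 3.2 (and Theorem 3.1, so that $G^0\subset\widehat G$ and the reduced model is correctly specified within each true regime), with probability tending to one $\widehat S$ contains a subset $S^*=\{\widehat t_{(1)},\dots,\widehat t_{(m_0)}\}$ with $|\widehat t_{(i)}-t_i^0|\le h_T$. Moreover $|\widehat S|\le M_T$, and distinct elements of $\widehat S$ are at least $C_hh_T$ apart. We work on this event throughout.

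The first RSS comparison bounds the cost of extra breaks. For any $S\supseteq S^*$ with $S\subseteq\widehat S$, we show
\[
0\le {\rm RSS}(S^*)-{\rm RSS}(S)=O_p(m_0r_Fh_T^2).
\]
Each segment of $S^*$ differs from a true regime only on at most $2h_T$ misallocated observations near its endpoints. On each such boundary stretch the misfit contributes at most $O_p(h_T\cdot r_Fh_T)$. The factor $r_Fh_T$ comes from the $I(1)$ components: we use the strong approximation in Assumption 1(1), as in Lemma 3 of \cite{zhangIdentifyingCointegrationEigenanalysis2019}, to control $\sum(\boldsymbol F_t-\bar{\boldsymbol F})(\boldsymbol F_t-\bar{\boldsymbol F})'$ over windows of length $h_T$ by $O_p(r_Fh_T^2)$. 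Within a clean regime, adding breaks only removes the usual $O_p(d_T)$ overfitting gain, which is dominated by $r_Fh_T^2$ by Assumption 3. Summing over the $m_0$ breaks gives the bound. Hence, by $m_0r_Fh_T^2\omega_{1T}^{-1}\to0$ in Assumption 4(1), deleting any element of $S\setminus S^*$ changes ${\rm IC}_1$ by $-\omega_{1T}+o_p(\omega_{1T})<0$.

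The second RSS comparison bounds the cost of missing a break. For any $S\subseteq\widehat S$ that contains no point within $h_T$ of some $t_j^0$, we show
\[
{\rm RSS}(S)-{\rm RSS}(S\cup\{\widehat t_{(j)}\})\ge c\,\min_i|t_i^0-t_{i-1}^0|
\]
with probability tending to one. Since consecutive points of $S$ are separated, the segment of $S$ containing $t_j^0$ covers on each side a stretch of length comparable to $\min_i|t_i^0-t_{i-1}^0|$, or at least of order $C_hh_T$ together with a true regime. A common LSE fitted across a genuine change $(\boldsymbol\alpha_j,\boldsymbol\beta_j)\ne(\boldsymbol\alpha_{j+1},\boldsymbol\beta_{j+1})$ then leaves a nonvanishing mean squared bias on a fraction of the segment. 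We obtain this from the positive definiteness of ${\rm Cov}(\boldsymbol\zeta_t)$ in Assumption 1(2) and the uniform convergence of local sample covariances of the $I(0)$ part $(\boldsymbol z_t,\boldsymbol\psi_{1t})$ over windows of length at least $h_T$. We also need that the $I(1)$ regressors $\boldsymbol\psi_{2t}$ cannot absorb this bias; their normalized covariance is of order $(l-s)^2$, so the projection onto them is asymptotically negligible. Since $m_0\omega_{1T}r_F/\min_i|t_i^0-t_{i-1}^0|\to0$ by Assumption 4(1), the increase in RSS exceeds $\omega_{1T}$ and, summed over all missed breaks, any penalty saving $m_0\omega_{1T}$.

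The elimination is then analysed by induction on the backward steps. As long as ${\bf t}_K\supsetneq S^*$, the minimising deletion removes a redundant point: removing an element of $S^*$ raises RSS by order $\min_i|t_i^0-t_{i-1}^0|$, while removing a redundant point lowers ${\rm IC}_1$. Thus ${\rm IC}_{1,K-1}\le{\rm IC}_{1,K}$ and the algorithm continues. There is one subtlety: a redundant point may lie within $h_T$ of a true break, next to $\widehat t_{(i)}$. Swapping which of the two is retained is harmless for part (2), because either qualifies as an $h_T$-consistent estimator. Once ${\bf t}_K=S^*$ up to such swaps, every deletion raises ${\rm IC}_1$, so the algorithm stops. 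This yields $|\widehat{\widehat S}|=m_0$ with each $\widehat{\widehat t}_i$ within $h_T$ of $t_i^0$, which proves parts (1) and (2).

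The main obstacle is the uniform control, over all at most $2^{M_T}$ subsets visited, of local LSE RSS with the mixed $I(0)/I(1)$ regressors and $r_F\to\infty$. Uniformity is needed only over windows determined by $\widehat S$, which has polynomially many endpoint pairs. We therefore plan to take a union bound using the $\log T$ factor in Assumption 3, applied to maximal inequalities over $O(T^2)$ windows.
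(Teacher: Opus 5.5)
\textbf{Verdict: there is a genuine gap.} Your second RSS comparison fails when a redundant RCRS pick sits just outside the $h_T$-ball around a true break. The induction you build on it therefore does not establish (2), and as written does not establish (1) either.

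\textbf{How your route differs from the paper's.} The paper never follows the backward path. It compares ${\rm IC}_1$ of an arbitrary under-fitted set, via the pigeonhole configuration in the proof of Lemma~\ref{lemmaA3}, with that of a consistent $m_0$-subset. It compares an over-fitted output with its best $m_0$-subset, via \eqref{RSSofneigh} and \eqref{A320}. It then proves (2) by contradiction. Tracking the greedy path by induction is the more natural structure for a backward algorithm. Your first RSS comparison is exactly the paper's $O_p(m_0r_Fh_T^2)$ bound.

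\textbf{Where the second comparison fails.} You state it for \emph{every} $S\subseteq\widehat S$ with no point within $h_T$ of $t_j^0$. RCRS only keeps elements of $\widehat S$ at least $C_hh_T$ apart. So a redundant pick $t'$ can sit at distance between $h_T$ and about $(C_h+1)h_T$ from $t_j^0$; for example $\widehat t_{(j)}=t_j^0+0.3h_T$ and $t'=t_j^0-1.2h_T$. Suppose $t'\in S$ but $\widehat t_{(j)}\notin S$.
\begin{itemize}
\item The segment of $S$ containing $t_j^0$ reaches only $O(h_T)$ past $t_j^0$ on one side, so the common LSE is determined by the long side.
\item The misfit therefore lives on a stretch of length $O(h_T)$.
\item Inserting $\widehat t_{(j)}$ lowers the RSS only by an amount of the same $O_p(r_Fh_T^2)$ order as in your first comparison. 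By Assumption 4(1) this is $o(\omega_{1T})$, not $c\min_i|t_i^0-t_{i-1}^0|$.
\end{itemize}
Your phrase ``or at least of order $C_hh_T$ together with a true regime'' describes precisely this case, but the stated bound does not follow from it.

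\textbf{Consequences for your induction.} The step ``removing an element of $S^*$ raises RSS by order $\min_i|t_i^0-t_{i-1}^0|$'' is false whenever such a neighbour is present. Your swap caveat covers only neighbours inside the $h_T$-ball. Nothing in your argument stops the greedy step from deleting $\widehat t_{(j)}$ and keeping $t'$. After that, $\widehat{\widehat S}$ has no point within $h_T$ of $t_j^0$, and (2) fails. Your proof of (1) also runs through the invariant ${\bf t}_K\supseteq S^*$ (up to swaps), so it is not closed as written either, although the count is more robust than the location.

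\textbf{What is needed.} You need a \emph{relative} comparison between competing deletions near the same break. When ${\bf t}_K$ contains $\widehat t_{(j)}$ and a neighbour $t'$ outside the $h_T$-ball, you must show that deleting $t'$ costs strictly less RSS than deleting $\widehat t_{(j)}$.
\begin{itemize}
\item The difference is a bias term proportional to the difference of the two misallocated lengths, so it is of order $h_T$.
\item It has to dominate the $O_p(h_T^{1/2})$ noise and the $I(1)$ cross terms. This needs RSS expansions accurate to $o(h_T)$, not the crude $O_p(r_Fh_T^2)$ bound.
\item You will probably also need $\widehat t_{(j)}$ strictly inside the ball, for example $|\widehat t_{(j)}-t_j^0|\le(1-\epsilon)h_T$. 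Otherwise two candidates at distances $h_T\pm O(1)$ on opposite sides of $t_j^0$ cannot be separated at that precision.
\end{itemize}

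\textbf{The paper does not fill this gap.} Lemma~\ref{lemmaA3} only handles configurations where every point is at distance at least $(t_j^0-t_{j-1}^0)/4$ from $t_j^0$. The paper's contradiction argument for (2) jumps from ``no point within $h_T$'' to ``all points at distance $\ge C\min_i|t_i^0-t_{i-1}^0|$'', which is the same leap you make.

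\textbf{A separate issue.} Your plan to pay for a union bound over $O(T^2)$ windows with the $\log T$ in Assumption 3 would require exponential tails. Assumption 1(2) gives only $4+2\kappa$ moments. In the paper, that $\log T$ only absorbs the $M_T=O(\log T)$ extra segments.
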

\begin{remark}
    Theorem 3.3~(1) states that we are able to estimate the true number of breaks $m_0$ with probability tending to one after backward elimination. Further, Theorem 3.3~(2) concludes that with probability tending to one, one can find a consistent estimator for any true change point $t_i^0 \in S^0$ in its neighborhood.
\end{remark}
Similar to \eqref{truemodel}, denote
$$
    G_i^0 = \{1\leq j \leq p: \gamma_{i,j} \neq 0\},~i=1,2,\cdots,m_0+1,
$$
and
$\widehat{\widehat{G}}_i$ be the estimator of predictor set given by predictor elimination procedure in the regime $(\widehat{\widehat{t}}_{i-1}: \widehat{\widehat{t}}_i)$. The next Theorem 3.4 gives the consistency of the second step estimators for the active predictor sets.
\begin{theorem}
     Under Assumptions 1-4, as $T \rightarrow \infty,$ we have
    \begin{equation}
        P(\widehat{\widehat{G}}_i = G_i^0) \rightarrow 1, ~~i=1,~2,~\cdots,~m_0+1.
    \end{equation}
\end{theorem}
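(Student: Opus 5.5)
We plan to prove Theorem 3.4 by conditioning on the high-probability event on which the earlier stages succeeded, and then analysing the backward elimination inside a single estimated regime. Define $\mathcal{E}_T$ as the event on which three things hold: $G^0\subset\widehat{G}$ (Theorem 3.1), $|\widehat{\widehat{S}}|=m_0$, and $|\widehat{\widehat{t}}_i-t_i^0|\le h_T$ for all $i$ (Theorem 3.3). By those theorems $P(\mathcal{E}_T)\to 1$, so it suffices to work on $\mathcal{E}_T$.

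\textbf{Step 1: contamination of an estimated regime.} On $\mathcal{E}_T$, the estimated regime $(\widehat{\widehat{t}}_{i-1}:\widehat{\widehat{t}}_i)$ coincides with the true regime $(t_{i-1}^0:t_i^0)$ except for at most $2h_T$ boundary observations. These boundary observations are generated by neighbouring coefficients $\boldsymbol{\gamma}_{i\pm1}$. Because Assumption 3 makes $h_T$ negligible relative to $\min_i|t_i^0-t_{i-1}^0|$, the regime has length $n_i\asymp|t_i^0-t_{i-1}^0|$. Fix $G$ with $G_i^0\subset G\subset \widehat{G}$. We decompose $y_{t+1}=\boldsymbol{\gamma}_{i}'\boldsymbol{x}_t+u_{t+1}+\delta_t$, where $\delta_t$ is nonzero only on the contaminated set. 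We then bound the effect of $\delta_t$ on ${\rm RSS}_i(G)$ by the $O_p(r_Fh_T^2)$-type term already produced in the proof of Theorem 3.2. The factor $r_F$ and the squared $h_T$ come from the $I(1)$ components, which have magnitude $O_p(h_T^{1/2})$ over a window of length $h_T$, together with the normalisation by the local design. Rotating by $\boldsymbol{A}$ as in \eqref{cointA} lets us separate the $I(0)$ block $(\boldsymbol{z}_t,\boldsymbol{\psi}_{1t})$ from the $I(1)$ block $\boldsymbol{\psi}_{2t}$. Local least squares then handles each block, using the strong approximation of Assumption 1(1) for the sample second moments of $\boldsymbol{\psi}_{2t}$, which are of order $n_i^2$.

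\textbf{Step 2: overfitted sets.} Take $G\supset G_i^0$ and $j\in G\setminus G_i^0$. We aim to show
\[
{\rm RSS}_i(G\backslash\{j\})-{\rm RSS}_i(G)=O_p(1)+O_p(r_Fh_T^2)=o_p(\omega_{2T}^i)
\]
uniformly over the finitely many ($\le 2^{d_T}$, but only nested sequences matter) candidate sets. The $O_p(1)$ part is the usual $\chi^2$-type projection of $u_{t+1}$ onto one extra regressor, which remains bounded even for an $I(1)$ regressor after the rescaling by $n_i$ versus $n_i^2$. The final equality then follows from Assumption 4(2), $r_Fh_T^2/\omega_{2T}^i\to0$. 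Consequently the backward algorithm always strictly decreases ${\rm IC}_2^i$ when it deletes an irrelevant variable. Since deleting an irrelevant variable beats deleting a relevant one (Step 3), the algorithm removes the irrelevant variables one at a time until $G=G_i^0$.

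\textbf{Step 3: underfitted sets.} For any $G$ missing some $k\in G_i^0$, we claim ${\rm RSS}_i(G)-{\rm RSS}_i(G\cup G_i^0)\ge c\,n_i$ with probability tending to one. The main obstacle is the case where $k$ is an $I(1)$ predictor entering only through a cointegrating combination $\boldsymbol{\beta}_i'\boldsymbol{\psi}_{1t}$. Dropping $x_k$ can then leave the remaining variables unable to reproduce the stationary combination. To show this, we would prove that the residual of $\boldsymbol{\beta}_i'\boldsymbol{\psi}_{1t}$ after projection onto $\boldsymbol{x}_{Gt}$ has sample variance bounded away from zero. The argument runs as follows: any approximating linear combination either leaves an $I(1)$ part, which contributes order $n_i^2$, or is a stationary combination excluding $x_k$. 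In the latter case it differs from the target by a nondegenerate $I(0)$ process, by $\lambda_{\min}({\rm Cov}(\boldsymbol{\zeta}_t))\asymp1$ and the orthogonality of $\boldsymbol{Q}$. Since $\omega_{2T}^i/n_i\to0$ (Assumption 4(2)) and $|G^0|<\infty$, the inequality $c\,n_i> |G_i^0|\omega_{2T}^i+O_p(r_Fh_T^2)$ shows that any deletion of a relevant variable increases ${\rm IC}_2^i$. Hence the algorithm never deletes such a variable, and it stops exactly at $G_i^0$, which gives $P(\widehat{\widehat{G}}_i=G_i^0)\to1$.
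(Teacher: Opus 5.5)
Your proof is correct in outline and rests on the same three RSS facts as the paper. These are the upper bound ${\rm RSS}_i(G_i^0)\le\sum(u_{t+1}-\bar u)^2+O_p(r_Fh_T^2)$, an excess of order $\min_i|t_i^0-t_{i-1}^0|$ for sets missing a relevant predictor (Lemma~\ref{lemma_m}), and a lower bound for overfitted sets (Lemma~\ref{lemma_r}) showing that redundant regressors reduce RSS by only $o_p(\omega_{2T}^i)$. The difference is in how these are turned into a statement about the algorithm.

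\begin{itemize}
\item \textbf{Handling the greedy path.} The paper shows ${\rm IC}_2^i(G_i^0)<{\rm IC}_2^i(\widehat{\widehat G}_i)$ whenever $\widehat{\widehat G}_i\neq G_i^0$ and stops there. This tacitly treats the output of the greedy backward search as a global minimiser of ${\rm IC}_2^i$. Your invariant along the elimination path closes that step. Starting from $\widehat G\supseteq G^0\supseteq G_i^0$, every irrelevant deletion lowers ${\rm IC}_2^i$ and every relevant deletion raises it. So the path never drops below $G_i^0$ and halts exactly there, including when $G_i^0=\emptyset$.
\item \textbf{The underfitting bound.} Lemma~\ref{lemma_m} works with the raw variance $L_{x1}=(\beta_x^m)^2\widehat{\boldsymbol\Omega}_{x_m}$ and declares $L_{x3}$ negligible. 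That overlooks the omitted-variable term $\widehat{\boldsymbol\Omega}_{\psi_m,x_m}\beta_x^m$ in its own expansion of $\widehat{\boldsymbol\beta}^m$ when $x_{mt}$ is cointegrated with retained regressors. Your partial-residual argument is aimed at the right quantity: the residual always carries $\gamma_{i,k}e_{k,t}$ (or $\gamma_{i,k}z_{k,t}$), whose variance is at least $\lambda_{\min}({\rm Cov}(\boldsymbol\zeta_t))\gamma_{i,k}^2$. That gives a gain of order $n_i$, which is all you need against $\omega_{2T}^i=o(n_i)$.
\end{itemize}

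Three points need tightening.

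\begin{itemize}
\item \textbf{Uniformity over the random path.} This does not follow from ``only nested sequences matter'', because the path is data-dependent. A union bound over $2^{d_T}$ sets with $O_p(1)$ bounds would fail. Use monotonicity of RSS in nested sets instead. If $G_i^0\subseteq G\setminus\{j\}\subset G\subseteq\widehat G$, then ${\rm RSS}_i(G\setminus\{j\})-{\rm RSS}_i(G)\le{\rm RSS}_i(G_i^0)-{\rm RSS}_i(\widehat G)$. If $k\in G_i^0\subseteq G\subseteq\widehat G$, then ${\rm RSS}_i(G\setminus\{k\})-{\rm RSS}_i(G)\ge{\rm RSS}_i(\widehat G\setminus\{k\})-{\rm RSS}_i(G_i^0)$. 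So only the $|G_i^0|+2$ sets $G_i^0$, $\widehat G$ and $\widehat G\setminus\{k\}$ need controlling. The cost is that your $O_p(1)$ becomes $O_p(r_F(r_F+d_T))$, which is still $o(\omega_{2T}^i)$.
\item \textbf{The in-sample dichotomy.} Your ``either an $I(1)$ leak or a stationary combination'' split must hold uniformly in the coefficient vector in-sample, since least squares can trade a small $I(1)$ loading against stationary fit. This follows from the asymptotic block-diagonality of the rescaled Gram matrix of $(\boldsymbol z_t',\boldsymbol e_t',n_i^{-1/2}\boldsymbol F_t')'$, as in Lemma~\ref{lemma_cov}. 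The $I(1)$ block's smallest eigenvalue is only bounded below by a multiple of $r_F^{-1}$, so you need $r_F$ times the squared norm of the cross block to vanish. Assumption 3 guarantees this.
\item \textbf{Minimum signal.} Like the paper, you implicitly need $\min_{k\in G_i^0}|\gamma_{i,k}|$ bounded away from zero.
\end{itemize}
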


\begin{remark}
    Theorem 3.4 states that our elimination procedure can select all active predictors in each regime consistently, as if the locations of change points are known. Here, $\widehat{\widehat{G}}_i = G_i^0$ denotes that the predictors in $\widehat{\widehat{G}}_i$ and $G_i^0$ are the same. We allow $G_1^0,G_2^0,  \cdots, G_{m_0+1}^0$ to be not exactly equal.
\end{remark}
\vspace{-0.60cm}
\section{SIMULATION}
\label{sec:simu}
In this section, we conduct simulation studies to show the performance of our 
procedure.

\vspace{-0.60cm}
\subsection{Parameters and Evaluation Criteria}\label{sec:4.1}

We first discuss how we set the tuning parameters in our simulation and define the criteria necessary to evaluate the effectiveness of our method. All simulations will be conducted using $R=2,000$ Monte Carlo replications.
\begin{itemize}
    \item[(1)] For SICS procedure, we use $d_{T} = [j_p\log T]+1$ for some prescribed $j_p.$ Denote $\widehat{G}^{(r)}$ be the estimated predictor set selected by SICS in the $r-$th replication.
    We use coverage rate (CR) defined by $
    {\rm CR} = {1\over R}\sum_{r=1}^R \dfrac{| G^0 \cap \widehat{G}^{(r)}|}{|G^0|}$
    to measure how likely we can capture all relevant predictors.
    \item[(2)] Then, in our RCRS procedure, we set $h_T = [10T^{1/5}], M_T= 2\log T, C_h=1.25.$ These settings satisfy our Assumptions, as we will set $\min\limits_{1\leq i \leq m_0+1}|t_i^0-t_{i-1}^0|\asymp T$ in the following data generation processes (DGPs). 
    In break elimination, we choose $\omega_{1T} = \sqrt{T}.$
    For the eliminated locations set of breaks in the $r$-th replication $\widehat{\widehat{S}}^{(r)}$ and the true locations set $S^0,$ we define the Hausdorff distance (HD) between $\widehat{\widehat{S}}^{(r)}$ and $S^0$ by $${\rm HD} = {1\over R}\sum_{r=1}^R {\max}\bigg\{d(\widehat{\widehat{S}}^{(r)}, S^0),~d(S^0,\widehat{\widehat{S}}^{(r)})\bigg\},~ d(S_1, S_2)=\max\limits_{a\in S_1}\min\limits_{b\in S_2}|a-b|.$$ Similar to \cite{tuPenetratingSporadicReturn2023}, we report $100T^{-1} \times {\rm HD}$ in our following tables. 
    \item[(3)] To measure how likely we can choose breaks precisely, we record the 
    percentages of correct estimation (PCE) defined by 
    $
    {\rm PCE} = {1\over R} \sum_{r=1}^{R} 1_{\{\widehat{\widehat{m}}^{(r)} = m_0\}},$ where $
    \widehat{\widehat{m}}^{(r)}=|\widehat{\widehat{S}}^{(r)}|$ denotes the estimated number of breaks after IC elimination.
    \item[(4)] For the elimination of redundant regressors in the period $(\widehat{\widehat{t}}_{i-1}: \widehat{\widehat{t}}_i)$ determined by $\widehat{\widehat{S}}$, we choose $\omega_{2T}^i = 1.5(\widehat{\widehat{t}}_i-\widehat{\widehat{t}}_{i-1})^{1/ 2}.$ We use the root mean squared error (RMSE) to measure the performance of coefficient estimates. When $\widehat{\widehat{m}}^{(r)} = m_0,$ i.e. we capture the number of breaks correctly, we define $ {\rm RMSE} = \bigg[{1\over R}\sum_{r=1}^R \sum_{i=1}^{m_0+1}||\widehat{\boldsymbol{\gamma}}_{i}^{(r)} - \boldsymbol{\gamma}_{i}||^2\bigg]^{1/2},$ where $\widehat{\boldsymbol{\gamma}}_{i}^{(r)}$ denotes the estimated coefficient vectors of $\boldsymbol{x}_{t}$ 
    in the regime $(\widehat{\widehat{t}}_{i-1}: \widehat{\widehat{t}}_i)$ of $r-$th replication.
\end{itemize}
\vspace{-0.35cm}
\subsection{Data Generation Processes and Results}\label{sec:4.2}

For \( j_p = 2, 3, 4 \), we define \( p_z = p_w = \left[(j_p + 1) T^{0.45}\right] + 1 \) and \( r_F = \left[(j_p + 3) \log T\right] + 1 \). We then generate independent vector innovation series \( (u_t, \boldsymbol{v}_t', \boldsymbol{e}_t', \boldsymbol{e}_{z,t}')' \) from the multivariate normal distribution \( \boldsymbol{N}_{1 + r_F + p_w + p_z}({\bf 0}, 2\boldsymbol{I}) \) for \( t = 1, 2, \cdots, T \). In this context, \( u_t \), \( \boldsymbol{v}_t \), \( \boldsymbol{e}_t \), and \( \boldsymbol{e}_{z,t} \) have respective dimensionalities of \( 1 \), \( r_F \), \( p_w \), and \( p_z \).

We generate $\boldsymbol{z}_{t}$ by $\boldsymbol{z}_{t} = {\rm diag}\{\kappa_1, \cdots, \kappa_{p_z}\} \boldsymbol{z}_{t-1} + \boldsymbol{e}_{z,t},$ where $\boldsymbol{z}_{0} = {\bf 0}_{p_z},~\kappa_1, \kappa_2, \cdots, \kappa_{p_z}$ are drawn from a uniform distribution $U[0.4,0.6].$
For $\boldsymbol{w}_t$, we first generate matrices $\boldsymbol{Q}$ and $\boldsymbol{Q}^\perp.$
Define two $p_w$ dimensional vector $$\boldsymbol{q}_1^*=(1/2,0,1/2,0,1/2,0,1/2,0,0,\cdots,0)',
\boldsymbol{q}_2^*= (0,1/2,0,1/2,0,1/2,0,1/2,0,\cdots,0)',$$ We generate $(p_w-2)$ vectors $\boldsymbol{a}_3, \cdots, \boldsymbol{a}_{p_w}\in \mathbb{R}^{p_w}$ whose elements are drawn from $N(0,1)$ distribution, and then compute $(\boldsymbol{q}_1^*, \boldsymbol{q}_2^*, \cdots, \boldsymbol{q}_{p_w}^*)$ using $(\boldsymbol{q}_1^*, \boldsymbol{q}_2^*,\boldsymbol{a}_3, \cdots, \boldsymbol{a}_{p_w})$ by Gram-Schmidt Algorithm. 
We then set $\boldsymbol{Q}^\perp=(\boldsymbol{q}_1^*, \cdots, \boldsymbol{q}_{p_w-r_F}^*),~ \boldsymbol{Q}= (\boldsymbol{q}_{p_w-r_F+1}^*, \cdots, \boldsymbol{q}_{p_w}^*).$ 
Then, we generate $\boldsymbol{F}_t$ and $\boldsymbol{w}_t = \boldsymbol{Q}\boldsymbol{F}_t + \boldsymbol{e}_t$ using $(\boldsymbol{v}_t',\boldsymbol{e}_t')'$ generated above.
Note that $\boldsymbol{\psi}_{1t}$ defined in \eqref{cointA} can be represented by $\boldsymbol{\psi}_{1t} = (\boldsymbol{Q}^\perp)' \boldsymbol{w}_t = (\boldsymbol{Q}^\perp)' \boldsymbol{e}_t$.
Under the above specifications, $(w_{1t}, w_{3t}, w_{5t}, w_{7t})'$ and $(w_{2t}, w_{4t}, w_{6t}, w_{8t})'$ are cointegrated respectively with a same cointegration vector $(1/2,1/2,1/2,1/2)'$.

Let $\boldsymbol{z}_{pre,t} = (z_{[p_z/5]+1,t}, z_{[2p_z/5]+1,t}, z_{[3p_z/5]+1,t}, z_{[4p_z/5]+1,t})', \boldsymbol{w}_{pre,t} = (w_{1t}, \cdots, w_{8t})'.$ Our DGPs are defined
as follows.

{\bf DGP 1 with two real breaks $S^0=\{[{T/ 3}], [2T/ 3]\}:$}
\begin{equation}
    y_{t+1} = \left\{
        \begin{array}{lll}
        (1, -0.5, 0.5, 1) \boldsymbol{z}_{pre,t} +
        (0.75, 0.75, \cdots, 0.75) \boldsymbol{w}_{pre,t} + u_{t+1},
         &0 < t \leq [T/3];\\
        (0.5, 1, 1, 0.5) \boldsymbol{z}_{pre,t} +
        (1.5, 1.5, \cdots, 1.5) \boldsymbol{w}_{pre,t} + u_{t+1}, &[T/3] < t \leq [2T/3];\\
        (1, -0.5, 0.5, 1) \boldsymbol{z}_{pre,t} + (0.75, 0.75, \cdots, 0.75) \boldsymbol{w}_{pre,t} + u_{t+1}, &[2T/3] < t \leq T.
    \end{array}\right.
    \nn
    \label{dgp4.2}
\end{equation}

{\bf DGP 2 with three real breaks $S^0=\{[T/4],[T/2],[3T/4]\}:$}
\begin{equation}
y_{t+1} = \left\{
        \begin{array}{llll}
       (1, -0.5, 0.5, 1) \boldsymbol{z}_{pre,t} +
        (0.75, 0.75, \cdots, 0.75) \boldsymbol{w}_{pre,t} + u_{t+1},
        &0< t \leq [T/4];\\
        (0.5, 1, 1, 0.5) \boldsymbol{z}_{pre,t} +
        (1.5, 1.5, \cdots, 1.5) \boldsymbol{w}_{pre,t} + u_{t+1}, &[T/4] < t \leq [T/2];\\
        (1, -0.5, 0.5, 1) \boldsymbol{z}_{pre,t} +
        (0.75, 0.75, \cdots, 0.75) \boldsymbol{w}_{pre,t} + u_{t+1},  &[T/2] < t \leq [3T/4],\\
        (0.5, 1, 1, 0.5) \boldsymbol{z}_{pre,t} +
        (1.5, 1.5, \cdots, 1.5) \boldsymbol{w}_{pre,t} + u_{t+1}, & [3T/4] < t \leq T.
    \end{array}\right.
    \nn
\label{dgp4.22}
\end{equation}

{\bf DGP 3 with four real breaks $S^0=\{[T/5],[2T/5],[3T/5],[4T/5]\}:$}
\begin{equation}
y_{t+1} = \left\{
        \begin{array}{lllll}
        (1, -0.5, 0.5, 1) \boldsymbol{z}_{pre,t} +
        (0.75, 0.75, \cdots, 0.75) \boldsymbol{w}_{pre,t} + u_{t+1},
        &0<t \leq [T/5];\\
        (0.5, 1, 1, 0.5) \boldsymbol{z}_{pre,t} +
        (1.5, 1.5, \cdots, 1.5) \boldsymbol{w}_{pre,t} + u_{t+1}, &[T/5] < t \leq [2T/5];\\
        (1, -0.5, 0.5, 1) \boldsymbol{z}_{pre,t} +
       (0.75, 0.75, \cdots, 0.75) \boldsymbol{w}_{pre,t} + u_{t+1}, &[2T/5] < t \leq [3T/5],\\
        (0.5, 1, 1, 0.5) \boldsymbol{z}_{pre,t} +
        (1.5, 1.5, \cdots, 1.5) \boldsymbol{w}_{pre,t} + u_{t+1}, & [3T/5] < t \leq [4T/5];\\
        (1, -0.5, 0.5, 1) \boldsymbol{z}_{pre,t} +
        (0.75, 0.75, \cdots, 0.75) \boldsymbol{w}_{pre,t} + u_{t+1}, &[4T/5] < t \leq T.
    \end{array}\right.\nn
\label{dgp4.23}
\end{equation}
Table \ref{tab123} reports the performance of our methods based on the specified parameter settings and defined evaluation criteria. Our SICS procedure effectively captures all significant predictors, achieving a coverage rate (CR) of 100\% across all cases. 
Furthermore, as demonstrated in Table \ref{tab123} and Figure \ref{fig:combined_breaks_vs}, our RCRS and breaks elimination method estimate both the number and locations of breaks well, resulting in high percentages of correct estimation (PCEs) and low Hausdorff distances (HDs), even in the presence of many redundant regressors. Additionally, Table \ref{tab123} and Figure \ref{fig:combined_breaks_vs} indicate that our elimination method successfully identifies effective predictors, resulting in low root mean squared errors (RMSEs). When cointegrated \(I(1)\) predictors are present, our approach not only selects them well but also estimates their coefficients accurately.
\begin{table}
    \centering
    \resizebox{0.7\textwidth}{!}{

\begin{tabular}{|cc|ccc|ccc|ccc|}
\hline
\multicolumn{2}{|c|}{\multirow{2}{*}{}}                                   & \multicolumn{3}{c|}{$T=400$}                                    & \multicolumn{3}{c|}{$T=500$}                                    & \multicolumn{3}{c|}{$T=600$}                                    \\ \cline{3-11} 
\multicolumn{2}{|c|}{}                                                    & \multicolumn{1}{c|}{DGP1}  & \multicolumn{1}{c|}{DGP2}  & DGP3  & \multicolumn{1}{c|}{DGP1}  & \multicolumn{1}{c|}{DGP2}  & DGP3  & \multicolumn{1}{c|}{DGP1}  & \multicolumn{1}{c|}{DGP2}  & DGP3  \\ \hline
\multicolumn{1}{|c|}{\multirow{3}{*}{CR}}                       & $j_p=2$ & \multicolumn{1}{c|}{100.0} & \multicolumn{1}{c|}{100.0} & 100.0 & \multicolumn{1}{c|}{100.0} & \multicolumn{1}{c|}{100.0} & 100.0 & \multicolumn{1}{c|}{100.0} & \multicolumn{1}{c|}{100.0} & 100.0 \\ \cline{2-11} 
\multicolumn{1}{|c|}{}                                          & $j_p=3$ & \multicolumn{1}{c|}{100.0} & \multicolumn{1}{c|}{100.0} & 100.0 & \multicolumn{1}{c|}{100.0} & \multicolumn{1}{c|}{100.0} & 100.0 & \multicolumn{1}{c|}{100.0} & \multicolumn{1}{c|}{100.0} & 100.0 \\ \cline{2-11} 
\multicolumn{1}{|c|}{}                                          & $j_p=4$ & \multicolumn{1}{c|}{100.0} & \multicolumn{1}{c|}{100.0} & 100.0 & \multicolumn{1}{c|}{100.0} & \multicolumn{1}{c|}{100.0} & 100.0 & \multicolumn{1}{c|}{100.0} & \multicolumn{1}{c|}{100.0} & 100.0 \\ \hline
\multicolumn{1}{|c|}{\multirow{3}{*}{PCE}}                      & $j_p=2$ & \multicolumn{1}{c|}{94.2}  & \multicolumn{1}{c|}{95.8}  & 82.4  & \multicolumn{1}{c|}{98.6}  & \multicolumn{1}{c|}{94.6}  & 94.8  & \multicolumn{1}{c|}{96.4}  & \multicolumn{1}{c|}{96.8}  & 96.0  \\ \cline{2-11} 
\multicolumn{1}{|c|}{}                                          & $j_p=3$ & \multicolumn{1}{c|}{95.0}  & \multicolumn{1}{c|}{94.0}  & 80.4  & \multicolumn{1}{c|}{94.4}  & \multicolumn{1}{c|}{96.2}  & 83.0  & \multicolumn{1}{c|}{94.6}  & \multicolumn{1}{c|}{97.2}  & 93.2  \\ \cline{2-11} 
\multicolumn{1}{|c|}{}                                          & $j_p=4$ & \multicolumn{1}{c|}{97.9}  & \multicolumn{1}{c|}{86.5}  & 70.8  & \multicolumn{1}{c|}{98.2}  & \multicolumn{1}{c|}{90.0}  & 87.6  & \multicolumn{1}{c|}{95.3}  & \multicolumn{1}{c|}{95.6}  & 94.1  \\ \hline
\multicolumn{1}{|c|}{\multirow{3}{*}{${100\text{HD} \over T}$}} & $j_p=2$ & \multicolumn{1}{c|}{2.7}   & \multicolumn{1}{c|}{3.1}   & 7.0   & \multicolumn{1}{c|}{2.2}   & \multicolumn{1}{c|}{2.7}   & 3.4   & \multicolumn{1}{c|}{2.0}   & \multicolumn{1}{c|}{2.2}   & 2.3   \\ \cline{2-11} 
\multicolumn{1}{|c|}{}                                          & $j_p=3$ & \multicolumn{1}{c|}{3.0}   & \multicolumn{1}{c|}{3.4}   & 7.5   & \multicolumn{1}{c|}{2.6}   & \multicolumn{1}{c|}{2.9}   & 3.8   & \multicolumn{1}{c|}{2.3}   & \multicolumn{1}{c|}{2.3}   & 2.6   \\ \cline{2-11} 
\multicolumn{1}{|c|}{}                                          & $j_p=4$ & \multicolumn{1}{c|}{3.2}   & \multicolumn{1}{c|}{3.9}   & 9.9   & \multicolumn{1}{c|}{2.6}   & \multicolumn{1}{c|}{3.3}   & 4.1   & \multicolumn{1}{c|}{2.0}   & \multicolumn{1}{c|}{2.6}   & 2.9   \\ \hline
\multicolumn{1}{|c|}{\multirow{3}{*}{RMSE}}                     & $j_p=2$ & \multicolumn{1}{c|}{0.3}   & \multicolumn{1}{c|}{0.3}   & 0.4   & \multicolumn{1}{c|}{0.2}   & \multicolumn{1}{c|}{0.3}   & 0.3   & \multicolumn{1}{c|}{0.2}   & \multicolumn{1}{c|}{0.3}   & 0.3   \\ \cline{2-11} 
\multicolumn{1}{|c|}{}                                          & $j_p=3$ & \multicolumn{1}{c|}{0.3}   & \multicolumn{1}{c|}{0.3}   & 0.4   & \multicolumn{1}{c|}{0.2}   & \multicolumn{1}{c|}{0.3}   & 0.4   & \multicolumn{1}{c|}{0.2}   & \multicolumn{1}{c|}{0.3}   & 0.3   \\ \cline{2-11} 
\multicolumn{1}{|c|}{}                                          & $j_p=4$ & \multicolumn{1}{c|}{0.3}   & \multicolumn{1}{c|}{0.4}   & 0.4   & \multicolumn{1}{c|}{0.3}   & \multicolumn{1}{c|}{0.3}   & 0.4   & \multicolumn{1}{c|}{0.2}   & \multicolumn{1}{c|}{0.3}   & 0.3   \\ \hline
\end{tabular}}
\caption{The CR, PCE, HD and RMSE for DGP1, DGP2, DGP3 with different $T, p$ and $S^0$.}
\label{tab123}
\end{table}

\begin{figure}
    \centering
    \begin{minipage}{0.48\textwidth}
        \centering
        \includegraphics[width=\textwidth]{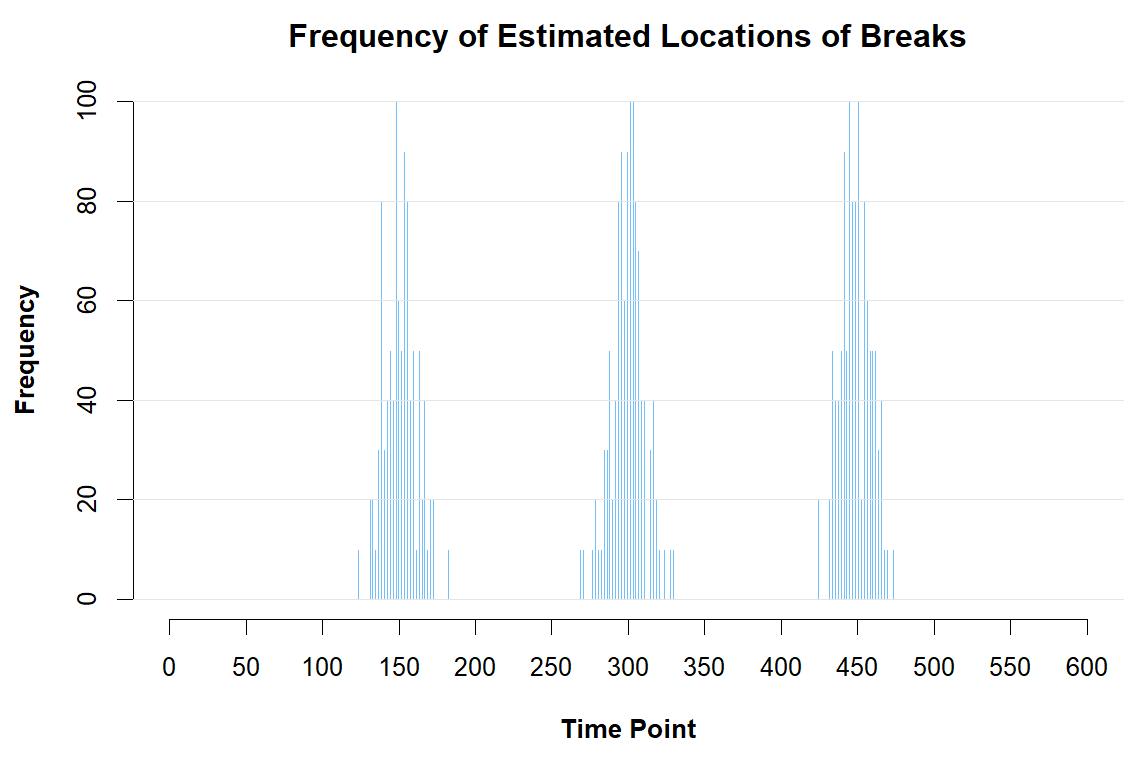}
    \end{minipage}
    \hfill 
    \begin{minipage}{0.48\textwidth}
        \centering
        \includegraphics[width=\textwidth]{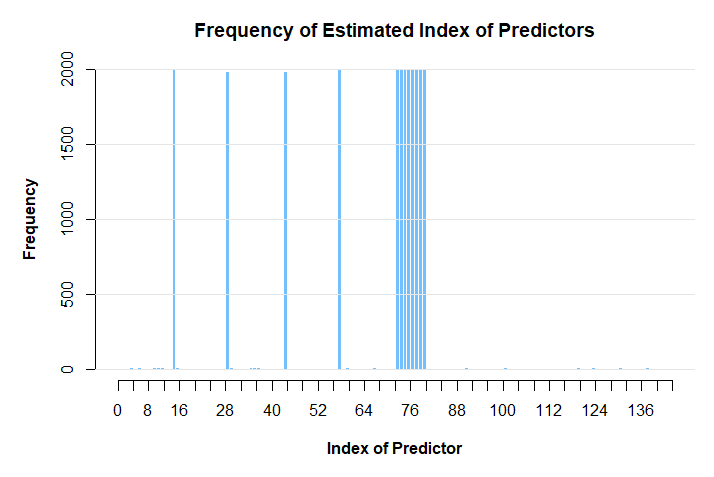}
    \end{minipage}
    
    \caption{Left: Locations Frequency of estimated change points. Right: Frequency of active predictor index in the first estimated regime. Both panels are under DGP 2, with $T = 600,~ S^0 = \{150,300,450\},~ p_z= p_w=72,~ r_F = 16$. The active predictors are set as $x_{15,t}, x_{29,t}, x_{44,t}, x_{58, t}~ (I(0)); ~x_{72+i,t},~i=1,2,\cdots,8~(I(1))$.}
    \label{fig:combined_breaks_vs}
\end{figure}
\vspace{-0.60cm}
\subsection{Comparison with Tu and Xie (2023)}\label{comparetx}
In this subsection, we consider a simulation example to compare the performance of our method with that of \cite{tuPenetratingSporadicReturn2023}. When the number of variables ($p$) is large, the method proposed by \cite{tuPenetratingSporadicReturn2023} may encounter issues with missed change point estimation, as discussed in Remark \ref{remarktx}.

We use DGP 1 and 2 described in Section \ref{sec:4.2}, with the same parameters except for the following: \(d_T = [4\log T]\), \(r_F = [\log T]\), and \(p_w = \max\{8, [j_p T^{0.45}]\}\), where \(j_p\) takes values of 0.3, 0.4, 0.5, 0.6, and 0.7. All tuning parameter choices for our method remain the same as specified in Section \ref{sec:4.1}. 


We follow all the suggested choices of the tuning parameters as in  \cite{tuPenetratingSporadicReturn2023}, except that we set $\varepsilon = \max\{0.05, 1.5pT^{-1}\}$ rather than $0.05$ in their paper. The reason is as follows: suppose that in the $k-$th step of their Group Forward Regression Screening (GFRS) procedure, one gets the locations set of breaks $\widehat{S}_k = \{\widehat{t}_1, \cdots, \widehat{t}_k\}$. The global Residual Sum of Squares with $\widehat{S}_k$ is given by
$
    {\rm RSS}(\widehat{S}_k) := \sum_{i=1}^{k+1} \sum_{t=\widehat{t}_{i-1}}^{\widehat{t}_i-1} (y_{t+1} - [\widehat{\boldsymbol{\gamma}}(\widehat{t}_{i-1}: \widehat{t}_{i})]'
    \boldsymbol{x}_t)^2,
$
where $\widehat{t}_0=1, \widehat{t}_{k+1}=T+1,$ and $
\widehat{\boldsymbol{\gamma}}(\widehat{t}_{i-1}: \widehat{t}_{i})$ denotes the least sqaure estimator of coefficient vector of $\boldsymbol{x}_t$ in the period $(\widehat{t}_{i-1}: \widehat{t}_{i}).$
In the $(k+1)-$th step, one aims to find a point $t$ maximizing
$
    {\rm RSS}(\widehat{S}_k) - {\rm RSS}(\widehat{S}_k\cup\{t\})
$
over $t$ satisfying $|t-\widehat{t}_l|\geq \varepsilon T,~l=1,\cdots,k.$ 
When $p$ is large, we can only select the new change points outside $\mathcal{U}(\widehat{S}_k, p)$ to ensure the feasibility of least square estimators, where $\mathcal{U}(S,r)$ is given in \eqref{Usr}.

\begin{table}[H]
    \centering\resizebox{0.65\textwidth}{!}{
\begin{tabular}{|cc|cc|cc|cc|cc|cc|}
\hline
\multicolumn{2}{|c|}{$j_p$}                              & \multicolumn{2}{c|}{0.3}           & \multicolumn{2}{c|}{0.4}           & \multicolumn{2}{c|}{0.5}          & \multicolumn{2}{c|}{0.6}         & \multicolumn{2}{c|}{0.7}         \\ \hline
\multicolumn{2}{|c|}{Method}                             & \multicolumn{1}{c|}{TX}    & Ours  & \multicolumn{1}{c|}{TX}    & Ours  & \multicolumn{1}{c|}{TX}   & Ours  & \multicolumn{1}{c|}{TX}  & Ours  & \multicolumn{1}{c|}{TX}  & Ours  \\ \hline
\multicolumn{1}{|c|}{\multirow{2}{*}{$T=400$}} & DGP1 & \multicolumn{1}{c|}{81.8}  & 100.0 & \multicolumn{1}{c|}{63.5}  & 100.0 & \multicolumn{1}{c|}{16.0} & 100.0 & \multicolumn{1}{c|}{1.0} & 100.0 & \multicolumn{1}{c|}{0.0} & 98.0 \\ \cline{2-12}
\multicolumn{1}{|c|}{}                         & DGP2 & \multicolumn{1}{c|}{0.0}   & 94.4  & \multicolumn{1}{c|}{0.0}   & 95.6  & \multicolumn{1}{c|}{0.0}  & 96.2  & \multicolumn{1}{c|}{0.0} & 95.3  & \multicolumn{1}{c|}{0.0} & 96.1  \\ \hline
\multicolumn{1}{|c|}{\multirow{2}{*}{$T=500$}} & DGP1 & \multicolumn{1}{c|}{100.0} & 100.0 & \multicolumn{1}{c|}{90.7}  & 100.0 & \multicolumn{1}{c|}{30.0} & 100.0 & \multicolumn{1}{c|}{1.0} & 100.0 & \multicolumn{1}{c|}{0.0} & 100.0 \\ \cline{2-12}
\multicolumn{1}{|c|}{}                         & DGP2 & \multicolumn{1}{c|}{6.6}   & 100.0 & \multicolumn{1}{c|}{0.0}   & 100.0 & \multicolumn{1}{c|}{0.0}  & 100.0 & \multicolumn{1}{c|}{0.0} & 100.0 & \multicolumn{1}{c|}{0.0} & 100.0 \\ \hline
\multicolumn{1}{|c|}{\multirow{2}{*}{$T=600$}} & DGP1 & \multicolumn{1}{c|}{100.0} & 100.0 & \multicolumn{1}{c|}{100.0} & 100.0 & \multicolumn{1}{c|}{43.8} & 100.0 & \multicolumn{1}{c|}{4.0} & 100.0 & \multicolumn{1}{c|}{0.0} & 100.0 \\ \cline{2-12}
\multicolumn{1}{|c|}{}                         & DGP2 & \multicolumn{1}{c|}{4.2}  & 100.0 & \multicolumn{1}{c|}{2.0}  & 100.0 & \multicolumn{1}{c|}{0.0}  & 100.0 & \multicolumn{1}{c|}{0.0} & 100.0 & \multicolumn{1}{c|}{0.0} & 100.0 \\ \hline
\end{tabular}}\caption{Percentages of Correct Estimation (PCE) of $m_0$ Comparison between the method in \cite{tuPenetratingSporadicReturn2023} and our paper using DGP1 and DGP2.}
    \label{fdcom}
\end{table}
We present the PCE 
in Table \ref{fdcom} and Figure \ref{figtx}. We refer to the method from \cite{tuPenetratingSporadicReturn2023} as "TX" and our own approach as "Ours". From Table \ref{fdcom} and Figure \ref{figtx}, it is evident that as the dimensionality of the predictors increases (i.e., when $j_p$ becomes larger), the effectiveness of the method in \cite{tuPenetratingSporadicReturn2023} significantly decreases. When $j_p$ exceeds 0.6 or $m_0=3$, indicating that $p$ is moderately large or the minimum distance between two adjacent change points is small, change points generally become unidentifiable by their method. In contrast, our method remains effective in these scenarios, consistently providing accurate estimates.
\begin{figure}
\centering
\begin{subfigure}[b]{0.495\textwidth}
    \centering
    \includegraphics[width=\linewidth]{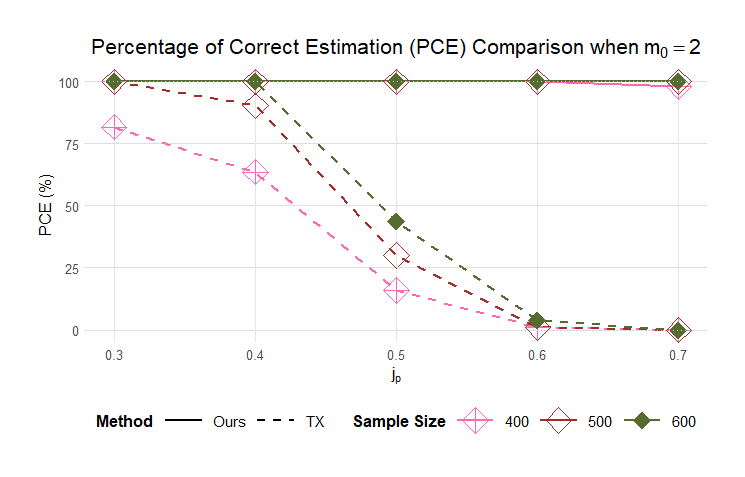}
    \caption{}
    \label{fig:sub1}
\end{subfigure}
\hfill
\begin{subfigure}[b]{0.495\textwidth}
    \centering
    \includegraphics[width=\linewidth]{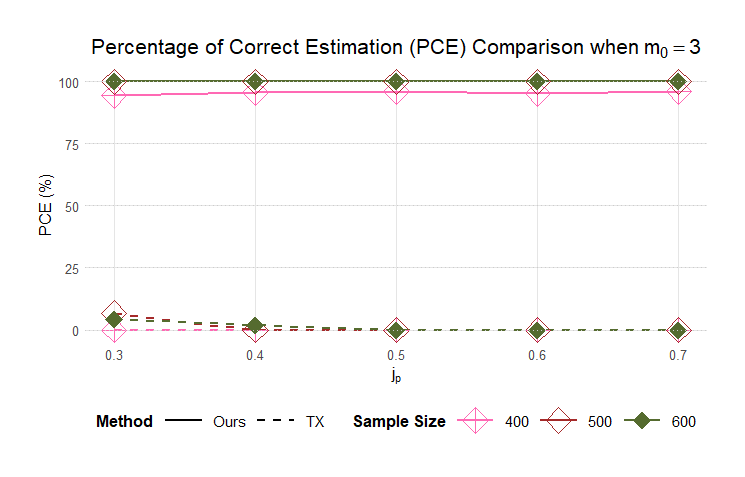}
    \caption{}
    \label{fig:sub2}
\end{subfigure}
\caption{Percentages of Correct Estimation (PCE) of $m_0$ Comparison between the method in \cite{tuPenetratingSporadicReturn2023} and our paper using DGP1 and DGP2.}
\label{figtx}
\end{figure}
\vspace{-0.60cm}
\subsection{Predictor Selection Comparison with \cite{mei2024lasso}}
\label{sec:4.4}

In this subsection, we perform a simulation to compare the selection performance with the LASSO method proposed by \cite{mei2024lasso}, in a 
predictive regression model with no structural breaks. As previously mentioned, our SICS procedure, when combined with the predictor elimination, serves as an effective strategy for predictor selection. In this example, we demonstrate that our method effectively identifies active predictors, especially cointegrated \(I(1)\) ones that cannot be selected well by LASSO method.

All tuning parameters are set the same as in Section \ref{sec:4.1} except $d_T=[4\log T]$. 
 Predictor vector $\boldsymbol{x}_t=(\boldsymbol{z}_t',\boldsymbol{w}_t')'$ is generated as described in Section \ref{sec:4.2}. We set the active predictors as $w_{i,t},~i=1,2,\cdots,8,$ 
 and set the true coefficient of them by $0.75b_0$ for all $i,$ 
where we take $b_0= 0.2,~ 0.3,~ 0.4$ to control the signal to noise ratio.

We present the RMSE defined in Section \ref{sec:4.1} for the Slasso described in \cite{mei2024lasso} as well as our own method. Given the estimated predictor set $\widehat{G}^{(r)}$ in the $r$-th replication and the true set $G^0,$ We also report the average selection coverage rate (SCR) given by ${\rm SCR} = {1\over R}\sum_{r=1}^R\frac{|\widehat{G}^{(r)} \cap G^0|}{|G^0|}$ in percentage.
The results are  summarized in Table \ref{psperform}. We will refer to the method from \cite{mei2024lasso} as "MS" and our method as "Ours". It can be noted that our method consistently demonstrates superior performance in selecting active predictors across all parameter settings.

\begin{table}[H]
\centering
    \resizebox{\textwidth}{!}{
\begin{tabular}{|c|c|cccccc|cccccc|cccccc|}
\hline
\multirow{3}{*}{$b_0$} & \multirow{3}{*}{Method} & \multicolumn{6}{c|}{$j_p=2$}                                                                                                                              & \multicolumn{6}{c|}{$j_p=3$}                                                                                                                              & \multicolumn{6}{c|}{$j_p=4$}                                                                                                                              \\ \cline{3-20} 
                       &                         & \multicolumn{2}{c|}{$T=400$}                             & \multicolumn{2}{c|}{$T=500$}                             & \multicolumn{2}{c|}{$T=600$}        & \multicolumn{2}{c|}{$T=400$}                             & \multicolumn{2}{c|}{$T=500$}                             & \multicolumn{2}{c|}{$T=600$}        & \multicolumn{2}{c|}{$T=400$}                             & \multicolumn{2}{c|}{$T=500$}                             & \multicolumn{2}{c|}{$T=600$}        \\ \cline{3-20} 
                       &                         & \multicolumn{1}{c|}{RMSE} & \multicolumn{1}{c|}{SCR} & \multicolumn{1}{c|}{RMSE} & \multicolumn{1}{c|}{SCR} & \multicolumn{1}{c|}{RMSE} & SCR & \multicolumn{1}{c|}{RMSE} & \multicolumn{1}{c|}{SCR} & \multicolumn{1}{c|}{RMSE} & \multicolumn{1}{c|}{SCR} & \multicolumn{1}{c|}{RMSE} & SCR & \multicolumn{1}{c|}{RMSE} & \multicolumn{1}{c|}{SCR} & \multicolumn{1}{c|}{RMSE} & \multicolumn{1}{c|}{SCR} & \multicolumn{1}{c|}{RMSE} & SCR \\ \hline
\multirow{2}{*}{0.2}   & Ours                    & \multicolumn{1}{c|}{0.06} & \multicolumn{1}{c|}{91.2}    & \multicolumn{1}{c|}{0.06} & \multicolumn{1}{c|}{97.2}    & \multicolumn{1}{c|}{0.06} & 97.6    & \multicolumn{1}{c|}{0.08} & \multicolumn{1}{c|}{69.4}    & \multicolumn{1}{c|}{0.05} & \multicolumn{1}{c|}{93.2}    & \multicolumn{1}{c|}{0.06} & 100.0   & \multicolumn{1}{c|}{0.05} & \multicolumn{1}{c|}{65.0}    & \multicolumn{1}{c|}{0.06} & \multicolumn{1}{c|}{88.2}    & \multicolumn{1}{c|}{0.05} & 95.0    \\ \cline{2-20} 
                       & MS                      & \multicolumn{1}{c|}{0.42} & \multicolumn{1}{c|}{13.8}    & \multicolumn{1}{c|}{0.42} & \multicolumn{1}{c|}{12.6}    & \multicolumn{1}{c|}{0.40} & 32.0    & \multicolumn{1}{c|}{0.43} & \multicolumn{1}{c|}{3.2}     & \multicolumn{1}{c|}{0.42} & \multicolumn{1}{c|}{10.6}    & \multicolumn{1}{c|}{0.42} & 12.0    & \multicolumn{1}{c|}{0.43} & \multicolumn{1}{c|}{20.0}    & \multicolumn{1}{c|}{0.43} & \multicolumn{1}{c|}{21.8}    & \multicolumn{1}{c|}{0.43} & 23.8    \\ \hline
\multirow{2}{*}{0.3}   & Ours                    & \multicolumn{1}{c|}{0.08} & \multicolumn{1}{c|}{100.0}   & \multicolumn{1}{c|}{0.06} & \multicolumn{1}{c|}{100.0}   & \multicolumn{1}{c|}{0.07} & 100.0   & \multicolumn{1}{c|}{0.07} & \multicolumn{1}{c|}{100.0}   & \multicolumn{1}{c|}{0.06} & \multicolumn{1}{c|}{100.0}   & \multicolumn{1}{c|}{0.06} & 100.0   & \multicolumn{1}{c|}{0.08} & \multicolumn{1}{c|}{100.0}   & \multicolumn{1}{c|}{0.06} & \multicolumn{1}{c|}{100.0}   & \multicolumn{1}{c|}{0.06} & 100.0   \\ \cline{2-20} 
                       & MS                      & \multicolumn{1}{c|}{0.47} & \multicolumn{1}{c|}{70.2}    & \multicolumn{1}{c|}{0.38} & \multicolumn{1}{c|}{94.2}    & \multicolumn{1}{c|}{0.39} & 95.0    & \multicolumn{1}{c|}{0.56} & \multicolumn{1}{c|}{36.6}    & \multicolumn{1}{c|}{0.57} & \multicolumn{1}{c|}{30.0}    & \multicolumn{1}{c|}{0.45} & 71.8    & \multicolumn{1}{c|}{0.61} & \multicolumn{1}{c|}{28.8}    & \multicolumn{1}{c|}{0.58} & \multicolumn{1}{c|}{33.2}    & \multicolumn{1}{c|}{0.55} & 46.2    \\ \hline
\multirow{2}{*}{0.4}   & Ours                    & \multicolumn{1}{c|}{0.08} & \multicolumn{1}{c|}{100.0}   & \multicolumn{1}{c|}{0.05} & \multicolumn{1}{c|}{100.0}   & \multicolumn{1}{c|}{0.06} & 100.0   & \multicolumn{1}{c|}{0.07} & \multicolumn{1}{c|}{100.0}   & \multicolumn{1}{c|}{0.05} & \multicolumn{1}{c|}{100.0}   & \multicolumn{1}{c|}{0.06} & 100.0   & \multicolumn{1}{c|}{0.06} & \multicolumn{1}{c|}{100.0}   & \multicolumn{1}{c|}{0.06} & \multicolumn{1}{c|}{100.0}   & \multicolumn{1}{c|}{0.06} & 100.0   \\ \cline{2-20} 
                       & MS                      & \multicolumn{1}{c|}{0.39} & \multicolumn{1}{c|}{95.2}    & \multicolumn{1}{c|}{0.38} & \multicolumn{1}{c|}{100.0}   & \multicolumn{1}{c|}{0.35} & 100.0   & \multicolumn{1}{c|}{0.44} & \multicolumn{1}{c|}{100.0}   & \multicolumn{1}{c|}{0.42} & \multicolumn{1}{c|}{100.0}   & \multicolumn{1}{c|}{0.39} & 100.0   & \multicolumn{1}{c|}{0.62} & \multicolumn{1}{c|}{70.4}    & \multicolumn{1}{c|}{0.55} & \multicolumn{1}{c|}{85.6}    & \multicolumn{1}{c|}{0.48} & 94.4    \\ \hline
\end{tabular}}
    \caption{RMSE and SCR Comparison with \cite{mei2024lasso}.
    }
    \label{psperform}
\end{table}

\vspace{-0.60cm}
\section{EMPIRICAL ANALYSIS}
\label{sec:emp}

We apply the procedure to 
U.S.\ CPI inflation, 
which is a 
primary target of the large-panel macro forecasting literature
as in \cite{stock2002forecasting}. We work with the
FRED-MD database (\cite{mccrackenFREDMDMonthlyDatabase2016}),
which compiles 126 monthly U.S.\ macroeconomic series over
1960:01--2024:12. The
predictor panel has the mixed-integration structure the
methodology is designed to handle: it contains both stationary
$I(0)$ series (detrended real quantities, spreads, survey
measures) and cointegrated $I(1)$ series (price levels,
monetary aggregates, nominal interest rates), and both targets
exhibit multiple documented historical regime changes over the
sample.

Each series in the FRED-MD database comes with a
series-specific \texttt{tcode} column prescribing
transformations to render every series approximately stationary.
This default stationary transformation is unnecessary for our  procedure, which permits both cointegrated and redundant persistent variables. We therefore use a modified rule
\texttt{tcode\_mod} that applies the minimum transformation
needed to bring each series into either $I(0)$ or $I(1)$. The detailed constructions are reported in Supplementary
Material Section~\ref{sec:supp_tcode}.

We have two purposes in this empirical exercise. First, on the
discovery side, we examine the breaks detected and the
predictors selected in each regime, and assess whether these
findings are consistent with the documented macroeconomic
history. 
Second, on the predictive
side, we consider out-of-sample forecasting comparisons with \cite{mei2024lasso}, \cite{tuPenetratingSporadicReturn2023}.

\subsection{Structural Break Detection and Predictor Selection}\label{sec:emp_discovery}

Figure~\ref{fig:discovery_cpi_pipeline} displays the full
procedure diagnostic. Panel~(a) plots the sample canonical correlations sorted in ascending order;
the retained predictors cluster at the left tail of the
distribution, well separated from the bulk. Panel~(b) traces the
local-window RSS over the sample as RCRS adds break candidates
sequentially: the first four forward picks (red) generate the
largest RSS reductions and fall in or near the orange-shaded
windows of documented historical events, while picks 5 and 6
(dark orange) fall within the slack zone and are carried
forward for IC evaluation. Panel~(c) shows the RSSR$_k$
trajectory.  
 Panel~(d) traces the IC$_1$
backward-elimination path, which minimizes at four
breaks, confirming the final count.

\begin{figure}[!ht]
  \centering
  \includegraphics[width=\textwidth, trim={0 0 0 30}, clip]{
  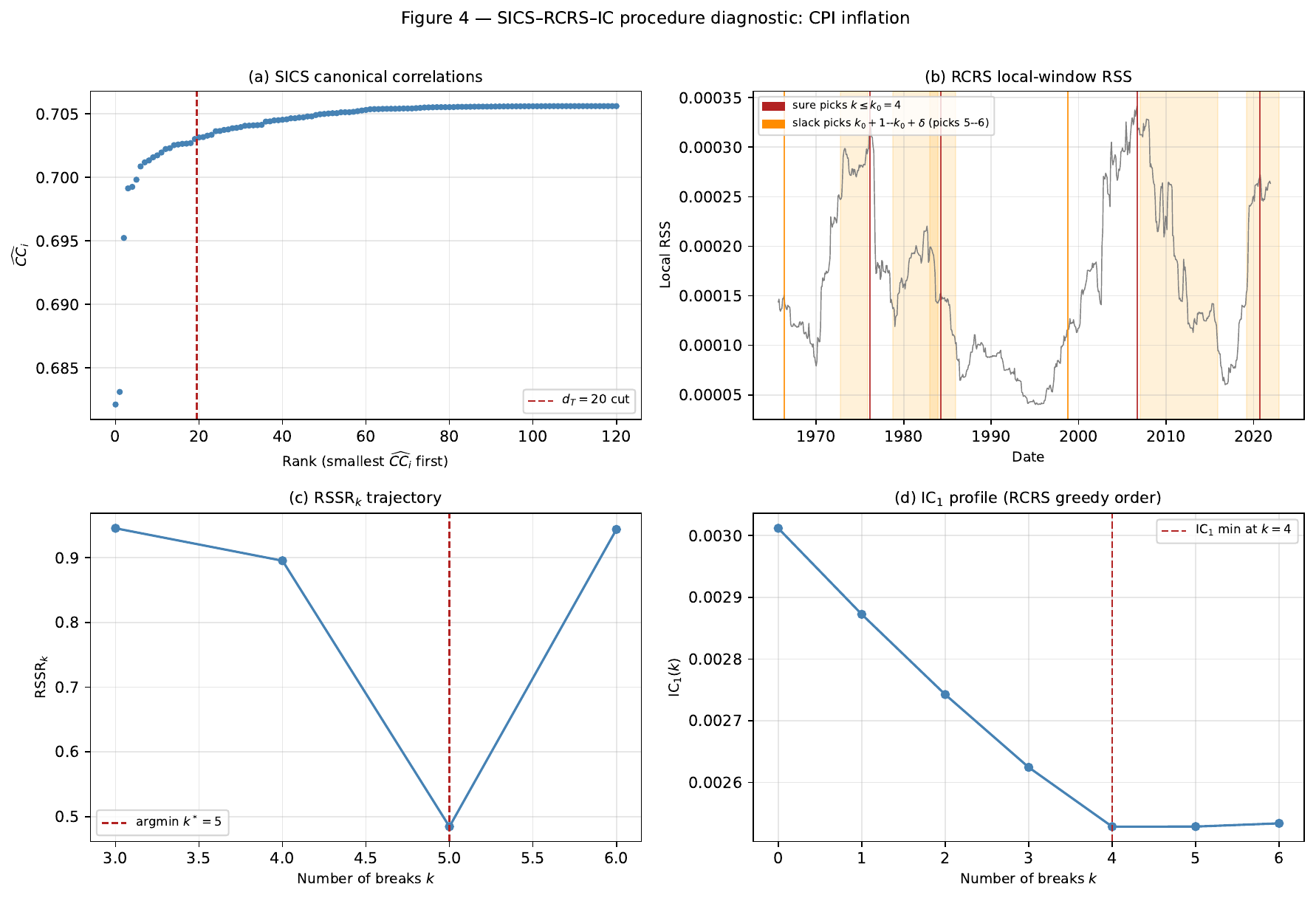}
  \caption{SICS--RCRS--IC procedure diagnostic for CPI inflation.}
  \label{fig:discovery_cpi_pipeline}

  \parbox{\linewidth}{\scriptsize\linespread{0.85}\selectfont\textit{Note:} \textbf{(a)}~SICS sample canonical correlations, sorted, with the $d_T = 20$ cut. \textbf{(b)}~Local-window RSS trajectory over time. 
  \textbf{(c)}~RSSR$_k$ trajectory with argmin at $k^* = 5$.
  \textbf{(d)}~IC$_1$ trajectory across backward-elimination steps, minimized at $|\widehat{\widehat S}| = 4$.}
\end{figure}

The procedure detects four breaks in headline CPI inflation: 
1976-03, 1984-04, 2006-09, and
2020-10. 
They align with widely documented inflation
regimes. The 1976-03 detection is consistent with the end of the
1973--1974 oil-shock inflation surge. 
The 1984-04 detection coincides with the end of the
Volcker disinflation (\cite{goodfriend2005incredible}). 
The 2006-09 break is economically interpretable
as the pre-financial-crisis credit
expansion. The 2020-10 break captures the
onset of the COVID inflation surge.

The per-regime IC$_2$-selected active sets, shown in Table \ref{tab:regime_summary}, are parsimonious
and economically interpretable. 
For instance, the
1984--2006 post-Volcker regime retains one active predictor, oil price, which indicates that 
supply-side commodity shocks are the dominant source of
short-run inflation variation once expectations are anchored.
The
2006--2020 regime expands the active set to
three monetary predictors, consistent with
the credit and risk-spread channels through which the 2007--2009
financial disruption transmitted to prices. 
After the 2020
COVID break, three predictors -- retail sales, industrial production, and capacity
utilization -- reflect the unprecedented monetary expansion of
2020--2021 and its transmission to consumer prices through both
demand and supply channels.

\begin{table}[!ht]
\centering
\scriptsize\setlength{\tabcolsep}{3pt}
\renewcommand{\arraystretch}{0.95}
\caption{Detected regimes and IC$_2$-selected active predictor sets.}
\label{tab:regime_summary}
\begin{tabularx}{\linewidth}{@{}lcX@{}}
\toprule
Regime & Number of Predictors & Active predictors \\
\midrule
1962:08--1976:02 & 4 & cap.\ util., inv./sales, S\&P P/E, manuf.\ empl. 
\\
1976:03--1984:03 & 5 & manuf. empl., dur.-goods empl., inv./sales, oil, cpi. dur. \\
1984:04--2006:08 & 1 & oil \\
2006:09--2020:09 & 3 & mon.\ base, oil, reserves\\
2020:10--2024:12 & 3 & retail sales, IP, cap.\ util.\\
\bottomrule
\end{tabularx}

\vspace{4pt}
\parbox{\linewidth}{\scriptsize\linespread{0.85}\selectfont\textit{Note:} 
Abbreviations: oil $=$ \texttt{OILPRICEx}; cap.\ util.\ $=$ \texttt{CUMFNS}; 
mon.\ base $=$ \texttt{BOGMBASE}; 
manuf. empl $=$ \texttt{MANEMP}; inv./sales $=$ \texttt{ISRATIOx}; S\&P P/E $=$ \texttt{S\&P PE ratio}; dur.-goods empl.$=$ \texttt{DMANEMP}; cpi. dur. $=$ \texttt{CUSR0000SAD}; reserves $=$ \texttt{TOTRESNS}; retail sales $=$\texttt{RETAILx}; IP $=$ \texttt{INDPRO}. For details on the predictors, refer to the FRED-MD database documentation (\cite{mccrackenFREDMDMonthlyDatabase2016}).}
\end{table}

\subsection{Out-of-sample Forecast Comparison}\label{sec:emp_oos}

To quantify the practical payoff of the detected breaks and predictors for
forecasting, we consider forecasting exercises comparison with \cite{mei2024lasso} and \cite{tuPenetratingSporadicReturn2023}. 
The comparison is on monthly forecast origins from January 1990
through December 2019, for forecast horizons
$h \in \{1, 2, 3\}$ months.  
At each forecast origin $s$, we run the procedures in our paper  and \cite{tuPenetratingSporadicReturn2023} using data available up to $s$, identify the most recent detected break that is at
least $40$ months before $s$, and forecast
$y_{s+h+1}$ from a centered OLS on the last-regime window using
the regime's active predictor set, respectively. We also run the SLasso procedure in \cite{mei2024lasso} using data available up to $s$, and forecast $y_{s+h+1}$ from a centered OLS without considering structural breaks.

We report the root-mean-squared forecast error (RMSFE) of these methods during four forecasting windows: the 1990-1999 period (1990:01--1999:12), the 2000-2009 period (2000:01--2009:12), the 2010-2023 period (2010:01--2023:12), and the full period (1990:01--2023:12). 
The RMSFE is defined as:
\begin{equation}
	\text{RMSFE} = \left(\frac{1}{n} \sum_{s= s_0}^{s_0 + n-1} (\widehat{y}_{s+h+1}- y_{s+h+1})^2\right)^{{1\over 2}},
\end{equation}
where $s_0,~n$ denote the start point and number of origins in a forecasting window. 
The results in Table \ref{tab:oos_ablation_transposed} demonstrate that our approach achieves the lowest RMSFE across all horizons in the full-sample, post-2000 and after-2010 periods, highlighting the predictive advantage of our proposed method in predictor selection as well as change point detection. 
On one hand, our dimension reduction approach extracts more precise structural changes and predictive information compared to \cite{tuPenetratingSporadicReturn2023}. On the other hand, when structural breaks occur in a high-dimensional setting, our method outperforms \cite{mei2024lasso} by identifying regime changes to deliver more accurate forecasts.

\begin{table}[H]
\centering
\scriptsize
\renewcommand{\arraystretch}{1.1}
\setlength{\tabcolsep}{3.5pt}
\setlength{\aboverulesep}{0pt}
\setlength{\belowrulesep}{0pt}
\caption{RMSFE Comparisons.}
\label{tab:oos_ablation_transposed}

\vspace{0.6ex}
\resizebox{\textwidth}{!}{ 
\begin{tabular}{@{}l *{16}{c} @{}}
\toprule
& \multicolumn{3}{c}{\textit{1990--1999 Period
}} 
& \multicolumn{3}{c}{\textit{2000--2009 Period 
}} 
& \multicolumn{3}{c}{\textit{2010--2023 Period
}} 
& \multicolumn{3}{c}{\textit{Full Period
}} \\
& \multicolumn{3}{c}{$n=120$} 
& \multicolumn{3}{c}{$n=120$} 
& \multicolumn{3}{c}{$n=167$} 
& \multicolumn{3}{c}{$n=407$} \\
\cmidrule(lr){2-4} \cmidrule(lr){5-7} \cmidrule(lr){8-10}\cmidrule(lr){11-13} 
$h$ & 1 & 2 & 3  &1 & 2 & 3  & 1 & 2 & 3 & 1 & 2 & 3 \\
\midrule
Ours 
& \textbf{0.161} & \textbf{0.154} & \textbf{0.160}  
& 0.394 & \textbf{0.374} & 0.401 
& \textbf{0.253} & \textbf{0.268} & \textbf{0.274} 
& \textbf{0.283} & \textbf{0.279} & \textbf{0.293}
   \\
\addlinespace[2pt]

MS
& 0.189 & 0.189 & 0.190 
& 0.453 & 0.454 & 0.450
& 0.274 & 0.274 & 0.275 
& 0.322 & 0.320 & 0.318  \\
\addlinespace[2pt]

TX
& 0.222 & 0.222 & 0.224 
& \textbf{0.384} & 0.390 & \textbf{0.387}
& 0.313 & 0.310 & 0.321 
& 0.313 & 0.314 & 0.319  \\
\addlinespace[2pt]

\bottomrule
\end{tabular}}

\vspace{4pt}
\parbox{\linewidth}{\scriptsize\linespread{0.95}
\selectfont\textit{Note:} 
We report $10^2\times$ RMSFE in the table. We use bold font to indicate the smallest RMSFE computed by all three methods, and refer to the results from the methods by \cite{mei2024lasso}, \cite{tuPenetratingSporadicReturn2023} and our paper as "MS",  "TX" and "Ours", respectively. 
}
\end{table}

\vspace{-0.60cm}
\section{CONCLUSION}

In this paper, we examine a predictive regression model characterized by high dimensionality, cointegrated \(I(1)\) regressors, and structural breaks. We begin by applying a Sure Independence Canonical Screening (SICS) procedure to identify active predictors. Next, we utilize a Ratio-Controlled Regression Screening (RCRS) procedure to select change points. We further refine the detection of change points and active predictors through two elimination steps based on information criteria (IC).
	
Our results demonstrate that the proposed methods effectively identify active predictors, particularly those that are cointegrated, within a predictive regression model that experiences structural breaks. Additionally, our approach consistently estimates the number and locations of change points, even in the presence of numerous redundant regressors. Simulation results and an empirical study 
indicate that our method delivers exceptional performance.

\bibliographystyle{apalike}
\bibliography{Bibliography-MM-MC}

\newpage
\appendix
\begin{center}
{\Large\bf Supplementary Material for ``Feature Screening for High-Dimensional Structural Break Predictive Regression''}
\end{center}

\section{Proofs of main results}

\subsection{The Proof of Theorem 3.1}

\begin{lemma}
    Suppose Assumption 1 (1) holds. Let $\boldsymbol{W}_t = (W^1_t,\cdots,W^{r_F}_t)', W^i_t = \sigma_{ii} \bigg(B_t^i-\int_0^1 B_s^i {\rm d}s\bigg),$ where $\sigma_{ii}$'s are given in Assumption 1 (1), and $\{B_t^i\}_{i=1}^{r_F}$ are independent Brownian motions. Suppose $r_F = o((l-s)^{{1\over 2}-\tau}),$ then we have
    \begin{equation}
        \bigg\| {1\over l-s} \boldsymbol{\widehat{\Omega}}_F^{(s:l)}
        -\int_0^1 \boldsymbol{W}_t \boldsymbol{W}_t' {\rm d}t\bigg\| = O_p(r_F (l-s)^{\tau-{1\over 2}}).\label{S13}
    \end{equation}\label{lemma_strapp}
\end{lemma}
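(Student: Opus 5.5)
Throughout, write $n=l-s$, and let $\boldsymbol{S}_t=\sum_{k=s}^{t}\boldsymbol{g}_k$ for $s\le t<l$ be the partial sums inside the window. Since $\boldsymbol{F}_t-\bar{\boldsymbol{F}}^{(s:l)}=\boldsymbol{\Xi}(\boldsymbol{S}_t-\bar{\boldsymbol{S}}^{(s:l)})$ and $\|\boldsymbol{\Xi}\|<\infty$, it is enough to prove \eqref{S13} for the demeaned partial sums of $\boldsymbol{g}_t$. Here $\boldsymbol{\Xi}$ is read as absorbed into the $\sigma_{ii}$ normalisation, as the statement implicitly does. Multiplying by $\boldsymbol{\Xi}$ on both sides changes the operator norm of the error by at most the factor $\|\boldsymbol{\Xi}\|^2$.

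\textbf{Step 1: a Gaussian counterpart.} Let $\boldsymbol{X}$ be the $n\times r_F$ matrix with rows $(\boldsymbol{S}_t-\bar{\boldsymbol{S}}^{(s:l)})'$. Let $\boldsymbol{Y}$ be the analogous matrix built from $\sigma_{ii}\sum_{k=s}^t\xi_k^i$, where the $\xi^i$ are the Gaussian sequences in Assumption 1 (1). Put $\boldsymbol{D}=\boldsymbol{X}-\boldsymbol{Y}$. Then
\[
n^{-2}\boldsymbol{X}'\boldsymbol{X}-n^{-2}\boldsymbol{Y}'\boldsymbol{Y}=n^{-2}\left(\boldsymbol{D}'\boldsymbol{Y}+\boldsymbol{Y}'\boldsymbol{D}+\boldsymbol{D}'\boldsymbol{D}\right).
\]
We bound each term in operator norm through Frobenius norms, $\|\boldsymbol{D}'\boldsymbol{Y}\|\le\|\boldsymbol{D}\|_F\|\boldsymbol{Y}\|_F$. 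Working with whole matrices avoids the loss of a factor $r_F$ that entrywise bounds would incur.

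\textbf{Step 2: the key estimate.} The main obstacle is to show ${\rm E}\|\boldsymbol{D}\|_F^2=O(r_F\,n^{1+2\tau})$. Condition \eqref{strapp_cond} controls the demeaned approximation error only for a full window $(s':l')$, whereas we need the error of every partial sum $\sum_{k=s}^{t}$ for $s\le t<l$. I would first try applying \eqref{strapp_cond} to the sub-windows $(s:t+1)$ and $(t+1:l)$. A raw partial sum can be written as a combination of the window-demeaned sums over these two pieces together with the window means, and the means' errors are controlled by the same condition applied to $(s:l)$. Sub-windows that are too short to satisfy $r_F(l'-s')^{\tau-1/2}\to0$ contribute only $O(r_F n^{1/2-\tau})$ points. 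On those points I would use the crude second-moment bound $O(n)$ implied by Assumption 1 (1), and check that their total contribution is still $O(n^{1+2\tau})$ per coordinate. Granting this, each of the $n$ rows and $r_F$ coordinates contributes $O(n^{2\tau})$, uniformly in $i$. That gives ${\rm E}\|\boldsymbol{D}\|_F^2=O(r_Fn^{1+2\tau})$.

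\textbf{Step 3: assembling the bound.} Directly, ${\rm E}\|\boldsymbol{Y}\|_F^2\le b_2 r_F n^2$. Hence
\[
n^{-2}\|\boldsymbol{D}'\boldsymbol{Y}\|=O_p\!\left(n^{-2}(r_Fn^{1+2\tau})^{1/2}(r_Fn^2)^{1/2}\right)=O_p(r_Fn^{\tau-1/2}),
\]
and $n^{-2}\|\boldsymbol{D}'\boldsymbol{D}\|=O_p(r_Fn^{2\tau-1})$, which is of smaller order. Finally, embed $\sigma_{ii}\sum_k\xi_k^i$ as $\sigma_{ii}\sqrt n\,B^i_{(t-s)/n}$ for independent Brownian motions $B^i$. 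Then $n^{-2}\boldsymbol{Y}'\boldsymbol{Y}$ is a Riemann sum of $\int_0^1\boldsymbol{W}_t\boldsymbol{W}_t'{\rm d}t$. Its discretisation error is $O_p(n^{-1/2})$ per coordinate by Brownian modulus-of-continuity bounds, and the same Frobenius argument gives $O_p(r_Fn^{-1/2})$, which is $o(r_Fn^{\tau-1/2})$. The hypothesis $r_F=o(n^{1/2-\tau})$ guarantees that the right-hand side of \eqref{S13} tends to zero and that the short-window remainder in Step 2 is negligible. Combining the three terms yields \eqref{S13}.
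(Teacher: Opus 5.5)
Your architecture is the paper's. Your $\boldsymbol{Y}$ is built from its Gaussian integrated process $\boldsymbol{\eta}_t$. Since $\boldsymbol{X}=\boldsymbol{Y}+\boldsymbol{D}$, your $\boldsymbol{D}'\boldsymbol{Y}+\boldsymbol{Y}'\boldsymbol{D}+\boldsymbol{D}'\boldsymbol{D}$, scaled by $n^{-2}$, is exactly its matrix $(r^1_{ij}+r^2_{ij})$. Your Step 2 estimate is its \eqref{S15}, and your Step 3 is its $J_1$--$J_3$ comparison with $\int_0^1\boldsymbol{W}_t\boldsymbol{W}_t'{\rm d}t$.

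Your bookkeeping is slightly better in two places. Bounding by $\|\boldsymbol{D}\|_F\|\boldsymbol{Y}\|_F$ via two separate Markov bounds gives the same rate without the paper's appeal to independence across components, which in any case does not cover the diagonal entries. An $L^2$ Riemann-sum bound summed over $r_F^2$ entries also handles growing $r_F$ more cleanly than the paper's entrywise almost-sure rates.

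One correction: $\boldsymbol{\Xi}$ cannot be absorbed into the $\sigma_{ii}$ unless it is diagonal. In general the centring is $\boldsymbol{\Xi}\bigl(\int_0^1\boldsymbol{W}_t\boldsymbol{W}_t'{\rm d}t\bigr)\boldsymbol{\Xi}'$. The paper's proof tacitly takes $\boldsymbol{\Xi}=\boldsymbol{I}_{r_F}$, and your $\|\boldsymbol{\Xi}\|^2$ remark handles the general version correctly.

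The gap is Step 2, the only load-bearing estimate. In \eqref{strapp_cond} the bracket is summed over the same window that defines the centring. Hence $\sum_{t=s}^{l-1}(g_t^i-\bar g_i^{(s:l)})=0=\sum_{t=s}^{l-1}(\xi_t^i-\bar\xi_i^{(s:l)})$, and the left-hand side is identically zero for every window. Applying it to $(s:t+1)$, $(t+1:l)$ or $(s:l)$ yields nothing. In particular it says nothing about the mean errors $\bar g_i^{(s:l)}-\sigma_{ii}\bar\xi_i^{(s:l)}$. So writing a raw partial sum as \emph{window-demeaned pieces plus means} is circular, and ${\rm E}\|\boldsymbol{D}\|_F^2=O(r_Fn^{1+2\tau})$ is left unproved.

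Your short-window fallback does not close either. Windows violating $r_F(l'-s')^{\tau-1/2}\to0$ have length up to order $r_F^{2/(1-2\tau)}$, not $r_Fn^{1/2-\tau}$. A crude $O(n)$ bound summed over that many points is $O(n^{1+2\tau})$ only if $r_F=O(n^{\tau(1-2\tau)})$, which $r_F=o(n^{1/2-\tau})$ does not imply.

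The paper does no better here; it asserts \eqref{S15} \emph{by induction}. So the repair is at the level of the hypothesis. Read Assumption 1 (1) as a strong approximation for raw partial sums, uniform in $t$ within the window; this is the kind of condition the paper borrows from \cite{zhangIdentifyingCointegrationEigenanalysis2019}. For example: $\max_i\max_{s\le t<l}{\rm E}\bigl[\sum_{k=s}^{t}(g_k^i-\sigma_{ii}\xi_k^i)\bigr]^2=O((l-s)^{2\tau})$. Let $e_t^i$ denote this partial-sum error. Then $D_{ti}=e_t^i-n^{-1}\sum_{t'=s}^{l-1}e_{t'}^i$, and Minkowski gives ${\rm E}D_{ti}^2\le 4\max_{t}{\rm E}(e_t^i)^2=O(n^{2\tau})$ at once. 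No sub-window splitting or short-window remainder is needed, and the rest of your argument then goes through.
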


\begin{proof}
    Let $\boldsymbol{\eta}_t = (\eta_t^1, \cdots, \eta_t^{r_F})'$ be an integrated process satisfying $\boldsymbol{\eta}_0=\boldsymbol{0}, \eta_t^i - \eta_{t-1}^i=\sigma_{ii} \xi_t^i,$ where $\xi_t^i$'s are given in Assumption 1 (1). We have
    \begin{equation}
        \begin{aligned}
            &\quad~{1\over (l-s)^2}\sum_{t=s}^{l-1} [(F_t^i-\bar{F}_i^{(s:l)})(F_t^j-\bar{F}_j^{(s:l)}) - (\eta_t^i-\bar{\eta}_i^{(s:l)})(\eta_t^j-\bar{\eta}_j^{(s:l)})]\\
        &= {1\over (l-s)^2}\sum_{t=s}^{l-1} [(F_t^i-\eta_t^i)-(\bar{F}_i^{(s:l)}-\bar{\eta}_i^{(s:l)})](F_t^j-\bar{F}_j^{(s:l)}) \\ &\quad~+{1\over (l-s)^2}\sum_{t=s}^{l-1}(\eta_t^i-\bar{\eta}_i^{(s:l)}) [(F_t^j-\eta_t^j)-(\bar{F}_j^{(s:l)}-\bar{\eta}_j^{(s:l)})]\\
        &:= r_{ij}^1 + r_{ij}^2.
        \end{aligned}
    \end{equation}
    By induction, we can show that under Assumption 1 (1), 
    \begin{equation}
        \sup\limits_{1\leq i \leq r_F}\sup\limits_{1\leq s<l \leq T} {\rm E}\{[(F_t^i-\eta_t^i)-(\bar{F}_i^{(s:l)}-\bar{\eta}_i^{(s:l)})]/(l-s)^{1/2}\}^2 = O((l-s)^{2\tau-1}),\label{S15}
    \end{equation}
    \begin{equation}
        \sup\limits_{1\leq i \leq r_F}\sup\limits_{1\leq s<l \leq T} {\rm E}\bigg[{F_t^i-\bar{F}_i^{(s:l)}\over (l-s)^{1/2}}\bigg]^2 = O(1), ~and~ \sup\limits_{1\leq i \leq r_F}\sup\limits_{1\leq s<l \leq T} {\rm E}\bigg[{\eta_t^i-\bar{\eta}_i^{(s:l)}\over(l-s)^{1/2}}\bigg]^2 = O(1).\label{S16}
    \end{equation}
    Thus, by equations \eqref{S15}, \eqref{S16} and the independence of the components, we have
    \begin{equation}
        \sum_{i,j=1}^{r_F} [{\rm E}(r_{ij}^1)^2 + {\rm E}(r_{ij}^2)^2] = O(r_F^2 (l-s)^{2\tau-1}),
    \end{equation}
    which implies that
    \begin{equation}
        \bigg\|{1\over (l-s)^2} \sum_{t=s}^{l-1} (\boldsymbol{F}_t - \bar{\boldsymbol{F}}^{(s:l)})(\boldsymbol{F}_t - \bar{\boldsymbol{F}}^{(s:l)})'-{1\over (l-s)^2} \sum_{t=s}^{l-1}(\boldsymbol{\eta}_t - \bar{\boldsymbol{\eta}}^{(s:l)})(\boldsymbol{\eta}_t - \bar{\boldsymbol{\eta}}^{(s:l)})'\bigg\| = O_p(r_F (l-s)^{\tau-{1\over 2}}).
    \end{equation}
    Thus, it suffices to show that
    \begin{equation}
        \sup\limits_{1\leq i,j\leq r_F} \bigg\|{1\over (l-s)^2} \sum_{t=s}^{l-1} (\eta_t^i-\bar{\eta}_i^{(s:l)}) (\eta_t^j-\bar{\eta}_j^{(s:l)}) - \int_0^1 W^i_t W^j_t {\rm d}t\bigg\| = o_{a.s.}((l-s)^{\tau-{1\over 2}}).\label{S17}
    \end{equation}
    Note that
    \begin{equation}
        \begin{aligned}
            &\quad~{1\over l-s}\sum_{t=s}^{l-1} \bigg({\eta_t^i-\bar{\eta}_i^{(s:l)}\over (l-s)^{1\over 2}}\bigg)\bigg({\eta_t^j-\bar{\eta}_j^{(s:l)}\over (l-s)^{1\over 2}}\bigg) - \sum_{t=s}^{l-1} \int_{(t-s)/(l-s)}^{{(t-s+1)/(l-s)}} W^i(a)W^j(a){\rm d}a\\
            &= {1\over l-s}\sum_{t=s}^{l-1} \bigg({\eta_t^i-\bar{\eta}_i^{(s:l)}\over (l-s)^{1\over 2}} -W^i\bigg({t-s+1 \over l-s}\bigg) \bigg)\bigg({\eta_t^j-\bar{\eta}_j^{(s:l)}\over (l-s)^{1\over 2}}\bigg) + {1\over l-s}\sum_{t=s}^{l-1}W^i\bigg({t-s+1 \over l-s}\bigg) \bigg({\eta_t^j-\bar{\eta}_j^{(s:l)}\over (l-s)^{1\over 2}}\\ &\quad~-W^j\bigg({t-s+1 \over l-s}\bigg)\bigg)
            - \sum_{t=s}^{l-1} \int_{(t-s)/(l-s)}^{{(t-s+1)/(l-s)}} \bigg\{\bigg[W^i(a) - W^i\bigg({t-s+1 \over l-s}\bigg)\bigg]W^j(a) \\ &\quad~+ W^i\bigg({t-s+1 \over l-s}\bigg) \bigg[W^j(a)- W^j\bigg({t-s+1 \over l-s}\bigg)\bigg]\bigg\}
            {\rm d}a
            \\
            &:= J_{1}(i,j) + J_2(i,j) + J_3(i,j).
        \end{aligned}
    \end{equation}
    For $t=s,\cdots,l-1,$ let $t'=t-s+1, \bar{\eta}_i'= {1\over l-s}\sum_{t'=1}^{l-s} \eta_{t'},$ we have
    \begin{equation}
        {\eta_t^i-\bar{\eta}_i^{(s:l)}\over (l-s)^{1\over 2}}-W^i\bigg({t-s+1 \over l-s}\bigg) \stackrel{d}{=} {\eta_{t'}^i-\bar{\eta}_i'\over (l-s)^{1\over 2}}-W^i\bigg({t'\over l-s}\bigg),\label{S21}
    \end{equation}
By similar argument to the proof of Lemma 3 in \cite{zhangIdentifyingCointegrationEigenanalysis2019}, we have
   \beqn
         &&\bigg|{\eta_{t'}^i-\bar{\eta}_i'\over (l-s)^{1\over 2}}-W^i\bigg({t'\over l-s}\bigg)\bigg|\nn\\
          &=& \sigma_{ii}\bigg|\bigg[{1\over (l-s)^{1\over 2}}\sum_{s'=1}^{t'}\xi_{s'}^i - B^i\bigg({t'\over l-s}\bigg) \bigg] + \bigg[{1\over (l-s)^{3\over 2}} \sum_{t'=1}^{l-s} \sum_{s'=1}^{t'} \xi_{s'}^i - \int_0^1 B_t^i {\rm d}t\bigg]
         \bigg|\nn\\
         &=& O_{a.s.}((l-s)^{-{1\over 2}} \log^2(l-s)).
   \label{ilaw}
    \eeqn
    Combining \eqref{S21}, \eqref{ilaw} yields \begin{equation}
        \bigg| {\eta_t^i-\bar{\eta}_i^{(s:l)}\over (l-s)^{1\over 2}}-W^i\bigg({t-s+1 \over l-s}\bigg)\bigg| = O_{a.s.}((l-s)^{-{1\over 2}} \log^2(l-s)).
    \end{equation}
    Then we have
    \begin{equation}
        \sup\limits_{1\leq i,j\leq r_F}|J_1(i,j)| = O_{a.s.}((l-s)^{-{1\over 2}} \log^2(l-s)).\label{S25}
    \end{equation}
    Similarly, we have
    \begin{equation}
        \sup\limits_{1\leq i,j\leq r_F}|J_2(i,j)| = O_{a.s.}((l-s)^{-{1\over 2}} \log^2(l-s)).\label{S26}
    \end{equation}
    By continuity of Brownian motion, we can get that
    \begin{equation}
        \sup\limits_{1\leq i,j\leq r_F} |J_3(i,j)| = O_{a.s.}((l-s)^{-{1\over 2}}\log^2 (l-s)).\label{S20}
    \end{equation}
    Combining \eqref{S25}, \eqref{S26} and \eqref{S20}, we have \eqref{S17} and conclude \eqref{S13}.
\end{proof}
In the following Lemmas \ref{lemma_cov} and \ref{lemma_ux}, we denote
\begin{equation}
    \boldsymbol{P} = {\rm diag}\{\boldsymbol{I}_{p-r_F}, (l-s)^{-{1\over 2}} \boldsymbol{I}_{r_F}\}.\label{Pls}
\end{equation}
\begin{lemma}
    Suppose Assumption 1 holds and $r_F(l-s)^{2\tau-1}\rightarrow 0, p(l-s)^{-{1\over 2}}\rightarrow 0$ as $T\rightarrow \infty,$ where $p$ is the dimension of $\boldsymbol{x}_t$ in Model \eqref{model}, 
   then
\begin{equation}\begin{aligned}
&~\quad\bigg\|\boldsymbol{P} \boldsymbol{A}
        \widehat{\boldsymbol{\Omega}}_x^{(s:l)} \boldsymbol{A}'
        \boldsymbol{P}-{\rm diag}\bigg\{{\rm Cov}(\boldsymbol{z}_t),{\rm Cov}(\boldsymbol{\psi}_{1t}), \int_0^1 \boldsymbol{W}_t \boldsymbol{W}_t'{\rm d}t\bigg\}\bigg\|\\ &=O_p(r_F(l-s)^{\tau-{1\over 2}}+p(l-s)^{-{1\over 2}}) = o_p(1),\end{aligned}
    \end{equation}
    where $\boldsymbol{A}, \boldsymbol{\psi}_{1t}$ are given in Section 3.1.
    \label{lemma_cov}
\end{lemma}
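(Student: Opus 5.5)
We work with the rotated predictors $\boldsymbol{\psi}_t=\boldsymbol{A}\boldsymbol{x}_t=(\boldsymbol{z}_t',\boldsymbol{\psi}_{1t}',\boldsymbol{\psi}_{2t}')'$ from \eqref{cointA}, so that
\[
\boldsymbol{P}\boldsymbol{A}\widehat{\boldsymbol{\Omega}}_x^{(s:l)}\boldsymbol{A}'\boldsymbol{P}=\boldsymbol{P}\widehat{\boldsymbol{\Omega}}_\psi^{(s:l)}\boldsymbol{P}.
\]
We partition this matrix into $3\times 3$ blocks along $(\boldsymbol{z}_t,\boldsymbol{\psi}_{1t},\boldsymbol{\psi}_{2t})$ and bound each block's deviation from its target. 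The spectral norm of the whole matrix is at most the square root of the sum of the squared block spectral norms. It is therefore enough to give each block a bound of order $O_p(r_F(l-s)^{\tau-1/2}+p(l-s)^{-1/2})$.

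\emph{Stationary blocks.} These are the $(1,1)$, $(1,2)$ and $(2,2)$ blocks, which involve $\boldsymbol{z}_t$ and $\boldsymbol{\psi}_{1t}=(\boldsymbol{Q}^\perp)'\boldsymbol{e}_t$; both are linear in the $\alpha$-mixing vector $\boldsymbol{\zeta}_t$.
\begin{itemize}
\item Write $\widehat{\boldsymbol{\Omega}}_{\zeta}^{(s:l)}-{\rm Cov}(\boldsymbol{\zeta}_t)$ entrywise. By Assumption 1(2), using the $4+2\kappa$ moments, the mixing summability and a Davydov-type covariance inequality, each entry has second moment $O((l-s)^{-1})$, uniformly over entries.
\item Summing over the $O(p^2)$ entries gives a Frobenius-norm bound, hence a spectral-norm bound, of $O_p(p(l-s)^{-1/2})$.
\item The sample mean correction $\bar{\boldsymbol{\zeta}}\bar{\boldsymbol{\zeta}}'$ is of smaller order, $O_p(p/(l-s))$.
\item The cross block $\widehat{\boldsymbol{\Omega}}_{z,\psi_1}$ has target zero by Assumption 1(3). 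Multiplying by $(\boldsymbol{Q}^\perp)'$ and $\boldsymbol{Q}^\perp$ does not increase norms.
\end{itemize}

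\emph{Nonstationary block.} For the $(3,3)$ block, $\boldsymbol{\psi}_{2t}=\boldsymbol{F}_t+\boldsymbol{Q}'\boldsymbol{e}_t$ and the scaling is $(l-s)^{-1}$. Expanding,
\[
(l-s)^{-1}\widehat{\boldsymbol{\Omega}}_{\psi_2}^{(s:l)}=(l-s)^{-1}\widehat{\boldsymbol{\Omega}}_{F}^{(s:l)}+(l-s)^{-1}\big(\widehat{\boldsymbol{\Omega}}_{F,Q'e}+\widehat{\boldsymbol{\Omega}}_{Q'e,F}\big)+(l-s)^{-1}\widehat{\boldsymbol{\Omega}}_{Q'e}.
\]
\begin{itemize}
\item The first term is handled directly by Lemma \ref{lemma_strapp}, which gives error $O_p(r_F(l-s)^{\tau-1/2})$; this is where the hypothesis $r_F(l-s)^{2\tau-1}\to0$ enters.
\item The last term is $O_p(r_F/(l-s))$ in norm.
\item For the cross terms, $\boldsymbol{F}_t$ and $\boldsymbol{e}_t$ are independent. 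Conditioning on $\boldsymbol{F}$, each entry of $(l-s)^{-1}\sum(\boldsymbol{F}_t-\bar{\boldsymbol{F}})(\boldsymbol{e}_t-\bar{\boldsymbol{e}})'\boldsymbol{Q}$ has second moment of order $(l-s)^{-2}\sum_t {\rm E}\|\boldsymbol{F}_t-\bar{\boldsymbol{F}}\|^2/r_F\cdot O(1)=O(1)$ before the extra $(l-s)^{-1}$ scaling. This gives $O_p(r_F(l-s)^{-1/2})$ overall.
\end{itemize}

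\emph{Mixed blocks.} The $(1,3)$ and $(2,3)$ blocks, with target zero, are scaled by $(l-s)^{-1/2}$, e.g. $(l-s)^{-1/2}\widehat{\boldsymbol{\Omega}}_{z,\psi_2}$.
\begin{itemize}
\item By independence of $\boldsymbol{z}$ and $\boldsymbol{F}$ (Assumption 1(3)) and the mixing of $\boldsymbol{z}_t$, conditioning on $\boldsymbol{F}$ gives
\[
{\rm E}\Big[(l-s)^{-3/2}\sum_t(z_{jt}-\bar z_j)(F^i_t-\bar F_i)\Big]^2=O\Big((l-s)^{-3}\sum_t{\rm E}(F_t^i-\bar F_i)^2\Big)=O((l-s)^{-1}).
\]
\item Summing over $p\,r_F$ entries gives a Frobenius bound of $O_p(\sqrt{p r_F}(l-s)^{-1/2})=O_p(p(l-s)^{-1/2})$.
\item The $\boldsymbol{e}$-part of $\boldsymbol{\psi}_{2t}$ contributes stationary cross covariances $\widehat{\boldsymbol{\Omega}}_{\psi_1,Q'e}$. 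Their population value is $(\boldsymbol{Q}^\perp)'\sigma_e^2\boldsymbol{Q}=\boldsymbol{O}$, so after the $(l-s)^{-1/2}$ scaling they are $O_p(p(l-s)^{-1})$.
\end{itemize}

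Collecting the blocks yields the stated rate. It is $o_p(1)$ under $r_F(l-s)^{2\tau-1}\to0$ and $p(l-s)^{-1/2}\to0$.

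\emph{Main obstacle.} The delicate step is the mixed stationary–nonstationary blocks. There the serial dependence of $\boldsymbol{z}_t$ must be handled jointly with the random, growing weights $F_t^i-\bar F_i$. I would condition on $\boldsymbol{F}$, apply the mixing covariance inequality to $\sum_{t,t'}a_ta_{t'}{\rm Cov}(z_{jt},z_{jt'})\le C\sum_t a_t^2$, and then use \eqref{S16} for ${\rm E}a_t^2$. Uniformity over the $p\,r_F$ entries comes only through summation in Frobenius norm, which is why the bound $p(l-s)^{-1/2}$ is not sharper.
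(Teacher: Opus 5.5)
Your block-by-block argument is essentially the paper's proof. It uses the same $3\times3$ decomposition of $\boldsymbol{P}\boldsymbol{A}\widehat{\boldsymbol{\Omega}}_x^{(s:l)}\boldsymbol{A}'\boldsymbol{P}$, applies Lemma~\ref{lemma_strapp} to the $\boldsymbol{F}$-part of the $(3,3)$ block, and bounds everything else by second moments and Frobenius norms, using mixing and the independence in Assumption~1(3). Projecting $\boldsymbol{e}_t$ onto $\boldsymbol{Q}$ before the $\boldsymbol{F}$--$\boldsymbol{e}$ cross term, and conditioning on $\boldsymbol{F}$, are only cosmetic differences.

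One correction, which the paper's own statement and proof share, concerns the hypothesis. Lemma~\ref{lemma_strapp} needs $r_F=o((l-s)^{1/2-\tau})$, i.e.\ $r_F^2(l-s)^{2\tau-1}\to0$, and this is also exactly what makes $r_F(l-s)^{\tau-1/2}\to0$. The stated hypothesis $r_F(l-s)^{2\tau-1}\to0$ is strictly weaker: take $\tau=0.2$ and $r_F=(l-s)^{0.4}$. So your remark that this hypothesis is ``where Lemma~\ref{lemma_strapp} enters'', and your closing $o_p(1)$ claim, hold only under the strengthened condition.
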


\begin{proof}
    By \eqref{cointA}, \eqref{Pls}, we have
    \begin{equation}
        \boldsymbol{P} \boldsymbol{A}
        \widehat{\boldsymbol{\Omega}}_x^{(s:l)} \boldsymbol{A}'
        \boldsymbol{P} = \begin{bmatrix}
            \widehat{\boldsymbol{\Omega}}_z^{(s:l)} & \widehat{\boldsymbol{\Omega}}_{z,\psi_1}^{(s:l)} & (l-s)^{-{1\over 2}}\widehat{\boldsymbol{\Omega}}_{z,\psi_2}^{(s:l)}\\

            [\widehat{\boldsymbol{\Omega}}_{z,\psi_1}^{(s:l)}]' & \widehat{\boldsymbol{\Omega}}_{\psi_1}^{(s:l)} &(l-s)^{-{1\over 2}}\widehat{\boldsymbol{\Omega}}_{\psi_1,\psi_2}^{(s:l)} \\

           [(l-s)^{-{1\over 2}}\widehat{\boldsymbol{\Omega}}_{z,\psi_2}^{(s:l)}]'  & [(l-s)^{-{1\over 2}}\widehat{\boldsymbol{\Omega}}_{\psi_1,\psi_2}^{(s:l)}]' & (l-s)^{-1}\widehat{\boldsymbol{\Omega}}_{\psi_2}^{(s:l)}
        \end{bmatrix},
    \end{equation}
    We first focus on $(l-s)^{-1}\widehat{\boldsymbol{\Omega}}_{\psi_2}^{(s:l)}.$ Note that
    \begin{equation}
        \begin{aligned}
            \widehat{\boldsymbol{\Omega}}_{\psi_2}^{(s:l)} &= \dfrac{1}{l-s}\sum_{t=s}^{l-1} [(\boldsymbol{F}_t-\bar{\boldsymbol{F}}^{(s:l)}) + \boldsymbol{Q}' (\boldsymbol{e}_t-\bar{\boldsymbol{e}}^{(s:l)})] [(\boldsymbol{F}_t-\bar{\boldsymbol{F}}^{(s:l)}) + \boldsymbol{Q}' (\boldsymbol{e}_t-\bar{\boldsymbol{e}}^{(s:l)})]'\\
            &= \widehat{\boldsymbol{\Omega}}_F^{(s:l)} +\widehat{\boldsymbol{\Omega}}_{Fe}^{(s:l)} \boldsymbol{Q}
            +  \boldsymbol{Q}'[\widehat{\boldsymbol{\Omega}}_{Fe}^{(s:l)}]'  + \boldsymbol{Q}' \widehat{\boldsymbol{\Omega}}_{e}^{(s:l)} \boldsymbol{Q}.
        \end{aligned}\label{wtB}
    \end{equation}
    By Lemma \ref{lemma_strapp}, we have
    \begin{equation}
        \begin{aligned}
        \bigg\|{\widehat{\boldsymbol{\Omega}}_{F}^{(s:l)} \over l-s}- \int_0^1 \boldsymbol{W}(s) \boldsymbol{W}(s)
        '{\rm d}s \bigg\|
        = O_p(r_F (l-s)^{\tau-1/2})
        \stackrel{P}{\longrightarrow} 0.
        \label{FF}
        \end{aligned}
    \end{equation}
    By Assumption 1 (2), let $\boldsymbol{\Omega}_e = {\rm Cov}(\boldsymbol{e}_t),$ we have
    \begin{equation}
        \begin{aligned}
            \|\widehat{\boldsymbol{\Omega}}_e^{(s:l)} - \boldsymbol{\Omega}_e\| &= \bigg\|{1\over l-s}\sum_{t=s}^{l-1} \boldsymbol{e}_t \boldsymbol{e}_t' - (\bar{\boldsymbol{e}}^{(s:l)})(\bar{\boldsymbol{e}}^{(s:l)})' - {\rm E} \boldsymbol{e}_t \boldsymbol{e}_t'\bigg\|.
        \end{aligned}\label{eet}
    \end{equation}
    Note that
    \begin{equation}
        \begin{aligned}
           {\rm E} \bigg\|{1\over l-s} \sum_{t=s}^{l-1} \boldsymbol{e}_t \boldsymbol{e}_t' - {\rm E}\boldsymbol{e}_t \boldsymbol{e}_t'\bigg\|_F^2 &= {1\over (l-s)^2} \sum_{i=1}^{p_w}\sum_{j=1}^{p_w} {\rm E}\bigg(\sum_{t=s}^{l-1} (e_t^i e_t^j - {\rm E}e_t^i e_t^j)\bigg)^2 \\
           &= O((l-s)^{-1} p_w^2),
        \end{aligned}\label{eet2}
    \end{equation}
    which yields
    \begin{equation}
        \bigg\|{1\over l-s}\sum_{t=s}^{l-1} \boldsymbol{e}_t \boldsymbol{e}_t' - {\rm E}\boldsymbol{e}_t \boldsymbol{e}_t'\bigg\| = O_p((l-s)^{-{1\over 2}} p_w).\label{eet3}
    \end{equation}
    Meanwhile, we have
    \begin{equation}
            {\rm E} \|(\bar{\boldsymbol{e}}^{(s:l)})(\bar{\boldsymbol{e}}^{(s:l)})'\| = {1\over (l-s)^2} {\rm E} \bigg\|\sum_{t=s}^{l-1}\sum_{t=s}^{l-1} \boldsymbol{e}_t\boldsymbol{e}_t'\bigg\|
            = O((l-s)^{-1} p_w^2).\label{bar_ebar_e}
    \end{equation}
    Combining \eqref{eet}, \eqref{eet3} and \eqref{bar_ebar_e} yields
    \begin{equation}
        \|\widehat{\boldsymbol{\Omega}}_e^{(s:l)} - \boldsymbol{\Omega}_e\| = O_p(p_w (l-s)^{-{1 \over 2}}). \label{ee4}
    \end{equation}
    As $\|\boldsymbol{Q}\|=1,$ we have
    \begin{equation}
        \bigg\|{1\over l-s} \boldsymbol{Q}' \widehat{\boldsymbol{\Omega}}_{e}^{(s:l)} \boldsymbol{Q}\bigg\| \leq {1\over l-s} [\|\widehat{\boldsymbol{\Omega}}_e^{(s:l)} - \boldsymbol{\Omega}_e\| + \|\boldsymbol{\Omega}_e\|] = O_p(p_w(l-s)^{-{3\over 2}} + (l-s)^{-1})\label{ee}
    \end{equation}
    holds by the Assumption 1 (4) that $\|\boldsymbol{\Omega}_e\| < \infty.$

    On the other hand, by the independence condition in Assumption 1 (3), we have
    \begin{equation}
        \begin{aligned}
            &\quad~{\rm E}\bigg[\sum_{t=s}^{l-1} (F^{i}_{t} - \bar{F}_i^{(s:l)}) (e^{i}_{t} - \bar{e}_i^{(s:l)})\bigg]^2 \\
            &= \sum_{t=s}^{l-1} {\rm E}(F^{i}_{t} - \bar{F}_i^{(s:l)})^2{\rm E}(e^{i}_{t} - \bar{e}_i^{(s:l)})^2\\
            &+ \sum_{t,t'=s, t\neq t'}^{l-1} {\rm E}(F_t^i - \bar{F}_i^{(s:l)})(F_{t'}^i - \bar{F}_i^{(s:l)}) {\rm E}(e_t^j - \bar{e}_j^{(s:l)})(e_{t'}^j - \bar{e}_j^{(s:l)}).
            \label{crossfe}
        \end{aligned}
    \end{equation}
   From   Assumption 1 (1) and (2), it follows that
    \begin{equation}
        \begin{aligned}
            {\rm E} (F_t^i - \bar{F}_i^{(s:l)})^2 = O(l-s),~{\rm E} (e_t^j - \bar{e}_j^{(s:l)})^2 = O(1),
        \end{aligned}\label{fe2}
    \end{equation}
    which yields
    \begin{equation}
        \sum_{t=s}^{l-1} {\rm E}(F^{i}_{t} - \bar{F}_i^{(s:l)})^2{\rm E}(e^{i}_{t} - \bar{e}_i^{(s:l)})^2 = O((l-s)^2).
    \end{equation}
    For the cross term in \eqref{crossfe}, we have
    \begin{equation}
        \begin{aligned}
            &~\quad\bigg|\sum_{t,t'=s, t\neq t'}^{l-1} {\rm E}(F_t^i - \bar{F}_i^{(s:l)})(F_{t'}^i - \bar{F}_i^{(s:l)}) {\rm E}(e_t^j - \bar{e}_j^{(s:l)})(e_{t'}^j - \bar{e}_j^{(s:l)})\bigg|\\
            &= (l-s) \bigg|\sum_{t,t'=s, t\neq t'}^{l-1}{\rm E}{F_t^i - \bar{F}_i^{(s:l)}\over (l-s)^{1\over 2}}{F_{t'}^i - \bar{F}_i^{(s:l)}\over (l-s)^{1\over 2}}{\rm E}(e_t^j - \bar{e}_j^{(s:l)})(e_{t'}^j - \bar{e}_j^{(s:l)})\bigg|\\
            &= O\bigg((l-s) \sum_{t,t'=s, t\neq t'}^{l-1}\bigg|{\rm E}(e_t^j - \bar{e}_j^{(s:l)})(e_{t'}^j - \bar{e}_j^{(s:l)})\bigg|\bigg).
        \end{aligned}\label{crossA32}
    \end{equation}
    By mixing property of $\boldsymbol{e}_t$ given in Assumption 1 (2), we have for some $C>0,$
    \begin{equation}
        \begin{aligned}
            \sum_{t,t'=s, t\neq t'}^{l-1}\bigg|{\rm E}(e_t^j - \bar{e}_j^{(s:l)})(e_{t'}^j - \bar{e}_j^{(s:l)})\bigg|
            &\leq C\sum_{t,t'=s, t\neq t'}^{l-1} [\alpha(|t-t'|)]^{1-{1\over 2+\kappa}}\\
            &\leq C (l-s)\sum_{k=1}^{l-s}[\alpha(k)]^{1-{1\over 2+\kappa}} = O(l-s).
        \end{aligned}\label{crossbound}
    \end{equation}
    Combining \eqref{crossA32} and \eqref{crossbound} yields
    \begin{equation}
        \bigg|\sum_{t,t'=s, t\neq t'}^{l-1} {\rm E}(F_t^i - \bar{F}_i^{(s:l)})(F_{t'}^i - \bar{F}_i^{(s:l)}) {\rm E}(e_t^j - \bar{e}_j^{(s:l)})(e_{t'}^j - \bar{e}_j^{(s:l)})\bigg| = O((l-s)^2).
    \end{equation}
    Thus,
    \begin{equation}
        \begin{aligned}
            {\rm E}\bigg\|\widehat{\boldsymbol{\Omega}}_{Fe}^{(s:l)}\bigg\|_F^2&= {1\over (l-s)^2}\sum_{i=1}^{r_F} \sum_{j=1}^{p_w} {\rm E}\bigg[\sum_{t=s}^{l-1} (F^{i}_{t} - \bar{F}_i^{(s:l)}) (e^{j}_{t} - \bar{e}_j^{(s:l)})\bigg]^2\\
            &= O(p_w r_F).
        \end{aligned}\label{fe_fnorm}
    \end{equation}
    By Assumption 1 (3), we have ${\rm E}(\boldsymbol{F}_t - \bar{\boldsymbol{F}}^{(s:l)}) (\boldsymbol{e}_t - \bar{\boldsymbol{e}}^{(s:l)})' = \boldsymbol{O}.$ This implies that $${\rm E}[\widehat{\boldsymbol{\Omega}}_{Fe}^{(s:l)}] = {1\over l-s}\sum_{t=s}^{l-1} {\rm E}(\boldsymbol{F}_t - \bar{\boldsymbol{F}}^{(s:l)}) (\boldsymbol{e}_t - \bar{\boldsymbol{e}}^{(s:l)})'=\boldsymbol{O},$$
    which implies that
    \begin{equation}
        \bigg\|\widehat{\boldsymbol{\Omega}}_{Fe}^{(s:l)}\bigg\| = O_p(p_w^{1/2} r_F^{1/2}).\label{fe}
    \end{equation}
    This yields
    \begin{equation}
        \bigg\|\dfrac{1}{l-s} \widehat{\boldsymbol{\Omega}}_{Fe}^{(s:l)} \boldsymbol{Q} \bigg\| = O_p(p_w^{1\over 2}r_F^{1\over 2} (l-s)^{-1}) \stackrel{P}{\longrightarrow} 0.\label{fet}
    \end{equation}
    Combining \eqref{wtB}, \eqref{FF}, \eqref{ee} and \eqref{fet}, we have
    \begin{equation}
        \bigg\|{1\over l-s}\widehat{\boldsymbol{\Omega}}_{\psi_2}^{(s:l)} - \int_0^1 \boldsymbol{W}_t\boldsymbol{W}_t' {\rm d}t\bigg\|
        = O_p(r_F (l-s)^{\tau- {1\over 2}} + p(l-s)^{-1})
        \stackrel{P}{\longrightarrow} 0. \label{f2f2weakl}
    \end{equation}
    For the term $\widehat{\boldsymbol{\Omega}}_{\psi_1}^{(s:l)},$ as $\|\boldsymbol{Q}^\perp\| = 1,$ by \eqref{ee4}, we have
    \begin{equation}
        \begin{aligned}
        \bigg\| \widehat{\boldsymbol{\Omega}}_{\psi_1}^{(s:l)} - {\rm Cov}(\boldsymbol{\psi}_{1t}) \bigg\| 
        &\leq \|\widehat{\boldsymbol{\Omega}}_e - \boldsymbol{\Omega}_e\| \\ &= O_p(p_w (l-s)^{-{1\over 2}}) = O_p(p(l-s)^{-{1\over 2}}) \stackrel{P}{\longrightarrow} 0.
        \end{aligned}\label{f1f1weakl}
    \end{equation}
    For the term $\widehat{\boldsymbol{\Omega}}_z^{(s:l)}$,
    by similar arguments in \eqref{eet} to \eqref{ee4}, we have
    \begin{equation}
        \bigg\|\widehat{\boldsymbol{\Omega}}_z^{(s:l)} - \boldsymbol{\Omega}_z\bigg\| = O_p(p_z(l-s)^{-1/2}) = O_p(p(l-s)^{-{1\over 2}}) \stackrel{P}{\longrightarrow} 0. \label{zzweakl}
    \end{equation}
    For the term $\widehat{\boldsymbol{\Omega}}_{z,\psi_1}^{(s:l)},$ we have ${\rm E}\widehat{\boldsymbol{\Omega}}_{z,\psi_1}^{(s:l)} = \boldsymbol{O}$ by independence of $\boldsymbol{z}_t$ and $(\boldsymbol{v}_t', \boldsymbol{e}_t')'$, and
    \begin{equation}
        \begin{aligned}
        \bigg\|\widehat{\boldsymbol{\Omega}}_{z,\psi_1}^{(s:l)} \bigg\|&= \bigg\|{1\over l-s} \sum_{t=s}^{l-1} (\boldsymbol{z}_t - \bar{\boldsymbol{z}}^{(s:l)})(\boldsymbol{e}_t - \bar{\boldsymbol{e}}^{(s:l)})' \boldsymbol{Q}^\perp\bigg\|\\ &\leq \bigg\|{1\over l-s} \sum_{t=s}^{l-1} (\boldsymbol{z}_t - \bar{\boldsymbol{z}}^{(s:l)})(\boldsymbol{e}_t - \bar{\boldsymbol{e}}^{(s:l)})'\bigg\| = \bigg\|\widehat{\boldsymbol{\Omega}}_{ze}^{(s:l)}\bigg\|.\end{aligned}\label{zf1weakl_0}
    \end{equation}
    By similar arguments in \eqref{crossfe} to \eqref{fe_fnorm}, We have
    \begin{equation}
        \begin{aligned}
        {\rm E} \bigg\|\widehat{\boldsymbol{\Omega}}_{ze}^{(s:l)}\bigg\|_F^2
        &= O(p_z p_w (l-s)^{-1}) 
        .
        \end{aligned}\label{crosscon0}
    \end{equation}
    which yields
    \begin{equation}
        \bigg\|\widehat{\boldsymbol{\Omega}}_{ze}^{(s:l)}\bigg\| = O_p(p_z^{1\over 2}p_w^{1\over 2} (l-s)^{-{1\over 2}}).
        \label{zeorder}
    \end{equation}
    Combining \eqref{zf1weakl_0} and \eqref{zeorder} yields
    \begin{equation}
        \bigg\|\widehat{\boldsymbol{\Omega}}_{z,\psi_1}^{(s:l)}\bigg\| = O_p(p_z^{1\over 2}p_w^{1\over 2} (l-s)^{-{1\over 2}}) = O_p(p(l-s)^{-{1\over 2}}). \label{zf1weakl}
    \end{equation}
    For the term $\widehat{\boldsymbol{\Omega}}_{z,\psi_2}^{(s:l)},$ we have ${\rm E} \widehat{\boldsymbol{\Omega}}_{z,\psi_2} = \boldsymbol{O}$ by Assumption 1 (3) again, and
    \begin{equation}
        \begin{aligned}
            \bigg\|\widehat{\boldsymbol{\Omega}}_{z,\psi_2}^{(s:l)}\bigg\|
            &\leq \bigg\|\widehat{\boldsymbol{\Omega}}_{zF}^{(s:l)}\bigg\|  + \bigg\|\widehat{\boldsymbol{\Omega}}_{ze}^{(s:l)}\bigg\| \|\boldsymbol{Q}\|.
        \end{aligned}\label{zFlimit0}
    \end{equation}
    For $\widehat{\boldsymbol{\Omega}}_{zF}^{(s:l)},$ by similar arguments in \eqref{crossfe} to \eqref{fe_fnorm} again, we have
    \begin{equation}
        \begin{aligned}
            {\rm E} \bigg\|\widehat{\boldsymbol{\Omega}}_{zF}^{(s:l)}\bigg\|_F^2 &= {1\over (l-s)^2}\sum_{i=1}^{p_z} \sum_{j = 1}^{r_F}{\rm E}\bigg(\sum_{t=s}^{l-1} (z_t^i-\bar{z}_i^{(s:l)}) (F_{t}^j - \bar{F}_j^{(s:l)})\bigg)^2 = O(p_zr_F),
        \end{aligned}
    \end{equation}
    which yields
    \begin{equation}
        \bigg\|\widehat{\boldsymbol{\Omega}}_{zF}^{(s:l)}\bigg\| = O_p(p_z^{1\over 2} r_F^{1\over 2}).\label{zForder}
    \end{equation}
    Combining \eqref{zFlimit0} and \eqref{zForder} yields
    \begin{equation}
        \begin{aligned}
            \bigg\|{1\over (l-s)^{1\over 2}}\widehat{\boldsymbol{\Omega}}_{z,\psi_2}^{(s:l)}\bigg\| &= O_p(p_z^{1\over 2} r_F^{1\over 2}(l-s)^{-{1\over 2}}) + O_p(p_z^{1\over 2}p_w^{1\over 2} (l-s)^{-{1}})\\
            &= O_p(p(l-s)^{-{1\over 2}}).\label{zf2weakl}
        \end{aligned}
    \end{equation}
    Finally, we consider $\widehat{\boldsymbol{\Omega}}_{\psi_1,\psi_2}^{(s:l)}.$ Note that
    \begin{equation}
        \begin{aligned}
           \widehat{\boldsymbol{\Omega}}_{\psi_1,\psi_2}^{(s:l)} &= {1\over l-s} \sum_{t=s}^{l-1} (\boldsymbol{Q}^\perp)' (\boldsymbol{e}_t - \bar{\boldsymbol{e}}^{(s:l)}) [(\boldsymbol{F}_t - \bar{\boldsymbol{F}}^{(s:l)}) + \boldsymbol{Q}' (\boldsymbol{e}_t-\bar{\boldsymbol{e}}^{(s:l)})]'\\
            &= (\boldsymbol{Q}^\perp)' [\widehat{\boldsymbol{\Omega}}_{Fe}^{(s:l)}]'  +(\boldsymbol{Q}^\perp)' \widehat{\boldsymbol{\Omega}}_{e}^{(s:l)} \boldsymbol{Q}.
        \end{aligned}\label{f1f2limit0}
    \end{equation}
    By $\|\boldsymbol{Q}^\perp\| = \|\boldsymbol{Q}\| = 1$ again and \eqref{ee4}, \eqref{fe}, we have
    \begin{equation}
        \begin{aligned}
            \bigg\|{1\over (l-s)^{1\over 2}}\widehat{\boldsymbol{\Omega}}_{\psi_1,\psi_2}^{(s:l)}\bigg\|
        &= O_p(p_w^{1\over 2} r_F^{1\over 2} (l-s)^{-{1\over 2}}) + O_p(p_w (l-s)^{-1})\\
            &= O_p(p(l-s)^{-{1\over 2}}).
        \end{aligned}
        \label{f1f2weakl}
    \end{equation}
    Thus,  Lemma \ref{lemma_cov} follows by combining \eqref{f2f2weakl}, \eqref{f1f1weakl}, \eqref{zzweakl}, \eqref{zf1weakl}, \eqref{zf2weakl}, and \eqref{f1f2weakl}.
\end{proof}

\begin{lemma}
    Suppose Assumption 1 holds and $p(l-s)^{-{1\over 2}}\rightarrow 0$ as $T\rightarrow \infty.$ We have
    \begin{equation}
        \bigg\|(l-s)^{1\over 2}\boldsymbol{PA} \widehat{\boldsymbol{\Omega}}_{xu}^{(s:l)}
        \bigg\| 
        = O_p(p_z^{1\over 2} + p_w^{1\over 2} + r_F^{1\over 2}).\label{uxl}
    \end{equation}
    \label{lemma_ux}
\end{lemma}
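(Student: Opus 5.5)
We will prove the lemma block by block, following the decomposition used in Lemma \ref{lemma_cov}. By \eqref{cointA} and \eqref{Pls},
\begin{equation*}
(l-s)^{1\over 2}\boldsymbol{PA}\widehat{\boldsymbol{\Omega}}_{xu}^{(s:l)}
=\Big((l-s)^{1\over 2}[\widehat{\boldsymbol{\Omega}}_{zu}^{(s:l)}]',\ (l-s)^{1\over 2}[\widehat{\boldsymbol{\Omega}}_{\psi_1 u}^{(s:l)}]',\ [\widehat{\boldsymbol{\Omega}}_{\psi_2 u}^{(s:l)}]'\Big)'.
\end{equation*}
Here $u$ stands for the aligned error $u_{t+1}$ in the regression. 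It therefore suffices to show that the three blocks are $O_p(p_z^{1/2})$, $O_p(p_w^{1/2})$ and $O_p(r_F^{1/2}+p_w^{1/2})$, respectively, and then apply the triangle inequality to the Euclidean norm.

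Each block will be controlled through its second moment. For the first block, write
\begin{equation*}
\sum_{t=s}^{l-1}(\boldsymbol{z}_t-\bar{\boldsymbol{z}}^{(s:l)})(u_{t+1}-\bar u^{(s:l)})
=\sum_{t}\boldsymbol{z}_tu_{t+1}-(l-s)\bar{\boldsymbol{z}}^{(s:l)}\bar u^{(s:l)}.
\end{equation*}
By Assumption 1 (3), $\{\boldsymbol{z}_t\}$ and $\{u_t\}$ are independent. Hence for each coordinate $i$,
\begin{equation*}
{\rm E}\Big(\sum_t z_t^iu_{t+1}\Big)^2=\sum_{t,t'}{\rm E}(z_t^iz_{t'}^i)\,{\rm E}(u_{t+1}u_{t'+1}).
\end{equation*}
The mixing covariance inequality, applied exactly as in \eqref{crossbound}, bounds ${\rm E}(u_{t+1}u_{t'+1})$ by $C[\alpha(|t-t'|)]^{1-1/(2+\kappa)}$, while ${\rm E}(z_t^iz_{t'}^i)$ is bounded uniformly. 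This gives $O(l-s)$. The correction term satisfies ${\rm E}[(l-s)^2(\bar z_i\bar u)^2]\le (l-s)^2\{{\rm E}\bar z_i^2\}\{{\rm E}\bar u^2\}=O(1)$, again by independence. Summing over $i\le p_z$ gives ${\rm E}\|(l-s)^{1/2}\widehat{\boldsymbol{\Omega}}_{zu}^{(s:l)}\|^2=O(p_z)$, and Markov's inequality yields $O_p(p_z^{1/2})$.

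The second block is handled the same way. Since $\boldsymbol{\psi}_{1t}=(\boldsymbol{Q}^\perp)'\boldsymbol{e}_t$ and $\|\boldsymbol{Q}^\perp\|=1$, it is enough to bound $\widehat{\boldsymbol{\Omega}}_{eu}^{(s:l)}$ in Frobenius norm. This gives $O_p(p_w^{1/2})$, because $\boldsymbol{e}$ is independent of $u$.

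For the third block, write $\boldsymbol{\psi}_{2t}=\boldsymbol{F}_t+\boldsymbol{Q}'\boldsymbol{e}_t$. The $\boldsymbol{e}$ part carries the factor $(l-s)^{-1/2}$ relative to the second block, so it is $O_p(p_w^{1/2}(l-s)^{-1/2})$ and negligible. The $\boldsymbol{F}$ part is the main obstacle, because $\boldsymbol{F}_t$ is $I(1)$ and has no mixing property. We use the independence of $\{\boldsymbol{v}_t\}$ and $\{u_t\}$ and argue as in \eqref{crossfe}--\eqref{fe_fnorm}. For each $i\le r_F$,
\begin{equation*}
{\rm E}\Big[\sum_t(F_t^i-\bar F_i^{(s:l)})(u_{t+1}-\bar u^{(s:l)})\Big]^2
\le \sum_{t,t'}\big|{\rm E}(F_t^i-\bar F_i)(F_{t'}^i-\bar F_i)\big|\,\big|{\rm E}(u_{t+1}-\bar u)(u_{t'+1}-\bar u)\big|.
\end{equation*}
By \eqref{S16}, the first factor is $O(l-s)$ uniformly in $t,t'$ and $i$; the centering by $\bar u$ only adds $O(1/(l-s))$-type corrections. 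By summable mixing, the double sum of the second factor is $O(l-s)$. The total is therefore $O((l-s)^2)$. Dividing by $(l-s)^2$, which comes from the definition of $\widehat{\boldsymbol{\Omega}}$, and summing over $i$ gives ${\rm E}\|\widehat{\boldsymbol{\Omega}}_{Fu}^{(s:l)}\|_F^2=O(r_F)$, hence $O_p(r_F^{1/2})$.

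Combining the three blocks gives \eqref{uxl}. The condition $p(l-s)^{-1/2}\to0$ is used only to absorb the lower-order centering terms.
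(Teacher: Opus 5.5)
Your proposal is correct and follows the paper's proof essentially step for step. You use the same block decomposition of $(l-s)^{1/2}\boldsymbol{PA}\widehat{\boldsymbol{\Omega}}_{xu}^{(s:l)}$ into the $\boldsymbol{z}$, $\boldsymbol{\psi}_1$ and $\boldsymbol{\psi}_2=\boldsymbol{F}+\boldsymbol{Q}'\boldsymbol{e}$ parts, and the same second-moment bounds from the independence in Assumption 1 (3), the mixing covariance inequality for $u$, and \eqref{S16} for the $\boldsymbol{F}$ block, as in \eqref{crossfe}--\eqref{fe_fnorm}; your handling of the centering terms is more explicit than the paper's and shows the condition $p(l-s)^{-1/2}\to 0$ is not actually needed.
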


\begin{proof}
    By \eqref{cointA}, it follows that
    \begin{equation}
        \begin{aligned}
        \bigg\|\boldsymbol{PA} \widehat{\boldsymbol{\Omega}}_{xu}^{(s:l)}
        \bigg\|^2
        &= \|\widehat{\boldsymbol{\Omega}}_{zu}^{(s:l)}\|^2 + \|\widehat{\boldsymbol{\Omega}}_{\psi_1,u}^{(s:l)}\|^2 + \bigg\|{1\over (l-s)^{1\over 2}}\widehat{\boldsymbol{\Omega}}_{\psi_2,u}^{(s:l)}\bigg\|^2.
        \end{aligned}
    \end{equation}
    For $\widehat{\boldsymbol{\Omega}}_{zu}^{(s:l)},$ using similar arguments to \eqref{crossfe}-\eqref{fe_fnorm}, we have
    \begin{equation}
        \begin{aligned}
        {\rm E}\|\widehat{\boldsymbol{\Omega}}_{zu}^{(s:l)}\|^2 &= {1\over l-s}\sum_{i=1}^{p_z} {\rm E}\bigg[{1\over (l-s)^{1/2}} \sum_{t=s}^{l-1} (z_t^i - \bar{z}_i^{(s:l)})  (u_{t+1}-\bar{u}^{(s:l)})\bigg]^2
        = O(p_z (l-s)^{-1}).
    \end{aligned}\label{I0_ux}
    \end{equation}
    Similarly, we can show that
    \begin{equation}
        \begin{aligned}
        {\rm E}\|\widehat{\boldsymbol{\Omega}}_{\psi_1,u}^{(s:l)} \|^2
        &= O(p_w (l-s)^{-1}).
    \end{aligned}\label{I1_ux}
    \end{equation}
    For $\widehat{\boldsymbol{\Omega}}_{\psi_2,u}^{(s:l)}
    ,$ we have
    \begin{equation}
        \begin{aligned}
            {\rm E}\|\widehat{\boldsymbol{\Omega}}_{\psi_2,u}^{(s:l)}\|^2 &\leq   {\rm E}\bigg\|{1\over l-s} \sum_{t=s}^{l-1}  (u_{t+1}-\bar{u}^{(s:l)}) (\boldsymbol{F}_{t}-\bar{\boldsymbol{F}}^{(s:l)}) \bigg\|^2 + \|\boldsymbol{Q}'\|^2 \\ &\quad~ {\rm E}\bigg\|{1\over l-s} \sum_{t=s}^{l-1}  (u_{t+1}-\bar{u}^{(s:l)}) (\boldsymbol{e}_{t}-\bar{\boldsymbol{e}}^{(s:l)}) \bigg\|^2\\
            &=  {\rm E} \|\widehat{\boldsymbol{\Omega}}_{Fu}^{(s:l)}\|^2 + {\rm E} \|\widehat{\boldsymbol{\Omega}}_{eu}^{(s:l)}\|^2.
        \end{aligned}\label{I2_ux1}
    \end{equation}
    By similar arguments to \eqref{crossfe}-\eqref{fe_fnorm} again, we have
    \begin{equation}
        \begin{aligned}
            {\rm E}\|\widehat{\boldsymbol{\Omega}}_{Fu}^{(s:l)}\|^2 &= \sum_{i=1}^{r_F} {\rm E} \bigg({1\over l-s}\sum_{t=s}^{l-1}  (u_{t+1}-\bar{u}^{(s:l)}) (F_t^i - \bar{F}^{(s:l)})\bigg)^2
            = O(r_F)
        \end{aligned}\label{I2_uF}
    \end{equation}
    and
    \begin{equation}
        \begin{aligned}
        {\rm E}\|\widehat{\boldsymbol{\Omega}}_{eu}^{(s:l)}\|^2 &= {1\over l-s} \sum_{i=1}^{p_w} {\rm E} \bigg({1\over (l-s)^{1\over 2}}\sum_{t=s}^{l-1} (u_t - \bar{u}^{(s:l)}) (e_t^i - \bar{e}_i^{(s:l)})\bigg)^2
        = O(p_w(l-s)^{-1}).
        \end{aligned}\label{I2_ue}
    \end{equation}
    Combining \eqref{I2_ux1}, \eqref{I2_uF} and \eqref{I2_ue} yields
    \begin{equation}
        {\rm E}\bigg\|{1\over (l-s)^{1\over 2}}\widehat{\boldsymbol{\Omega}}_{\psi_2,u}^{(s:l)}\bigg\|^2 = O(r_F (l-s)^{-1}+ p_w(l-s)^{-2}) = O(r_F (l-s)^{-1}).\label{I2_ux}
    \end{equation}
    By Assumption 1 (3), we have ${\rm E}\widehat{\boldsymbol{\Omega}}_{xu}^{(s:l)}=\boldsymbol{O}.$ Combining \eqref{I0_ux}, \eqref{I1_ux}, \eqref{I2_ux}  yields  \eqref{uxl}.
\end{proof}

\begin{proof}[\textbf{Proof of Theorem 3.1.}]

    If we leave one $x_{lt}$ out, there are three possible cointegration structure \eqref{cointAl0}, \eqref{cointAl}, \eqref{cointAl2} for $\boldsymbol{x}_{-l,t}$ below:
\begin{itemize}
\item[(1)] {\bf Case 1:} $x_{lt}$ is an $I(0)$ regressor. By \eqref{cointA}, we have
\begin{equation}
    \boldsymbol{A}_l \boldsymbol{x}_{-l,t} := \begin{bmatrix}
        \boldsymbol{I}_{p_z-1}&\boldsymbol{O}\\
        \boldsymbol{O} & (\boldsymbol{Q}^\perp)'\\
        \boldsymbol{O} & \boldsymbol{Q}'
    \end{bmatrix} \begin{bmatrix}
        \boldsymbol{z}_{-l,t} \\
        \boldsymbol{w}_t
    \end{bmatrix} := \begin{bmatrix}
        \boldsymbol{z}_{t}^{(l)}\\
        \boldsymbol{\psi}_{1t}^{(l)}\\
        \boldsymbol{\psi}_{2t}^{(l)}
    \end{bmatrix}:=
    \boldsymbol{\psi}_{t}^{(l)}, \boldsymbol{A}_l'\boldsymbol{A}_l = \boldsymbol{A}_l\boldsymbol{A}_l' = \boldsymbol{I}_{p-1}.\label{cointAl0}
\end{equation}
Here, $\boldsymbol{\psi}_{1t}^{(l)} = \boldsymbol{\psi}_{1t}, \boldsymbol{\psi}_{2t}^{(l)} = \boldsymbol{\psi}_{2t}, $ and $\boldsymbol{\psi}_{1t}, \boldsymbol{\psi}_{2t}$ are given in \eqref{cointA}. The use of superscripts is only for matching the subsequent content.

\item[(2)] When $x_{lt},$ or $w_{l-p_z,t}$ equivalently, is an $I(1)$ regressor, by \eqref{panic}, we have
\begin{equation}
    \boldsymbol{w}_{-(l - p_z),t} = \boldsymbol{Q}_{-(l-p_z)}\boldsymbol{F}_t + \boldsymbol{e}_{-(l-p_z),t},\label{A35}
\end{equation}
where $\boldsymbol{Q}_{-(l-p_z)}, \boldsymbol{e}_{-(l-p_z),t}$ gather all rows except the $(l-p_z)-$th one in $\boldsymbol{Q},\boldsymbol{e}_t$. Further, the rank of $\boldsymbol{Q}_{-(l-p_z)}$ could be $r_F$ or $r_F-1$ due to the deletion of one row. This leads to two different cases: Case 2 and Case 3.
\begin{itemize}
\item[{\bf Case 2:}] ${\rm rank}(\boldsymbol{Q}_{-(l-p_z)}) = r_F.$ We have QR-decomposition for $\boldsymbol{Q}_{-(l-p_z)}$, i.e.
\begin{equation}
    \boldsymbol{Q}_{-(l-p_z)} = \boldsymbol{Q}_{-l} \boldsymbol{R}_{-l},\label{QRB-l}
\end{equation}
where $\boldsymbol{Q}_{-l} \in \mathbb{R}^{(p_w - 1)\times r_F}$ satisfies $\boldsymbol{Q}_{-l}'\boldsymbol{Q}_{-l} = \boldsymbol{I}_{r_F}$, and $\boldsymbol{R}_{-l}\in \mathbb{R}^{r_F\times r_F}$ is an upper-triangular matrix with positive diagnoal entries. Let $\boldsymbol{Q}_{-l}^\perp\in \mathbb{R}^{(p_w - 1) \times (p_w - 1 - r_F)}$ satisfies $(\boldsymbol{Q}_{-l}^\perp)'(\boldsymbol{Q}_{-l}^\perp) = \boldsymbol{I}_{p_w - 1 - r_F}, \boldsymbol{Q}_{-l}' \boldsymbol{Q}_{-l}^\perp = \boldsymbol{O},$ we have
\begin{equation}
    \boldsymbol{A}_l \boldsymbol{x}_{-l,t} := \begin{bmatrix}
        \boldsymbol{I}_{p_z} & {\boldsymbol{O}} \\
        {\boldsymbol{O}} & (\boldsymbol{Q}^{\perp}_{-l})' \\
        {\boldsymbol{O}} & \boldsymbol{Q}_{-l}'
    \end{bmatrix}
    \begin{bmatrix}
        \boldsymbol{z}_t\\
        \boldsymbol{w}_{-(l-p_z),t}
    \end{bmatrix}:=\begin{bmatrix}
        \boldsymbol{z}_t^{(l)}\\
        \boldsymbol{\psi}_{1t}^{(l)}\\
        \boldsymbol{\psi}_{2t}^{(l)}
    \end{bmatrix}:= \boldsymbol{\psi}_{t}^{(l)}, \boldsymbol{A}_l'\boldsymbol{A}_l = \boldsymbol{A}_l\boldsymbol{A}_l' = \boldsymbol{I}_{p-1}.
    \label{cointAl}
\end{equation}
where $\boldsymbol{\psi}_{1t}^{(l)} = (\boldsymbol{Q}^\perp_{-l})'\boldsymbol{w}_{-(l-p_z),t} = (\boldsymbol{Q}^\perp_{-l})'\boldsymbol{e}_{-(l-p_z),t}$ is a $(p_w-1-r_F)-$dimensional $I(0)$ series, $\boldsymbol{\psi}_{2t}^{(l)} = \boldsymbol{Q}_{-l}' \boldsymbol{w}_{-(l-p_z),t} =\boldsymbol{R}_{-l} \boldsymbol{F}_t + \boldsymbol{Q}_{-l}' \boldsymbol{e}_{-(l-p_z),t}$ is an $r_F-$dimensional $I(1)$ series; For convenience, we write $\boldsymbol{z}_t^{(l)}=\boldsymbol{z}_t$.

\item[{\bf Case 3:}] ${\rm rank}(\boldsymbol{Q}_{-(l-p_z)}) = r_F-1.$ We can find matrices $\boldsymbol{Q}_{-(l-p_z)}^*\in \mathbb{R}^{(p_w-1) \times (r_F - 1)}$ with full column-rank, $\boldsymbol{C}^* \in \mathbb{R}^{(r_F - 1)\times r_F}$ with full row-rank, such that $\boldsymbol{Q}_{-(l-p_z)} = \boldsymbol{Q}_{-(l-p_z)}^*\boldsymbol{C}^*.$ There exists a matrix $\boldsymbol{Q}_{-l}^*\in\mathbb{R}^{(p_w-1) \times (r_F-1)}$ satisfying $(\boldsymbol{Q}_{-l}^{*})' \boldsymbol{Q}_{-l}^* = \boldsymbol{I}_{r_F-1}$, and an upper-triangular matrix $\boldsymbol{R}_{-l} \in \mathbb{R}^{(r_F - 1)\times (r_F-1)}$ with positive diagnoal entries, such that $\boldsymbol{Q}_{-(l-p_z)}^* = \boldsymbol{Q}_{-l}^*\boldsymbol{R}_{-l}^*.$ Thus, we have
\begin{equation}
    \boldsymbol{Q}_{-(l-p_z)} = \boldsymbol{Q}_{-l}^* [\boldsymbol{R}_{-l}^* \boldsymbol{C}^*] := \boldsymbol{Q}_{-l} \boldsymbol{R}_{-l}. \label{QR*}
\end{equation} Let $\boldsymbol{Q}_{-l}^\perp\in \mathbb{R}^{(p_w-1)\times (p_w-r_F)}$ be the orthogonal complementary matrix of $\boldsymbol{Q}_{-l}$, then
\begin{equation}
    \boldsymbol{A}_l \boldsymbol{x}_{-l,t} := \begin{bmatrix}
        \boldsymbol{I}_{p_z} & {\boldsymbol{O}} \\
        {\boldsymbol{O}} & (\boldsymbol{Q}^{\perp}_{-l})' \\
        {\boldsymbol{O}} & \boldsymbol{Q}_{-l}'
    \end{bmatrix}
    \begin{bmatrix}
        \boldsymbol{z}_t\\
        \boldsymbol{w}_{-(l-p_z),t}
    \end{bmatrix}:=\begin{bmatrix}
        \boldsymbol{z}_t^{(l)}\\
        \boldsymbol{\psi}_{1t}^{(l)}\\
        \boldsymbol{\psi}_{2t}^{(l)}
    \end{bmatrix}:= \boldsymbol{\psi}_{t}^{(l)},\boldsymbol{A}_l'\boldsymbol{A}_l = \boldsymbol{A}_l\boldsymbol{A}_l' = \boldsymbol{I}_{p-1}.
    \label{cointAl2}
\end{equation}
where $\boldsymbol{\psi}_{1t}^{(l)} = (\boldsymbol{Q}_{-l}^\perp)'\boldsymbol{e}_{-(l-p_z),t}$ is a $(p_w-r_F)-$dimensional $I(0)$ process, $\boldsymbol{\psi}_{2t}^{(l)} = \boldsymbol{R}_{-l}\boldsymbol{F}_t + \boldsymbol{Q}_{-l}' \boldsymbol{e}_{-(l-p_z),t}$ is an $(r_F-1)-$dimensional $I(1)$ process. For convenience, we write $\boldsymbol{z}_t^{(l)}=\boldsymbol{z}_t.$
\end{itemize}
\end{itemize}
Using $\boldsymbol{\psi}_{t}^{(l)}$ given by \eqref{cointAl0}, \eqref{cointAl}, \eqref{cointAl2}, one can verify that for any $l = 1,2,\cdots,p$ and $\tilde{y}_t, \sigma_y^2$ defined in Assumption 2 (3),
\begin{equation}
\begin{aligned}
     {\rm CC}_l& = {1\over \sigma_y^2}[{\rm Cov}(\tilde{y}_t, \boldsymbol{\psi}_{t}^{(l)})] [{\rm Cov}(\boldsymbol{\psi}_{t}^{(l)})]^{-1} [{\rm Cov}(\boldsymbol{\psi}_{t}^{(l)}, \tilde{y}_t)]\\
     &= {1\over \sigma_y^2}\bigg[[{\rm Cov}(\tilde{y}_t, \boldsymbol{z}_{t}^{(l)})] [{\rm Cov}(\boldsymbol{z}_{t}^{(l)})]^{-1} [{\rm Cov}(\boldsymbol{z}_{t}^{(l)}, \tilde{y}_t)]\\
     &\quad +[{\rm Cov}(\tilde{y}_t, \boldsymbol{\psi}_{1t}^{(l)})] [{\rm Cov}(\boldsymbol{\psi}_{1t}^{(l)})]^{-1} [{\rm Cov}(\boldsymbol{\psi}_{1t}^{(l)}, \tilde{y}_t)]\bigg].
     \end{aligned}\label{pcc}
\end{equation}
where ${\rm CC}_l$'s are defined in Assumption 2 (3). 

Next, we first prove that
\beqn\label{ccl} \widehat{{\rm CC}}_l \stackrel{P}{\longrightarrow} {\rm CC}_l,~ l = 1,2,\cdots,p.\eeqn
Since the proof for the above Cases~1-3 are similar, we only give the proof for Case~2 in details. 

    Denote $\boldsymbol{P}_l = {\rm diag}\{\boldsymbol{I}_{p-1-r_F}, T^{-1/2} \boldsymbol{I}_{r_F}\},$ which is different from the matrix in \eqref{Pls}. Then, 
    \begin{equation}
        \begin{aligned}
            \widehat{{\rm CC}}_l
            &= \widehat{\boldsymbol{\Omega}}^{-1}_y \widehat{\boldsymbol{\Omega}}_{y, x_{-l}}
    \widehat{\boldsymbol{\Omega}}_{x_{-l}}^{-1}
    \widehat{\boldsymbol{\Omega}}_{x_{-l},y}
            \\
            &= \widehat{\boldsymbol{\Omega}}^{-1}_y  (\boldsymbol{P}_l \boldsymbol{A}_{l} \widehat{\boldsymbol{\Omega}}_{x_{-l},y})'
            (\boldsymbol{P}_l \boldsymbol{A}_{l}\widehat{\boldsymbol{\Omega}}_{x_{-l}} \boldsymbol{A}_{l}'  \boldsymbol{P}_l)^{-1} (\boldsymbol{P}_l \boldsymbol{A}_{l} \widehat{\boldsymbol{\Omega}}_{x_{-l},y}),
        \end{aligned}
    \end{equation}
    where $\boldsymbol{A}_{l}$ is defined in \eqref{cointAl}. In the following, we show the convergence of each term in $\widehat{{\rm CC}}_l$.

    Let $\boldsymbol{\Sigma}_l^0 = {\rm diag}\bigg\{{\rm Cov}(\boldsymbol{z}_t),{\rm Cov}(\boldsymbol{\psi}_{1t}^{(l)}) ,\int_0^1 (\boldsymbol{R}_{-l}\boldsymbol{W}(t))(\boldsymbol{R}_{-l}\boldsymbol{W}(t))' {\rm d}t\bigg\},$ where $\boldsymbol{R}_{-l}$ is given in \eqref{cointAl}. By similar arguments to the proof of Lemma \ref{lemma_cov}, we have
    \begin{equation}
        \begin{aligned}
        \bigg\|\boldsymbol{P}_l \boldsymbol{A}_{l}\widehat{\boldsymbol{\Omega}}_{x_{-l}} \boldsymbol{A}_{l}'  \boldsymbol{P}_l - \boldsymbol{\Sigma}_l^0\bigg\|
        = O_p(r_F T^{\tau-{1\over 2}} + p T^{-{1\over 2}}) 
        \stackrel{P}{\longrightarrow} 0,
        \end{aligned}\label{limit_CCp_omega}
    \end{equation}
    and $\int_0^1 (\boldsymbol{R}_{-l}\boldsymbol{W}(t))(\boldsymbol{R}_{-l}\boldsymbol{W}(t))' {\rm d}t$ is positive definite almost surely by the fact that $\boldsymbol{R}_{-l}$ has full row-rank, combining with the conclusion that $\int_0^1 \boldsymbol{W}(t) \boldsymbol{W}'(t) {\rm d}t$ is positive definite almost surely in Remark 3.5 of \cite{zhangIdentifyingCointegrationEigenanalysis2019} 

    Next, we focus on the convergence of
    \begin{equation}
        \boldsymbol{P}_l \boldsymbol{A}_{l} \widehat{\boldsymbol{\Omega}}_{x_{-l},y} = \bigg([\widehat{\boldsymbol{\Omega}}_{zy}]', [\widehat{\boldsymbol{\Omega}}_{\psi_1,y}^{(l)}]', [T^{-{1\over 2}}\widehat{\boldsymbol{\Omega}}_{\psi_2, y}^{(l)}]'\bigg)',
    \end{equation}
    where $\widehat{\boldsymbol{\Omega}}_{\psi_1,y}^{(l)} = \dfrac{1}{T}\sum_{t=1}^T  (\boldsymbol{\psi}_{1t}^{(l)} - \bar{\boldsymbol{\psi}}_{1}^{(l)})(y_{t+1}-\bar{y}), \widehat{\boldsymbol{\Omega}}_{\psi_2,y}^{(l)} = \dfrac{1}{T} \sum_{t=1}^T  (\boldsymbol{\psi}_{2t}^{(l)} - \bar{\boldsymbol{\psi}}_{2}^{(l)})(y_{t+1}-\bar{y}).$

    Consider
    \begin{equation}
    \begin{aligned}
        \widehat{\boldsymbol{\Omega}}_{zy} &= {1\over T}\sum_{t=1}^T\sum_{i=1}^{m_0+1}  (\boldsymbol{z}_t - \bar{\boldsymbol{z}})(\boldsymbol{z}_t - \bar{\boldsymbol{z}})' \boldsymbol{\alpha}_{i} 1_{\{t_{i-1}^{0}\leq t < t_i^0\}} + {1\over T}\sum_{t=1}^T\sum_{i=1}^{m_0+1} (\boldsymbol{z}_t - \bar{\boldsymbol{z}}) (\boldsymbol{\psi}_{1t}-\bar{\boldsymbol{\psi}}_1)'\boldsymbol{\beta}_{i} 1_{\{t_{i-1}^{0}\leq t < t_i^0\}} \\ &\qquad+ {1\over T}\sum_{t=1}^T (u_{t+1} - \bar{u})(\boldsymbol{z}_t - \bar{\boldsymbol{z}}).
    \end{aligned}
    \end{equation}
     By Lemma \ref{lemma_ux}, we have
    \begin{equation}
        \bigg\|{1\over T}\sum_{t=1}^T (u_{t+1} - \bar{u})(\boldsymbol{z}_t - \bar{\boldsymbol{z}})\bigg\| = O_p(p_z^{1/2} T^{-1/2}).\label{ut_limit}
    \end{equation}
    From Assumption 1 (3), it follows that
     $${\rm E}\bigg({1\over T} \sum_{t=1}^T\sum_{i=1}^{m_0+1} (\boldsymbol{z}_t - \bar{\boldsymbol{z}}) (\boldsymbol{\psi}_{1t}-\bar{\boldsymbol{\psi}}_1)'\boldsymbol{\beta}_{i} 1_{\{t_{i-1}^{0}\leq t < t_i^0\}}\bigg)={\bf 0}.$$
     Note that
    \begin{equation}
        \begin{aligned}
        &~\quad\bigg\|{1\over T} \sum_{t=1}^T\sum_{i=1}^{m_0+1} (\boldsymbol{z}_t - \bar{\boldsymbol{z}}) (\boldsymbol{\psi}_{1t}-\bar{\boldsymbol{\psi}}_1)'\boldsymbol{\beta}_{i} 1_{\{t_{i-1}^{0}\leq t < t_i^0\}}\bigg\| \\
        &= \bigg\|\dfrac{1}{T}\sum_{i=1}^{m_0+1} \boldsymbol{\beta}_{i}' \sum_{t=t_{i-1}^0}^{t_{i}^0-1} (\boldsymbol{\psi}_{1t}-\bar{\boldsymbol{\psi}}_1)(\boldsymbol{z}_t - \bar{\boldsymbol{z}})'\bigg\|
        \\
         &\leq\max\limits_{1\leq i \leq m_0+1}\|\boldsymbol{\beta}_i\|\bigg({1\over T}  \sum_{i=1}^{m_0+1} \bigg\|\sum_{t=t_{i-1}^0}^{t_{i}^0-1} (\boldsymbol{\psi}_{1t}-\bar{\boldsymbol{\psi}}_1)(\boldsymbol{z}_t - \bar{\boldsymbol{z}})'\bigg\|\bigg).
        \end{aligned}\label{xiz2}
    \end{equation}
    Further, 
    \begin{equation}
        \begin{aligned}
            &\quad~\sum_{t=t_{i-1}^0}^{t_{i}^0-1} (\boldsymbol{\psi}_{1t}-\bar{\boldsymbol{\psi}}_1)(\boldsymbol{z}_t - \bar{\boldsymbol{z}})'\\
            &= \sum_{t=t_{i-1}^0}^{t_{i}^0-1} [(\boldsymbol{\psi}_{1t} - \bar{\boldsymbol{\psi}}_{1}^{(t_{i-1}^0:t_i^0)}) - (\bar{\boldsymbol{\psi}}_1 - \bar{\boldsymbol{\psi}}_{1}^{(t_{i-1}^0:t_i^0)})][(\boldsymbol{z}_t - \bar{\boldsymbol{z}}^{(t_{i-1}^0:t_i^0)} )-(\bar{\boldsymbol{z}}-\bar{\boldsymbol{z}}^{(t_{i-1}^0:t_i^0)})]'\\
            &= \sum_{t=t_{i-1}^0}^{t_{i}^0-1} (\boldsymbol{\psi}_{1t} - \bar{\boldsymbol{\psi}}_{1}^{(t_{i-1}^0:t_i^0)}) (\boldsymbol{z}_t - \bar{\boldsymbol{z}}^{(t_{i-1}^0:t_i^0)} )' + (t_{i}^0-t_{i-1}^0) (\bar{\boldsymbol{\psi}}_1 - \bar{\boldsymbol{\psi}}_{1}^{(t_{i-1}^0:t_i^0)}) (\bar{\boldsymbol{z}}-\bar{\boldsymbol{z}}^{(t_{i-1}^0:t_i^0)})'.
        \end{aligned}\label{piece_con1}
    \end{equation}
    
    Similar to \eqref{crossfe}-\eqref{fe_fnorm}, one can show that
   \beqn
        {\rm E}\bigg\|\sum_{t=t_{i-1}^0}^{t_{i}^0-1} (\boldsymbol{\psi}_{1t} - \bar{\boldsymbol{\psi}}_{1}^{(t_{i-1}^0:t_i^0)}) (\boldsymbol{z}_t - \bar{\boldsymbol{z}}^{(t_{i-1}^0:t_i^0)} )'\bigg\|_F^2 = O(p_zp_w (t_{i}^0-t_{i-1}^0)) = O(p^2 T), \label{piece_con1_1}
        \eeqn
    \beqn
        {\rm E}\|\bar{\boldsymbol{\psi}}_1 - \bar{\boldsymbol{\psi}}_{1}^{(t_{i-1}^0:t_i^0)}\|^2
         &\leq& {\rm E}\|\bar{\boldsymbol{\psi}}_1\|^2 + {\rm E}\|\bar{\boldsymbol{\psi}}_{1}^{(t_{i-1}^0:t_i^0)}\|^2 \nn\\
            &=& T^{-1} {\rm E}\bigg\|{1\over \sqrt{T}}\sum_{t=1}^T (\boldsymbol{Q}^\perp)' \boldsymbol{e}_t\bigg\|^2 + (t_i^0-t_{i-1}^0)^{-1} {\rm E}\bigg\|{1\over \sqrt{t_i^0-t_{i-1}^0}}\sum_{t=t_{i-1}^0}^{t_i^0-1} (\boldsymbol{Q}^\perp)' \boldsymbol{e}_t\bigg\|^2\nn\\
            &=& O(T^{-1} p_w) + O((t_i^0-t_{i-1}^0)^{-1}p_w)\nn\\
            &=& O((t_i^0-t_{i-1}^0)^{-1}p),
       \label{barfdiffbarF0}
    \eeqn
    and
    \begin{equation}
        {\rm E}\|\bar{\boldsymbol{z}} - \bar{\boldsymbol{z}}^{(t_{i-1}^0:t_i^0)}\|^2 = O((t_i^0-t_{i-1}^0)^{-1} p).\label{barzdiffbarz}
    \end{equation}
    Combining \eqref{piece_con1} to \eqref{barzdiffbarz} yields
    \begin{equation}
            {\rm E}\bigg\|\sum_{t=t_{i-1}^0}^{t_{i}^0-1} (\boldsymbol{\psi}_{1t}-\bar{\boldsymbol{\psi}}_1)(\boldsymbol{z}_t - \bar{\boldsymbol{z}})'\bigg\|_F^2
            = O(p^2T) + O(p^2) = O(p^2T).
        \label{xiz2order}
    \end{equation}
    It follows that
    \begin{equation}
    \begin{aligned}
      &\quad~ {\rm E}\bigg({1\over T}  \sum_{i=1}^{m_0+1} \bigg\|\sum_{t=t_{i-1}^0}^{t_{i}^0-1} (\boldsymbol{\psi}_{1t}-\bar{\boldsymbol{\psi}}_1)(\boldsymbol{z}_t - \bar{\boldsymbol{z}})'\bigg\|\bigg)^2\\
        &\leq {1\over T^2} \sum_{i=1}^{m_0+1} \sum_{j=1}^{m_0+1}
        \bigg({\rm E} \bigg\|\sum_{t=t_{i-1}^0}^{t_{i}^0-1} (\boldsymbol{\psi}_{1t}-\bar{\boldsymbol{\psi}}_1)(\boldsymbol{z}_t - \bar{\boldsymbol{z}})'\bigg\|^2_F\bigg)^{1\over 2}
        \bigg({\rm E}\bigg\|\sum_{t=t_{j-1}^0}^{t_{j}^0-1} (\boldsymbol{\psi}_{1t}-\bar{\boldsymbol{\psi}}_1)(\boldsymbol{z}_t - \bar{\boldsymbol{z}})'\bigg\|^2_F\bigg)^{1\over 2}
        \\
        &= O(m_0^2 p^2 T^{-1}).
        \end{aligned}\label{m0unif}
    \end{equation}
    Combining \eqref{xiz2} and \eqref{m0unif} yields
    \begin{equation}
        \bigg\|{1\over T}\sum_{t=1}^T\sum_{i=1}^{m_0+1} (\boldsymbol{z}_t - \bar{\boldsymbol{z}}) (\boldsymbol{\psi}_{1t}-\bar{\boldsymbol{\psi}}_1)'\boldsymbol{\beta}_{i} 1_{\{t_{i-1}^{0}\leq t < t_i^0\}}\bigg\| = O_p(m_0 T^{-{1\over 2}} p)=o_p(1)
\label{piece_convergence}
    \end{equation}
    by Assumption 2 (1) that $m_0pT^{-{1\over 2}}\rightarrow 0$.

    Next, we consider
    \begin{equation}
        \begin{aligned}
            {1\over T}\sum_{t=1}^T\sum_{i=1}^{m_0+1}  (\boldsymbol{z}_t - \bar{\boldsymbol{z}})(\boldsymbol{z}_t - \bar{\boldsymbol{z}})' \boldsymbol{\alpha}_{i} 1_{\{t_{i-1}^{0}\leq t < t_i^0\}} &= \dfrac{1}{T}\sum_{i=1}^{m_0+1}\boldsymbol{\alpha}_{i}' \sum_{t=t_{i-1}^0}^{t_i^0-1} (\boldsymbol{z}_{t} - \bar{\boldsymbol{z}}) (\boldsymbol{z}_{t} - \bar{\boldsymbol{z}})'.
        \end{aligned}
    \end{equation}
    Using similar arguments to \eqref{piece_con1}-\eqref{piece_convergence}, we have
    \begin{equation}
        \begin{aligned}
            &\quad~\bigg\|\dfrac{1}{T}\sum_{i=1}^{m_0+1}\boldsymbol{\alpha}_{i}' \sum_{t=t_{i-1}^0}^{t_i^0-1} [(\boldsymbol{z}_{t} - \bar{\boldsymbol{z}}) (\boldsymbol{z}_{t} - \bar{\boldsymbol{z}})' - {\rm Cov}(\boldsymbol{z}_t)]\bigg\|
            &= O_p(m_0 p_z T^{-1/2})\stackrel{P}{\longrightarrow} 0.
        \end{aligned}   \label{zt1_lim0}
    \end{equation}
    Thus, by Assumption 2 (2),
    \begin{equation}
        \bigg\| {1\over T}\sum_{t=1}^T\sum_{i=1}^{m_0+1}  (\boldsymbol{z}_t - \bar{\boldsymbol{z}})(\boldsymbol{z}_t - \bar{\boldsymbol{z}})' \boldsymbol{\alpha}_{i} 1_{\{t_{i-1}^{0}\leq t < t_i^0\}} - {\rm Cov}(\boldsymbol{z}_t) \tilde{\boldsymbol{\alpha}}\bigg\|=O_p(m_0pT^{-{1\over 2}}) \stackrel{P}{\longrightarrow} 0.\label{zt1_limit}
    \end{equation}
    Combining \eqref{ut_limit}, \eqref{piece_convergence} and \eqref{zt1_limit} yields
    \begin{equation}
        \begin{aligned}
        \|\widehat{\boldsymbol{\Omega}}_{zy} - {\rm Cov}(\boldsymbol{z}_t) \tilde{\boldsymbol{\alpha}} \| =O_p(m_0pT^{-{1\over 2}})
        \stackrel{P}{\longrightarrow} 0.
        \end{aligned}\label{limit_CCp_z_q}
    \end{equation}
    Similarly, one can show that 
    \begin{equation}
        \begin{aligned}
    \|\widehat{\boldsymbol{\Omega}}_{\psi_1,y}^{(l)} - {\rm Cov}(\boldsymbol{\psi}_{1t}^{(l)}, \boldsymbol{\psi}_{1t})\tilde{\boldsymbol{\beta}}\|=O_p(m_0pT^{-{1\over 2}}) \stackrel{P}{\longrightarrow} 0;
        \end{aligned} \label{limit_CCp_f1_q}
    \end{equation}
    and
    \begin{equation}
        \bigg\|T^{-{1\over 2}}\widehat{\boldsymbol{\Omega}}_{\psi_2, y}^{(l)}\bigg\|=O_p(m_0pT^{-{1\over 2}}) \stackrel{P}{\longrightarrow} 0.\label{limit_CCp_f3_q}
    \end{equation}
    Denote $\boldsymbol{q}_l^0=\bigg(\tilde{\boldsymbol{\alpha}}'{\rm Cov}(\boldsymbol{z}_t), \tilde{\boldsymbol{\beta}}'{\rm Cov}(\boldsymbol{\psi}_{1t},\boldsymbol{\psi}_{1t}^{(l)}),\boldsymbol{0}'\bigg)',$ then by \eqref{limit_CCp_z_q} to \eqref{limit_CCp_f3_q},
    \begin{equation}
        \begin{aligned}
       \|\boldsymbol{P}_l \boldsymbol{A}_{l} \widehat{\boldsymbol{\Omega}}_{x_{-l},y}
       - \boldsymbol{q}_l^0\|
        =O_p(m_0pT^{-{1\over 2}})= o_p(1).
    \end{aligned}
        \label{limit_CCp_q}
    \end{equation}
    Finally, we consider $\widehat{\boldsymbol{\Omega}}_y.$ Note that
    \begin{equation}
        \begin{aligned}
        \widehat{\boldsymbol{\Omega}}_y &= {1\over T}\sum_{i=1}^{m_0+1} \sum_{t=t_{i-1}^0}^{t_i^0-1} [\boldsymbol{\alpha}_i' (\boldsymbol{z}_t - \bar{\boldsymbol{z}}) + \boldsymbol{\beta}_i' (\boldsymbol{\psi}_{1t}- \bar{\boldsymbol{\psi}}_1) + (u_{t+1} - \bar{u})]^2 \\
        &= \sum_{i=1}^{m_0+1}{t_{i}^0-t_{i-1}^0 \over T}  \boldsymbol{\alpha}_i'\widehat{\boldsymbol{\Omega}}_{z}^{(t_{i-1}^0:t_i^0)}\boldsymbol{\alpha}_i + \sum_{i=1}^{m_0+1}{t_{i}^0-t_{i-1}^0 \over T} \boldsymbol{\beta}_i' \widehat{\boldsymbol{\Omega}}_{\psi_1}^{(t_{i-1}^0:t_i^0)} \boldsymbol{\beta}_i+ \widehat{\boldsymbol{\Omega}}_{u}+ \\ &~\quad \sum_{i=1}^{m_0+1}  {t_{i}^0-t_{i-1}^0 \over T}\boldsymbol{\alpha}_i' \widehat{\boldsymbol{\Omega}}_{z,u}^{(t_{i-1}^0:t_i^0)}  + \sum_{i=1}^{m_0+1} {t_{i}^0-t_{i-1}^0 \over T}\boldsymbol{\beta}_i'\widehat{\boldsymbol{\Omega}}_{\psi_1,u}^{(t_{i-1}^0:t_i^0)} +\sum_{i=1}^{m_0+1}{t_{i}^0-t_{i-1}^0 \over T} \boldsymbol{\alpha}_i' \widehat{\boldsymbol{\Omega}}_{z,\psi_1}^{(t_{i-1}^0:t_i^0)} \boldsymbol{\beta}_i\\
        &:= L_1 + \cdots + L_6.
        \end{aligned}
    \end{equation}
    By Assumption 2 (3) and similar argument in \eqref{xiz2} to \eqref{piece_convergence}, we have 
    \begin{equation}
        \begin{aligned}
        \bigg\|L_1 + L_2 + L_3 - \sigma_y^2 \bigg\|
        \stackrel{P}{\longrightarrow} 0.
        \end{aligned}\label{Ly1_1}
    \end{equation}
    Similarly, we can show that
    \begin{equation}
        \|L_4\|
             = O_p(m_0 p_z^{ 1 \over 2} T^{-{1 \over 2}}) \stackrel{P}{\longrightarrow} 0,
        \label{Ly4}
    \end{equation}
    \begin{equation}
        \|L_5\| = O_p(m_0 T^{-{1\over 2}} p_w^{1\over 2} ) \stackrel{P}{\longrightarrow} 0,\label{Ly5}
    \end{equation}
    \begin{equation}
        \begin{aligned}
            \|L_6\| = O_p(m_0 p_z^{1/2} p_w^{1/2} T^{-{1\over 2}}) \stackrel{P}{\longrightarrow} 0.
        \end{aligned}\label{Ly6}
    \end{equation}
    Combining \eqref{Ly1_1}$\sim$\eqref{Ly6} yields
    \begin{equation}
    \|\widehat{\sigma}_y^2 - \sigma_y^2\| \stackrel{P}{\longrightarrow} 0,\label{limit_sigma}
    \end{equation}
Combining 
\eqref{limit_CCp_omega}, \eqref{limit_CCp_q} and \eqref{limit_sigma}, we have
\begin{equation}
    \|\widehat{{\rm CC}}_l - {\rm CC}_l\| \stackrel{P}{\longrightarrow} 0\nn
\end{equation}
and complete the proof of (\ref{ccl}).

Using \eqref{ccl}, we will show the sure screening property of our SICS method. For any $l \in  G^0 , m \in (G^0)^c,$ by Assumption 2 (3) that ${\rm CC}_{m} - {\rm CC}_{l} > C_0$, we have
\begin{equation}
    \begin{aligned}
        \widehat{{\rm CC}}_{m} - \widehat{{\rm CC}}_{l} &= ({\rm CC}_{m} - {\rm CC}_{l}) + [\widehat{{\rm CC}}_{m} - \widehat{{\rm CC}}_{l} - ( {\rm CC}_{m} - {\rm CC}_{l})] \\
        &> C_0 - o_p(1)
        > {1\over 2}C_0>0
    \end{aligned}
\end{equation}
holds in probability. This yields
\begin{equation}
    P(\widehat{{\rm CC}}_{m} \leq \widehat{{\rm CC}}_{l})  \longrightarrow 0
    .\label{A272}
\end{equation}
If $G^0\cap(\widehat{G})^c\neq\phi,$ there must exist $l \in G^0 $ such that $\widehat{{\rm CC}}_{l} \geq \widehat{{\rm CC}}_{j_{d_T}},$ where $ \widehat{{\rm CC}}_{j_{d_T}}$ is given in \eqref{dt}. By our Assumption 2 (3) that $|G^0| < \infty$ and $d_T \rightarrow \infty,$ there exists $m_l\in \widehat{G}\cap (G^0)^c,$ such that $ \widehat{{\rm CC}}_{j_{d_T}} > \widehat{{\rm CC}}_{m_l}.$   This implies that
\begin{equation}
     P(G^0\cap(\widehat{G})^c\neq\phi )  
    \leq \sum_{l\in G^0} P(\widehat{{\rm CC}}_{l} \geq \widehat{{\rm CC}}_{j_{d_T}}) \leq \sum_{l\in G^0 } P(\widehat{{\rm CC}}_{l} \geq \widehat{{\rm CC}}_{m_l}).\label{A273}
\end{equation}
    Combining \eqref{A272}, \eqref{A273} and the Assumption 2 (3) that $|G^0|<\infty$ yields
\begin{equation}
    P(G^0\cap(\widehat{G})^c\neq\phi ) \longrightarrow 0.
\end{equation}

This completes our proof of Theorem~3.1.
\end{proof}

\subsection{The Proof of Theorem 3.2}

To prove Theorems 3.2, we need some  lemmas. For convenience, before presenting the lemmas, we first introduce some notations.

 Let  $\tilde{\boldsymbol{x}}_t
:= (\tilde{\boldsymbol{z}}_t', \tilde{\boldsymbol{w}}_t')'$ be the  predictors   selected by one-step SICS procedure (see $(\ref{dt})$),  where $\tilde{\boldsymbol{z}}_t$ is a $\widehat{s}_z-$dimensional $I(0)$ process, $\tilde{\boldsymbol{w}}_t$ is a $\widehat{s}_w-$dimensional $I(1)$ process, and $\widehat{s}_z + \widehat{s}_w=d_T=|\widehat{G}|.$ 
Denote $\tilde{\boldsymbol{Q}}\in\mathbb{R}^{\widehat{s}_w \times r_F}, \tilde{\boldsymbol{e}}_t$ be the matrix and vector containing the rows from $\boldsymbol{Q}, \boldsymbol{e}_t$  corresponding to $\tilde{\boldsymbol{w}}_t,$ respectively. By \eqref{panic}, we have
\begin{equation}
     \tilde{\boldsymbol{w}}_t = \tilde{\boldsymbol{Q}} \boldsymbol{F}_t + \tilde{\boldsymbol{e}}_t. \label{tildestruc}
\end{equation}
Let $\tilde{r} = {\rm rank}(\tilde{\boldsymbol{Q}}),$ 
then $\tilde{\boldsymbol{Q}}$  can be represented as a product of two full-rank matrices, i.e., there exists $\tilde{\boldsymbol{Q}}^*\in\mathbb{R}^{\widehat{s}_w\times\tilde{r}}, \tilde{\boldsymbol{R}}\in\mathbb{R}^{\tilde{r}\times r_F}$ such that
\begin{equation}
    \tilde{\boldsymbol{Q}} = \tilde{\boldsymbol{Q}}^* \tilde{\boldsymbol{R}},\label{QRtilde}
\end{equation}
where $(\tilde{\boldsymbol{Q}}^*)'\tilde{\boldsymbol{Q}}^* = \boldsymbol{I}_{\tilde{r}}, {\rm rank}(\tilde{\boldsymbol{R}}) = \tilde{r}.$ We can also find $\tilde{\boldsymbol{A}},$ such that
\begin{equation}
    \tilde{\boldsymbol{A}} \tilde{\boldsymbol{x}}_t := \begin{bmatrix}
        \boldsymbol{I}_{\widehat{s}_z} & \boldsymbol{O}\\
        \boldsymbol{O} & [(\tilde{\boldsymbol{Q}}^*)^\perp]'\\
        \boldsymbol{O} & [\tilde{\boldsymbol{Q}}^*]'
    \end{bmatrix} \begin{bmatrix}
        \tilde{\boldsymbol{z}}_t \\
        \tilde{\boldsymbol{w}}_t
    \end{bmatrix}  = \begin{bmatrix}
        \tilde{\boldsymbol{z}}_t\\
        \tilde{\boldsymbol{\psi}}_{1t}\\
        \tilde{\boldsymbol{\psi}}_{2t}
    \end{bmatrix} := \tilde{\boldsymbol{\psi}}_{t},~ \tilde{\boldsymbol{A}}'\tilde{\boldsymbol{A}} = \tilde{\boldsymbol{A}}\tilde{\boldsymbol{A}}' = \boldsymbol{I}_{\widehat{s}},\label{newcoint}
\end{equation}
where $\tilde{\boldsymbol{\psi}}_{1t}, \tilde{\boldsymbol{\psi}}_{2t}$ are $I(0), I(1)$ respectively.

Denote the least square estimator (LSE)  for the regression coefficient of $\tilde{\boldsymbol{\psi}}_{t}$ over in the period $(s:l)$ by
\begin{equation}
    \begin{aligned}
        \widehat{\boldsymbol{\beta}}^{(s:l)} &= \biggl[\sum_{t=s}^{l-1} (\tilde{\boldsymbol{\psi}}_{t} - \bar{\tilde{\boldsymbol{\psi}}}^{(s:l)}) (\tilde{\boldsymbol{\psi}}_{t} - \bar{\tilde{\boldsymbol{\psi}}}^{(s:l)})' \biggr]^{-1} \biggl[\sum_{t=s}^{l-1} (\tilde{\boldsymbol{\psi}}_{t} - \bar{\tilde{\boldsymbol{\psi}}}^{(s:l)}) (y_{t+1} - \bar{y}^{(s:l)}) \biggr]
        \\
        &= [\widehat{\boldsymbol{\Omega}}_{\tilde{\psi}}^{(s:l)}]^{-1}  [\widehat{\boldsymbol{\Omega}}_{\tilde{\psi},u}^{(s:l)}],
    \end{aligned} \label{hatbetasl}
\end{equation}
and the Residual Sum of Squares (RSS) by
\begin{equation}
    \begin{aligned}
        {\rm RSS}(s:l)
        &= \sum_{t=s}^{l-1} [y_{t+1} - \bar{y}^{(s:l)} - (\widehat{\boldsymbol{\beta}}^{(s:l)})' (\tilde{\boldsymbol{\psi}}_{t} - \bar{\tilde{\boldsymbol{\psi}}}^{(s:l)})]^2.
    \end{aligned}
    \label{RSS}
\end{equation}

\begin{remark} It is easy to see that the RSS defined in (\ref{RSS}) is equivalent to that given by   \eqref{RSSsl}.
In fact, note that
\begin{equation}
    \begin{aligned}
     \widehat{\boldsymbol{\gamma}}^{(s:l)} &= \biggl[\sum_{t=s}^{l-1} (\tilde{\boldsymbol{x}}_t - \bar{\tilde{\boldsymbol{x}}}^{(s:l)}) (\tilde{\boldsymbol{x}}_t - \bar{\tilde{\boldsymbol{x}}}^{(s:l)})' \biggr]^{-1}
     \biggl[\sum_{t=s}^{l-1} (\tilde{\boldsymbol{x}}_t - \bar{\tilde{\boldsymbol{x}}}^{(s:l)}) (y_{t+1} - \bar{y}^{(s:l)}) \biggr] \\
     &= \tilde{\boldsymbol{A}}'\biggl[\sum_{t=s}^{l-1} (\tilde{\boldsymbol{\psi}}_{t} - \bar{\tilde{\boldsymbol{\psi}}}^{(s:l)}) (\tilde{\boldsymbol{\psi}}_{t} - \bar{\tilde{\boldsymbol{\psi}}}^{(s:l)})' \biggr]^{-1} \tilde{\boldsymbol{A}} \tilde{\boldsymbol{A}}' \biggl[\sum_{t=s}^{l-1} (\tilde{\boldsymbol{\psi}}_{t} - \bar{\tilde{\boldsymbol{\psi}}}^{(s:l)}) (y_{t+1} - \bar{y}^{(s:l)}) \biggr]\\
     &= \tilde{\boldsymbol{A}}'\bigg\{\biggl[\sum_{t=s}^{l-1} (\tilde{\boldsymbol{\psi}}_{t} - \bar{\tilde{\boldsymbol{\psi}}}^{(s:l)}) (\tilde{\boldsymbol{\psi}}_{t} - \bar{\tilde{\boldsymbol{\psi}}}^{(s:l)})' \biggr]^{-1} \biggl[\sum_{t=s}^{l-1} (\tilde{\boldsymbol{\psi}}_{t} - \bar{\tilde{\boldsymbol{\psi}}}^{(s:l)}) (y_{t+1} - \bar{y}^{(s:l)}) \biggr]\bigg\}
     \\ &= \tilde{\boldsymbol{A}}' \widehat{\boldsymbol{\beta}}^{(s:l)}.
    \end{aligned}\label{newcoint2}
\end{equation}
This implies that
\begin{equation}
    \begin{aligned}
        {\rm RSS}(s:l)=&\sum_{t=s}^{l-1} [y_{t+1} - \bar{y}^{(s:l)} -  (\widehat{\boldsymbol{\gamma}}^{(s:l)})' (\tilde{\boldsymbol{x}}_t - \bar{\tilde{\boldsymbol{x}}}^{(s:l)})]^2 \\
        =& \sum_{t=s}^{l-1} [y_{t+1} - \bar{y}^{(s:l)} -  (\widehat{\boldsymbol{\beta}}^{(s:l)})'\tilde{\boldsymbol{A}}\tilde{\boldsymbol{A}}' (\tilde{\boldsymbol{\psi}}_{t} - \bar{\tilde{\boldsymbol{\psi}}}^{(s:l)})]^2\\
        =&\sum_{t=s}^{l-1} [y_{t+1} - \bar{y}^{(s:l)} - (\widehat{\boldsymbol{\beta}}^{(s:l)})' (\tilde{\boldsymbol{\psi}}_{t} - \bar{\tilde{\boldsymbol{\psi}}}^{(s:l)})]^2.
    \end{aligned} \label{ols_rotation}
\end{equation}
\end{remark}

The following lemma is about the  bound of ${\rm RSS}(s:l)$ when $(s:l)$ contains no breaks.

\begin{lemma}
    Suppose $S^0 \cap (s:l)= \phi$ and $ r_F^2 (l-s)^{\tau-{1\over 2}} +r_Fd_T(l-s)^{-{1\over 2}}\rightarrow 0$, then under Assumption 1,
     \begin{equation}
        \sum_{t=s}^{l-1}(u_{t+1}-\bar{u}^{(s:l)})^2 - Cr_F(r_F+d_T) \leq {\rm RSS}(s:l) \leq \sum_{t=s}^{l-1}(u_{t+1}-\bar{u}^{(s:l)})^2 \label{RSSwithoutbreak_sl1}
    \end{equation}
    holds in probability for some $C > 0.$
    \label{lemma_RSS0break}
\end{lemma}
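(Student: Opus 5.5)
Since $(s:l)$ contains no break, on this window $y_{t+1}=\boldsymbol{\gamma}'\boldsymbol{x}_t+u_{t+1}$ for a single coefficient vector $\boldsymbol{\gamma}$. On the event of Theorem 3.1 ($G^0\subset\widehat G$, whose probability tends to one), the true regression function is a linear combination of the selected predictors $\tilde{\boldsymbol{x}}_t$. I therefore work on that event and write $y_{t+1}-\bar y^{(s:l)}=\boldsymbol{b}'(\tilde{\boldsymbol{\psi}}_t-\bar{\tilde{\boldsymbol{\psi}}}^{(s:l)})+(u_{t+1}-\bar u^{(s:l)})$, using the rotation \eqref{newcoint} and the equivalence \eqref{ols_rotation}; here $\boldsymbol{b}=\tilde{\boldsymbol{A}}\tilde{\boldsymbol{\gamma}}$. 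By the projection identity for least squares,
\[
{\rm RSS}(s:l)=\sum_{t=s}^{l-1}(u_{t+1}-\bar u^{(s:l)})^2-(l-s)\,[\widehat{\boldsymbol{\Omega}}^{(s:l)}_{\tilde\psi,u}]'[\widehat{\boldsymbol{\Omega}}^{(s:l)}_{\tilde\psi}]^{-1}[\widehat{\boldsymbol{\Omega}}^{(s:l)}_{\tilde\psi,u}],
\]
where $u$ here denotes the leading $u_{t+1}$. The quadratic form is nonnegative, which gives the upper bound in \eqref{RSSwithoutbreak_sl1} immediately, even off the event. So the work is in bounding the subtracted quadratic form by $Cr_F(r_F+d_T)$ in probability.

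For the lower bound I normalise with $\tilde{\boldsymbol{P}}={\rm diag}\{\boldsymbol{I}_{d_T-\tilde r},(l-s)^{-1/2}\boldsymbol{I}_{\tilde r}\}$ and rewrite the quadratic form as
\[
\big[(l-s)^{1/2}\tilde{\boldsymbol{P}}\widehat{\boldsymbol{\Omega}}_{\tilde\psi,u}\big]'\big(\tilde{\boldsymbol{P}}\widehat{\boldsymbol{\Omega}}_{\tilde\psi}\tilde{\boldsymbol{P}}\big)^{-1}\big[(l-s)^{1/2}\tilde{\boldsymbol{P}}\widehat{\boldsymbol{\Omega}}_{\tilde\psi,u}\big].
\]
Applying Lemma \ref{lemma_cov} to the $d_T$-dimensional selected system, with $\tilde{\boldsymbol{R}}\boldsymbol{F}_t$ in place of $\boldsymbol{F}_t$, gives convergence of the middle matrix to a block-diagonal limit ${\rm diag}\{{\rm Cov}(\tilde{\boldsymbol z}_t),{\rm Cov}(\tilde{\boldsymbol\psi}_{1t}),\int_0^1\tilde{\boldsymbol R}\boldsymbol W\boldsymbol W'\tilde{\boldsymbol R}'\}$, with error $O_p(r_F(l-s)^{\tau-1/2}+d_T(l-s)^{-1/2})$. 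This error is $o_p(1)$ under the stated rate. The $I(0)$ block has eigenvalues bounded away from zero by Assumption 1 (2). Applying Lemma \ref{lemma_ux} to the same system, the $I(0)$ part of the score satisfies $\|\cdot\|^2=O_p(d_T)$ and the $I(1)$ part satisfies $\|\cdot\|^2=O_p(r_F)$.

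The main obstacle is the $I(1)$ block. The smallest eigenvalue of $\int_0^1\boldsymbol W\boldsymbol W'\,{\rm d}t$ is positive only almost surely, and it shrinks as $r_F\to\infty$. A naive bound would give $r_F\cdot\lambda_{\min}^{-1}$ with no rate. I would first try to show that $\lambda_{\min}^{-1}$ of this limit, and of its $\tilde{\boldsymbol R}$-transformed version, is $O_p(r_F)$ (in $\|\cdot\|$ for the block inverse), arguing via a Karhunen–Loève comparison for demeaned Brownian motion (eigenvalue of order $1/r_F$ for an $r_F$-dimensional integral, up to the random matrix scaling). I would also need the perturbation error $r_F(l-s)^{\tau-1/2}$ to be negligible relative to $1/r_F$; this is exactly what $r_F^2(l-s)^{\tau-1/2}\to0$ provides. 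Once the inverse norm is controlled by $Cr_F$ in probability, the quadratic form is at most $Cr_F\,O_p(d_T+r_F)$. Off-diagonal blocks are handled by a Schur-complement step, since they vanish at rate $d_T(l-s)^{-1/2}$, and $r_Fd_T(l-s)^{-1/2}\to0$ ensures this does not inflate the inverse beyond order $r_F$. Combining these bounds gives the lower inequality in \eqref{RSSwithoutbreak_sl1} with probability tending to one.
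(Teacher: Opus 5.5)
Your proposal is correct and is essentially the paper's argument. The shared steps are the projection identity, normalization by $\tilde{\boldsymbol{P}}$, the bound $\lambda_{\min}\ge Cr_F^{-1}$ for the normalized Gram matrix with the perturbation absorbed by $r_F^2(l-s)^{\tau-1/2}+r_Fd_T(l-s)^{-1/2}\to 0$, and the score bound from Lemma~\ref{lemma_ux}.

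The only real differences are as follows. Where the paper simply cites Remark~3.5 of \cite{zhangIdentifyingCointegrationEigenanalysis2019} for $\lambda_{\min}^{-1}(\int_0^1\boldsymbol{W}_t\boldsymbol{W}_t'\,{\rm d}t)=O_p(r_F)$, you propose a Karhunen--Lo\`eve argument; this works, e.g.\ by keeping the first $2r_F$ cosine modes and using a rectangular Wishart bound. You also make the sure-screening event explicit, which the paper uses only implicitly. However, your claim that the upper bound holds ``even off the event'' is wrong: there the omitted active predictors enter the residual, so ${\rm RSS}(s:l)$ can exceed $\sum_{t=s}^{l-1}(u_{t+1}-\bar u^{(s:l)})^2$. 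This is harmless, because the event has probability tending to one.
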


\begin{proof}
The second inequality in \eqref{RSSwithoutbreak_sl1} follows by the definition of least square estimator.
It suffices to prove that
\begin{equation}
    {\rm RSS}(s:l) \geq \sum_{t=s}^{l-1}(u_{t+1}-\bar{u}^{(s:l)})^2 - Cr_F(r_F+d_T) \label{RSSwithoutbreak_sl}
\end{equation}
holds in probability. Let $\boldsymbol{\beta}^{(s:l)}$ be
 the true coefficient of $\tilde{\boldsymbol{\psi}}_{t}$ in $(s:l)$. Then,
\begin{equation}
    \begin{aligned}
        \widehat{\boldsymbol{\beta}}^{(s:l)} &= \biggl[\sum_{t=s}^{l-1} (\tilde{\boldsymbol{\psi}}_{t} - \bar{\tilde{\boldsymbol{\psi}}}^{(s:l)}) (\tilde{\boldsymbol{\psi}}_{t} - \bar{\tilde{\boldsymbol{\psi}}}^{(s:l)})' \biggr]^{-1} \bigg\{\sum_{t=s}^{l-1} (\tilde{\boldsymbol{\psi}}_{t} - \bar{\tilde{\boldsymbol{\psi}}}^{(s:l)}) [(\boldsymbol{\psi}_{t} - \bar{\boldsymbol{\psi}}^{(s:l)})'\boldsymbol{\beta}^{(s:l)} + (u_{t+1}-\bar{u}^{(s:l)})] \bigg\}\\
        &= \boldsymbol{\beta}^{(s:l)} +  [\widehat{\boldsymbol{\Omega}}_{\tilde{\psi}}^{(s:l)}]^{-1} [\widehat{\boldsymbol{\Omega}}_{\tilde{\psi},u}^{(s:l)}].
    \end{aligned}\label{b_omega_u}
\end{equation}
This yields
\begin{equation}
    \begin{aligned}
        {\rm RSS}(s:l) &= \sum_{t=s}^{l-1} [(u_{t+1} - \bar{u}^{(s:l)}) - (\widehat{\boldsymbol{\beta}}^{(s:l)} - \boldsymbol{\beta}^{(s:l)})'(\tilde{\boldsymbol{\psi}}_{t} - \bar{\tilde{\boldsymbol{\psi}}}^{(s:l)})]^2\\
    &= \sum_{t=s}^{l-1}(u_{t+1}-\bar{u}^{(s:l)})^2 - 2(l-s) [\widehat{\boldsymbol{\Omega}}_{\tilde{\psi}, u}^{(s:l)}]' [\widehat{\boldsymbol{\Omega}}_{\tilde{\psi}}^{(s:l)}]^{-1} [\widehat{\boldsymbol{\Omega}}_{\tilde{\psi}, u}^{(s:l)}]
    \\ &~\quad+ (l-s)[\widehat{\boldsymbol{\Omega}}_{\tilde{\psi}, u}^{(s:l)}]' [\widehat{\boldsymbol{\Omega}}_{\tilde{\psi}}^{(s:l)}]^{-1} [\widehat{\boldsymbol{\Omega}}_{\tilde{\psi}}^{(s:l)}] [\widehat{\boldsymbol{\Omega}}_{\tilde{\psi}}^{(s:l)}]^{-1} [\widehat{\boldsymbol{\Omega}}_{\tilde{\psi}, u}^{(s:l)}]  \\
    &= \sum_{t=s}^{l-1}(u_{t+1}-\bar{u}^{(s:l)})^2 -  (l-s)[\widehat{\boldsymbol{\Omega}}_{\tilde{\psi}, u}^{(s:l)}]' [\widehat{\boldsymbol{\Omega}}_{\tilde{\psi}}^{(s:l)}]^{-1} [\widehat{\boldsymbol{\Omega}}_{\tilde{\psi}, u}^{(s:l)}]\\
    &= \sum_{t=s}^{l-1}(u_{t+1}-\bar{u}^{(s:l)})^2 -  (l-s)[\tilde{\boldsymbol{P}}\widehat{\boldsymbol{\Omega}}_{\tilde{\psi}, u}^{(s:l)}]' [\tilde{\boldsymbol{P}}\widehat{\boldsymbol{\Omega}}_{\tilde{\psi}}^{(s:l)}\tilde{\boldsymbol{P}}]^{-1} [\tilde{\boldsymbol{P}}\widehat{\boldsymbol{\Omega}}_{\tilde{\psi}, u}^{(s:l)} ],
    \end{aligned}\label{withoutbreak0}
\end{equation}
where $
    \tilde{\boldsymbol{P}} = {\rm diag}\{\boldsymbol{I}_{\widehat{s}-\tilde{r}}, (l-s)^{-{1\over 2}} \boldsymbol{I}_{\tilde{r}}\}$ is the scaled matrix.
By similar arguments to the proof of Lemma \ref{lemma_cov}, one can show that
\begin{equation}
    \begin{aligned}
        &\quad~\bigg\|\tilde{\boldsymbol{P}} [\widehat{\boldsymbol{\Omega}}_{\tilde{\psi}}^{(s:l)}] \tilde{\boldsymbol{P}} - {\rm diag}\bigg\{{\rm Cov}(\tilde{\boldsymbol{z}}_t), {\rm Cov}(\tilde{\boldsymbol{\psi}}_{1t}), \int_0^1 (\tilde{\boldsymbol{R}}\boldsymbol{W}(s))(\tilde{\boldsymbol{R}}\boldsymbol{W}(s))' {\rm d}s\bigg\}\bigg\| \\
         &= \|\tilde{\boldsymbol{P}} [\widehat{\boldsymbol{\Omega}}_{\tilde{\psi}}^{(s:l)}] \tilde{\boldsymbol{P}} - \tilde{\boldsymbol{\Omega}}_0\| \\ &= O_p(r_F(l-s)^{\tau-{1\over 2}} + d_T (l-s)^{-{1\over 2}})\stackrel{P}{\longrightarrow} 0,
    \end{aligned}\label{covtilde}
\end{equation}
where $d_T$ is the dimension of $\tilde{\boldsymbol{x}}_t.$ By Remark 3.5 in \cite{zhangIdentifyingCointegrationEigenanalysis2019}, we have that $\lambda_{\min}^{-1}\bigg(\int_0^1 \boldsymbol{W}_t \boldsymbol{W}_t' {\rm d}t\bigg) = O_p(r_F).$ It follows from  Assumption~1(2) that
\begin{equation}
    \begin{aligned}
        \lambda_{\min}(\tilde{\boldsymbol{\Omega}}_0) &\geq \min\bigg\{\lambda_{\min}({\rm Cov}(\boldsymbol{z}_t)), \lambda_{\min}
        ({\rm Cov}(\tilde{\boldsymbol{\psi}}_{1t}))
        , \lambda_{\min}\biggl(\int_0^1 [\tilde{\boldsymbol{R}}\boldsymbol{W}(s)][\tilde{\boldsymbol{R}}\boldsymbol{W}(s)]'{\rm d}s\biggr)\bigg\} \\
        &\geq Cr_F^{-1}
    \end{aligned}\label{mineigen0}
\end{equation}
holds for some $C>0.$ This implies that
\begin{equation}
    \begin{aligned}
        \lambda_{\min}(\tilde{\boldsymbol{P}} [\widehat{\boldsymbol{\Omega}}_{\tilde{\psi}}^{(s:l)}] \tilde{\boldsymbol{P}}) &\geq \lambda_{\min}(\tilde{\boldsymbol{\Omega}}_0) - \|\tilde{\boldsymbol{P}} [\widehat{\boldsymbol{\Omega}}_{\tilde{\psi}}^{(s:l)}] \tilde{\boldsymbol{P}} -\tilde{\boldsymbol{\Omega}}_0\| \\
        &\geq Cr_F^{-1} - O_p(r_F(l-s)^{\tau-{1\over 2}} + d_T (l-s)^{-{1\over 2}})\\
        &\geq {Cr_F^{-1}\over 2}
    \end{aligned}
     \label{mineigen}
\end{equation}
holds in probability by nothing that $r_F^2 (l-s)^{\tau-{1\over 2}} +r_Fd_T(l-s)^{-{1\over 2}}\rightarrow 0$. This yields
\begin{equation}
    \|(\tilde{\boldsymbol{P}} [\widehat{\boldsymbol{\Omega}}_{\tilde{\psi}}^{(s:l)}] \tilde{\boldsymbol{P}})^{-1}\| = O_p(r_F).\label{omega_sl_order}
\end{equation}
Similar to the proof of Lemma \ref{lemma_ux}, we can also show that
\begin{equation}
    \|\tilde{\boldsymbol{P}} [\widehat{\boldsymbol{\Omega}}_{\tilde{\psi}, u}^{(s:l)} ]\| = (l-s)^{-{1\over 2}}[O_p(d_T^{1\over 2}) + O_p(d_T^{1\over 2})+ O_p(r_F^{1\over 2}) ] = O_p((l-s)^{-{1\over 2}}(r_F^{1\over 2}+d_T^{1\over 2})).\label{xi_sl_order}
\end{equation}
Combining \eqref{omega_sl_order} and \eqref{xi_sl_order}, we have \eqref{RSSwithoutbreak_sl} and complete the proof of Lemma~\ref{lemma_RSS0break}.
\end{proof}

Next lemma considers the lower bound of the RSS over the period $(t^0-\Delta_l:t^0+\Delta_l).$ 

\begin{lemma}
    Suppose $t^0 \in S^0, \Delta_l$ satisfies $r_F^2 \Delta_l^{\tau- {1\over 2}} + r_Fd_T\Delta_l^{-{1\over 2}} \rightarrow 0$ as $T \rightarrow \infty.$ Under Assumption 1,
    we have  \begin{equation}
         {\rm RSS}(t^0-\Delta_l:t^0+\Delta_l) \geq \sum_{t=t^0-\Delta_l}^{t^0+\Delta_l-1} (u_{t+1}-\bar{u}^{(t^0-\Delta_l:t^0+\Delta_l)})^2 + C \Delta_l r_F^{-1} 
         \label{RSSonebreak_sl}
    \end{equation}
    holds in probability for some $C > 0.$
    \label{lemma_RSS1break}
\end{lemma}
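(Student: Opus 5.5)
We prove Lemma \ref{lemma_RSS1break} by splitting the window at the break and treating the two halves separately. Write $s=t^0-\Delta_l$, $l=t^0+\Delta_l$, and work in the rotated coordinates $\tilde{\boldsymbol{\psi}}_t$ of \eqref{newcoint}, which is legitimate by \eqref{ols_rotation}. On $(s:t^0)$ the true coefficient of $\tilde{\boldsymbol{\psi}}_t$ is some $\boldsymbol{\beta}_-$, and on $(t^0:l)$ it is some $\boldsymbol{\beta}_+$. Since $t^0$ is a true break, $\boldsymbol{\delta}:=\boldsymbol{\beta}_+-\boldsymbol{\beta}_-\neq\boldsymbol{0}$. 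Its $I(1)$ block vanishes by \eqref{modelf}, so $\boldsymbol{\delta}$ has $\|\boldsymbol{\delta}\|\geq c>0$ in its $I(0)$ coordinates. This uses the sure screening of Theorem 3.1 and the finite, fixed-size coefficient changes.

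The first step is an elementary decomposition. The single-window LSE $\widehat{\boldsymbol{\beta}}^{(s:l)}$ is feasible but suboptimal for each half, so
\[
{\rm RSS}(s:l)\geq \sum_{t=s}^{t^0-1}\bigl[y_{t+1}-\bar y^{(s:l)}-(\widehat{\boldsymbol{\beta}}^{(s:l)})'(\tilde{\boldsymbol{\psi}}_t-\bar{\tilde{\boldsymbol{\psi}}}^{(s:l)})\bigr]^2+\sum_{t=t^0}^{l-1}[\cdots]^2 .
\]
Each half is at least the best fit on that half of a model with one common slope vector $\boldsymbol{b}$ and one common intercept. The combined problem is therefore: minimise over $(\boldsymbol{b},a)$ the quantity $\sum_{t}(u_{t+1}-\bar u+\boldsymbol{d}_t'\,\cdot\,-\,\cdot)^2$, where the signal mismatch is $(\boldsymbol{\beta}_--\boldsymbol{b})'\tilde{\boldsymbol{\psi}}_t$ on the left half and $(\boldsymbol{\beta}_+-\boldsymbol{b})'\tilde{\boldsymbol{\psi}}_t$ on the right half.

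The second step expands the square. Expanding gives $\sum(u_{t+1}-\bar u^{(s:l)})^2$, plus a quadratic form $Q(\boldsymbol{b})=\sum_{\rm left}\|(\boldsymbol{\beta}_--\boldsymbol{b})'\tilde{\boldsymbol{\psi}}^c_t\|^2+\sum_{\rm right}\|(\boldsymbol{\beta}_+-\boldsymbol{b})'\tilde{\boldsymbol{\psi}}^c_t\|^2$ (suitably centred), plus a cross term linear in the noise. For the quadratic term we apply Lemma \ref{lemma_cov} separately on each half, each of length $\Delta_l$, with scaling $\tilde{\boldsymbol{P}}$. The normalised Gram matrices converge to $\tilde{\boldsymbol{\Omega}}_0$, whose minimum eigenvalue is at least $Cr_F^{-1}$ by \eqref{mineigen0}, and the error is negligible under the stated rate condition, exactly as in \eqref{mineigen}. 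Hence
\[
Q(\boldsymbol{b})\geq C r_F^{-1}\Delta_l\bigl(\|\tilde{\boldsymbol{P}}^{-1}(\boldsymbol{\beta}_--\boldsymbol{b})\|^2+\|\tilde{\boldsymbol{P}}^{-1}(\boldsymbol{\beta}_+-\boldsymbol{b})\|^2\bigr)\geq \tfrac{C}{2}r_F^{-1}\Delta_l\|\boldsymbol{\delta}\|^2,
\]
where the last inequality uses $\|\boldsymbol{x}\|^2+\|\boldsymbol{y}\|^2\geq\|\boldsymbol{x}-\boldsymbol{y}\|^2/2$ together with the fact that $\tilde{\boldsymbol{P}}^{-1}$ acts as the identity on the $I(0)$ block. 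The different sample means across halves produce an extra intercept term, which can only add a nonnegative mean-shift contribution, or can be absorbed using \eqref{barfdiffbarF0}-type bounds.

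The third and hardest step is controlling the cross term uniformly in $\boldsymbol{b}$. The cross term equals $2\sum_{\rm halves}(\boldsymbol{\beta}_\pm-\boldsymbol{b})'\tilde{\boldsymbol{P}}^{-1}\cdot\tilde{\boldsymbol{P}}\sum(\tilde{\boldsymbol{\psi}}_t-\bar{\tilde{\boldsymbol{\psi}}})(u_{t+1}-\bar u)$. By Lemma \ref{lemma_ux} as in \eqref{xi_sl_order}, the noise factor is $O_p(\Delta_l^{1/2}(r_F^{1/2}+d_T^{1/2}))$. Writing $v=\|\tilde{\boldsymbol{P}}^{-1}(\boldsymbol{\beta}_--\boldsymbol{b})\|+\|\tilde{\boldsymbol{P}}^{-1}(\boldsymbol{\beta}_+-\boldsymbol{b})\|$, the total excess over $\sum u^2$ is at least
\[
c_1 r_F^{-1}\Delta_l v^2-c_2\Delta_l^{1/2}(r_F+d_T)^{1/2}v .
\]
We split into two cases. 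If $v$ is bounded below by $\|\boldsymbol{\delta}\|/2$, the quadratic term dominates provided $r_F(r_F+d_T)^{1/2}\Delta_l^{-1/2}\to0$, which follows from $r_Fd_T\Delta_l^{-1/2}+r_F^2\Delta_l^{\tau-1/2}\to0$. Minimising the expression over $v\geq\|\boldsymbol{\delta}\|/2$ then gives a lower bound $C\Delta_l r_F^{-1}$ for large $T$. Since $v\geq\|\boldsymbol{\delta}\|/\sqrt2$ always holds, this case covers every $\boldsymbol{b}$. Finally, the error from replacing $\bar u^{(s:l)}$ by half-sample means is $O_p(1)$, which is negligible against $\Delta_l r_F^{-1}$, and this yields \eqref{RSSonebreak_sl}.
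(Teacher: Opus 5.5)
Your proof is correct, and its key step differs from the paper's.

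\textbf{The paper's route.} The paper fixes the pooled estimator $\widehat{\boldsymbol{\beta}}_{12}$ and treats only the left half, asserting that the right half is analogous. It shows $\|\boldsymbol{P}_l^{-1}(\boldsymbol{\beta}_{01}-\widehat{\boldsymbol{\beta}}_{12})\|\geq C$ by splitting $\widehat{\boldsymbol{\beta}}_{12}-\boldsymbol{\beta}_{01}$ into a noise part $\Delta_\beta^1$, a signal part $\Delta_\beta^2$ and a mean-difference part $\Delta_\beta^3$. The signal part is approximately $\frac12\boldsymbol{\theta}_{12}$, because the scaled Gram matrices over the full window and over the right half are said to share a common limit. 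It then shows $\Gamma_2/\Gamma_1=o_p(1)$.

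\textbf{Your route.} You never locate $\widehat{\boldsymbol{\beta}}_{12}$. You treat both halves jointly and bound uniformly over every common slope $\boldsymbol{b}$. The inequality $\|\boldsymbol{x}\|^2+\|\boldsymbol{y}\|^2\geq\frac12\|\boldsymbol{x}-\boldsymbol{y}\|^2$, together with $\tilde{\boldsymbol{P}}^{-1}\boldsymbol{\delta}=\boldsymbol{\delta}$, gives the signal lower bound directly. The joint treatment is essential here: for a single half, the common slope could sit near that half's coefficient. The noise is then absorbed by the scalar quadratic in $v$, under the same rate $r_F(r_F+d_T)^{1/2}\Delta_l^{-1/2}\to0$ that the paper's $\Gamma_2/\Gamma_1$ step needs.

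\textbf{What each buys.} Your route needs only the per-half eigenvalue bound and the per-half noise bound. It also sidesteps the paper's ``converges to the identity'' step, which needs care in the $I(1)$ block. There the two scaled Gram matrices have different random Brownian-functional limits, and the paper's conclusion survives only because $\boldsymbol{\theta}_{12}$ has zero $I(1)$ block and the off-diagonal blocks vanish. The paper's route gives more explicit information: the pooled fit sits near $\boldsymbol{\beta}_{01}+\frac12\boldsymbol{\theta}_{12}$, so each half separately carries part of the excess.

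\textbf{Small tidy-ups.}
\begin{enumerate}
\item Your ``two cases'' collapse to one, since the triangle inequality already gives $v\geq\|\boldsymbol{\delta}\|$ for every $\boldsymbol{b}$.
\item The intercept is handled exactly as in the paper: centre within each half and drop the nonnegative constant term (the paper's $E_0^2$). The alternative via mean-difference bounds is unnecessary.
\item Like the paper, you rely on $G^0\subset\widehat G$ and on a break size bounded away from zero. Neither is among the lemma's stated hypotheses, but you at least make them explicit.
\end{enumerate}
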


\begin{proof}
    For simplicity, we define
     $\boldsymbol{\beta}_{01}, \boldsymbol{\beta}_{02}$ be the coefficient vector  $\boldsymbol{\beta}(t^0-\Delta_l: t^0)$ and $\boldsymbol{\beta}(t^0: t^0+\Delta_l)$  of $\tilde{\boldsymbol{\psi}}_{t}$ over the period $(t^0-\Delta_l: t^0)$ and $ (t^0: t^0+\Delta_l),$ respectively.
     Denote $\widehat{\boldsymbol{\beta}}_{12}$ be the LSE $\widehat{\boldsymbol{\beta}}^{(t^0-\Delta_l: t^0 + \Delta_l)}$ with samples over the period $(t^0-\Delta_l:t^0+\Delta_l).$
     And  write  the sample means $\bar{\tilde{\boldsymbol{\psi}}}^{(t^0-\Delta_l:t^0)}, \bar{\tilde{\boldsymbol{\psi}}}^{(t^0:t^0+\Delta_l)}, \bar{\tilde{\boldsymbol{\psi}}}^{(t^0-\Delta_l:t^0+\Delta_l)}, \bar{u}^{(t^0-\Delta_l:t^0)}, \bar{u}^{(t^0:t^0+\Delta_l)}, \bar{u}^{(t^0-\Delta_l:t^0+\Delta_l)}, \bar{y}^{(t^0-\Delta_l:t^0+\Delta_l)}$ as $\bar{\tilde{\boldsymbol{\psi}}}_{01}, \bar{\tilde{\boldsymbol{\psi}}}_{02}, \bar{\tilde{\boldsymbol{\psi}}}_{0}, \bar{u}_1, \bar{u}_2, \bar{u}_0, \bar{y}_{0}$
      respectively.

    Note that 
    \begin{equation}
        \begin{aligned}
            {\rm RSS}(t^0-\Delta_l:t^0+\Delta_l)  &=  \bigg(\sum_{t=t^0-\Delta_l}^{t^0-1}+\sum_{t=t^0}^{t^0+\Delta_l-1}\bigg) \bigg((y_{t+1} - \bar{y}_0) - [\widehat{\boldsymbol{\beta}}_{12}]' (\tilde{\boldsymbol{\psi}}_{t} - \bar{\tilde{\boldsymbol{\psi}}}_{0})\bigg)^2\\
            &:=  \widetilde{{\rm RSS}}(t^0-\Delta_l:t^0) + \widetilde{{\rm RSS}}(t^0:t^0+\Delta_l).
        \end{aligned}
    \end{equation}
    The proofs for  $\widetilde{{\rm RSS}}(t^0:t^0+\Delta_l)$ and $\widetilde{{\rm RSS}}(t^0-\Delta_l:t^0)$ are similar, here we only show
    \begin{equation}
          \widetilde{{\rm RSS}}(t^0-\Delta_l:t^0) \geq \sum_{t=t^0-\Delta_l}^{t^0-1} (u_{t+1}-\bar{u}_1)^2 + C \Delta_l r_F^{-1} \label{firstpartRSS}
    \end{equation}
    holds in probability for some $C > 0$.

   Since $\bar{\tilde{\boldsymbol{\psi}}}_{0} = {\bar{\tilde{\boldsymbol{\psi}}}_{01}+ \bar{\tilde{\boldsymbol{\psi}}}_{02}\over 2}, ~\bar{u}_0 = {\bar{u}_1+\bar{u}_2\over 2},~\bar{y}_0={\bar{\tilde{\boldsymbol{\psi}}}_{01}'\boldsymbol{\beta}_{01} + \bar{\tilde{\boldsymbol{\psi}}}_{02}'\boldsymbol{\beta}_{02} \over 2} + \bar{u}_0,$ it follows that for any  $t \in [t^0 -\Delta_l,  t^0),$
    \begin{equation}
        \begin{aligned}
            &\quad~(y_{t+1} - \bar{y}_0) - (\widehat{\boldsymbol{\beta}}_{12})' (\tilde{\boldsymbol{\psi}}_{t} - \bar{\tilde{\boldsymbol{\psi}}}_{0})\\
            &=\bigg[ u_{t+1} + \tilde{\boldsymbol{\psi}}_{t}' \boldsymbol{\beta}_{01} - \bar{u}_0  - {\bar{\tilde{\boldsymbol{\psi}}}_{01}'\boldsymbol{\beta}_{01} + \bar{\tilde{\boldsymbol{\psi}}}_{02}'\boldsymbol{\beta}_{02} \over 2}\bigg] - (\widehat{\boldsymbol{\beta}}_{12})' \bigg(\tilde{\boldsymbol{\psi}}_{t} - {\bar{\tilde{\boldsymbol{\psi}}}_{01} + \bar{\tilde{\boldsymbol{\psi}}}_{02}\over 2}\bigg)\\
            &= u_{t+1} - \bar{u}_1 + {\bar{u}_1 - \bar{u}_2\over 2} + (\tilde{\boldsymbol{\psi}}_{t} - \bar{\tilde{\boldsymbol{\psi}}}_{01})' \boldsymbol{\beta}_{01} + {\bar{\tilde{\boldsymbol{\psi}}}_{01}'\boldsymbol{\beta}_{01} - \bar{\tilde{\boldsymbol{\psi}}}_{02}'\boldsymbol{\beta}_{02} \over 2} \\
            &\qquad- (\widehat{\boldsymbol{\beta}}_{12})' \bigg(\tilde{\boldsymbol{\psi}}_{t} - \bar{\tilde{\boldsymbol{\psi}}}_{01}+ {\bar{\tilde{\boldsymbol{\psi}}}_{01} - \bar{\tilde{\boldsymbol{\psi}}}_{02}\over 2}\bigg)\\
            &= u_{t+1} - \bar{u}_1 + (\tilde{\boldsymbol{\psi}}_{t} - \bar{\tilde{\boldsymbol{\psi}}}_{01})' (\boldsymbol{\beta}_{01}-\widehat{\boldsymbol{\beta}}_{12}) + {\bar{\tilde{\boldsymbol{\psi}}}_{01}'(\boldsymbol{\beta}_{01}-\widehat{\boldsymbol{\beta}}_{12}) - \bar{\tilde{\boldsymbol{\psi}}}_{02}'(\boldsymbol{\beta}_{02}-\widehat{\boldsymbol{\beta}}_{12}) + \bar{u}_1 - \bar{u}_2\over 2}\\
            &:= u_{t+1} - \bar{u}_1 + (\tilde{\boldsymbol{\psi}}_{t} - \bar{\tilde{\boldsymbol{\psi}}}_{01})' (\boldsymbol{\beta}_{01}-\widehat{\boldsymbol{\beta}}_{12}) + E_{0}.
        \end{aligned}\label{A87}
    \end{equation}
Note that $\sum_{t=t^0-\Delta_l}^{t^0-1} (u_{t+1}-\bar{u}_1) = 0, \sum_{t=t^0-\Delta_l}^{t^0-1} (\tilde{\boldsymbol{\psi}}_{t} - \bar{\tilde{\boldsymbol{\psi}}}_{01}) = \boldsymbol{0}.$ It follows that 
    \begin{equation}
        \begin{aligned}
            &\widetilde{{\rm RSS}}(t^0-\Delta_l:t^0)\\
            = &\sum_{t=t^0-\Delta_l}^{t^0-1} \bigg(u_{t+1} - \bar{u}_1 + (\tilde{\boldsymbol{\psi}}_{t} - \bar{\tilde{\boldsymbol{\psi}}}_{01})' (\boldsymbol{\beta}_{01}-\widehat{\boldsymbol{\beta}}_{12}) + E_{0}\bigg)^2\\
            = &\sum_{t=t^0-\Delta_l}^{t^0-1} (u_{t+1} - \bar{u}_1)^2 + \Delta_l (\boldsymbol{\beta}_{01}-\widehat{\boldsymbol{\beta}}_{12})' \widehat{\boldsymbol{\Omega}}_{\tilde{\psi}}^{(t^0-\Delta_l:t^0)} (\boldsymbol{\beta}_{01}-\widehat{\boldsymbol{\beta}}_{12})+ E_{0}^2\\
             &+ 2 \Delta_l (\boldsymbol{\beta}_{01}-\widehat{\boldsymbol{\beta}}_{12})' \widehat{\boldsymbol{\Omega}}_{\tilde{\psi}, u}^{(t^0-\Delta_l:t^0)}
             + 2 E_{0} \sum_{t=t^0-\Delta_l}^{t^0-1} (u_{t+1} - \bar{u}_1) \\
             &+ 2E_{0} (\boldsymbol{\beta}_{01}-\widehat{\boldsymbol{\beta}}_{12})' \sum_{t=t^0-\Delta_l}^{t^0-1}(\tilde{\boldsymbol{\psi}}_{t} - \bar{\tilde{\boldsymbol{\psi}}}_{01})\\
            \geq &\sum_{t=t^0-\Delta_l}^{t^0-1} (u_{t+1} - \bar{u}_1)^2 + \Delta_l (\boldsymbol{\beta}_{01}-\widehat{\boldsymbol{\beta}}_{12})' \widehat{\boldsymbol{\Omega}}_{\tilde{\psi}}^{(t^0-\Delta_l:t^0)} (\boldsymbol{\beta}_{01}-\widehat{\boldsymbol{\beta}}_{12})\\
             &+ 2 \Delta_l (\boldsymbol{\beta}_{01}-\widehat{\boldsymbol{\beta}}_{12})' \widehat{\boldsymbol{\Omega}}_{\tilde{\psi}, u}^{(t^0-\Delta_l:t^0)} \\
            = &\sum_{t=t^0-\Delta_l}^{t^0-1} (u_{t+1} - \bar{u}_1)^2 + \Delta_l \lambda_{\min}\bigg(\boldsymbol{P}_l\widehat{\boldsymbol{\Omega}}_{\tilde{\psi}}^{(t^0-\Delta_l:t^0)}\boldsymbol{P}_l \bigg)\|\boldsymbol{P}_l^{-1}[\boldsymbol{\beta}_{01}-\widehat{\boldsymbol{\beta}}_{12}]\|^2 \\ &~\quad+ 2 \Delta_l [\boldsymbol{P}_l^{-1}(\boldsymbol{\beta}_{01}-\widehat{\boldsymbol{\beta}}_{12})]' \boldsymbol{P}_l\widehat{\boldsymbol{\Omega}}_{\tilde{\psi}, u}^{(t^0-\Delta_l:t^0)}\\
            := &\sum_{t=t^0-\Delta_l}^{t^0-1} (u_{t+1} - \bar{u}_1)^2 + \Gamma_1 + \Gamma_2,
        \end{aligned}\label{RSSleftpart}
    \end{equation}
where $\boldsymbol{P}_l = {\rm diag}\{\boldsymbol{I}_{\widehat{s}-\tilde{r}}, \Delta_l^{-{1\over 2}} \boldsymbol{I}_{\tilde{r}}\}.$

Similar  to the proof of Lemmas \ref{lemma_cov} and \ref{lemma_ux}, we can show that
\begin{equation}
\lambda_{\min}^{-1}\bigg(\boldsymbol{P}_l\widehat{\boldsymbol{\Omega}}_{\tilde{\psi}}^{(t^0-\Delta_l:t^0)}\boldsymbol{P}_l \bigg)=O_p(r_F),~~
\|\Delta_l^{1\over 2}\boldsymbol{P}_l\widehat{\boldsymbol{\Omega}}_{\tilde{\psi}, u}^{(t^0-\Delta_l:t^0-\Delta_l)}\|^2 = O_p(r_F+d_T). \label{aa}
\end{equation}
To show \eqref{firstpartRSS}, it suffices to show that
       \begin{equation} \|\boldsymbol{P}_l^{-1}[\boldsymbol{\beta}_{01}-\widehat{\boldsymbol{\beta}}_{12}]\|\geq C > 0 \label{lem5-1}
    \end{equation}
    for some constant $C>0$  and
  \begin{equation}
        {\|\Gamma_2\|/ \|\Gamma_1\|}=o_p(1). \label{lem5-2}
    \end{equation}

 Next, we show \eqref{lem5-1}. Note that %
\begin{equation}
        \begin{aligned}
       &\widehat{\boldsymbol{\beta}}_{12} - \boldsymbol{\beta}_{01}\\
       =& \biggl[\sum_{t=t^0-\Delta_l}^{t^0+\Delta_l-1} (\tilde{\boldsymbol{\psi}}_{t} - \bar{\tilde{\boldsymbol{\psi}}}_{0}) (\tilde{\boldsymbol{\psi}}_{t} - \bar{\tilde{\boldsymbol{\psi}}}_{0})'\biggr]^{-1} \biggl[\sum_{t=t^0-\Delta_l}^{t^0+\Delta_l-1} (\tilde{\boldsymbol{\psi}}_{t}  - \bar{\tilde{\boldsymbol{\psi}}}_{0}) (y_{t+1} - \bar{y}_0) \biggr] - \boldsymbol{\beta}_{01}\\
       =& \biggl[\sum_{t=t^0-\Delta_l}^{t^0+\Delta_l-1} (\tilde{\boldsymbol{\psi}}_{t} - \bar{\tilde{\boldsymbol{\psi}}}_{0}) (\tilde{\boldsymbol{\psi}}_{t} - \bar{\tilde{\boldsymbol{\psi}}}_{0})'\biggr]^{-1} \biggl[\sum_{t=t^0-\Delta_l}^{t^0+\Delta_l-1} (\tilde{\boldsymbol{\psi}}_{t}  - \bar{\tilde{\boldsymbol{\psi}}}_{0})\bigg\{ (y_{t+1} - \bar{y}_0) - (\tilde{\boldsymbol{\psi}}_{t} - \bar{\tilde{\boldsymbol{\psi}}}_{0})'\boldsymbol{\beta}_{01}\bigg\}  \biggr],
    \end{aligned} \label{a91}
    \end{equation}
and
\begin{equation}
\begin{aligned}
    &\sum_{t=t^0-\Delta_l}^{t^0+\Delta_l-1} (\tilde{\boldsymbol{\psi}}_{t}  - \bar{\tilde{\boldsymbol{\psi}}}_{0})\bigg\{ (y_{t+1} - \bar{y}_0) - (\tilde{\boldsymbol{\psi}}_{t} - \bar{\tilde{\boldsymbol{\psi}}}_{0})'\boldsymbol{\beta}_{01}\bigg\}\\
    =  &\sum_{t=t^0-\Delta_l}^{t^0+\Delta_l-1} (\tilde{\boldsymbol{\psi}}_{t}  - \bar{\tilde{\boldsymbol{\psi}}}_{0})(u_{t+1} - \bar{u}_0) + \sum_{t=t^0-\Delta_l}^{t^0-1} (\tilde{\boldsymbol{\psi}}_{t}  - \bar{\tilde{\boldsymbol{\psi}}}_{0})\bigg[\bigg(\tilde{\boldsymbol{\psi}}_{t}'\boldsymbol{\beta}_{01} - {\bar{\tilde{\boldsymbol{\psi}}}_{01}'\boldsymbol{\beta}_{01} +\bar{\tilde{\boldsymbol{\psi}}}_{02}'\boldsymbol{\beta}_{02} \over 2}\bigg)\\
    &-(\tilde{\boldsymbol{\psi}}_{t}- \bar{\tilde{\boldsymbol{\psi}}}_{0})'\boldsymbol{\beta}_{01}\bigg]+ \sum_{t=t^0}^{t^0+\Delta_l-1} (\tilde{\boldsymbol{\psi}}_{t}  - \bar{\tilde{\boldsymbol{\psi}}}_{0})\bigg[\bigg(\tilde{\boldsymbol{\psi}}_{t}'\boldsymbol{\beta}_{02} - {\bar{\tilde{\boldsymbol{\psi}}}_{01}'\boldsymbol{\beta}_{01} +\bar{\tilde{\boldsymbol{\psi}}}_{02}'\boldsymbol{\beta}_{02} \over 2}\bigg)-(\tilde{\boldsymbol{\psi}}_{t} - \bar{\tilde{\boldsymbol{\psi}}}_{0})'\boldsymbol{\beta}_{01}\bigg]\\
    =&\sum_{t=t^0-\Delta_l}^{t^0+\Delta_l-1} (\tilde{\boldsymbol{\psi}}_{t}  - \bar{\tilde{\boldsymbol{\psi}}}_{0})(u_{t+1} - \bar{u}_0) - {1\over 2}\sum_{t=t^0-\Delta_l}^{t^0-1} (\tilde{\boldsymbol{\psi}}_{t}  - \bar{\tilde{\boldsymbol{\psi}}}_{0}) \bar{\tilde{\boldsymbol{\psi}}}_{02}'(\boldsymbol{\beta}_{02} - \boldsymbol{\beta}_{01})\\
     &+ \sum_{t=t^0}^{t^0+\Delta_l-1} (\tilde{\boldsymbol{\psi}}_{t}  - \bar{\tilde{\boldsymbol{\psi}}}_{0})\tilde{\boldsymbol{\psi}}_{t}'(\boldsymbol{\beta}_{02} - \boldsymbol{\beta}_{01})- {1\over 2}\sum_{t=t^0}^{t^0+\Delta_l-1} (\tilde{\boldsymbol{\psi}}_{t}  - \bar{\tilde{\boldsymbol{\psi}}}_{0}) \bar{\tilde{\boldsymbol{\psi}}}_{02}'(\boldsymbol{\beta}_{02} - \boldsymbol{\beta}_{01})
    \\
    :=&\sum_{t=t^0-\Delta_l}^{t^0+\Delta_l-1} (\tilde{\boldsymbol{\psi}}_{t}  - \bar{\tilde{\boldsymbol{\psi}}}_{0})(u_{t+1} - \bar{u}_0) + \sum_{t=t^0}^{t^0+\Delta_l-1} (\tilde{\boldsymbol{\psi}}_{t}  - \bar{\tilde{\boldsymbol{\psi}}}_{0})\tilde{\boldsymbol{\psi}}_{t}'(\boldsymbol{\beta}_{02} - \boldsymbol{\beta}_{01}),
\end{aligned}\label{a94}
\end{equation}
where the last equality follows by $\sum_{t=t^0-\Delta_l}^{t^0+\Delta_l-1} (\tilde{\boldsymbol{\psi}}_{t}-\bar{\tilde{\boldsymbol{\psi}}}_{0})={\bf 0}.$
Define $\boldsymbol{\theta}_{12}=\boldsymbol{\beta}_{02} - \boldsymbol{\beta}_{01},$ we further write the second term in \eqref{a94} as
\begin{equation}
\begin{aligned}
    &\sum_{t=t^0}^{t^0+\Delta_l-1} (\tilde{\boldsymbol{\psi}}_{t}  - \bar{\tilde{\boldsymbol{\psi}}}_{0})\tilde{\boldsymbol{\psi}}_{t}'\boldsymbol{\theta}_{12}\\
     = &\sum_{t=t^0}^{t^0+\Delta_l-1} \bigg(\tilde{\boldsymbol{\psi}}_{t}  - \bar{\tilde{\boldsymbol{\psi}}}_{02} - {\bar{\tilde{\boldsymbol{\psi}}}_{01}-\bar{\tilde{\boldsymbol{\psi}}}_{02}\over 2}\bigg)\bigg(\tilde{\boldsymbol{\psi}}_{t}  - \bar{\tilde{\boldsymbol{\psi}}}_{02} + \bar{\tilde{\boldsymbol{\psi}}}_{02}\bigg)'\boldsymbol{\theta}_{12}\\
    = &\sum_{t=t^0}^{t^0+\Delta_l-1} (\tilde{\boldsymbol{\psi}}_{t}  - \bar{\tilde{\boldsymbol{\psi}}}_{02})(\tilde{\boldsymbol{\psi}}_{t}  - \bar{\tilde{\boldsymbol{\psi}}}_{02})'\boldsymbol{\theta}_{12} - {1\over 2}\Delta_l(\bar{\tilde{\boldsymbol{\psi}}}_{01}-\bar{\tilde{\boldsymbol{\psi}}}_{02}) \bar{\tilde{\boldsymbol{\psi}}}_{02}'\boldsymbol{\theta}_{12}.
\end{aligned}\label{a95}
\end{equation}
Combining \eqref{a91}, \eqref{a94}, \eqref{a95} yields
\begin{equation}
    \begin{aligned}
        &\widehat{\boldsymbol{\beta}}_{12} - \boldsymbol{\beta}_{01}\\
        =& \biggl[\sum_{t=t^0-\Delta_l}^{t^0+\Delta_l-1} (\tilde{\boldsymbol{\psi}}_{t} - \bar{\tilde{\boldsymbol{\psi}}}_{0}) (\tilde{\boldsymbol{\psi}}_{t} - \bar{\tilde{\boldsymbol{\psi}}}_{0})'\biggr]^{-1} \biggl[\sum_{t=t^0-\Delta_l}^{t^0+\Delta_l-1} (\tilde{\boldsymbol{\psi}}_{t}  - \bar{\tilde{\boldsymbol{\psi}}}_{0})(u_{t+1} - \bar{u}_0)\\
        &+ \sum_{t=t^0}^{t^0+\Delta_l-1} (\tilde{\boldsymbol{\psi}}_{t}  - \bar{\tilde{\boldsymbol{\psi}}}_{02})(\tilde{\boldsymbol{\psi}}_{t}  - \bar{\tilde{\boldsymbol{\psi}}}_{02})'\boldsymbol{\theta}_{12}  - {1\over 2}\Delta_l(\bar{\tilde{\boldsymbol{\psi}}}_{01}-\bar{\tilde{\boldsymbol{\psi}}}_{02}) \bar{\tilde{\boldsymbol{\psi}}}_{02}'\boldsymbol{\theta}_{12} \biggr]\\
        := &\Delta_\beta^1 + \Delta_\beta^2 + \Delta_\beta^3,
    \end{aligned}\label{a68}
\end{equation}
Similar to the proofs of Lemmas \ref{lemma_cov} and \ref{lemma_ux}, we have
\begin{equation}
\begin{aligned}
    \|\boldsymbol{P}_l^{-1} \Delta_\beta^1\| &\leq \bigg\|[\boldsymbol{P}_l\widehat{\boldsymbol{\Omega}}_{\tilde{\psi}}^{(t^0-\Delta_l:t^0+\Delta_l)}\boldsymbol{P}_l]^{-1}\bigg\| \bigg\|\boldsymbol{P}_l \widehat{\boldsymbol{\Omega}}_{\tilde{\psi}, u}^{(t^0-\Delta_l:t^0+\Delta_l)}\bigg\|\\ &=O_p(r_F)\times O_p((r_F^{1\over 2}+d_T^{1\over2}) \times (2\Delta_l)^{-{1\over 2}})
    \stackrel{P}{\longrightarrow}0.
    \end{aligned}
    \label{db1vsdb2}
\end{equation}
For $\Delta_\beta^2,$ since the true coefficient of unit-root components $\tilde{\boldsymbol{\psi}}_{2t}$ are all zero, it follows that the last $\tilde{r}$ components in $\boldsymbol{\beta}_{01}, \boldsymbol{\beta}_{02}$ and $\boldsymbol{\theta}_{12}$ are all zero. Thus,
\begin{equation}\boldsymbol{P}_l \boldsymbol{\theta}_{12} = \boldsymbol{\theta}_{12}, \quad \hbox{and}\quad \boldsymbol{P}_l^{-1} \boldsymbol{\theta}_{12} = \boldsymbol{\theta}_{12}.\label{Plinvariant}
\end{equation}

Consequently,
\begin{equation}
    \begin{aligned}
        \boldsymbol{P}_l^{-1} \Delta_\beta^2 &= [\boldsymbol{P}_l\widehat{\boldsymbol{\Omega}}_{\tilde{\psi}}^{(t^0-\Delta_l:t^0+\Delta_l)}\boldsymbol{P}_l]^{-1}\boldsymbol{P}_l \bigg({1\over 2}\widehat{\boldsymbol{\Omega}}_{\tilde{\psi}}^{(t^0:t^0+\Delta_l)}\bigg) \boldsymbol{\theta}_{12}\\ &= {1\over 2}  [\boldsymbol{P}_l\widehat{\boldsymbol{\Omega}}_{\tilde{\psi}}^{(t^0-\Delta_l:t^0+\Delta_l)}\boldsymbol{P}_l]^{-1}[\boldsymbol{P}_l\widehat{\boldsymbol{\Omega}}_{\tilde{\psi}}^{(t^0:t^0+\Delta_l)}\boldsymbol{P}_l] \boldsymbol{\theta}_{12}.
    \end{aligned}
\end{equation}
Note that when $T \rightarrow \infty,$
\begin{equation}
    \boldsymbol{P}_l\widehat{\boldsymbol{\Omega}}_{\tilde{\psi}}^{(t^0-\Delta_l:t^0+\Delta_l)}\boldsymbol{P}_l \stackrel{P}{\longrightarrow} \tilde{\boldsymbol{\Omega}}_0,~ \, \, \boldsymbol{P}_l\widehat{\boldsymbol{\Omega}}_{\tilde{\psi}}^{(t^0:t^0+\Delta_l)}\boldsymbol{P}_l \stackrel{P}{\longrightarrow} \tilde{\boldsymbol{\Omega}}_0.
\end{equation}
where $\tilde{\boldsymbol{\Omega}}_0$ is defined in \eqref{covtilde}. Therefore, $[\boldsymbol{P}_l\widehat{\boldsymbol{\Omega}}_{\tilde{\psi}}^{(t^0-\Delta_l:t^0+\Delta_l)}\boldsymbol{P}_l]^{-1}[\boldsymbol{P}_l\widehat{\boldsymbol{\Omega}}_{\tilde{\psi}}^{(t^0:t^0+\Delta_l)}\boldsymbol{P}_l]$ converges to identity matrix by continuous mapping theorem, and then
\begin{equation}
    \lambda_{\min}\bigg([\boldsymbol{P}_l\widehat{\boldsymbol{\Omega}}_{\tilde{\psi}}^{(t^0-\Delta_l:t^0+\Delta_l)}\boldsymbol{P}_l]^{-1}[\boldsymbol{P}_l\widehat{\boldsymbol{\Omega}}_{\tilde{\psi}}^{(t^0:t^0+\Delta_l)}\boldsymbol{P}_l]\bigg) \geq {1\over 4}
\end{equation}
holds in probability. This implies that
\begin{equation}
    \|\boldsymbol{P}_l^{-1} \Delta_\beta^2\| \geq C>0, \, \, \hbox{in probability.} \label{db2order}
\end{equation}

For $\Delta_\beta^3,$ note that
\begin{equation}
    \begin{aligned}
        \|\boldsymbol{P}_l^{-1}\Delta_\beta^3\|&\leq\bigg\|[ \boldsymbol{P}_l\widehat{\boldsymbol{\Omega}}_{\tilde{\psi}}^{(t^0-\Delta_l:t^0+\Delta_l)}\boldsymbol{P}_l]^{-1}\bigg\|\bigg\|\boldsymbol{P}_l {1\over 2}(\bar{\tilde{\boldsymbol{\psi}}}_{01}-\bar{\tilde{\boldsymbol{\psi}}}_{02}) \bar{\tilde{\boldsymbol{\psi}}}_{02}'\boldsymbol{\theta}_{12}\bigg\|\\ &\leq O_p(r_F)\times \|\boldsymbol{P}_l (\bar{\tilde{\boldsymbol{\psi}}}_{02} - \bar{\tilde{\boldsymbol{\psi}}}_{01})\| \|\bar{\tilde{\boldsymbol{\psi}}}_{02}' \boldsymbol{\theta}_{12}\|. 
    \end{aligned}\label{a133}
\end{equation}
Write $\boldsymbol{\theta}_{12} = [\boldsymbol{\theta}_z',\boldsymbol{\theta}_f',\boldsymbol{0}']',$ we have
\begin{equation}
    \boldsymbol{P}_l (\bar{\tilde{\boldsymbol{\psi}}}_{02} - \bar{\tilde{\boldsymbol{\psi}}}_{01}) = {1\over \Delta_l^{1\over 2}} \begin{bmatrix}
        {1\over \Delta_l^{1\over 2}} \sum_{t=t^0-\Delta_l}^{t^0-1} (\tilde{\boldsymbol{z}}_t - \tilde{\boldsymbol{z}}_{t+\Delta_l})\\
        {1\over \Delta_l^{1\over 2}} \sum_{t=t^0-\Delta_l}^{t^0-1} (\tilde{\boldsymbol{\psi}}_{1,t} - \tilde{\boldsymbol{\psi}}_{1,t+\Delta_l})\\
        {1\over \Delta_l} \sum_{t=t^0-\Delta_l}^{t^0-1} (\tilde{\boldsymbol{\psi}}_{2,t} - \tilde{\boldsymbol{\psi}}_{2,t+\Delta_l})
    \end{bmatrix},\end{equation}
    \begin{equation}\bar{\tilde{\boldsymbol{\psi}}}_{02}' \boldsymbol{\theta}_{12} = {1\over \Delta_l^{1\over 2}} \bigg[
        {1\over \Delta_l^{1\over 2}} \sum_{t=t^0}^{t^0+\Delta_l-1} \tilde{\boldsymbol{z}}_t'\boldsymbol{\theta}_z+
        {1\over \Delta_l^{1\over 2}} \sum_{t=t^0}^{t^0+\Delta_l-1} \tilde{\boldsymbol{\psi}}_{1,t}'\boldsymbol{\theta}_f\bigg].
 \label{barE1}
\end{equation}
Similar  to the proof of Lemma \ref{lemma_ux}, we have
\begin{equation}
    \|\boldsymbol{P}_l (\bar{\tilde{\boldsymbol{\psi}}}_{02} - \bar{\tilde{\boldsymbol{\psi}}}_{01})\| = O_p(\Delta_l^{-{1\over 2}}) \times [O_p(d_T^{1\over 2}) + O_p(d_T^{1\over 2}) + O_p(r_F^{1\over 2})] = O_p(\Delta_l^{-{1\over 2}} (r_F^{1\over 2}+d_T^{1\over 2})).\label{orderbarE1}
\end{equation}
\begin{equation}
    \|\bar{\tilde{\boldsymbol{\psi}}}_{02}' \boldsymbol{\theta}_{12}\| = O_p(\Delta_l^{-{1\over 2}})\times [O_p(d_T^{1\over 2}) + O_p(d_T^{1\over 2})] = O_p(\Delta_l^{-{1\over 2}} d_T^{1\over 2}).\label{orderbarE2}
\end{equation}
Combining \eqref{a133}, \eqref{orderbarE1}, \eqref{orderbarE2} yields
\begin{equation}
    \begin{aligned}
        \|\boldsymbol{P}_l^{-1}\Delta_\beta^3\| = O_p(\Delta_l^{-1} r_Fd_T^{1\over 2}(r_F^{1\over 2}+d_T^{1\over 2})) \stackrel{P}{\longrightarrow}0.\label{db3vsdb2}
    \end{aligned}
\end{equation}
Combining \eqref{db1vsdb2}, \eqref{db2order}, \eqref{db3vsdb2} yields \eqref{lem5-1}.

{\it Proof of \eqref{lem5-2}.}  Using \eqref{aa}, we have
\begin{equation}
    \begin{aligned}
        {\|\Gamma_2\|^2 \over \|\Gamma_1\|} &\leq 8 {\Delta_l^2\|\boldsymbol{P}_l^{-1} (\boldsymbol{\beta}_{01}-\widehat{\boldsymbol{\beta}}_{12})\|^2 \|\boldsymbol{P}_l\widehat{\boldsymbol{\Omega}}_{\tilde{\psi}, u}^{(t^0-\Delta_l:t^0-\Delta_l)}\|^2\over \Delta_l\|\boldsymbol{P}_l^{-1} (\widehat{\boldsymbol{\beta}}_{12} - \boldsymbol{\beta}_{01})\|^2 \lambda_{\min} (\boldsymbol{P}_l \widehat{\boldsymbol{\Omega}}_{\tilde{\psi}}^{(t^0-\Delta_l:t^0-\Delta_l)}\boldsymbol{P}_l)}\\
        &= O_p\bigg({r_F\|\Delta_l^{1\over 2}\boldsymbol{P}_l\widehat{\boldsymbol{\Omega}}_{\tilde{\psi}, u}^{(t^0-\Delta_l:t^0-\Delta_l)}\|^2}\bigg) = O_p(r_F^{2}+r_Fd_T).
    \end{aligned}\label{gamma4order_0}
\end{equation}
This implies that
\begin{equation}
    {\|\Gamma_2\|^2 / \|\Gamma_1\|^2} = O_p(r_F^{2}+r_Fd_T)\times O_p(r_F\Delta_l^{-1})\stackrel{P}{\longrightarrow} 0\label{gamma4order}
\end{equation}
and proves \eqref{lem5-2}. Combining \eqref{lem5-1}, \eqref{lem5-2} yields \eqref{firstpartRSS}.
\end{proof}

\begin{remark}
    For any $m$ points $1=t_0 < t_1 < t_2 <\cdots <t_m< t_{m+1}=T+1,$ we have
    \begin{equation}
        {1\over m+1}\bigg|\sum_{t=1}^T (u_{t+1} - \bar{u})^2 - \sum_{i=1}^{m+1} \sum_{t=t_{i-1}}^{t_i-1} (u_{t+1} - \bar{u}^{(t_{i-1}:t_i)})^2\bigg| = O_p(1).\label{upiece}
    \end{equation}
    In fact, by some elementary computations, we have
    \begin{equation}
        \sum_{t=1}^T (u_{t+1} - \bar{u})^2 - \sum_{i=1}^{m+1} \sum_{t=t_{i-1}}^{t_i-1} (u_{t+1} - \bar{u}^{(t_{i-1}:t_i)})^2 = \sum_{i=1}^{m+1} \biggl({1\over (t_i-t_{i-1})^{1\over 2}} \sum_{t=t_{i-1}}^{t_i-1}u_{t+1} \biggr)^2 - T\bar{u}^2.\label{upiece0}
    \end{equation}
    Thus, by Assumption 1 (2) and a similar argument to \eqref{eet2}, it follows that
    \begin{equation}
        \begin{aligned}
            &~\quad{\rm E} \bigg\{{1\over m+1}\sum_{i=1}^{m+1} \biggl({1\over (t_i-t_{i-1})^{1\over 2}} \sum_{t=t_{i-1}}^{t_i-1}u_{t+1} \biggr)^2\bigg\}^2\\ &= {1\over (m+1)^2} \sum_{i=1}^{m+1}\sum_{j=1}^{m+1}{\rm E} \biggl({1\over (t_i-t_{i-1})^{1\over 2}} \sum_{t=t_{i-1}}^{t_i-1}u_{t+1} \biggr)^2\biggl({1\over (t_j-t_{j-1})^{1\over 2}} \sum_{t=t_{j-1}}^{t_j-1}u_{t+1} \biggr)^2\\
            &= O(1),
        \end{aligned}\label{upiece1}
    \end{equation}
    and
    \begin{equation}
        {\rm E} \bigg({T\bar{u}^2\over m+1}\bigg)^2 = {1\over (m+1)^2}{\rm E}\bigg(T^{-{1\over 2}}\sum_{t=1}^T u_{t+1}\bigg)^4 = O(1),\label{upiece2}
    \end{equation}
    Combining \eqref{upiece0} to \eqref{upiece2}, we have \eqref{upiece}.
\end{remark}

\begin{lemma} Let $h_T$ be defined as in  Section 2.3,
   $r_F^2 h_T^{\tau-{1\over 2}}+r_Fd_Th_T^{-{1\over 2}} \rightarrow 0$ and $ m_0h_Tr_F=o(\min\limits_{1\leq i \leq m_0+1}|t_i^0-t_{i-1}^0|)$ hold. Then, under  Assumptions~1 and 2, there exists a constant $\nu > 0$ such that
   for all $m < m_0$,
    \begin{equation}
        P\bigg\{
            {\rm RSS}(\{\tilde{t}_1, \cdots, \tilde{t}_m\}) > \sum_{t=1}^{T} (u_{t+1}-\bar{u})^2 + \nu r_F^{-1} \min\limits_{1\leq i \leq m_0+1}|t_i^0-t_{i-1}^0|
        \bigg\} \rightarrow 1,
    \end{equation}
    where $(\tilde{t}_1, \cdots, \tilde{t}_m) = \mathop{\arg\min}_{(t_1,\cdots,t_m): \, h_T=O(\min_{i\neq j}|t_i-t_j|)} {\rm RSS}(\{t_1,\cdots,t_m\}).$
    \label{lemmaA3}
\end{lemma}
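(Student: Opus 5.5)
The idea is a pigeonhole argument combined with Lemma \ref{lemma_RSS1break}. Fix $m<m_0$ and any admissible configuration $\{t_1,\dots,t_m\}$ with pairwise gaps of order at least $h_T$. Put $\delta_T=\min_{1\leq i\leq m_0+1}|t_i^0-t_{i-1}^0|$ and $\Delta=[\delta_T/4]$. Consider the $m_0$ disjoint windows $(t_j^0-\Delta:t_j^0+\Delta)$, $j=1,\dots,m_0$. Since there are only $m<m_0$ candidate points, at least one window $(t_{j^*}^0-\Delta:t_{j^*}^0+\Delta)$ contains no $t_i$. That window therefore lies entirely inside one estimated segment $(t_{k-1}:t_k)$, and this segment contains the true break $t_{j^*}^0$. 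The plan is to bound ${\rm RSS}(\{t_1,\dots,t_m\})=\sum_k {\rm RSS}(t_{k-1}:t_k)$ from below segment by segment.

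\textbf{Step 1 (segment containing the uncovered break).} An LSE residual sum over a segment is at least the residual sum over any subinterval computed with that same estimator. So ${\rm RSS}(t_{k-1}:t_k)$ dominates the restriction to the window. That restriction is in turn at least the window's own least-squares RSS, which equals ${\rm RSS}(t_{j^*}^0-\Delta:t_{j^*}^0+\Delta)$, up to a cross term from refitting the mean and slope. By Lemma \ref{lemma_RSS1break} with $\Delta_l=\Delta$ (its rate condition follows from Assumption 3, since $\Delta\gg h_T$), the window RSS is at least $\sum(u_{t+1}-\bar u^{\rm win})^2+C\Delta r_F^{-1}$. On the remaining part of the segment, which is split into at most $m_0+2$ pieces by the window and the other true breaks, I use the within-piece decomposition of Lemma \ref{lemma_RSS0break}. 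This is where the step gets delicate: the segment's single LSE is not the pieces' own LSE, so the honest bound is that the segment RSS is at least the sum over pieces of the pieces' own RSS. That holds because splitting a segment into subpieces and refitting can only decrease RSS. Each piece's own RSS is then bounded below by Lemma \ref{lemma_RSS0break} or Lemma \ref{lemma_RSS1break}.

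\textbf{Step 2 (other segments).} Every other segment $(t_{k-1}:t_k)$ is split at the true breaks it contains into subpieces with no break, and its RSS is at least the sum of their own RSS. Subpieces of length at least $h_T$ are handled by Lemma \ref{lemma_RSS0break}, at a cost of $Cr_F(r_F+d_T)$ each. Very short subpieces, near a true break lying within $h_T$ of some $t_i$, are bounded crudely: their RSS is nonnegative, so the loss is at most the $u$-sum over at most $O(h_T)$ terms, which is $O_p(h_T)$ each. The total number of pieces is $O(m+m_0)=O(m_0)$, so the accumulated loss is $O_p(m_0[r_F(r_F+d_T)+h_T])$.

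\textbf{Step 3 (reassembly).} Remark (\ref{upiece}) converts the sum of piecewise centred $u$-sums back to $\sum_{t=1}^T(u_{t+1}-\bar u)^2$, at a cost of $O_p(m_0)$. Combining the three steps gives
\[
{\rm RSS}(\{t_1,\dots,t_m\})\geq \sum_{t=1}^T(u_{t+1}-\bar u)^2+C\Delta r_F^{-1}-O_p\bigl(m_0[r_F(r_F+d_T)+h_T]\bigr).
\]
The hypothesis $m_0h_Tr_F=o(\delta_T)$, together with Assumption 3 (which gives $r_F(r_F+d_T)\lesssim h_T$), makes the error $o_p(\delta_T r_F^{-1})$. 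Taking $\nu=C/8$ then gives the claim.

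\textbf{Main obstacle: uniformity.} All the above must hold uniformly over configurations $(t_1,\dots,t_m)$, because $\tilde t$ is data dependent. The lower bound is only needed for the minimizer, but it is easiest to prove it uniformly over all configurations. The number of pieces is bounded, and each piece's endpoints range over at most $T^2$ values. I would therefore upgrade the $O_p$ bounds in Lemma \ref{lemma_RSS0break} to maxima over all intervals $(s:l)$ of length at least $h_T$. This costs a $\log T$ factor through moment and maximal inequalities under the mixing condition of Assumption 1 (2). The window bound from Lemma \ref{lemma_RSS1break} involves only the $m_0=O(\log T)$ fixed windows, so a union bound suffices there. The extra $\log T$ is absorbed by the $\log T$ slack in Assumption 3's condition $r_F^2h_T^2\log T/\delta_T\to0$, and this is the step I expect to require the most care.
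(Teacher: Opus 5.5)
Your argument is essentially the paper's: you lower-bound ${\rm RSS}(\{\tilde t_1,\dots,\tilde t_m\})$ by the RSS of the refined partition obtained by adding the other true breaks and the two endpoints of an uncovered window around $t_{j}^0$. You then apply Lemma~\ref{lemma_RSS0break} (or the trivial bound $0$ for pieces of length $O(h_T)$) to the break-free pieces, Lemma~\ref{lemma_RSS1break} to the window, and \eqref{upiece} to reassemble, absorbing the $O_p(m_0h_T)$ loss via $m_0h_Tr_F=o(\min_i|t_i^0-t_{i-1}^0|)$.

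Your common half-width $[\delta_T/4]$ is slightly cleaner than the paper's $\Delta_{i_0}=(t_{i_0}^0-t_{i_0-1}^0)/4$, since it guarantees disjoint windows each containing exactly one break. The uniformity over data-dependent configurations that you flag is not addressed in the paper, which applies the fixed-interval lemmas directly to the $\tilde t_i$. Be aware, though, that with only $4+2\kappa$ moments a maximal bound over all $O(T^2)$ intervals need not cost only a $\log T$ factor.
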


\begin{proof}
     Recall that our true breaks set is given by $S^0 = \{t_1^0, \cdots, t_{m_0}^0\}$ with $m_0 = |S^0|.$ 
If $m < m_0,$ it can be noted that there exist at least one true break $t_{i_0}^0, 1\leq i_0 \leq m_0$ subject to $|\tilde{t}_j - t_{i_0}^0| > \Delta_{i_0}$ for $1\leq j \leq m,$ where $\Delta_{i_0} = \dfrac{t_{i_0}^0 - t_{i_0-1}^0}{4}.$ As the true coefficient vector changes in $t_{i_0}^0 - \Delta_{i_0} \leq t < t_{i_0}^0 + \Delta_{i_0},$ the sum of squared residual or RSS in the time period $[t_{i_0}^0-\Delta_{i_0}, t_{i_0}^0 + \Delta_{i_0}]$ should be large.

Define
\begin{equation}
\begin{aligned}
        A_j(T,\Delta_j) &= \{(t_1, \cdots, t_m): 0 < t_1 < \cdots < t_m < T, |t_s - t_j^0|> \Delta_j^0, h_T=O(|t_i-t_k|) \\&for ~i\neq k, 1 \leq s \leq m\},
        \end{aligned}
\end{equation}
since $m < m_0, \{\tilde{t}_1, \cdots, \tilde{t}_m\}$ must belong to one of the $A_j (T, \Delta_j)'s,$ say $A_{i_0}(T, \Delta_{i_0}).$ It's enough to show that $\forall j, 1\leq j \leq m_0,$
\begin{equation}
        P\bigg\{\min\limits_{(t_1, \cdots, t_m)\in A_j(T, \Delta_j)} {\rm RSS}(\{t_1, \cdots, t_m\}) > \sum_{t=1}^{T} (u_{t+1}-\bar{u})^2 + \nu r_F^{-1}\min\limits_{1\leq j \leq m_0+1}|t_j^0 - t_{j-1}^0|\bigg\} \rightarrow 1,\label{a147}
\end{equation}
for some positive constant $\nu > 0.$
Note that
\begin{equation}
        {\rm RSS}(\{\tilde{t}_1, \cdots, \tilde{t}_m\}) \geq {\rm RSS}(\{\tilde{t}_1, \cdots, \tilde{t}_m, t_1^0, t_2^0, \cdots, t_{j-1}^0, t_{j}^0 - \Delta_{j}, t_j^0 + \Delta_j, t_{j+1}^0, \cdots, t_{m_0}^0\}),
\end{equation}
let $T_s$ be the RSS in the time period $[t_{s-1}^0, t_{s}^0)$ for $s = 1,2, \cdots, j-1, j+2, \cdots, m_0 + 1,$ and $T_{j}, T_{j+1}, T_{m_0+2}$ be the RSS in the time period $[t_{j-1}^0, t_{j}^0 - \Delta_{j}), [t_j^0 + \Delta_j, t_{j+1}^0), [t_{j}^0 - \Delta_j, t_{j}^0 + \Delta_j),$ respectively. We focus on them separately.

(I) Consider $T_k, 1\leq k \leq m_0+1.$ For $s = 1, \cdots, j-1, j+2, \cdots, m_0 + 1,$ if the time period $[t_{s-1}^0, t_s^0)$ contains points say $\{\tilde{t}_l, \cdots, \tilde{t}_{l+k_s}\}\subset \{\tilde{t}_{1}, \cdots, \tilde{t}_m\}$ for some $l = 1,2,\cdots,m, k_s = -1,0,\cdots, m-l$ (here, $k_s = -1$ represents the case where $[t_{s-1}^0, t_s^0)$ contains no points among $\{\tilde{t}_1, \cdots,\tilde{t}_m\}$), then we have
\begin{equation}
    \begin{aligned}
         T_s
        &= {\rm RSS}(t_{s-1}^0 : \tilde{t}_l) + \sum_{h=l}^{l+k_s-1} {\rm RSS}(\tilde{t}_h:\tilde{t}_{h+1}) + {\rm RSS}(\tilde{t}_{l+k_s}: t_s^0)\\
    \end{aligned}\label{piecerss}
\end{equation}
By our RCRS procedure or \eqref{hat_tk}, we have $h_T=O(|\tilde{t}_i-\tilde{t}_j|)$ for any $i\neq j$.
Using Lemma \ref{lemma_RSS0break}, we have for $h=l,\cdots,l+k_s-1,$
\begin{equation}
    \begin{aligned}   \sum_{t=\tilde{t}_h}^{\tilde{t}_{h+1}-1} (u_{t+1}-\bar{u})^2 - Cr_F(r_F+d_T) \leq {\rm RSS}(\tilde{t}_h:\tilde{t}_{h+1})\leq \sum_{t=\tilde{t}_h}^{\tilde{t}_{h+1}-1} (u_{t+1}-\bar{u})^2
    \end{aligned}\label{withoutbreak}
\end{equation}
holds for some $C>0.$
If $|\tilde{t}_l - t_{s-1}^0|^{-1}h_T\rightarrow 0,$ we have
\begin{equation}
    {\rm RSS}(t_{s-1}^0 : \tilde{t}_l)\geq \sum_{t=t_{s-1}^0}^{\tilde{t}_l-1} (u_{t+1}-\bar{u}^{(t_{s-1}^0:\tilde{t}_l)})^2- Cr_F(r_F+d_T).\label{ht1}
\end{equation}
Otherwise, we take $0$ as a lower bound for ${\rm RSS}(t_{s-1}^0 : \tilde{t}_l).$ Then we have
\begin{equation}
    \begin{aligned}
        T_s &\geq \bigg[\sum_{t=t_{s-1}^0}^{\tilde{t}_l-1} (u_{t+1}-\bar{u}^{(t_{s-1}^0:\tilde{t}_l)})^2 + \sum_{h=l}^{l+k_s-1} \sum_{t=\tilde{t}_h}^{\tilde{t}_{h+1}-1} (u_{t+1}-\bar{u}^{(\tilde{t}_{h}:\tilde{t}_{h+1})})^2 + \sum_{t=t_{l+k_s}}^{t_s^0-1} (u_{t+1}-\bar{u}^{(\tilde{t}_{l+k_s}:t_s^0)})^2\bigg]\\ &~\quad - \bigg[\sum_{t=t_{s-1}^0}^{\tilde{t}_l-1} (u_{t+1}-\bar{u}^{(t_s^0:\tilde{t}_l)})^2\bigg]{\mathbf{1}}_{\{|\tilde{t}_l-t_{s-1}^0|=O(h_T)\}} -\bigg[\sum_{t=\tilde{t}_{l+k_s}}^{t_s^0-1} (u_{t+1}-\bar{u}^{(\tilde{t}_{l+k_s}:t_s^0)})^2\bigg]{\mathbf{1}}_{\{|t_{s}^0-\tilde{t}_{l+k_s}|=O(h_T)\}}\\ &~\quad- O_p((k_s+2)r_F(r_F+d_T))
        \\
        &=\bigg[
         \sum_{t=t_{s-1}^0}^{t_{s}^0-1}(u_{t+1}-\bar{u}^{(t_{s-1}^0:t_s^0)})^2 -O_p(k_s + 2)\bigg] -O_p((k_s+2)h_T)- O_p((k_s+2)r_F(r_F+d_T))\\
        &= \sum_{t=t_{s-1}^0}^{t_{s}^0-1}(u_{t+1}-\bar{u}^{(t_{s-1}^0:t_s^0)})^2 -O_p((k_s+2)h_T),~ s = 1,\cdots,j-1,j+2,\cdots,m_0+1.
    \end{aligned}
    \label{Ts}
\end{equation}
The first equality above follows by \eqref{upiece}.
Similarly, we have
\begin{equation}
    T_j \geq \sum_{t=t_{j-1}^0}^{t_j^0 -\Delta_j-1} (u_{t+1}-\bar{u}^{(t_{j-1}^0: t_j^0-\Delta_j)})^2 -O_p((k_j+2)h_T),\label{Tj}
\end{equation}
and
\begin{equation}
    T_{j+1} \geq \sum_{t=t_{j}^0+\Delta_j}^{t_{j+1}^0-1} (u_{t+1}-\bar{u}^{(t_{j-1}^0: t_j^0-\Delta_j)})^2-O_p((k_{j+1}+2)h_T).\label{Tjplus1}
\end{equation}
(II) Consider $T_{m_0+2}.$ By Assumption 3, we have $h_T =o(\min\limits_{1\leq i \leq m_0+1}|t_i^0-t_{i-1}^0|),$ which implies that
$r_F^2 (\min\limits_{1\leq i \leq m_0+1}|t_i^0-t_{i-1}^0|)^{\tau-{1\over 2}}+r_Fd_T(\min\limits_{1\leq i \leq m_0+1}|t_i^0-t_{i-1}^0|)^{-{1\over 2}}=o(1).$
Using Lemma \ref{lemma_RSS1break}, we have
\begin{equation}
    T_{m_0+2} \geq \sum_{t=t_j^0-\Delta_j}^{t_j^0 + \Delta_j-1}(u_{t+1}-\bar{u}^{(t_j^0-\Delta_j:t_j^0+\Delta_j)})^2 + \nu_j \Delta_j r_F^{-1} \label{Tm0plus2}
\end{equation}
holds in probability for some $\nu_j > 0.$
Combining \eqref{Ts} to \eqref{Tm0plus2} and \eqref{upiece} yields
\begin{equation}
    \begin{aligned}
          {\rm RSS}(\{\tilde{t}_1,\cdots,\tilde{t}_m\})
         = \sum_{t=1}^{T} (u_{t+1}-\bar{u})^2 + {1\over 2}\nu_j r_F^{-1} \min\limits_{1\leq i \leq m_0+1}|t_i^0 - t_{i-1}^0|
    \end{aligned}
\end{equation}
holds in probability as $m_0h_Tr_F=o(\min\limits_{1\leq i \leq m_0+1}|t_i^0-t_{i-1}^0|)$. Take $\nu = \min\limits_{1\leq j \leq m_0} \nu_j/16,$ we have \eqref{a147}
and the conclusion of Lemma \ref{lemmaA3} holds.

\end{proof}

\begin{lemma}\label{lem_udiff}
    Suppose $\{a_T\}$ be a integer sequence which diverges as $T\rightarrow \infty.$
    Under Assumption 1, we have
    \begin{equation}
       {\rm E} \bigg[\sum_{t=1}^{a_T} (u_t-\bar{u}^{(1:a_T)})^2 - \sum_{t=a_T+1}^{2a_T} (u_t-\bar{u}^{(a_T+1:2a_T)})^2\bigg]^2=O(a_T).
    \end{equation}
\end{lemma}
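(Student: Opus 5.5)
We decompose each centered sum of squares into its raw sum of squares minus a mean correction, and bound the two resulting differences separately. For a block $B$ of length $a_T$ write
\[
\sum_{t\in B}(u_t-\bar u^{B})^2=\sum_{t\in B}(u_t^2-{\rm E}u_t^2)+a_T{\rm E}u_t^2-a_T(\bar u^{B})^2 .
\]
The constant terms $a_T{\rm E}u_t^2$ are identical for the two blocks $(1:a_T)$ and $(a_T+1:2a_T)$ by stationarity, so they cancel. The difference in the statement therefore equals $D_1-D_2$, where
\[
D_1=\sum_{t=1}^{a_T}(u_t^2-{\rm E}u_t^2)-\sum_{t=a_T+1}^{2a_T}(u_t^2-{\rm E}u_t^2),\qquad
D_2=a_T(\bar u^{(1:a_T)})^2-a_T(\bar u^{(a_T+1:2a_T)})^2 .
\]
By $(x-y)^2\le 2x^2+2y^2$, it suffices to show ${\rm E}D_1^2=O(a_T)$ and ${\rm E}D_2^2=O(a_T)$.

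\emph{Bounding $D_1$.} The sequence $\{u_t^2-{\rm E}u_t^2\}$ is centered and $\alpha$-mixing with the same coefficients as $u_t$. Since ${\rm E}|u_t|^{4+2\kappa}<\infty$, the squares $u_t^2$ have finite moments of order $2+\kappa$. Davydov's covariance inequality then gives
\[
|{\rm Cov}(u_t^2,u_s^2)|\le C[\alpha(|t-s|)]^{1-2/(2+\kappa)} .
\]
Summing over $t,s$ in a block (and across blocks for the cross term) yields ${\rm E}D_1^2\le C a_T\sum_{k\ge0}[\alpha(k)]^{1-2/(2+\kappa)}$. This is the same kind of computation as in \eqref{crossbound} and \eqref{eet2}.

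\emph{Bounding $D_2$.} Write $a_T(\bar u^{B})^2=S_B^2/a_T$ with $S_B=\sum_{t\in B}u_t$. Then ${\rm E}D_2^2\le 2a_T^{-2}({\rm E}S_{B_1}^4+{\rm E}S_{B_2}^4)$. A Yokoyama/Rosenthal-type moment inequality for mixing sequences with $4+2\kappa$ moments gives ${\rm E}S_B^4=O(a_T^2)$, the same fourth-moment bound already used in \eqref{upiece2}. Hence ${\rm E}D_2^2=O(1)$.

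\emph{Main obstacle.} The delicate step is the summability needed for $D_1$. The assumption $\sum_k[\alpha(k)]^{1-1/(2+\kappa)}<\infty$ is stated with exponent $1-1/(2+\kappa)$, while Davydov's inequality applied to the squares produces the smaller exponent $1-2/(2+\kappa)$, so the assumed series does not directly give the required one. I would handle this in one of two ways. The first is to invoke the stronger mixing-rate reading of Assumption 1(2), as the paper implicitly does in \eqref{eet2}, where fourth-order quantities are treated in the same way. The second is to truncate $u_t$ at level $a_T^{\epsilon}$ and control the tail by the $4+2\kappa$ moment. Combining the two bounds gives the claimed $O(a_T)$.
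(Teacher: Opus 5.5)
Your decomposition is the paper's own. The paper writes the difference as $\sum_{t=1}^{a_T}(u_t^2-u_{t+a_T}^2)-a_T[(\bar u^{(1:a_T)})^2-(\bar u^{(a_T+1:2a_T)})^2]$ and then simply asserts that both pieces have the right order ``by Assumption 1 (2)'' and ``a similar argument to \eqref{eet2}''. So the route is the same, and your version is more explicit. However, the obstacle you identify is a genuine gap, and only your first remedy closes it.

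\emph{Truncation cannot help.} Take $\tilde X_t=u_t^2 1_{\{|u_t|\le M\}}$. The covariance bound carrying the exponent you are given is $|{\rm Cov}(\tilde X_0,\tilde X_k)|\le C[\alpha(k)]^{1-1/(2+\kappa)}M^2$, obtained by interpolating $\|\tilde X\|_p$ between $\|\tilde X\|_\infty\le M^2$ and $\|\tilde X\|_{2+\kappa}$. So the truncated part is of order $a_TM^2$. The tail part has no summable rate, so it is only bounded by something like $a_T^2\,{\rm E}\,u_t^4 1_{\{|u_t|>M\}}\le Ca_T^2M^{-2\kappa}$. With $M=a_T^{\epsilon}$ neither term is $O(a_T)$, and no choice of $M$ makes both $O(a_T)$.

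This is not a weakness of the technique. Davydov's (equivalently Rio's) covariance inequality is essentially sharp. For a variable with only $2+\kappa$ moments, the natural requirement for ${\rm Var}(\sum_{t\le n}u_t^2)=O(n)$ is summability of $[\alpha(k)]^{\kappa/(2+\kappa)}$. That does not follow from $\sum_k[\alpha(k)]^{1-1/(2+\kappa)}<\infty$, which is the rate tailored to covariances of $u_t$ itself with $4+2\kappa$ moments, as used in \eqref{crossbound}.

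The same problem sits, unflagged, in your bound for $D_2$. Yokoyama's inequality with $r=4$ and $4+2\kappa$ moments needs $\sum_k(k+1)[\alpha(k)]^{\kappa/(2+\kappa)}<\infty$. This is stronger still, and it is also not implied by Assumption 1 (2) as written; the paper's \eqref{upiece2} uses the same fourth-moment bound without justification.

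The clean repair is to state the strengthening explicitly, for example assume $\sum_k(k+1)[\alpha(k)]^{\kappa/(2+\kappa)}<\infty$ (geometric mixing suffices). Under it, Davydov gives ${\rm E}D_1^2\le Ca_T\sum_k[\alpha(k)]^{\kappa/(2+\kappa)}=O(a_T)$, and Yokoyama gives ${\rm E}D_2^2=O(1)$. With that assumption stated, your argument is complete and coincides with the paper's intended proof, which relies on the same strengthening tacitly (also in \eqref{eet2}). Drop the truncation alternative.
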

\begin{proof}
   Since
    \begin{equation}
        \sum_{t=1}^{a_T} (u_t-\bar{u}^{(1:a_T)})^2 -\sum_{t=a_T+1}^{2a_T} (u_t-\bar{u}^{(a_T+1:2a_T)})^2 = \sum_{t=1}^{a_T} (u_t^2 - u_{t+a_T}^2) - a_T[(\bar{u}^{(1:a_T)})^2-(\bar{u}^{(a_T+1:2a_T)})^2],
    \end{equation}
    it follows that
    $
    {\rm E}\bigg[\sum_{t=1}^{a_T} (u_t-\bar{u}^{(1:a_T)})^2 -\sum_{t=a_T+1}^{2a_T} (u_t-\bar{u}^{(a_T+1:2a_T)})^2\bigg] = 0.
    $
    By Assumption 1 (2), we have
    \begin{equation}
    \begin{aligned}
        &~\quad{\rm E} \bigg[\sum_{t=1}^{a_T} (u_t-\bar{u}^{(1:a_T)})^2 - \sum_{t=a_T+1}^{2a_T} (u_t-\bar{u}^{(a_T+1:2a_T)})^2\bigg]^2 \\ &= a_TO\bigg({\rm E}\bigg[{1\over \sqrt{a_T}}\sum_{t=1}^{a_T} (u_t^2 - u_{t+a_T}^2)\bigg]^2\bigg)+ O\bigg(a_T{\rm E}\bigg[\bigg({1\over a_T}\sum_{t=1}^{a_T} u_t\bigg)^2-\bigg({1\over a_T}\sum_{t=a_T+1}^{2a_T} u_t\bigg)^2\bigg]^2\bigg)\\
        &= O(a_T).
    \end{aligned}\label{lemA7}
    \end{equation}

    Then Lemma \eqref{lem_udiff} concludes by using similar argument to \eqref{eet2}.
\end{proof}

\begin{proof}[\textbf{Proof of Theorem 3.2.}]
   We first prove that: if $S^0 \nsubseteq \mathcal{U}(\widehat{S}_k, h_T)$  in the $k$-th step,  then
\begin{equation}
    P(\widehat{t}_{k+1} \in \mathcal{U}(S^0, h_T)) \longrightarrow 1,\label{thm2consistency}
\end{equation}
where $\widehat{S}_k = \{\widehat{t}_1 , \cdots, \widehat{t}_k\}$ and $\mathcal{U}(S,r)$ is the neighborhood  of  $S$ with radius $r$ defined in \eqref{Usr}.

By Assumption 3 that $h_T(\min\limits_{1\leq i \leq m_0+1}|t_{i}^0-t_{i-1}^0|)^{-1}$ tends to zero as $T \rightarrow \infty$, we see that for any $t$, the  period $(t-h_T, t+h_T)$  contains no more than one true break. If $S^0 \nsubseteq \mathcal{U}(\widehat{S}_k, h_T),$ we can find one $t_i^0\in S^0,$ such that $|\widehat{t}_l-t_i^0|\geq h_T$ holds for any $l=1,2,\cdots,k.$ When we select the $(k+1)-$th candidate break $\widehat{t}_{k+1},$ we aim to find $\mathop{\arg\max}_{t \in [\mathcal{U}(\widehat{S}_k, C_h h_T)]^c} {\rm RSS}(t-h_T:t+h_T)$ for $C_h$ in \eqref{hat_tk}. Thus, to show \eqref{thm2consistency}, it suffices to prove that for $t\notin \mathcal{U}(S^0, h_T),$
\begin{equation}
    P\{{\rm RSS}(t_i^0-h_T:t_i^0+h_T)>{\rm RSS}(t-h_T:t+h_T)\}\rightarrow 1. \label{a282}
\end{equation}

By Lemma \ref{lemma_RSS0break}, we have for 
any $t \notin \mathcal{U}(S^0, h_T),$
    \begin{equation}
        {\rm RSS}(t - h_T:t+h_T) \leq \sum_{s=t-h_T}^{t+h_T-1} (u_{s+1}-\bar{u}^{(t-h_T:t+h_T)})^2.\label{thm2withoutbreak}
    \end{equation}

On the other hand, by Lemma \ref{lemma_RSS1break}, it follows that for $i = 1,2,\cdots,m_0,$
    \begin{equation}
        {\rm RSS}(t_i^0-h_T : t_{i}^0 + h_T) \geq \sum_{s=t_{i}^0-h_T}^{t_{i}^0+h_T-1} (u_{s+1}-\bar{u}^{(t_i^0-h_T : t_{i}^0 + h_T)})^2+ \nu_i h_T r_F^{-1}\label{thm2withabreak}
    \end{equation}
holds in probability for some $\nu_i > 0.$
Combining \eqref{thm2withoutbreak} and \eqref{thm2withabreak} yields
\begin{equation}
\begin{aligned}
    &~\quad{\rm RSS}(t_i^0-h_T : t_{i}^0 + h_T) -  {\rm RSS}(t - h_T: t+h_T)\\ &\geq \sum_{s=t_i^0-h_T}^{t_i^0+h_T-1} \bigg[(u_{s+1}-\bar{u}^{(t_i^0-h_T:t_i^0+h_T)})^2 - (u_{s+(t-t_i^0)+1}-\bar{u}^{(t-h_T:t+h_T)})^2\bigg]  + \nu_i h_T r_F^{-1}\label{thm2-4}
\end{aligned}
\end{equation}
holds in probability for any $t \notin \mathcal{U}(S^0, h_T)$. Using Lemma \ref{lem_udiff}, we have
\begin{equation}
    \bigg|\sum_{s=t_i^0-h_T}^{t_i^0+h_T-1} \bigg[(u_{s+1}-\bar{u}^{(t_i^0-h_T:t_i^0+h_T)})^2 - (u_{s+(t-t_i^0)+1}-\bar{u}^{(t-h_T:t+h_T)})^2\bigg]\bigg|=O_p((2h_T)^{1\over 2}).\label{thm2-3}
\end{equation}
By\eqref{thm2-4}, \eqref{thm2-3} and   $r_F^2=o(h_T^{{1\over 2}-\tau})$ given in Assumption 3, it follows that
\begin{equation}
        {\rm RSS}(t_i^0-h_T : t_{i}^0 + h_T) -  {\rm RSS}(t- h_T: t+h_T) \geq{1\over 2} \nu_i h_T r_F^{-1} > 0 \label{A283}
\end{equation}
holds in probability for $t\notin \mathcal{U}(S^0,h_T)$.
 Thus, \eqref{a282} and then \eqref{thm2consistency} is proved.

    Now, suppose  the last true break $t_i^0$ is estimated in the $k_0$-th step, i.e.
    $S^0 \nsubseteq \mathcal{U}(\widehat{S}_{k_0-1}, h_T), S^0 \subset \mathcal{U}(\widehat{S}_{k_0}, h_T).$
    Similar to the proof of Lemma \ref{lemmaA3}, we have for some $\nu_1 >0,$
    \begin{equation}
        \begin{aligned}
        {\rm RSS}(\widehat{S}_{k_0 - 1}) \geq \sum_{t=1}^{T}(u_{t+1}-\bar{u})^2 + \nu_1 \min\limits_{1\leq i \leq m_0+1}|t_{i}^0 - t_{i-1}^0| r_F^{-1}.
        \end{aligned}\label{RSSk0-1}
    \end{equation}
    To establish the order of RSS ratios, we need to derive an upper bound for ${\rm RSS}(\widehat{S}_{k_0})$ by using similar arguments to \cite{chanGroupLASSOStructural2014}. Similar to the proof of Lemma \eqref{lemma_cov}, we have that for $1\leq s<l\leq T$ with $r_F(l-s)^{-{1\over 4}} + d_T(l-s)^{1\over 2}\rightarrow 0,$ it holds that
    \begin{equation}
        \bigg\|\sum_{t=s}^{l-1} (\tilde{\boldsymbol{\psi}}_{t} - \bar{\tilde{\boldsymbol{\psi}}}^{(s:l)})(\tilde{\boldsymbol{\psi}}_{t} - \bar{\tilde{\boldsymbol{\psi}}}^{(s:l)})'\bigg\| = O_p((l-s)^2 r_F).\label{fforder}
    \end{equation}
   Define $R_T (m_0) = \{\{t_1,\cdots,t_{m_0}\}: t_i \in \widehat{S}_{k_0}, |t_i - t_{i}^0| \leq h_T, i = 1,2,\cdots,m_0\}.$ For $\tilde{S} = \{\tilde{t}_1, \cdots, \tilde{t}_{m_0}\} \in R_T(m_0),$
    we have
    \begin{equation}
        \begin{aligned}
            {\rm RSS}(\tilde{S}) &\leq \sum_{t=1}^{t_1^0-h_T - 1}[y_{t+1} - \bar{y}^{(1:\tilde{t}_1)}- [\widehat{\boldsymbol{\beta}}^{(1: \tilde{t}_1)}]' (\tilde{\boldsymbol{\psi}}_{t} - \bar{\tilde{\boldsymbol{\psi}}}^{(1:\tilde{t}_1)})]^2 + \sum_{j=1}^{m_0-1} \sum_{t=t_j^0 + h_T}^{t_{j+1}^0-h_T-1} [y_{t+1} - \bar{y}^{(\tilde{t}_{j}:\tilde{t}_{j+1})}\\ &\quad- [\widehat{\boldsymbol{\beta}}^{(\tilde{t}_{j}:\tilde{t}_{j+1})}]' (\tilde{\boldsymbol{\psi}}_{t}-\bar{\tilde{\boldsymbol{\psi}}}^{(\tilde{t}_{j}:\tilde{t}_{j+1})})]^2
            + \sum_{t=t_{m_0}^0+h_T}^{T}[y_{t+1} - \bar{y}^{(\tilde{t}_{m_0}:T+1)} - [\widehat{\boldsymbol{\beta}}^{(\tilde{t}_{m_0}:T+1)}]' (\tilde{\boldsymbol{\psi}}_{t} \\ &~\quad -  \bar{\tilde{\boldsymbol{\psi}}}^{(\tilde{t}_{m_0}:T+1)})]^2 + \sum_{j=1}^{m_0} \sum_{t = t_j^0 - h_T}^{t_j^0-1} [y_{t+1} - \bar{y}^{(\tilde{t}_{j}:\tilde{t}_{j+1})} - [\widehat{\boldsymbol{\beta}}^{(\tilde{t}_{j}:\tilde{t}_{j+1})}]' (\tilde{\boldsymbol{\psi}}_{t} - \bar{\tilde{\boldsymbol{\psi}}}^{(\tilde{t}_{j}:\tilde{t}_{j+1})})]^2 \\
            &\quad~+ \sum_{j=1}^{m_0} \sum_{t = t_j^0}^{t_j^0+h_T-1} [y_{t+1} - \bar{y}^{(\tilde{t}_{j}:\tilde{t}_{j+1})} - [\widehat{\boldsymbol{\beta}}^{(\tilde{t}_{j}:\tilde{t}_{j+1})}]
        ' (\tilde{\boldsymbol{\psi}}_{t} - \bar{\tilde{\boldsymbol{\psi}}}^{(\tilde{t}_{j}:\tilde{t}_{j+1})})]^2 \\
            &:= (I) + (II) + (III) + (IV) + (V)+ O_p(m_0h_T^2r_F).
        \end{aligned}\label{L1toL5}
    \end{equation}
    where $\widehat{\boldsymbol{\beta}}^{(1: \tilde{t}_1)}, \cdots, \widehat{\boldsymbol{\beta}}^{(\tilde{t}_{m_0}:T+1)}$ are the least square estimators.
    Similar to \cite{chanGroupLASSOStructural2014},
    \begin{equation}
        \begin{aligned}
            (I) + (II) + (III) &\leq \sum_{t=1}^{t_1^0-h_T - 1} (u_{t+1} - \bar{u}^{(1:\tilde{t}_1)})^2 + \sum_{j=1}^{m_0-1} \sum_{t=t_j^0 + h_T}^{t_{j+1}^0-h_T-1} (u_{t+1} - \bar{u}^{(\tilde{t}_{j}:\tilde{t}_{j+1})})^2 + \\
            &\qquad \sum_{t=t_{m_0}^0+h_T}^{T} (u_{t+1} - \bar{u}^{(\tilde{t}_{m_0}:T+1)})^2
        \end{aligned}\label{L1L2L3}
    \end{equation}
    holds in probability. For (IV), we have
    \begin{equation}
        \begin{aligned}
            &\quad~\sum_{t = t_j^0 - h_T}^{t_j^0-1} [y_{t+1} - \bar{y}^{(\tilde{t}_{j}:\tilde{t}_{j+1})} - [\widehat{\boldsymbol{\beta}}^{(\tilde{t}_{j}:\tilde{t}_{j+1})}]' (\tilde{\boldsymbol{\psi}}_{t} - \bar{\tilde{\boldsymbol{\psi}}}^{(\tilde{t}_{j}:\tilde{t}_{j+1})})]^2 \\
            &= \sum_{t = t_j^0 - h_T}^{t_j^0-1} [u_{t+1} - \bar{u}^{(\tilde{t}_{j}:\tilde{t}_{j+1})} - [\widehat{\boldsymbol{\beta}}^{(\tilde{t}_{j}:\tilde{t}_{j+1})} - \boldsymbol{\beta}^{(t_{j-1}^0 : t_j^0)}]' (\tilde{\boldsymbol{\psi}}_{t} - \bar{\tilde{\boldsymbol{\psi}}}^{(\tilde{t}_{j}:\tilde{t}_{j+1})})]^2\\
         &= \sum_{t = t_j^0 - h_T}^{t_j^0-1} [u_{t+1} - \bar{u}^{(\tilde{t}_{j}:\tilde{t}_{j+1})}]^2 + O_p\bigg(\lambda_{\max}\bigg( \sum_{t = t_j^0 - h_T}^{t_j^0-1} (\tilde{\boldsymbol{\psi}}_{t} - \bar{\tilde{\boldsymbol{\psi}}}^{(\tilde{t}_{j}:\tilde{t}_{j+1})})(\tilde{\boldsymbol{\psi}}_{t} - \bar{\tilde{\boldsymbol{\psi}}}^{(\tilde{t}_{j}:\tilde{t}_{j+1})})'
            \bigg)\\ &~\quad\bigg\|\widehat{\boldsymbol{\beta}}^{(\tilde{t}_{j}:\tilde{t}_{j+1})} - \boldsymbol{\beta}^{(t_{j-1}^0 : t_j^0)}\bigg\|^2\bigg)\\
            &= \sum_{t = t_j^0 - h_T}^{t_j^0-1} [u_{t+1} - \bar{u}^{(\tilde{t}_{j}:\tilde{t}_{j+1})}]^2 + O_p(r_Fh_T^2),
        \end{aligned}
    \end{equation}
    where the last step follows by \eqref{fforder}. This implies that there exists $A_4 > 0$ such that
    \begin{equation}
        (IV) \leq \sum_{j=1}^{m_0} \sum_{t = t_j^0 - h_T}^{t_j^0-1} (u_{t+1} - \bar{u}^{(t_j^0-h_T:t_j^0)})^2  + A_4 m_0 r_F h_T^2 \label{L4}
    \end{equation}
    holds in probability. Similarly, we can show that there exists $A_5 > 0$ such that
    \begin{equation}
        (V) \leq \sum_{j=1}^{m_0} \sum_{t = t_j^0}^{t_j^0+h_T} (u_{t+1} - \bar{u}^{(t_j^0:t_j^0 + h_T)})^2 + A_5 m_0 r_F h_T^2 \label{L5}
    \end{equation}
    holds in probability. Combining \eqref{L1L2L3}, \eqref{L4}, \eqref{L5} 
    yields
    \begin{equation}
        {\rm RSS}(\tilde{S}) \leq \sum_{t=1}^{T}(u_{t+1}-\bar{u})^2 + A_{0,0} m_0 r_F h_T^2 \label{RSSofneigh}
    \end{equation}
    holds in probability for some $A_{0,0} > 0$. Hence,
    \begin{equation}
        {\rm RSS}(\widehat{S}_{k_0}) \leq \sum_{t=1}^{T}(u_{t+1}-\bar{u})^2 + A_{0,0} m_0 r_F h_T^2. \label{RSSk_0upper}
    \end{equation}
    By \eqref{RSSk0-1}, \eqref{RSSk_0upper} and ${m_0r_F^2 h_T^2\over \min\limits_{1\leq i \leq m_0+1}|t_i^0-t_{i-1}^0|}\rightarrow 0$ in Assumption 3,  we have
    \beqn
        {\rm RSS}(\widehat{S}_{k_0-1})-{\rm RSS}(\widehat{S}_{k_0}) &\geq& \nu_1 \min\limits_{1\leq i \leq m_0+1}|t_i^0-t_{i-1}^0|r_F^{-1} - A_{0,0} m_0 r_F h_T^2\nn\\
         &\geq& C\min\limits_{1\leq i \leq m_0+1}|t_i^0-t_{i-1}^0|r_F^{-1} \label{a178}
    \eeqn
    holds in probability for some  $C>0.$

    Next, we consider the lower bound of ${\rm RSS}(\widehat{S}_{k_0}).$ Note that
    \begin{equation}
        {\rm RSS}(\widehat{S}_{k_0}) \geq {\rm RSS}(\widehat{S}_{k_0} \cup S^0).
    \end{equation}
  Since  $k_0 \leq M_T= O(\log T), \, m_0 = O(\log T)$ by Assumption~2 (1), using similar arguments to  \eqref{piecerss} to \eqref{Ts},  we can show that
    \begin{equation}
        \begin{aligned}
            {\rm RSS}(\widehat{S}_{k_0} \cup S^0)
            &\geq \sum_{t=1}^{T}(u_{t+1}-\bar{u})^2 - A_1 (m_0+k_0) h_T
            \\ &\geq \sum_{t=1}^{T}(u_{t+1}-\bar{u})^2 - A_1'  h_T \log T\\
            &\geq \sum_{t=1}^{T}(u_{t+1}-\bar{u})^2 - A_1'  h_T^2 r_F\log T
        \end{aligned}
    \end{equation}
    holds in probability for some $A_1, A_1
    ' > 0$.  Thus,
    \begin{equation}
        {\rm RSS}(\widehat{S}_{k_0}) \geq \sum_{t=1}^{T}(u_{t+1}-\bar{u})^2 - A_1'  h_T^2 r_F\log T \label{RSSk_0lower}
    \end{equation}
    holds in probability. Similar to the arguments in \eqref{RSSk_0upper} and \eqref{RSSk_0lower}, we have for all $i \geq 0,$ there exists $A_{i,i} > 0,$ such that
    \begin{equation}
        {\rm RSS}(\widehat{S}_{k_0+i})\geq \sum_{t=1}^{T}(u_{t+1}-\bar{u})^2 - A_{i,i} h_T^2 r_F\log T.\label{upp1}
    \end{equation}
    Meanwhile, by \eqref{RSSk_0upper}, we have for all $i\geq 0,$
    \begin{equation}
           {\rm RSS}(\widehat{S}_{k_0+i})
            \leq {\rm RSS}(\widehat{S}_{k_0})
            \leq \sum_{t=1}^{T}(u_{t+1}-\bar{u})^2 + A_{0,0} m_0 r_F h_T^2
         \label{upp2}
    \end{equation}
    holds in probability.

    Combining \eqref{upp1} and \eqref{upp2} yields
    \begin{equation}
        \begin{aligned}
        \sum_{t=1}^{T}(u_{t+1}-\bar{u})^2 - A_{i,i} h_T^2 r_F\log T \leq {\rm RSS}(\widehat{S}_{k_0+i}) \leq \sum_{t=1}^{T}(u_{t+1}-\bar{u})^2 + A_{0,0} h_T^2 r_F\log T,
        \end{aligned}\label{RSSk0plus1}
    \end{equation}
    \begin{equation}
        \begin{aligned}
        \sum_{t=1}^{T}(u_{t+1}-\bar{u})^2 - A_{i+1,i+1} h_T^2 r_F\log T \leq {\rm RSS}(\widehat{S}_{k_0+i+1}) \leq \sum_{t=1}^{T}(u_{t+1}-\bar{u})^2 + A_{0,0} h_T^2 r_F\log T,
        \end{aligned}\label{RSSk0plus11}
    \end{equation}
    holds in probability for all $i \geq 0$. Combining \eqref{RSSk0plus1} and \eqref{RSSk0plus11} yields
    \begin{equation}
    \begin{aligned}
    (A_{0,0} + A_{i,i}) h_T^2 r_F \log T
        &\leq {\rm RSS}(\widehat{S}_{k_0+i})-{\rm RSS}(\widehat{S}_{k_0+i+1}) \\ &\leq (A_{0,0} + A_{i+1,i+1}) h_T^2 r_F \log T,
        \end{aligned}
    \end{equation}
    holds in probability, or equivalently
    \begin{equation}
        {\rm RSS}(\widehat{S}_{k_0+i})-{\rm RSS}(\widehat{S}_{k_0+i+1})\asymp h_T^2r_F\log T.
        \label{deltaRSS_afterk0}
    \end{equation}
    Combining \eqref{a178} and \eqref{deltaRSS_afterk0} yields
    \begin{equation}
        \begin{aligned}
        {\rm RSSR}_{k_0+1} &= \dfrac{{\rm RSS}(\widehat{S}_{k_0})-{\rm RSS}(\widehat{S}_{k_0+1})}{{\rm RSS}(\widehat{S}_{k_0-1})-{\rm RSS}(\widehat{S}_{k_0})}\\
        &= O_p\bigg({r_F h_T^2 \log T\over \min\limits_{1\leq i \leq m_0+1}|t_i^0-t_{i-1}^0| r_F^{-1}}\bigg)
        \stackrel{P}{\longrightarrow} 0.
        \end{aligned}\label{rssrk0order}
    \end{equation}
    Using \eqref{deltaRSS_afterk0}, we have for any $i \geq 1,$
    \begin{equation}
        {\rm RSSR}_{k_0+i+1} = \dfrac{{\rm RSS}(\widehat{S}_{k_0+i+1})-{\rm RSS}(\widehat{S}_{k_0+i+2})}{{\rm RSS}(\widehat{S}_{k_0+i})-{\rm RSS}(\widehat{S}_{k_0+i+1})}\asymp 1
        \label{deltaRSSk0}
    \end{equation}
    holds in probability.
    This ends the proof of Theorem 3.2.
\end{proof}
\subsection{The Proof of Theorem 3.3}
\begin{proof}[\textbf{Proof of Theorem 3.3.}]

We divide the proof into two parts: (i) $P(\widehat{\widehat{m}}< m_0) \rightarrow 0;$ (ii) $P(\widehat{\widehat{m}}> m_0) \rightarrow 0.$

We firstly prove that $P(\widehat{\widehat{m}}< m_0) \rightarrow 0.$ Theorem 3.2 implies that with probability tending to 1, there exists $\widehat{S}_T = \{\widehat{t}_{T1}, \cdots, \widehat{t}_{T,m_0}\}\subset \widehat{S}$ such that $|\widehat{t}_{Ti} - t_i^0| \leq h_T$ with probability tending to 1. It's enough to show that for all $S$ satisfying $|S| = m<m_0,$
\begin{equation}
    P({\rm IC}_1(S) \geq {\rm IC}_1(\widehat{S}_T)) \rightarrow 1.
\end{equation}
We've proved in \eqref{RSSofneigh} that
\begin{equation}
    {\rm RSS}(\widehat{S}_T) \leq \sum_{t=1}^{T}(u_{t+1}-\bar{u})^2 + A_{0,0} m_0 r_F h_T^2 \label{RSSofneigh1}
\end{equation}
By Lemma \ref{lemmaA3}, we have if $m<m_0,$ then with probability tending to 1,
\begin{equation}
    {\rm RSS}(\{t_1, \cdots, t_m\}) \geq \sum_{t=1}^{T}(u_{t+1}-\bar{u})^2 + \nu \min\limits_{1\leq i \leq m_0} |t_i^0 - t_{i-1}^0| r_F^{-1},
\end{equation}
where $\nu$ is some positive constant. Then, when $\widehat{\widehat{m}} < m_0,$ with probability tending to 1,
\begin{equation}
    \begin{aligned}
        {\rm IC}_1(\widehat{\widehat{S}}) &= {\rm RSS}(\widehat{\widehat{S}}) + \widehat{\widehat{m}}\omega_{1T}\\
        &\geq \sum_{t=1}^{T}(u_{t+1}-\bar{u})^2 + \nu \min\limits_{1\leq i \leq m_0} |t_i^0 - t_{i-1}^0|r_F^{-1} + \widehat{\widehat{m}}\omega_{1T}\\
        &\geq {\rm RSS}(\widehat{S}_T) + m_0 \omega_{1T} + \nu \min\limits_{1\leq i \leq m_0} |t_i^0 - t_{i-1}^0|r_F^{-1} - (m_0 - \widehat{\widehat{m}})\omega_{1T}- A_{0,0}m_0r_Fh_T^2\\
        &\geq {\rm RSS}(\widehat{S}_T) + m_0 \omega_{1T}\\
        &= {\rm IC}_1(\widehat{S}_T),
    \end{aligned}
\end{equation}
where the last inequality is derived by Assumption 4 (1) that $\dfrac{m_0 \omega_{1T} r_F}{\min\limits_{1\leq i\leq m_0+1}|t_{i}^0 - t_{i-1}^0|} \rightarrow 0$ and $\lim\limits_{T\rightarrow \infty} m_0r_Fh_T^2/\omega_{1T} \leq 1.$ This proves $P(\widehat{\widehat{m}}<m_0)\rightarrow 0.$

Next, we turn to show that $P(\widehat{\widehat{m}}>m_0)\rightarrow 0.$ It's enough to prove that if $\widehat{\widehat{m}}>m_0,$
\begin{equation}
    {\rm IC}_1(\{\widehat{\widehat{t}}_1, \cdots, \widehat{\widehat{t}}_{\widehat{\widehat{m}}}\}) > {\rm IC}_1(\{\widehat{\widehat{t}}_1, \cdots, \widehat{\widehat{t}}_{m_0}\})
\end{equation}
with probability tending to 1, where $\{\widehat{\widehat{t}}_1, \cdots, \widehat{\widehat{t}}_{m_0}\} = \arg\min\limits_{(t_1,\cdots,t_{m_0})\in R_T(m_0)}{\rm RSS}(\{t_1,\cdots,t_{m_0}\})$, and $R_T(m_0)$ is defined in the proof of Theorem 3.2.
In fact, when $\widehat{\widehat{m}} > m_0,$ we have
\begin{equation}
    \begin{aligned}
        {\rm RSS}(\{\widehat{t}_{T1},\cdots,\widehat{t}_{T,m_0}\})
        &\geq {\rm RSS}(\{\widehat{\widehat{t}}_1, \cdots, \widehat{\widehat{t}}_{m_0}\})\\
        &\geq {\rm RSS}(\{\widehat{\widehat{t}}_1, \cdots, \widehat{\widehat{t}}_{\widehat{\widehat{m}}}\})\\
        &\geq {\rm RSS}(\{\widehat{\widehat{t}}_1,\cdots,\widehat{\widehat{t}}_{\widehat{\widehat{m}}}\}\cup S^0).\label{a195}
    \end{aligned}
\end{equation}
By similar argument in \eqref{RSSk_0lower} or \eqref{RSSk0plus1}, we have for some positive constant $C,$
\begin{equation}
    {\rm RSS}(\{\widehat{\widehat{t}}_1,\cdots,\widehat{\widehat{t}}_{\widehat{\widehat{m}}}\}\cup S^0)
    \geq \sum_{t=1}^{T}(u_{t+1}-\bar{u})^2 - Cm_0r_F h_T^2 \label{A320}
\end{equation}
holds with probability tending to 1. Combining \eqref{RSSofneigh}, \eqref{a195} and \eqref{A320} yields
\begin{equation}
    \begin{aligned}
        &\quad {\rm RSS}(\{\widehat{\widehat{t}}_1, \cdots, \widehat{\widehat{t}}_{m_0}\})-{\rm RSS}(\{\widehat{\widehat{t}}_1, \cdots, \widehat{\widehat{t}}_{\widehat{\widehat{m}}}\})\\
        &\leq {\rm RSS}(\{\widehat{t}_{T1},\cdots,\widehat{t}_{T,m_0}\}) - {\rm RSS}(\{\widehat{\widehat{t}}_1,\cdots,\widehat{\widehat{t}}_{\widehat{\widehat{m}}}\}\cup S^0) \\
        &=O_p(m_0 r_F h_T^2) ,
    \end{aligned}
\end{equation}
and then
\begin{equation}
    \begin{aligned}
    &\quad {\rm IC}_1(\{\widehat{\widehat{t}}_1, \cdots, \widehat{\widehat{t}}_{\widehat{\widehat{m}}}\}) - {\rm IC}_1(\{\widehat{\widehat{t}}_1, \cdots, \widehat{\widehat{t}}_{m_0}\})\\
    &\geq (\widehat{\widehat{m}} - m_0)\omega_{1T}- A_{0,0} m_0 r_F h_T^2 > 0
    \end{aligned}
\end{equation}
holds in probability, where the last inequality is derived by $\dfrac{m_0r_F h_T^2}{\omega_{1T}} \rightarrow 0$.  
This yields $P(\widehat{\widehat{m}}>m_0) \rightarrow 0$ and complete the proof of $P(\widehat{\widehat{m}}  = m_0)\rightarrow 1$.

Next, we will show that there exists one consistent estimator $\widehat{\widehat{t}}_i\in \widehat{\widehat{S}}$ for each $t_i^0 \in S^0$ by contradictory.
Suppose that we cannot find consistent estimators in $\widehat{\widehat{S}}$ for one $t_i^0 \in S^0,$ then by Assumption 3: $h_T = o(\min\limits_{1\leq i \leq m_0+1}|t_i^0-t_{i-1}^0|),$  there exists $C > 0$ such that $|\widehat{\widehat{t}}_j - t_i^0|\geq C \min\limits_{1\leq i \leq m_0+1}|t_i^0-t_{i-1}^0|$ holds in probability for all $\widehat{\widehat{t}}_j\in \widehat{\widehat{S}}$. Similar to the proof of Lemma \ref{lemmaA3}, we can show that
\begin{equation}
    {\rm RSS}(\widehat{\widehat{S}}) \geq  \sum_{t=1}^{T} (u_{t+1}-\bar{u})^2 + C' \min\limits_{1\leq i \leq m_0+1} |t_i^0-t_{i-1}^0| \label{contraRSS1}
\end{equation}
holds in probability for some $C' > 0.$ By the conclusion of Theorem 3.2, we can find  $\widehat{t}_i \in \widehat{S}$ satisfying $|\widehat{t}_i- t_i^0| = O_p(h_T).$ By \eqref{RSSofneigh} or \eqref{RSSofneigh1}, one can show that
\begin{equation}
    {\rm RSS}(\widehat{\widehat{S}}\cup \{\widehat{t}_i\})\leq  \sum_{t=1}^{T} (u_{t+1}-\bar{u})^2 + A_{0,0} m_0 r_F h_T\label{contraRSS2}
\end{equation}
holds in probability. Combining \eqref{contraRSS1} and \eqref{contraRSS2} yields
\begin{equation}
    {\rm IC}_1 (\widehat{\widehat{S}}\cup \{\widehat{t}_i\})-{\rm IC}_1(\widehat{\widehat{S}})\leq \omega_{1T} + A_{0,0} m_0 r_F h_T  -C' \min\limits_{1\leq i \leq m_0+1} |t_i^0-t_{i-1}^0| \leq 0
\end{equation}
holds in probability, by the Assumption 3 that $m_0 r_F h_T (\min\limits_{1\leq i \leq m_0+1} |t_i^0-t_{i-1}^0|)^{-1}\rightarrow 0$ and the Assumption 4 (1) that $\omega_{1T} (\min\limits_{1\leq i \leq m_0+1} |t_i^0-t_{i-1}^0|)^{-1}\rightarrow 0$. This leads to contradictory and
ends the proof of Theorem 3.3.
\end{proof}

\subsection{The Proof of Theorem 3.4}

Before proving Theorem 3.4, we first give two required lemmas.
The first one (Lemma \ref{lemma_m}) establish the lower bound for ${\rm RSS}_i(\widehat{\widehat{G}}_i)$ with $(\widehat{\widehat{G}}_i)^c\cap G_i^0\neq \phi.$ The second one (Lemma~\ref{lemma_r}) is about the  lower bound of  ${\rm RSS}_i(\widehat{\widehat{G}}_i)$ for the case where $G^0_i \subsetneq \widehat{\widehat{G}}_i$ and $|G^0_i|<|\widehat{\widehat{G}}_i|.$

\begin{lemma}
    Suppose  $(\widehat{\widehat{G}}_i)^c\cap G_i^0\neq \phi$ and Assumptions 1-3 hold, then
    \begin{equation}
        {\rm RSS}_i(\widehat{\widehat{G}}_i)
        \geq 
        \sum_{t=\widehat{\widehat{t}}_{i-1}}^{\widehat{\widehat{t}}_{i}-1}(u_{t+1}-\bar{u}^{(\widehat{\widehat{t}}_{i-1}:\widehat{\widehat{t}}_{i})})^2 + C \min\limits_{1\leq i \leq m_0+1}|t_i^0-t_{i-1}^0|\label{A.7}
    \end{equation}
    holds in probability for some $C > 0.$
    \label{lemma_m}
\end{lemma}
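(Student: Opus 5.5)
The plan is to work inside the regime $(\widehat{\widehat{t}}_{i-1}:\widehat{\widehat{t}}_{i})$ and use Theorem 3.3. With probability tending to one, $|\widehat{\widehat{S}}|=m_0$ and each estimated endpoint lies within $h_T$ of the corresponding true break. Hence the interval $(\widehat{\widehat{t}}_{i-1}:\widehat{\widehat{t}}_{i})$ contains a core block $(t_{i-1}^0+h_T : t_i^0-h_T)$ free of breaks, of length at least $\min_i|t_i^0-t_{i-1}^0|-2h_T$. It also contains at most two boundary pieces of length $O(h_T)$ where the coefficient may be that of a neighbouring regime.

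Write $G=\widehat{\widehat{G}}_i$ and rotate $\boldsymbol{x}_{Gt}$ by the analogue of $\tilde{\boldsymbol{A}}$ in \eqref{newcoint}, so that ${\rm RSS}_i(G)$ is unchanged, as in \eqref{ols_rotation}. We then split the sum defining ${\rm RSS}_i(G)$ into core and boundary parts. The boundary part is bounded below by $0$ minus the $u$-terms there, so it only costs $O_p(h_T)$ relative to $\sum(u_{t+1}-\bar u)^2$, by \eqref{upiece} and Lemma \ref{lem_udiff}.

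On the core block the true model is $y_{t+1}=\boldsymbol{\gamma}_i'\boldsymbol{x}_t+u_{t+1}$. Let $j\in G_i^0\setminus G$ with $\gamma_{i,j}\neq 0$, and set $\boldsymbol{\varepsilon}_t := \boldsymbol{\gamma}_i'\boldsymbol{x}_t - \boldsymbol{b}'\boldsymbol{x}_{Gt}$. For any coefficient $\boldsymbol{b}$ (in particular the LSE) the core residual equals $u_{t+1}+\boldsymbol{\varepsilon}_t$ minus centering terms. The key point is that the omitted signal cannot be reproduced by $\boldsymbol{x}_{Gt}$.

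By \eqref{modelf} the true signal $\boldsymbol{\alpha}_i'\boldsymbol{z}_t+\boldsymbol{\beta}_i'\boldsymbol{\psi}_{1t}$ is $I(0)$. Any $\boldsymbol{b}$ that loads on the $I(1)$ directions of $\boldsymbol{x}_{Gt}$ makes $\sum_t\boldsymbol{\varepsilon}_t^2$ of order $(\text{length})^2\lambda_{\min}(\int\boldsymbol{W}\boldsymbol{W}')$. Restricted to stationary directions, the population projection error is
\[
\inf_{\boldsymbol{b}}{\rm E}\big(\boldsymbol{\gamma}_i'\boldsymbol{x}_t-\boldsymbol{b}'\boldsymbol{x}_{Gt}\big)^2\ge \lambda_{\min}\big({\rm Cov}(\boldsymbol{\zeta}_t)\big)\,\gamma_{i,j}^2>0 .
\]
This uses Assumption 1(2) and the orthogonal rotation $\boldsymbol{A}$, noting that the $j$-th coordinate is missing from $G$. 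We use the scaled covariance convergence of Lemma \ref{lemma_cov} together with \eqref{mineigen} on the core block. The aim is to show that $\sum_{\rm core}\boldsymbol{\varepsilon}_t^2\ge c\,(\text{core length})$ in probability, uniformly in $\boldsymbol{b}$, since it is a quadratic form in $\boldsymbol{P}^{-1}(\boldsymbol{\gamma}_i-\boldsymbol{b})$ with minimal eigenvalue bounded below.

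The cross term $2\sum u_{t+1}\boldsymbol{\varepsilon}_t$ is handled exactly like $\Gamma_2$ versus $\Gamma_1$ in Lemma \ref{lemma_RSS1break}, via Lemma \ref{lemma_ux}. It is $O_p((r_F+d_T)^{1/2})$ times the square root of the quadratic term, hence negligible. Collecting terms, the core contributes $\sum_{\rm core}(u_{t+1}-\bar u)^2+c\,\min_i|t_i^0-t_{i-1}^0|$. We then add back the boundary pieces and replace the core mean of $u$ by the regime mean at cost $O_p(1)$ via \eqref{upiece}. Finally we absorb $O_p(h_T+r_F(r_F+d_T))=o(\min_i|t_i^0-t_{i-1}^0|)$, using Assumption 3. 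This gives \eqref{A.7} with $C=c/2$.

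The main obstacle is the uniform lower bound on the omitted-variable projection error when $\boldsymbol{x}_{Gt}$ contains $I(1)$ components, especially when the omitted $j$ is an $I(1)$ coordinate active only through a cointegrating vector. In that case the leftover combination $\gamma_{i,j}w_{j,t}+\dots$ can be partly $I(1)$. I would first try redoing the case analysis (Cases 1–3) of the proof of Theorem 3.1 for $G$ in place of $\boldsymbol{x}_{-l,t}$. If the residual signal has an $I(1)$ part, the error is of order length$^2/r_F$, which is even larger. If it is purely $I(0)$, it is nondegenerate by the eigenvalue bound. Either way the bound is at least linear in the length.
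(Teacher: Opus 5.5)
Your proposal is correct in substance, and its outer skeleton matches the paper's. You localize to the break-free core block $(t_{i-1}^0+h_T:t_i^0-h_T)$ via Theorem 3.3, drop the boundary pieces at cost $O_p(h_T)$, pass from the regime LSE to the best coefficient on the core block, and control the noise cross term with Lemma \ref{lemma_ux}. (Like the paper, this uses Theorem 3.3, hence Assumption 4(1), although the lemma lists only Assumptions 1--3.)

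The central step is genuinely different. The paper proceeds in four moves:
\begin{itemize}
\item it adds back all other wrongly eliminated true predictors, so that a single $x_{mt}$ is missing;
\item it writes the core LSE as $\boldsymbol{\beta}^m$ plus an error;
\item it treats that error as negligible via \eqref{betamissorder_z};
\item it lets $(\beta_x^m)^2\widehat{\boldsymbol{\Omega}}_{x_m}$ carry the signal, with separate $I(0)$ and $I(1)$ cases.
\end{itemize}
You instead lower-bound the centred quadratic form in $\boldsymbol{c}=\boldsymbol{\gamma}_i-\boldsymbol{b}$ uniformly in $\boldsymbol{b}$, using only $c_j=\gamma_{i,j}\neq 0$. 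This is the more robust route. The error in \eqref{betasim} contains the omitted-variable term $[\widehat{\boldsymbol{\Omega}}_{\psi_m}]^{-1}\widehat{\boldsymbol{\Omega}}_{\psi_m,x_m}\beta_x^m$. That term does not vanish when $x_{mt}$ is correlated with retained regressors, e.g.\ through correlated $I(0)$ components or the shared trend $\boldsymbol{F}_t$. So $L_{x3}$ in \eqref{lx3} is not $o_p(1)$ in general. What survives is a Schur complement, and that is exactly what your infimum over $\boldsymbol{b}$ controls.

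Your route has further advantages:
\begin{itemize}
\item it needs neither the reduction to one omitted predictor nor the case split;
\item if run on $\boldsymbol{x}_{\widehat{G}\cup G_i^0,t}$, it is uniform over all candidate subsets, which matters because $\widehat{\widehat{G}}_i$ is data-dependent;
\item it shows why only a linear gain should be claimed: the $\Delta_i^2 r_F^{-1}$ gain of Case (B) presumes the retained regressors cannot reproduce the stochastic trend of $x_{mt}$, and otherwise only its idiosyncratic part is lost.
\end{itemize}

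One step needs tightening. The limiting scaled covariance behind \eqref{mineigen} has smallest eigenvalue only of order $r_F^{-1}$, coming from the $\int_0^1\boldsymbol{W}_t\boldsymbol{W}_t'{\rm d}t$ block. Applied to the whole vector $\boldsymbol{P}^{-1}\tilde{\boldsymbol{A}}\boldsymbol{c}$, it yields only $C\min_i|t_i^0-t_{i-1}^0|/r_F$, which is too weak when $r_F\to\infty$. Your closing dichotomy ($I(1)$ part present versus purely $I(0)$) also does not cover $\boldsymbol{b}$ with a small but nonzero trend loading.

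A clean fix, which also avoids the sub-rotation, runs as follows.
\begin{itemize}
\item Write $\boldsymbol{c}'\boldsymbol{x}_t=\boldsymbol{\ell}'\boldsymbol{F}_t+\eta_t$, with $\boldsymbol{\ell}=\boldsymbol{Q}'\boldsymbol{c}_w$ and $\eta_t=\boldsymbol{c}_z'\boldsymbol{z}_t+\boldsymbol{c}_w'\boldsymbol{e}_t$.
\item The variance of $\eta_t$ is at least $\lambda\|\boldsymbol{c}\|^2\ge\lambda\gamma_{i,j}^2$, where $\lambda=\lambda_{\min}({\rm Cov}(\boldsymbol{\zeta}_t))$.
\item Over the core block of length $\Delta_i$, Lemma \ref{lemma_strapp} and the arguments of Lemma \ref{lemma_cov} give a sample variance of $\boldsymbol{c}'\boldsymbol{x}_t$ of at least $\tfrac{\lambda}{2}\|\boldsymbol{c}\|^2+\tfrac{C}{4}\Delta_i r_F^{-1}\|\boldsymbol{\ell}\|^2$. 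This holds uniformly in $\boldsymbol{c}$ supported on $\widehat{G}\cup G_i^0$.
\item For this you must absorb the trend--stationary cross term, of size $\|\boldsymbol{\ell}\|\,\|\boldsymbol{c}\|\,O_p((r_Fd_T)^{1/2})$, via $2ab\le\epsilon a^2+\epsilon^{-1}b^2$. That requires $r_F^2d_T/\Delta_i\to0$, which Assumption 3 implies.
\item The same split shows the noise cross term is $O_p(r_F+d_T^{1/2})$ times the square root of the quadratic form, not $O_p((r_F+d_T)^{1/2})$. It is still negligible against $\Delta_i$.
\end{itemize}
With this in place your argument delivers the stated linear bound.
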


\begin{proof}
 Since $(\widehat{\widehat{G}}_i)^c\cap G_i^0\neq \phi$,  there must exist $m\in (\widehat{\widehat{G}}_i)^c\cap G_i^0.$ 
 We will only consider the RSS under the case where only one predictor $x_{mt}$ is incorrectly eliminated. Specifically, by the fact that
    \begin{equation}
        {\rm RSS}_i(\widehat{\widehat{G}}_i) \geq {\rm RSS}_i\bigg( \widehat{\widehat{G}}_i \cup \bigg\{\{(\widehat{\widehat{G}}_i)^c\cap G_i^0\} \cap \{x_{mt}\}^c\bigg\}\bigg),\label{RSSupp}
    \end{equation}
    it suffices to give a lower bound for ${\rm RSS}_i\bigg(\widehat{\widehat{G}}_i \cup \bigg\{\{(\widehat{\widehat{G}}_i)^c\cap G_i^0\} \cap \{x_{mt}\}^c\bigg\}\bigg)$.

Let $\boldsymbol{x}_t^m$ be the  vector with predictors in $\widehat{\widehat{G}}_i \cup \bigg\{\{(\widehat{\widehat{G}}_i)^c\cap G_i^0\} \cap \{x_{mt}\}^c\bigg\}.$ Similar to \eqref{cointA}, there exists a rotated vector $\boldsymbol{\psi}_{t}^m = ((\boldsymbol{z}_t^m)', (\boldsymbol{\psi}_{1t}^m)',(\boldsymbol{\psi}_{2t}^m)')'$ such that
\begin{equation}
    \begin{aligned}
        &~\quad{\rm RSS}_i\bigg(\widehat{\widehat{G}}_i \cup \bigg\{\{(\widehat{\widehat{G}}_i)^c\cap G_i^0\} \cap \{x_{mt}\}^c\bigg\}\bigg)\\
        &= \sum_{t=\widehat{\widehat{t}}_{i}-1}^{\widehat{\widehat{t}}_{i}-1} \bigg([y_{t+1}-\bar{y}^{(\widehat{\widehat{t}}_{i-1}:\widehat{\widehat{t}}_{i})}] - [\widehat{\boldsymbol{\beta}}^{\tilde{m}}]' [\boldsymbol{\psi}_{t}^{m} - \bar{\boldsymbol{\psi}}^{(\widehat{\widehat{t}}_{i-1}:\widehat{\widehat{t}}_{i})}_m]\bigg)^2\\
        &> \sum_{t=t_{i-1}^0+h_T}^{t_{i}^0-h_T-1} \bigg([y_{t+1}-\bar{y}^{(\widehat{\widehat{t}}_{i-1}:\widehat{\widehat{t}}_{i})}] - [\widehat{\boldsymbol{\beta}}^{\tilde{m}}]' [\boldsymbol{\psi}_{t}^{m} - \bar{\boldsymbol{\psi}}^{(\widehat{\widehat{t}}_{i-1}:\widehat{\widehat{t}}_{i})}_m]\bigg)^2,\label{RSSQm1}
    \end{aligned}
\end{equation}
where
$\widehat{\boldsymbol{\beta}}^{\tilde{m}}$ denotes the LSE based on samples $(y_{t+1},\boldsymbol{\psi}_{t}^m)$ over the period $(\widehat{\widehat{t}}_{i-1}:\widehat{\widehat{t}}_{i}).$

Along the proof line of  \eqref{A87}, we can show that for $t_{i-1}^0+h_T\leq t<t_{i}^0-h_T,$
\begin{equation}
    \begin{aligned}
        &\quad~[y_{t+1}-\bar{y}^{(\widehat{\widehat{t}}_{i-1}:\widehat{\widehat{t}}_{i})}] - [\widehat{\boldsymbol{\beta}}^{\tilde{m}}]' [\boldsymbol{\psi}_{t}^{m} - \bar{\boldsymbol{\psi}}^{(\widehat{\widehat{t}}_{i-1}:\widehat{\widehat{t}}_{i})}_m]\\
        &= [y_{t+1}-\bar{y}^{(t_{i-1}^0+h_T:t_{i}^0-h_T)}] - [\widehat{\boldsymbol{\beta}}^{\tilde{m}}]' [\boldsymbol{\psi}_{t}^m-\bar{\boldsymbol{\psi}}^{(t_{i-1}^0+h_T:t_{i}^0-h_T)}_m] +
        \\ &\quad~\bigg\{[\bar{y}^{(t_{i-1}^0+h_T:t_{i}^0-h_T)} - \bar{y}^{(\widehat{\widehat{t}}_{i-1}:\widehat{\widehat{t}}_{i})}] - [\widehat{\boldsymbol{\beta}}^{\tilde{m}}]' [\bar{\boldsymbol{\psi}}^{(t_{i-1}^0+h_T:t_{i}^0-h_T)}_m - \bar{\boldsymbol{\psi}}^{(\widehat{\widehat{t}}_{i-1}:\widehat{\widehat{t}}_{i})}_m] \bigg\}       \\
        &:= [y_{t+1}-\bar{y}^{(t_{i-1}^0+h_T:t_{i}^0-h_T)}] - [\widehat{\boldsymbol{\beta}}^{\tilde{m}}]' [\boldsymbol{\psi}_{t}^m-\bar{\boldsymbol{\psi}}^{(t_{i-1}^0+h_T:t_{i}^0-h_T)}_m] + E_i.
    \end{aligned}\label{res_m}
\end{equation}
Note that $\sum_{t=t_{i-1}^0+h_T}^{t_{i}^0-h_T-1} \bigg([y_{t+1}-\bar{y}^{(t_{i-1}^0+h_T:t_{i}^0-h_T)}] - [\widehat{\boldsymbol{\beta}}^{\tilde{m}}]' [\boldsymbol{\psi}_{t}^m-\bar{\boldsymbol{\psi}}^{(t_{i-1}^0+h_T:t_{i}^0-h_T)}_m] \bigg) = 0,$ by similar argument to \eqref{RSSleftpart}, we have
\begin{equation}
    \begin{aligned}
    &\quad~{\rm RSS}_i(\widehat{\widehat{G}}_i \cup \{\{(\widehat{\widehat{G}}_i)^c\cap G_i^0\} \cap \{x_{mt}\}^c\}) \\
    &\geq \sum_{t=t_{i-1}^0+h_T}^{t_{i}^0-h_T-1} \bigg([y_{t+1}-\bar{y}^{(t_{i-1}^0+h_T:t_{i}^0-h_T)}] - [\widehat{\boldsymbol{\beta}}^{\tilde{m}}]' [\boldsymbol{\psi}_{t}^m-\bar{\boldsymbol{\psi}}^{(t_{i-1}^0+h_T:t_{i}^0-h_T)}_m] + E_i\bigg)^2\\
    &> \sum_{t=t_{i-1}^0+h_T}^{t_{i}^0-h_T-1} \bigg([y_{t+1}-\bar{y}^{(t_{i-1}^0+h_T:t_{i}^0-h_T)}] - [\widehat{\boldsymbol{\beta}}^{\tilde{m}}]' [\boldsymbol{\psi}_{t}^m-\bar{\boldsymbol{\psi}}^{(t_{i-1}^0+h_T:t_{i}^0-h_T)}_m]\bigg)^2\\
    &\geq \sum_{t=t_{i-1}^0+h_T}^{t_{i}^0-h_T-1} \bigg([y_{t+1}-\bar{y}^{(t_{i-1}^0+h_T:t_{i}^0-h_T)}] - [\widehat{\boldsymbol{\beta}}^{m}]' [\boldsymbol{\psi}_{t}^m-\bar{\boldsymbol{\psi}}^{(t_{i-1}^0+h_T:t_{i}^0-h_T)}_m]\bigg)^2,
\end{aligned}\label{RSSsim2}
\end{equation}
where
$\widehat{\boldsymbol{\beta}}^{m}$ denotes the LSE using samples $(y_{t+1},\boldsymbol{\psi}_{t}^m)$ in the period $(t_{i-1}^0+h_T:t_{i}^0-h_T).$
Denote $\boldsymbol{\beta}^m, \beta_x^m$ be the true coefficient of $\boldsymbol{\psi}_{t}^m$ and $x_{mt}.$
Note that
\begin{equation}\begin{aligned}
\widehat{\boldsymbol{\beta}}^{m} &= \boldsymbol{\beta}^m
+ \bigg(\sum_{t=t_{i-1}^0+h_T}^{t_{i}^0-h_T-1} [\boldsymbol{\psi}_{t}^m-\bar{\boldsymbol{\psi}}^{(t_{i-1}^0+h_T:t_{i}^0-h_T)}_m] [\boldsymbol{\psi}_{t}^m-\bar{\boldsymbol{\psi}}^{(t_{i-1}^0+h_T:t_{i}^0-h_T)}_m]'\bigg)^{-1} \\
&\quad~\bigg\{\bigg(\sum_{t=t_{i-1}^0+h_T}^{t_{i}^0-h_T-1} [\boldsymbol{\psi}_{t}^{m}-\bar{\boldsymbol{\psi}}^{(t_{i-1}^0+h_T:t_{i}^0-h_T)}_m] [x_t^m - \bar{x}^{(t_{i-1}^0+h_T:t_{i}^0-h_T)}_m] \beta^m_x\bigg)+ 
\\ &\quad\bigg(
\sum_{t=t_{i-1}^0+h_T}^{t_{i}^0-h_T-1} [\boldsymbol{\psi}_{t}^m-\bar{\boldsymbol{\psi}}^{(t_{i-1}^0+h_T:t_{i}^0-h_T)}_m] [u_{t+1} -  \bar{u}^{(t_{i-1}^0+h_T:t_{i}^0-h_T)}]\bigg)\bigg\}\\
&= \boldsymbol{\beta}^m + \boldsymbol{P}_m \bigg\{\boldsymbol{P}_m\widehat{\boldsymbol{\Omega}}_{\psi_m}^{(t_{i-1}^0+h_T:t_{i}^0-h_T)}\boldsymbol{P}_m\bigg\}^{-1} \bigg\{\boldsymbol{P}_m [\widehat{\boldsymbol{\Omega}}_{\psi_m, x_m}^{(t_{i-1}^0+h_T:t_{i}^0-h_T)} \beta_{x}^m+\widehat{\boldsymbol{\Omega}}_{\psi_m, u}^{(t_{i-1}^0+h_T:t_{i}^0-h_T)}]\bigg\},
\end{aligned}\label{betasim}
\end{equation}
where $\boldsymbol{P}_m$ is the corresponding scaled matrix.
By similar arguments to the proofs of Lemmas \ref{lemma_cov} and \ref{lemma_ux}, we have
\begin{equation}
\begin{aligned}
    \|\boldsymbol{P}_m^{-1} (\widehat{\boldsymbol{\beta}}^m-\boldsymbol{\beta}^m)\|
    &= O_p(r_F) \times O_p(\Delta_i^{-{1\over 2}}r_F^{1\over 2} + d_T^{1\over 2})
    = O_p(\Delta_i^{-{1\over 2}}r_F(r_F^{1\over 2}+d_T^{1\over 2})),\label{betamissorder_z}
\end{aligned}
\end{equation} where $\Delta_i=t_{i}^0-t_{i-1}^0-2h_T.$
We rewrite \eqref{RSSsim2} by
\begin{equation}
    \begin{aligned}
        &~\quad{\rm RSS}_i(\widehat{\widehat{G}}_i \cup \{\{(\widehat{\widehat{G}}_i)^c\cap G_i^0\} \cap \{x_{mt}\}^c\})\\ &\geq \sum_{t=t_{i-1}^0+h_T}^{t_{i}^0-h_T-1} \bigg((u_{t+1}-\bar{u}^{(t_{i-1}^0+h_T:t_{i}^0-h_T)}) + (\boldsymbol{\beta}^m -\widehat{\boldsymbol{\beta}}^{m})'(\boldsymbol{\psi}_{t}^m - \bar{\boldsymbol{\psi}}^{(t_{i-1}^0+h_T:t_{i}^0-h_T)}_m) \\ &\quad~+ \beta_x^m (x_{mt} - \bar{x}^{(t_{i-1}^0+h_T:t_{i}^0-h_T)}_m)\bigg)^2\\
        &> \sum_{t=t_{i-1}^0+h_T}^{t_{i}^0-h_T-1} [u_{t+1}-\bar{u}^{(t_{i-1}^0+h_T:t_{i}^0-h_T)}]^2 + \Delta_i\bigg\{(\beta_x^{m})^2 \boldsymbol{\widehat{\Omega}}_{x_m}^{(t_{i-1}^0+h_T:t_{i}^0-h_T)}
        + \beta_x^{m}\boldsymbol{\widehat{\Omega}}_{u, x_m}^{(t_{i-1}^0+h_T:t_{i}^0-h_T)}
        \\ &~\quad + \beta_x^{m} (\boldsymbol{\beta}^m -\widehat{\boldsymbol{\beta}}^{m})' \widehat{\boldsymbol{\Omega}}_{\psi_m, x_m}^{(t_{i-1}^0+h_T:t_{i}^0-h_T)}
         + (\boldsymbol{\beta}^m -\widehat{\boldsymbol{\beta}}^{m})'\widehat{\boldsymbol{\Omega}}_{\psi_m, u}^{(t_{i-1}^0+h_T:t_{i}^0-h_T)}\bigg\}\\
        &:= \sum_{t=t_{i-1}^0+h_T}^{t_{i}^0-h_T-1} [u_{t+1}-\bar{u}^{(t_{i-1}^0+h_T:t_{i}^0-h_T)}]^2 + \Delta_i(L_{x1} + L_{x2} + L_{x3} + L_{x4}),
    \end{aligned}\label{res2_m}
\end{equation}
where the second inequality follows by $\sum_{t=t_{i-1}^0+h_T}^{t_{i}^0-h_T-1}(u_{t+1}-\bar{u}^{(t_{i-1}^0+h_T:t_{i}^0-h_T)}) = 0, \sum_{t=t_{i-1}^0+h_T}^{t_{i}^0-h_T-1}(\boldsymbol{\psi}_{t}^m - \bar{\boldsymbol{\psi}}^{(t_{i-1}^0+h_T:t_{i}^0-h_T)}_m) = \boldsymbol{0}, \sum_{t=t_{i-1}^0+h_T}^{t_{i}^0-h_T-1} (x_t^m - \bar{x}^{(t_{i-1}^0+h_T:t_{i}^0-h_T)}_m) = 0, \sum_{t=t_{i-1}^0+h_T}^{t_{i}^0-h_T-1}[(\boldsymbol{\beta}^m -\widehat{\boldsymbol{\beta}}^{m})'(\boldsymbol{\psi}_{t}^m - \bar{\boldsymbol{\psi}}^{(t_{i-1}^0+h_T:t_{i}^0-h_T)}_m)]^2>0.$
Here we will discuss the scenario for both $x_{mt}$ is stationary and non-stationary.

{\bf Case (A). $x_{mt}$ is an $I(0)$ process.}

Consider $L_{x1}.$ By similar arguments to the proofs of Lemma \ref{lemma_cov}, it can be shown that there exists a constant $C_z > 0$ such that
\begin{equation}
    L_{x1}\geq C_z \label{lx1}
\end{equation}
holds in probability.

Consider $L_{x2}.$ By similar arguments to the proofs of Lemma \ref{lemma_ux}, we have
\begin{equation}
    \|L_{x2}\| = O_p(\Delta_i^{-{1\over 2}}). \label{lx2}
\end{equation}
Consider $L_{x3}.$ Using \eqref{betamissorder_z} and similar arguments in the proof of Lemma \ref{lemma_ux}, we have
\begin{equation}
\begin{aligned}
    \|L_{x3}\| &\leq \|\boldsymbol{P}_m^{-1} (\boldsymbol{\beta}^m-\widehat{\boldsymbol{\beta}}^m)\| \|\boldsymbol{P}_m \widehat{\boldsymbol{\Omega}}_{\psi_m, x_m}^{(t_{i-1}^0+h_T:t_{i}^0-h_T)}\| |\beta_x^{m}|\\
    &= O_p(\Delta_i^{-{1\over 2}}r_F (r_F^{1\over 2}+d_T^{1\over 2}))\times O_p(\Delta_i^{-{1\over 2}} (r_F^{1\over 2}+d_T^{1\over 2}))\times O(1)\\ &= O_p(\Delta_i^{-1}r_F(r_F+d_T)) = o_p(1),
\end{aligned}\label{lx3}
\end{equation}
by the Assumption 3 that $\Delta_i = t^0_{i} - t^0_{i-1} - 2h_T\asymp |t_{i+1}^0-t_i^0|, r_F(r_F+d_T) (\min\limits_{1\leq i \leq m_0+1} |t_i^0-t_{i-1}^0|)^{-1}\rightarrow 0.$
Similarly, we have
\begin{equation}
    \begin{aligned}
        \|L_{x4}\| &= O_p(\Delta_i^{-1}r_F(r_F+d_T)) = o_p(1).
    \end{aligned}\label{lx4}
\end{equation}
Combining \eqref{lx1} to \eqref{lx4} and \eqref{upiece} yields
\begin{equation}
\begin{aligned}
     {\rm RSS}_i(\widehat{\widehat{G}}_i)  &\geq\sum_{t=\widehat{\widehat{t}}_{i-1}}^{\widehat{\widehat{t}}_{i}-1} (u_{t+1} - \bar{u}^{(t_{i-1}^0+h_T:t_{i}^0-h_T)})^2 + {1\over 2} C_z \Delta_i
    \\ &\geq \sum_{t=\widehat{\widehat{t}}_{i-1}}^{\widehat{\widehat{t}}_{i}-1} (u_{t+1} - \bar{u}^{(\widehat{\widehat{t}}_{i-1}:\widehat{\widehat{t}}_{i})})^2 +{1\over 4}C_z \Delta_i
\end{aligned}\label{RSSmissz}
\end{equation}
holds in probability, by the Assumption 3 that $h_T=o(\Delta_i).$

{\bf Case (B). $x_{mt}$ (or $w_{m-p_z,t}$) is an $I(1)$ process.}

By the factor structure \eqref{panic}, there exists $\boldsymbol{b}_{m-p_z} \in \mathbb{R}^{r_F}$ and process $\{e_{(m-p_z),t}\}$ such that $x_{mt} = \boldsymbol{b}_{m-p_z}' \boldsymbol{F}_t + \boldsymbol{e}_{m-p_z,t}.$ By similar arguments in the proofs of Lemma \ref{lemma_cov}, we have
\begin{equation}
    L_{x1} = (\beta_x^m)^2{1\over \Delta_i} \sum_{t=t_{i-1}^0+h_T}^{t_{i}^0-h_T-1} (\boldsymbol{b}_{m-p_z}' \boldsymbol{F}_t + \boldsymbol{e}_{m-p_z,t})^2 \geq C_w \Delta_i r_F^{-1} \label{lw1}
\end{equation}
holds in probability for some $C_w > 0.$

Similar to \eqref{lx2} to \eqref{lx4}, it can be shown that
\begin{equation}
    \|L_{x2}\| =O_p(1)=o_p(\|L_{x1}\|); \label{lw2}
\end{equation}
\begin{equation}
        \|L_{x3}\|=O_p(r_F(r_F+d_T))=o_p(\|L_{x1}\|);\label{lw3}
\end{equation}
\begin{equation}
        \|L_{x4}\|=O_p(r_F(r_F+d_T))=o_p(\|L_{x1}\|);\label{lw4}
\end{equation}
Combining \eqref{lw1} to \eqref{lw4} yields
\begin{equation}
    \begin{aligned}
        {\rm RSS}_i(\widehat{\widehat{G}}_i) &\geq \sum_{t=t_{i-1}^0+h_T}^{t_{i}^0-h_T-1} [u_{t+1}-\bar{u}^{(t_{i-1}^0+h_T:t_{i}^0-h_T)}]^2  + {1\over 2} C_w \Delta_i^2 r_F^{-1} \\
        &\geq \sum_{t=\widehat{\widehat{t}}_{i-1}}^{\widehat{\widehat{t}}_{i}-1}(u_{t+1}-\bar{u}^{(\widehat{\widehat{t}}_{i-1}:\widehat{\widehat{t}}_{i})})^2 + {1\over 4}C_w \Delta_i^2r_F^{-1}
    \end{aligned}\label{deltaRSS_mw0}
\end{equation}
holds in probability. Combining \eqref{RSSmissz}, \eqref{deltaRSS_mw0} and the Assumption 3 that $\Delta_i = t_{i}^0 - t_{i-1}^0 - 2h_T \asymp |t_{i}^0-t_{i-1}^0|, r_F^2 \Delta_i^{\tau-{1\over 2}} =o(1)$ concludes \eqref{A.7}.
\end{proof}

\begin{lemma}
    Suppose Assumptions 1-3 holds, and  $G^0_i\subsetneq \widehat{\widehat{G}}_i, |G^0_i|<|\widehat{\widehat{G}}_i|.$
    Then, we have
    \begin{equation}
        {\rm RSS}_i(\widehat{\widehat{G}}_i)
        \geq \sum_{t=\widehat{\widehat{t}}_{i-1}}^{\widehat{\widehat{t}}_{i}-1}(u_{t+1}-\bar{u}^{(\widehat{\widehat{t}}_{i-1}:\widehat{\widehat{t}}_{i})})^2- C' h_T
    \end{equation}
    holds in probability for some $C' > 0.$
    \label{lemma_r}
\end{lemma}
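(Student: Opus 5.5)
The approach is to compare the estimated regime $(\widehat{\widehat{t}}_{i-1}:\widehat{\widehat{t}}_{i})$ with the true regime it approximates and split the sum accordingly. By Theorem 3.3, with probability tending to one $|\widehat{\widehat{t}}_{j}-t_j^0|\le h_T$ for all $j$. Hence the window $(\widehat{\widehat{t}}_{i-1}:\widehat{\widehat{t}}_{i})$ consists of three parts. The core $(t_{i-1}^0+h_T:t_i^0-h_T)$ contains no true break and carries coefficient $\boldsymbol{\gamma}_i$. The two boundary strips each have length at most $2h_T$; part of a strip may lie in the neighbouring regime. Since $G_i^0\subset\widehat{\widehat{G}}_i$, the model with predictors $\widehat{\widehat{G}}_i$ is correctly specified on the core. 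The misspecification comes only from the boundary strips.

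Write $\mathcal{I}=(\widehat{\widehat{t}}_{i-1}:\widehat{\widehat{t}}_{i})$ and let $\boldsymbol{\psi}_t^G$ be the rotated version of $\boldsymbol{x}_{Gt}$ for $G=\widehat{\widehat{G}}_i$, as in \eqref{newcoint}; by \eqref{ols_rotation} the RSS is invariant under this rotation. Let $\boldsymbol{\beta}^{G}$ be the true coefficient on the core. On $\mathcal{I}$ we can write $y_{t+1}=(\boldsymbol{\beta}^{G})'\boldsymbol{\psi}_t^G+\delta_t+u_{t+1}$, where $\delta_t$ vanishes except on at most $2h_T$ boundary time points. There it equals $(\boldsymbol{\gamma}_{i\pm1}-\boldsymbol{\gamma}_i)'\boldsymbol{x}_t$, which involves only the stationary components $\boldsymbol{z}_t,\boldsymbol{\psi}_{1t}$, because the $\boldsymbol{\psi}_{2t}$ coefficients are zero in \eqref{modelf}.

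The residual vector is $M(\boldsymbol{u}+\boldsymbol{\delta})$, with $M$ the projection orthogonal to the centred design. This gives
\begin{equation*}
{\rm RSS}_i(\widehat{\widehat{G}}_i)=\|M\boldsymbol{u}\|^2+2\boldsymbol{u}'M\boldsymbol{\delta}+\|M\boldsymbol{\delta}\|^2\ \ge\ \|M\boldsymbol{u}\|^2-2|\boldsymbol{u}'M\boldsymbol{\delta}|.
\end{equation*}
The bound follows from three estimates.

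\textbf{First term.} $\|M\boldsymbol{u}\|^2=\sum_{t\in\mathcal{I}}(u_{t+1}-\bar u^{\mathcal{I}})^2-(\text{projection of } \boldsymbol{u})$. Exactly as in \eqref{withoutbreak0}--\eqref{xi_sl_order} of Lemma \ref{lemma_RSS0break}, the projection term is $O_p(r_F(r_F+d_T))$. This uses \eqref{omega_sl_order} and \eqref{xi_sl_order} with $|G|\le d_T$ and window length $\asymp\min_j|t_j^0-t_{j-1}^0|$. Assumption 3 gives $r_F(r_F+d_T)=o(h_T)$, because $r_Fd_Th_T^{-1/2}\to0$ and $r_F^2h_T^{\tau-1/2}\to0$.

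\textbf{Cross term.} Bound $|\boldsymbol{u}'M\boldsymbol{\delta}|\le|\boldsymbol{u}'\boldsymbol{\delta}|+\|H\boldsymbol{u}\|\,\|\boldsymbol{\delta}\|$, where $H=I-M$ is the hat matrix. Since $\boldsymbol{\delta}$ is supported on $O(h_T)$ points and involves only $I(0)$ variables independent of $u$, we get ${\rm E}(\boldsymbol{u}'\boldsymbol{\delta})^2=O(h_T)$ by the mixing argument \eqref{crossbound}, and $\|\boldsymbol{\delta}\|^2=O_p(h_T)$. Also $\|H\boldsymbol{u}\|^2=O_p(r_F(r_F+d_T))$ from the first step. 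Therefore the cross term is $O_p(h_T^{1/2}+h_T^{1/2}(r_F(r_F+d_T))^{1/2})=O_p(h_T)$.

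\textbf{Conclusion.} Combining the three pieces gives ${\rm RSS}_i(\widehat{\widehat{G}}_i)\ge\sum_{t\in\mathcal{I}}(u_{t+1}-\bar u^{\mathcal{I}})^2-C'h_T$ in probability.

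\textbf{Main obstacle.} The hardest step is the randomness of the endpoints $\widehat{\widehat{t}}_{i-1},\widehat{\widehat{t}}_i$ and of the set $\widehat{\widehat{G}}_i$. The bounds above are pointwise in $(s,l,G)$. I would handle this by taking a union over the $O(h_T^2)$ endpoint pairs within $h_T$ of the true breaks, and over the at most $2^{d_T}$ subsets of $\widehat G$ containing $G_i^0$ (only nested subsets along the backward path actually matter). The union would be controlled by Markov bounds with the fourth moments available from Assumption 1(2). If the union bound proves too costly, I would fall back on monotonicity: replacing $G$ by $\widehat G$ only lowers the RSS. So it suffices to prove the bound for $G=\widehat G$ with $d_T$ regressors, for which the $O_p$ rates above already hold uniformly in $h_T$-neighbourhoods of the endpoints.
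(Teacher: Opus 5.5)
Your proof is correct, but it is organized differently from the paper's.

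\textbf{The paper's route.} The paper never analyses the misspecified boundary strips. Since ${\rm RSS}_i$ is a sum of squares, it drops the strips and keeps only the deterministic core $(t_{i-1}^0+h_T:t_i^0-h_T)$. It then re-centres and re-fits there (the $E_i$ step of \eqref{res_m}--\eqref{RSSsim2}, reused in \eqref{res_rz}), which can only lower the sum. On the core the model with $\widehat{\widehat G}_i$ is correctly specified, so the argument of Lemma~\ref{lemma_RSS0break} gives $\sum_{\rm core}(u_{t+1}-\bar u^{\rm core})^2-Cr_F(r_F+d_T)$. Putting back the $u$-sums over the strips is what costs $C'h_T$.

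\textbf{Your route.} You keep the full window and bound $\|M\boldsymbol u\|^2$ and $\boldsymbol u'M\boldsymbol\delta$ directly. This gives a sharper error, $O_p(h_T^{1/2}+r_F(r_F+d_T))=o_p(h_T)$. The price is three ingredients the paper's route avoids: stationarity of $\delta_t$ (via the zero $\boldsymbol\psi_{2t}$ coefficients, which you correctly identify as essential), independence of $\boldsymbol\delta$ from $\boldsymbol u$, and projection bounds on a window with random endpoints.

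\textbf{The endpoint issue.} That last ingredient is your ``main obstacle'', and it is exactly what the core restriction buys. The core is a fixed interval, so the random endpoints enter only through $u$-sums over strips contained in the fixed sets $[t_j^0-h_T,t_j^0+h_T)$, which are trivially uniform. Your first remedy would not close as stated. Even granting pointwise second moments of the projection term of order $(r_F(r_F+d_T))^2$, a Chebyshev union bound over the $O(h_T^2)$ endpoint pairs at level $C'h_T$ is of order $(r_F(r_F+d_T))^2/C'^2$. This does not vanish because $d_T\to\infty$, and moments of the inverse Gram matrix are not available anyway. You would need either a perturbation argument (moving an endpoint by at most $h_T$ changes the Gram and cross-product sums by strip sums that Assumption~3 makes negligible relative to the window length), or simply the core restriction. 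With the core restriction $\boldsymbol\delta$ vanishes and your argument reduces to the paper's.

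\textbf{The monotonicity remark.} This part improves on the paper. Since $\widehat{\widehat G}_i\subseteq\widehat G$, and $G_i^0\subset G^0\subset\widehat G$ with probability tending to one by Theorem~3.1, you get ${\rm RSS}_i(\widehat{\widehat G}_i)\ge{\rm RSS}_i(\widehat G)$. This removes the data-dependence of the selected predictor set, which the paper's proof passes over in silence.
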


\begin{proof}

Suppose all predictors in $\widehat{\widehat{G}}_i$ are gathered in a vector $\boldsymbol{x}_t^r.$ Similar to \eqref{cointA} and the arguments in \eqref{RSSQm1} to \eqref{RSSsim2}, we can find a rotated predictor vector
$\boldsymbol{\psi}_{t}^r = ([\boldsymbol{z}_t^r]', [\boldsymbol{\psi}_{1t}^r]', [\boldsymbol{\psi}_{2t}^{r}]')'$ such that
\begin{equation}
    \begin{aligned}
    {\rm RSS}_i(\widehat{\widehat{G}}_i)
    &\geq \sum_{t=t_{i-1}^0+h_T}^{t_{i}^0-h_T-1} \bigg([y_{t+1}-\bar{y}^{(t_{i-1}^0+h_T:t_{i}^0-h_T)}] - [\widehat{\boldsymbol{\beta}}^{r}]' [\boldsymbol{\psi}_{t}^{r}-\bar{\boldsymbol{\psi}}_{r}^{(t_{i-1}^0+h_T:t_{i}^0-h_T)}]\bigg)^2,
\end{aligned}\label{res_rz}
\end{equation}
where
$\widehat{\boldsymbol{\beta}}^{r}$ represents the LSE based on the sample $(y_{t+1}, \boldsymbol{\psi}_{t}^r)$ over the period $(t_{i-1}^0+h_T:t_{i}^0-h_T).$
Denote $\boldsymbol{\beta}^r$ be the true coefficient of $\boldsymbol{\psi}_{t}^r.$ Similar to \eqref{betasim}, we have
\begin{equation}
    \begin{aligned}
    \widehat{\boldsymbol{\beta}}^{r}
    &= \boldsymbol{\beta}^r
    + \bigg(\sum_{t=t_{i-1}^0+h_T}^{t_{i}^0-h_T-1} [\boldsymbol{\psi}_{t}^{r}-\bar{\boldsymbol{\psi}}_{r}^{(t_{i-1}^0+h_T:t_{i}^0-h_T)}][\boldsymbol{\psi}_{t}^{r}-\bar{\boldsymbol{\psi}}_{r}^{(t_{i-1}^0+h_T:t_{i}^0-h_T)}]'\bigg)^{-1} \\ &~\quad\bigg(\sum_{t=t_{i-1}^0+h_T}^{t_{i}^0-h_T-1} [\boldsymbol{\psi}_{t}^{r}-\bar{\boldsymbol{\psi}}_{r}^{(t_{i-1}^0+h_T:t_{i}^0-h_T)}] [u_{t+1} - \bar{u}^{(t_{i-1}^0+h_T:t_{i}^0-h_T)}]\bigg).
    \end{aligned}
\end{equation}
Similar to \eqref{betasim}, we have
\begin{equation}
    \|\boldsymbol{P}_r^{-1} (\widehat{\boldsymbol{\beta}}^{r} - \boldsymbol{\beta}^r)\|=O_p(r_F)\times O_p(\Delta_i^{-{1\over 2}} (r_F^{1\over 2} +d_T^{1\over 2})) =O_p(\Delta_i^{-{1\over 2}}r_F(r_F^{1\over 2} +d_T^{1\over 2})),\label{betapartr}
\end{equation}
where $\boldsymbol{P}_r$ denotes the scaled matrix.
Similar to \eqref{res2_m}, we have
\begin{equation}
    \begin{aligned}
        {\rm RSS}_i(\widehat{\widehat{G}}_i) &= \sum_{t=t_{i-1}^0+h_T}^{t_{i}^0-h_T-1} \bigg(u_{t+1} - \bar{u}^{(t_{i-1}^0+h_T:t_{i}^0-h_T)}+(\boldsymbol{\beta}^r -\widehat{\boldsymbol{\beta}}^r)'[\boldsymbol{\psi}_{t}^{r}-\bar{\boldsymbol{\psi}}_{r}^{(t_{i-1}^0+h_T:t_{i}^0-h_T)}]\bigg)^2\\
        &> \sum_{t=t_{i-1}^0+h_T}^{t_{i}^0-h_T-1}(u_{t+1} - \bar{u}^{(t_{i-1}^0+h_T:t_{i}^0-h_T)})^2 + \Delta_i(\boldsymbol{\beta}^r -\widehat{\boldsymbol{\beta}}^r)' \widehat{\boldsymbol{\Omega}}_{\psi_r, u}^{(t_{i-1}^0+h_T:t_{i}^0-h_T)}.
    \end{aligned}\label{res2_r}
\end{equation}
Following the proof lines of Lemma~\ref{lemma_ux}, we have
\begin{equation}
    \|\boldsymbol{P}_r\widehat{\boldsymbol{\Omega}}_{\psi_r, u}^{(t_{i-1}^0+h_T:t_{i}^0-h_T)}\|=O_p(\Delta_i^{-{1\over 2}}(r_F^{1\over 2} +d_T^{1\over 2})).\label{res_2_3}
\end{equation}
Since $r_F(r_F+d_T) h_T^{-{1\over 2}} \rightarrow 0$ by Assumption~3, it follows from  \eqref{betapartr}, \eqref{res2_r} and \eqref{res_2_3} that
\begin{equation}
    \begin{aligned}
        {\rm RSS}_i( \widehat{\widehat{G}}_i)&> \sum_{t=t_{i-1}^0+h_T}^{t_{i}^0-h_T-1}(u_{t+1} - \bar{u}^{(t_{i-1}^0+h_T:t_{i}^0-h_T)})^2 - Cr_F(r_F+d_T)\\
        &\geq \sum_{t=\widehat{\widehat{t}}_{i}-1}^{\widehat{\widehat{t}}_{i}-1} (u_{t+1}-\bar{u}^{(\widehat{\widehat{t}}_{i-1}:\widehat{\widehat{t}}_{i})})^2 - C' h_T
    \end{aligned}
    \label{RSSr_z}
\end{equation}
holds in probability for some $C',C > 0.$   This concludes Lemma \ref{lemma_r}.
\end{proof}

\begin{proof}[\textbf{Proof of Theorem 3.4.}]

Consider the period $(\widehat{\widehat{t}}_{i-1}:\widehat{\widehat{t}}_{i}).$
By similar arguments in \eqref{L1toL5} to \eqref{RSSofneigh}, we have
\begin{equation}
    {\rm RSS}_i(G^0_i) \leq \sum_{t=\widehat{\widehat{t}}_{i}-1}^{\widehat{\widehat{t}}_{i}-1} (u_{t+1}-\bar{u}^{(\widehat{\widehat{t}}_{i-1}:\widehat{\widehat{t}}_{i})})^2 + C_Q r_F h_T^2 \label{RSSQupper}
\end{equation}
holds in probability for some $C_Q > 0.$ Note that
\begin{equation}
    {\rm IC}_2^i(G^0_i) - {\rm IC}_2^i(\widehat{\widehat{G}}_i) = {\rm RSS}_i( G^0_i) - {\rm RSS}_i( \widehat{\widehat{G}}_i) + \omega_{2T}^i (|G^0_i| - |\widehat{\widehat{G}}_i|).\label{icdiff}
\end{equation}
Suppose that the redundant predictors in $\widehat{\widehat{G}}_i$ are gathered in set $\widehat{\widehat{G}}_i^r;$ and
the incorrectly eliminated predictors are gathered in another set $\widehat{\widehat{G}}_i^m.$ Then, we have $G_i^0 = \bigg[(\widehat{\widehat{G}}_i)\cap (\widehat{\widehat{G}}_i^r)^c\bigg]\cup \widehat{\widehat{G}}_i^m.$
It suffices to prove that
\begin{equation}
    P(\widehat{\widehat{G}}_i^r =\widehat{\widehat{G}}_i^m=\phi)\rightarrow 1. \label{thm3.4-0}
\end{equation}
Next, we split the proof of (\ref{thm3.4-0}) into two cases.

\noindent
{\bf Case 1:} If $\widehat{\widehat{G}}_i^m \neq \phi,$ combining Lemma \ref{lemma_m} and \eqref{RSSQupper} yields
\begin{equation}
    {\rm RSS}_i(\widehat{\widehat{G}}_i)-{\rm RSS}_i(G_i^0) \geq C \min\limits_{1\leq i \leq m_0+1} |t_i^0-t_{i-1}^0| \label{rssm}
\end{equation}
holds in probability for some $C>0.$ By Assumption 4 (2), we have $\omega_{2T}^i = o(\min\limits_{1\leq i \leq m_0+1}|t_i^0-t_{i-1}^0|).$ Combining \eqref{icdiff} and \eqref{rssm} yields ${\rm IC}_2^i( G_i^0)-{\rm IC}_2^i(\widehat{\widehat{G}}_i) < 0$ holds in probability, which yields
\begin{equation}
    P(\widehat{\widehat{G}}_i^m \neq \phi) \rightarrow 0. \label{thm3.4-1}
\end{equation}

\noindent
{\bf Case 2:} If $\widehat{\widehat{G}}_i^m = \phi, \widehat{\widehat{G}}_i^r \neq \phi,$ combining Lemma \ref{lemma_r} and \eqref{RSSQupper} yields
\begin{equation}
    {\rm RSS}_i(G^0_i) - {\rm RSS}_i(\widehat{\widehat{G}}_i) \leq C_Q r_Fh_T^2 + C'h_T \leq 2C_Q r_Fh_T^2
\end{equation}
holds in probability. Note that when $\widehat{\widehat{G}}_i^m = \phi, \widehat{\widehat{G}}_i^r \neq \phi,$ we have $|G^0_i| - |\widehat{\widehat{G}}_i| < 0.$ This yields ${\rm IC}_2^i(G_i^0)-{\rm IC}_2^i(\widehat{\widehat{G}}_i) < 0$ holds in probability by Assumption 4 (2) that $r_Fh_T^2 (\omega_{2T}^i)^{-1}\rightarrow 0.$ Thus,
 \begin{equation}
    P(\widehat{\widehat{G}}_i^m = \phi, \widehat{\widehat{G}}_i^r \neq \phi)\rightarrow 0. \label{thm3.4-2}
\end{equation}  Combining \eqref{thm3.4-1} and \eqref{thm3.4-2}
 yields \eqref{thm3.4-0} and concludes Theorem 3.4.
\end{proof}


\section{Modified-\texttt{tcode} transformation}\label{sec:supp_tcode}

This appendix documents the modified transformation convention
\texttt{tcode\_mod} used to construct the FRED-MD working panel
of Section~\ref{sec:emp}. The default FRED-MD
\texttt{tcode} column prescribes a series-specific transformation
designed to render every series approximately stationary. This
default is inappropriate for cointegration-aware methods like
ours because it eliminates the cointegrated $I(1)$ structure
that SICS detects through its canonical-correlation ranking
(Section~\ref{sec:sics}). \texttt{tcode\_mod} replaces the
default with the rule mapping in Table~\ref{tab:tcode_mod}: each
default transformation is shifted one step toward the level so
that the resulting series is in $I(0)$ or $I(1)$ rather than
$I(0)$ alone. A series originally classified as
\texttt{tcode = 5} (log-differenced) is used in log form
(yielding $I(1)$); a series classified as \texttt{tcode = 6}
($\Delta^2 \log$) is used in $\Delta\log$ form (yielding $I(1)$);
and so on. Series with the original \texttt{tcode = 7} are
dropped because their integration order is ambiguous to
back-engineer.

\begin{table}[!ht]
  \centering\small
  \caption{Modified transformation codes (\texttt{tcode\_mod}).}
  \label{tab:tcode_mod}
  \begin{tabular}{ccccc}
    \toprule
    \multicolumn{2}{c}{Default \texttt{tcode}} & & \multicolumn{2}{c}{Modified \texttt{tcode\_mod}} \\
    \cmidrule(lr){1-2} \cmidrule(lr){4-5}
    Code & Operation & & Operation & Resulting integration \\
    \midrule
    1 & level                          & & level             & $I(0)$ \\
    2 & $\Delta$ level                 & & level             & $I(1)$ \\
    3 & $\Delta^2$ level               & & $\Delta$ level    & $I(1)$ \\
    4 & log                            & & log               & $I(0)$ \\
    5 & $\Delta \log$                  & & log               & $I(1)$ \\
    6 & $\Delta^2 \log$                & & $\Delta \log$     & $I(1)$ \\
    7 & $\Delta(x_t/x_{t-1} - 1)$      & & dropped           & --- \\
    \bottomrule
  \end{tabular}

  \vspace{4pt}
  \parbox{0.75\linewidth}{\scriptsize\linespread{0.95}\selectfont\textit{Note:} The default \texttt{tcode} targets approximate stationarity for every series; \texttt{tcode\_mod} shifts each transformation one step toward the level so that the resulting series is $I(0)$ or $I(1)$, preserving the cointegrated structure required by SICS. Series with \texttt{tcode\,=\,7} are dropped because their integration order is ambiguous to back-engineer.}
\end{table}

The working panel is the FRED-MD database from
1960:01 to 2024:12 ($T_{\text{raw}} = 780$ months) under
\texttt{tcode\_mod}. The raw vintage contains 126 series; the one
series with \texttt{tcode\,=\,7} is dropped during transformation
(leaving 125), and a further four series whose missing-value share
exceeds five percent are dropped (leaving 121). The 121 retained
series may individually carry up to five percent missing
observations; these residual gaps are handled by prediction index
alignment and complete-case row deletion. Specifically, with
$\boldsymbol x_t$ the one-period lag of the full 121-series panel
and $y_{t+1}$ the target, any month $t$ for which at least one
entry of $\boldsymbol x_t$ is unobserved is discarded; 31 such
months are removed, yielding $T = 748$ fully observed observations.
$\boldsymbol x_t$ contains all 121 retained series---including
the lagged target $y_t$ as an autoregressive component.

\end{document}